%



\documentclass[noinfoline,11pt]{imsart}\setcounter{tocdepth}{2}

\usepackage[OT1]{fontenc}
\usepackage{amsthm,amsmath,graphicx,latexsym,amssymb,array,mathabx,mathrsfs}
\usepackage{natbib}
\usepackage[colorlinks,citecolor=blue,urlcolor=blue]{hyperref}
\usepackage{adjustbox,blindtext}
\usepackage{float}
\usepackage[dvipsnames]{xcolor}


\setcounter{tocdepth}{2}

\startlocaldefs
\theoremstyle{plain}
\newtheorem{theorem}{Theorem}
\newtheorem{counterexample}{Counterexample}

\newtheorem{definition}{Definition}
\newtheorem{corollary}{Corollary}
\newtheorem{proposition}{Proposition}
\newtheorem{lemma}{Lemma}
\newtheorem{remark}{Remark}
\endlocaldefs

\usepackage[normalem]{ulem}
\allowdisplaybreaks

\usepackage{fullpage}

\begin{document}

\begin{frontmatter}

\title{{Functional Registration and Local Variations: Identifiability, Rank, and Tuning}} 
\runtitle{Functional Registration and Local Variations}

\begin{aug}
\author{\fnms{Anirvan} \snm{Chakraborty}\thanksref{a}\ead[label=e1]{anirvan.c@iiserkol.ac.in}}
\and
\author{\fnms{Victor M.} \snm{Panaretos}\thanksref{b}
\ead[label=e2]{victor.panaretos@epfl.ch}}

\address[a]{Indian Institute of Science Education \& Research (IISER) Kolkata, India\\\printead{e1}}
\address[b]{Institut de Math\'ematiques,
\'Ecole Polytechnique F\'ed\'erale de Lausanne (EPFL), Switzerland\\\printead{e2}}

\runauthor{A. Chakraborty and V.~M. Panaretos}



\end{aug}

\begin{abstract}
We develop theory and methodology for the problem of nonparametric registration of functional data that have been subjected to random deformation (warping) of their time scale. The separation of this phase variation (``horizontal" variation) from the amplitude variation (``vertical" variation) is crucial in order to properly conduct further analyses, which otherwise can be severely distorted. We determine precise nonparametric conditions under which the two forms of variation are identifiable. These show that the identifiability delicately depends on the underlying rank. By means of several counterexamples, we demonstrate that our conditions are sharp if one wishes a genuinely nonparametric setup; and in doing so we caution that popular remedies such as structural assumptions or roughness penalties can easily fail. We then propose a nonparametric registration method based on a ``local variation measure'', the main element in elucidating identifiability. A key advantage of the method is that it is free of any tuning or penalisation parameters regulating the amount of alignment, thus circumventing the problem of over/under-registration often encountered in practice. We provide asymptotic theory for the resulting estimators under the identifiable regime, but also under mild departures from identifiability, quantifying the resulting bias in terms of the amplitude variation's spectral gap.



\end{abstract}

\begin{keyword}[class=MSC]
\kwd[Primary ]{62M99}
\kwd{62G08}
\kwd[; secondary ]{62G20}
\end{keyword}

\begin{keyword}
\kwd{Functional Data Analysis}
\kwd{Phase Variation}
\kwd{Synchronisation}
\kwd{Warping}
\end{keyword}

\end{frontmatter}


{ \begin{footnotesize}
\tableofcontents
\end{footnotesize}
}

\section{Background and Contributions} \label{sec0}

\subsection*{Background}

\indent Functional observations can fluctuate around their mean structure in broadly two ways: (a) amplitude variation, and (b) phase variation. The first type of variation is analysed using functional principal component analysis, which stratifies the variation in amplitude (or variation in the ``vertical axis'') across the different eigenfunctions of the covariance operator of the underlying distribution. The second kind of variation, if present, is more subtle and can drastically distort the analysis of a functional dataset. It typically manifests itself in functional data representing physiological processes or physical motion, and consists in deformations of the time scale of the functional data (or variation in the ``horizontal axis''), associating to each observation its own unobservable time scale resulting from a transformation of the original time scale by a time warp. Specifically, instead of observing curves $\{X_{i}(t):[0,1]\rightarrow \mathbb{R}\}_{i=1}^{n}$, one actually observes warped versions $\widetilde{X}_{i} = X_{i} \circ T_{i}^{-1}$, where the $T_{i}$'s are unobservable (random) homeomorphisms termed \emph{warp maps}. In the presence of phase variation, the mean of the warped data conditional on the warping, $E(X_{i}|T_{i}) = \mu \circ T_{i}^{-1}$, is a distortion of the true mean $\mu$ by the warp map. Failing to account for the time transformation will yield deformed mean estimates, converging to $E[\mu \circ T_{i}^{-1}]$ rather than $\mu$. More dramatic still will be the effect on the estimation of the covariance of the latent process, inflating its essential rank, and yielding uninterpretable principal components. We refer to Section 2 in \cite{PZ16} for a detailed discussion of these effects. Consequently, in the presence of phase variation in the data, the natural first step in the analysis should be to \emph{register} the data, i.e., to simultaneously transform/synchronise the curves back to the \emph{objective} time scale.

\indent Owing to the rather complex nature of the registration problem, a variety of different assumptions on the latent process $X_{i}$ and the warp maps $T_{i}$ have been considered, and correspondingly a multitude of methods have been investigated: landmark based registration \citep{KG92}; template/target based registration \citep{RL98}; registration using dynamic time warping \citep{WG97,WG99}; registration based on local regression \citep{KLMR00}; a ``self-modelling'' approach by \cite{GG04} for warp maps expressible as linear combinations of B-splines; related registration procedures under assumptions on functional forms of the warp maps that result in a finite dimensional family of deformations \citep{Ronn01,GG05}; a functional convex synchronization approach to registration \citep{LM04}; registration using ``moments'' of the data curves \citep{Jame07}; registration based on a parsimonious representation of the registered observations by the principal components \citep{KR08}; pairwise registration of the warped functional data under monotone piecewise-linear warp maps \citep{TM08}; a joint amplitude-phase analysis with this pairwise registration procedure but considering step-function (thus finite dimensional) approximations of the warp maps using finite difference of their log-derivatives (centered log-ratio transform) \citep{HAME15}; registration when the warp maps are generated as compositions of elementary ``warplets'' \citep{CSS10}; and registration using a warp-invariant metric between curves when the warp functions are diffeomorphisms on an interval \citep{SWKKM11}. The above list is not exhaustive and we refer to \cite{MRSS15} for an oveview and comparison of some of the registration procedures mentioned above. More recently, \cite{PHCA17} applied the pairwise registration procedure of \cite{TM08} for two-dimensional curves, where the warping is in only one of the dimensions, while \cite{LA17} generalized the pairwise registration method for manifold valued data. 

Several of the above contributions consider the case when the warp maps are themselves random, and in such cases, a canonical set of assumptions is usually required: 
\begin{itemize}
\item[(a)] $T$ is a strictly increasing homeomorphism with probability one, and 
\item[(b)] $E(T) = Id$, where $Id$ is the identity map, $Id(x)=x$. 
\end{itemize}
The first assumption rules out ``time-reversal'' or ``time-jumps", while the second disallows an overall speed-up or slow-down of time. Further to these natural assumptions, most of the above cited papers impose additional smoothness and structural assumptions on the warp maps, which require tuning parameters to be selected. However, it is unclear whether these additional assumptions are either necessary or indeed sufficient for identifiability to hold. It is an open problem to determine what assumptions must one minimally impose on the latent functional data generating process so that the registration problem be identifiable under conditions (a) and (b) on the warp maps. This is of importance to understand since, in practice, one rarely has more detailed insights regarding the underlying warping phenomenon.
\\
\indent Consider the model
\begin{equation}\label{standard_model}
 X_{i}(t) = \xi_{i}\phi(t) + \delta\epsilon_{i}(t), \quad i=1,2,\ldots,n
 \end{equation}
for the latent process, with $\phi$ a unit norm deterministic function, $\xi_i$ random scalars, and $\epsilon_{i}(t)$ zero-mean random functions of unit variance (i.e. $E\left(\int_{0}^{1} \epsilon_i^{2}(t)dt\right)=1$).   When $\delta$ is unrestricted, the model \eqref{standard_model} spans any possible functional datum. The value of $\delta$ then regulates the balance between an (effectively) low rank model ($\delta^2\ll\mbox{var}\{\xi_i\}$) or a higher rank model (larger $\delta^2 \sim \mbox{var}\{\xi_i\}$). When one has exactly $\delta=0$ one has a rank 1 model. Several well-known approaches for registration available in the literature (see, e.g., \cite{Ronn01}, \cite{GG04,GG05}, \cite{TM08}, \cite{SWKKM11}) have considered variants of model \eqref{standard_model}, with the assumption that $\delta^2$ is small relative to $\mbox{var}\{\xi_i\}$ (for this reason, and for ease of reference, we thus henceforth refer to Model \eqref{standard_model} as the ``standard model"). In other words, it is postulated that if it were not for phase variation, important landmark features such as peaks and valleys of the latent process would not drastically change from realisation to realisation. In effect, there seems to be a certain concordance that identifiability (and hence consistency in the usual sense) rests crucially on an implicit assumption that  the amplitude variation of the synchronised functions \emph{is of low rank}, and phase variation is \emph{dominant} over amplitude variation. 

Observe that the dominating component $\xi_{i}\phi(T_{i}^{-1}(t))$ in the warped process $X_{i}(T_{i}^{-1}(t))$ obtained by warping model \eqref{standard_model} forms a sub-class of the so-called general non-linear shift models (NLSM). These models find extensive use in comparison of semi-parametric regression models (see, e.g., \cite{HM90}), and have been studied in the context of landmark and dynamic registration techniques by \cite{KG92} and \cite{WG97,WG99}. Also note that the landmark principle of registration essentially stipulates that the true curves have similar shape (thus having the same landmarks) but possibly differ in their amplitude component. Although some of the earlier papers, e.g., \cite{RL98}, \cite{KLMR00}, \cite{KR08}, \cite{CSS10} consider higher rank models for the latent process corresponding to nontrivial $\delta$ (with additional structural assumptions on warp maps), it is not known whether these procedures are truly identifiable/consistent. Indeed, \cite{KR08} (see p. 1160) acknowledged the fact that for such higher rank models, one can have \emph{different valid registrations} based on the degree of complexity of the warp maps that one allows (cf. Counterexample \ref{counter_6}). Further, as hinted in \cite{TM08}, who consider model \eqref{standard_model}, identifiable (consistent) registration appears not to be guaranteed unless one lets $\delta \rightarrow 0$ as $n \rightarrow \infty$. {Recently, \cite{WK19} studied a nonparametric registration procedure that registers the warped curves to a low dimensional subspace provided that the number of ``feature points'' are bounded almost surely.}

\subsection*{Our Contributions}
 
We contribute to the nonparametric synchronisation problem with theory, methodology, and asymptotics, and corroborate our findings with simulations and a data analysis: 

\begin{enumerate}
\item Firstly, we provide a rigorous study of the issue of identifiability, providing conditions for the standard model \ref{standard_model} to be identifiable. Conversely, we show by means of several counterexamples that our conditions are relatively sharp.

\item Secondly, we develop methodology to address the problem of nonparametric and consistent recovery of the warp maps from discretely warped curves without structural assumptions on the warp maps further to (a) and (b). A salient feature of our methods is that they do not require penalisation or tuning related to the warp maps. 
Our methodology is adapted to cover all three standard observation settings: \emph{complete observation}, \emph{discrete observation}, and \emph{discrete observation with measurement error}.

\item We carry out an asymptotic analysis in all three observation settings. 
We also investigate the setting when the model deviates from identifiability and derive results quantifying the amount of asymptotic bias incurred in terms of the spectral gap of the amplitude variation (Theorem \ref{thm5}). 

\item We probe the finite sample performance of our methodology (Section \ref{sec3}), for all possible observation regimes, and compare to other popular registration techniques, including in settings departing from the identifiable regime, observing a noteworthy stability of our method to mild such departures. 

\item The method is further illustrated by analysis of a functional dataset of \emph{Triboleum} beetle larvae growth curves (Section \ref{sec4}), yielding biologically interpretable results. 
\end{enumerate}

The key to our results is the novel use of a criterion that measures the local amount of deformation of the time scale (Section  \ref{sec2}), the \emph{local variation measure} of $X$, with associated cumulative distribution $J_X(t)=\int_{0}^{t} |X'(u)|du$. Here, and throughout the rest of the paper, for any function $f$ (random or deterministic), its first order derivative will be denoted by $f'$, i.e., $f'(t) = df(t)/dt$. Further, its second, third and fourth order derivatives will be denoted by $f''$, $f'''$ and $f^{(4)}$, respectively. The local variation measure reflects how the total amount of variation of the curve is distributed on the real axis. The simple but consequential insight is that by a change-of-variable argument, the total variation measure remains invariant under any strictly increasing deformation $T$ of the time scale of $X$, namely, $J_{X}(1)=J_{\widetilde{X}}(1),$ where $\widetilde{X} = X \circ T^{-1}$. However, it is the local amount of deformation that provides the information about the warping mechanism. 
 {
 This insight is what circumvents the need to introduce registration tuning parameters -- even when the curves are observed over a discrete grid\footnote{Of course, once the warp maps are estimated, one would have to smooth the warped discrete data in order to register them, since the warped data are not observed at all points of their domain. And, if there is measurement error in the observations, then some pre-smoothing will be needed. But in either case, this smoothing will be on the data itself (either as a pre-processing or post-processing step), and no smoothing penalties or structural assumptions will be required on the registration maps themselves.}. This connection also guides us in the construction of counterexamples, illustrating where caution should be taken. 

\section{Identifiability and Counterexamples}  \label{sec1} 

Recall that the \emph{standard model} for the latent/synchronised process prior to warping (Equation \ref{standard_model}) takes the general form
$$
X(t) = \xi\phi(t) + \delta\epsilon(t).
$$
This, depending on the constraints imposed on the random variable $\xi$ and the scalar $\delta$, can be of arbitrarily large rank, and indeed can span any functional datum. Usually $\mbox{var}\{\xi\}$ is expected to be the dominant effect relative to $\delta$ (i.e. $\delta^2\ll\mbox{var}\{\xi\}$), corresponding to an effectively low rank model. We now give sufficient conditions on the standard model for that identifiability will hold in a genuine nonparametric sense. In simple terms, these require the process must be \emph{exactly} of rank 1 (i.e. $\delta=0$ or $\epsilon(t) \in \mbox{span}\{\phi(t)\})$.

\begin{theorem}[Identifiability] \label{thm1}
Let $\{X_1,X_2\}$ be a random elements in $C^{1}[0,1]$ of rank one, i.e., $X_i(t) = \xi_i\phi_i(t)$ for deterministic functions $\phi_i$ with $\int_{0}^{1} \phi_i^{2}(t)dt = 1$, and with $\phi'_i$ vanishing on at most a countable set. Assume that $\{T_1,T_2\}$  are strictly increasing homeomorphisms in $C^{1}[0,1]$, and such that $E(T_i) = Id$. Write $\widetilde{X}_{i}=X_i(T_i^{-1}(t))$. Then,
$$\widetilde{X}_{1}\stackrel{d}{=}\widetilde{X}_{2} \iff \Big\{T_{1} \stackrel{d}{=} T_{2},\quad \phi_{1} = \pm\phi_{2},\quad  \xi_{1} = \pm\xi_{2}\Big\}.$$ 
\end{theorem}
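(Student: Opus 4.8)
The reverse implication is the easy direction: substituting matched building blocks into $\widetilde X_i=\xi_i\phi_i\circ T_i^{-1}$ and using that a sign flip in $\phi_i$ can be absorbed into a sign flip in $\xi_i$ (since $\xi\phi=(-\xi)(-\phi)$) shows $\widetilde X_1\stackrel{d}{=}\widetilde X_2$. The substance is the forward implication, and my plan is to use the local variation measure to peel off first the amplitude and then the warp. First I would compute, for $X_i=\xi_i\phi_i$ warped by $T_i$, the change of variables $u=T_i^{-1}(s)$ in $J_{\widetilde X_i}(t)=\int_0^t|\widetilde X_i'|$, which gives $J_{\widetilde X_i}(t)=|\xi_i|\,J_{\phi_i}(T_i^{-1}(t))$. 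Hence the \emph{normalised} cumulative local variation
\[
G_i(t):=\frac{J_{\widetilde X_i}(t)}{J_{\widetilde X_i}(1)}=F_{\phi_i}\big(T_i^{-1}(t)\big),\qquad F_{\phi_i}:=\frac{J_{\phi_i}}{J_{\phi_i}(1)},
\]
is free of the amplitude $\xi_i$ (it cancels) and sees the data only through the warp. Because $\phi_i'$ vanishes on at most a closed countable set, $F_{\phi_i}$, and therefore $G_i$, is a strictly increasing homeomorphism of $[0,1]$.

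The key step is to disentangle the deterministic shape-measure $F_{\phi_i}$ from the random warp $T_i^{-1}$, and this is where the constraint $E(T_i)=Id$ does the work. Since $G_i$ is a measurable functional of $\widetilde X_i$, the hypothesis $\widetilde X_1\stackrel{d}{=}\widetilde X_2$ yields $G_1\stackrel{d}{=}G_2$, and hence $G_1^{-1}\stackrel{d}{=}G_2^{-1}$ because inversion is continuous on the increasing homeomorphisms. Now $G_i^{-1}=T_i\circ F_{\phi_i}^{-1}$, so for each fixed $s$ the point $F_{\phi_i}^{-1}(s)$ is deterministic and $E[G_i^{-1}(s)]=E[T_i(F_{\phi_i}^{-1}(s))]=F_{\phi_i}^{-1}(s)$ by $E(T_i)=Id$. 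Equating the mean functions of $G_1^{-1}$ and $G_2^{-1}$ then forces $F_{\phi_1}^{-1}=F_{\phi_2}^{-1}$, i.e. $F_{\phi_1}=F_{\phi_2}=:F$; left-composing $G_i=F\circ T_i^{-1}$ with the fixed map $F^{-1}$ gives $T_1^{-1}\stackrel{d}{=}T_2^{-1}$, hence $T_1\stackrel{d}{=}T_2$. I expect this disentangling — turning a distributional identity between compositions of a deterministic and a random homeomorphism into equality of the deterministic parts, via the mean-identity applied to inverses — to be the main obstacle and the crux of the whole argument.

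It then remains to recover $\phi_i$ and $\xi_i$, and here the idea is to reparametrise each warped curve by its own local variation: set $\widehat X_i:=\widetilde X_i\circ G_i^{-1}$, again a measurable functional of $\widetilde X_i$. Since $T_i^{-1}\circ G_i^{-1}=F^{-1}$, the warp cancels and $\widehat X_i(s)=\xi_i\,\phi_i(F^{-1}(s))=\xi_i\psi_i(s)$ with $\psi_i:=\phi_i\circ F^{-1}$ deterministic. Thus $\xi_1\psi_1\stackrel{d}{=}\xi_2\psi_2$, an equality between random scalar multiples of fixed functions; evaluating at pairs of points, each law is supported on a line through the origin, so matching the two laws forces the lines to coincide, whence $\psi_1=\lambda\psi_2$ for a constant $\lambda$ and $\lambda\xi_1\stackrel{d}{=}\xi_2$. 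Composing with the common $F$ undoes the reparametrisation to give $\phi_1=\lambda\phi_2$, and the unit-norm constraint $\|\phi_i\|_2=1$ forces $|\lambda|=1$. Hence $\phi_1=\pm\phi_2$ and $\xi_1\stackrel{d}{=}\pm\xi_2$, completing the forward direction.

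The technical points I would check along the way are that $\widetilde X_i\in C^1$ with derivative vanishing only on a countable set — so that $G_i$ is a genuine homeomorphism and the reparametrisation is well defined — and that the maps $X\mapsto G_X$ and inversion are measurable, so that equalities in distribution genuinely transfer to the derived objects; both follow from the $C^1$ and countable-zero-set hypotheses. The conceptual heart, to reiterate, is the mean-identity step that isolates the shape-measure $F_{\phi_i}$ from the random warp.
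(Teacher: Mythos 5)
Your proposal is correct, and its skeleton coincides with the paper's: pass to the normalised local variation distribution $F_{\phi_i}\circ T_i^{-1}$, use $E(T_i)=Id$ to extract the deterministic $F_{\phi_i}$ and hence the law of $T_i$, then recover $\phi_i$ and $\xi_i$. Two steps are, however, executed by a genuinely different (and arguably more elementary) route. First, to identify $F_{\phi_1}=F_{\phi_2}$ the paper invokes the Wasserstein--Fr\'echet mean machinery (Proposition 2 of \cite{PZ16}): the unique Fr\'echet mean of the random measure is $E\{\widetilde F^{-1}\}=F_{\phi_i}$, and equality of the laws of the random measures forces equality of the Fr\'echet means. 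You instead take the pointwise expectation $E[G_i^{-1}(s)]=E[T_i(F_{\phi_i}^{-1}(s))]=F_{\phi_i}^{-1}(s)$, which uses $E(T_i)=Id$ directly and sidesteps the optimal-transport apparatus; this buys simplicity at no loss of rigour, since $G_i^{-1}$ is uniformly bounded so the mean function is determined by the law. Second, to recover $(\phi_i,\xi_i)$ the paper proves that the \emph{joint} law of $(\widetilde X_i,T_i^{-1})$ is the same for $i=1,2$ (via continuity of $f\mapsto(f,H(f))$), deduces $X_1\stackrel{d}{=}X_2$, and compares covariance operators $\mathrm{Var}(\xi_i)\phi_i\otimes\phi_i$. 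You instead compose with $G_i^{-1}$ -- a functional of $\widetilde X_i$ alone -- so that $\widetilde X_i\circ G_i^{-1}=\xi_i\,\phi_i\circ F^{-1}$, avoiding the joint-distribution step entirely, and then identify the line supporting the law of the finite-dimensional marginals. This is cleaner, and it even handles the case $\mathrm{Var}(\xi_i)=0$ where the paper's covariance argument degenerates (both arguments share the implicit requirement $P(\xi_i=0)=0$, without which the local variation distribution is undefined). The one place where your write-up is lighter than the paper is the measurability/continuity of $X\mapsto G_X$ and of inversion and composition, which the paper establishes explicitly via a Wasserstein--total-variation bound on the non-constant functions; you correctly flag this as a point to check, and it does go through under the stated hypotheses.
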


The proof is given in Section \ref{proofs} of the Appendix (Supplementary Material). The assumption that $\phi'$ does not vanish except perhaps on a countable set excludes the possibility of constant functions, in which case the problem is vacuous and identifiability trivially fails. Note that the identifiability result in Theorem \ref{thm1} \textit{does not} require that $\xi$ and $T$ be independent.


One might understandably argue that the rank 1 assumption in the previous theorem is restrictive. Perhaps surprisingly, though, the condition can be seen to be rather sharp. We construct a series of counterexamples, demonstrating how unstable identifiability is to higher ranks (even rank 2). These illustrate that the situation cannot be rectified at a genuinely nonparametric level, e.g. by assuming specific classes of models on the synchronised processes (such as splines or trigonometric functions); or by imposing qualitative non-parametric constraints, such as roughness penalties, Sobolev norm bounds or rank restrictions on the warp maps (or combinations of these). 

\begin{counterexample}\label{counter_1}\normalfont
Our first counterexample shows that the same rank 2 process can arise either as warped rank 1 process, or as a syncrhonised rank 2 process. Both the process itself and the warp maps can be taken to be of rank at most 2 (notice that a rank 1 warp map would need to be the identity almost surely). Define $f(t) = (3t + t^{2})/4$ and $g(t) = (5t - t^{2})/4, ~t \in [0,1].$ Take $\xi$ to be a standard Gaussian random variable and $\phi(t) = t/\sqrt{3}$ for $t \in [0,1]$. Now define a random warp map $T$ such that $P[T=f]=P[T=g]=1/2$. Then $T$ satisfies (a) and (b). Now define $\widetilde{X} = \xi\phi\circ T^{-1} = \xi_{1} T^{-1} = \xi_{1} (f^{-1} U + g^{-1}(1-U))$, where $U$ is a Bernoulli random variable with success probability $1/2$ and $\xi_{1} = \xi/\sqrt{3}$. Let $V = \xi_{1} U$ and $W = \xi_{1} (1-U)$ so that $\widetilde{X} = Vf_{1} + Wg_{1},$ 
where $f_{1}(t) = f^{-1}(t) = (\sqrt{9+16t} - 3)/2$ and $g_{1}(t) = g^{-1}(t) = (5 - \sqrt{25-16t})/2$, $t \in [0,1]$. Since $f$ and $g$ are $C^{\infty}$, and $f'$ and $g'$ are bounded away from zero on $[0,1]$, so are their inverses. Also, the inverses are $C^{\infty}$ as well. It is easy to check that $\mathrm{Cov}(V,W) = 0$. Further, it is easy to show that $f_{1}$ and $g_{1}$ are linearly independent. Consequently, we may define a new process $Y = Vf_{1} + Wg_{1}$, which is a rank two process. Define $\widetilde{Y} = Y \circ Id^{-1} = Y$. Then, $\widetilde{X}\stackrel{d}{=}\widetilde{Y}$ (in fact $\widetilde{X}{=}\widetilde{Y}$) but they have been generated using two different $C^{\infty}$ latent processes, namely $X$ and $Y$, and $C^{\infty}$ warp maps, namely $T$ and $Id$, which of course do not have the same distribution.
\end{counterexample} 

\begin{counterexample}[{Rank 2 warped to rank 1 -- smooth latent process and warp maps}]
\label{counter_3}\normalfont
We will give two constructions demonstrating that the same rank one process can arise in one of \textit{infinitely many} ways: (i) as a rank one analytic process with no warping,  and (ii) one of an infinite collection of rank two analytic processes subjected to warping by one of an infinite collection of non-trivial analytic warp maps $T$ satisfying (a) and (b). 

 (A) First take the latent model class to consist of linear combinations of trigonometric functions and polynomials. Define $\mu(t) = 2t - 1$ and $\phi_{k}(t) = \sin((2k-1){\pi}t)/[(2k-1)\pi], t \in [0,1]$ for some $k \geq 1$. Let $T_{k}(t) = t - (2U_{k} - 1)\phi_{k}(t)$, where $U_{k} \sim \mathrm{Unif}(a,b)$. Here $a = (1/2)(1 - M^{-1})$ and $b = (1/2)(1 + M^{-1})$ with $M$ satisfying $M > 1$. It can be checked that $T_{k}$ satisfies (a) and (b) for all $k \geq 1$. Let $\xi$ be a random variable independent of $U_{k}$. Define $X(t) = \xi\mu(t)$ and $Y_{k}(t) = \xi\mu(t) + \xi(2-4U_{k})\phi_{k}(t)$. It can be checked that $X = \widetilde{Y}_{k} := Y_{k} \circ T_{k}^{-1}$ for all $k \geq 1$. Since $\xi$ an $U_{k}$ are independent, it follows that $\mathrm{Cov}(\xi,\xi(2-4U_{k})) = 0$. Also, since $\langle\mu,\phi_{k}\rangle = 0$ (by direct calculation), the form of $Y_{k}$ given above is in fact its Karhunen-Lo\'eve (KL) expansion, which is of rank 2, and this holds for all $k \geq 1$. The plots of sample paths of $Y_{1}$ and $Y_{2}$ along with the warp maps $T_{1}$ and $T_{2}$ are shown in Figure \ref{Fig-counter}.

(B) For the second construction, we take the latent model class to consist of linear combinations of polynomials only. 
Define $\mu(t) = t$. Fix $R \in \mathbb{N}$ and any finite subset $\{k_{1},k_{2},\ldots,k_{R}\}$ of $\mathbb{N}$. Also, fix reals $a_{1},a_{2},\ldots,a_{R}$ satisfying $\sum_{l=1}^{R} a_{l} = 0$. Consider the Legendre polynomials $P_{2k_{l} + 1}$ on $[-1,1]$. Since these satisfy $P_{2k_{l} + 1}(-t) = P_{2k_{l} + 1}(t)$ for $t \in [0,1]$, it follows that $\int_{0}^{1} tP_{2k_{l} + 1}(t)dt = (1/2)\int_{-1}^{1} tP_{2k_{l} + 1}(t)dt = 0$. Define $\phi(t) = \sum_{l=1}^{R} a_{l}P_{2k_{l} + 1}(t)$ and $T(t) = t - (2U - 1)\phi(t)$, where $U \sim \mathrm{Unif}(a,b)$, where $M > ||\phi'||_{\infty}$. Here and elsewhere in the paper, $||\cdot||_{\infty}$ will denote the usual sup-norm on $C[0,1]$. The above construction ensures that $T(0) = 0$, $T(1) = 1$, and $T$ satisfies (a). It is clear that $T$ satisfies (b). Let $X(t) = {\xi}t$ and $Y(t) = {\xi}t - \xi(2U-1)\phi(t)$, where $\xi$ is as in the first construction. Then, it can be shown that $X = \widetilde{Y} := Y \circ T^{-1}$. Also, $Y$ is rank 2, and the above form is in fact its KL expansion because $\mathrm{Cov}(\xi,\xi(2U-1)) = 0$ and $\langle\mu,\phi\rangle = 0$, which follows as earlier. 

By taking $\xi$ to be a constant random variable, this counterexample also shows that one cannot extend the identifiable regime from $\xi\phi(t)$ to $\mu(t) + \xi\phi(t)$, where $\mu \notin \mathrm{span}\{\phi\}$.
\end{counterexample} 
\begin{figure}[t!]
\begin{center}
{
\includegraphics[scale=0.46]{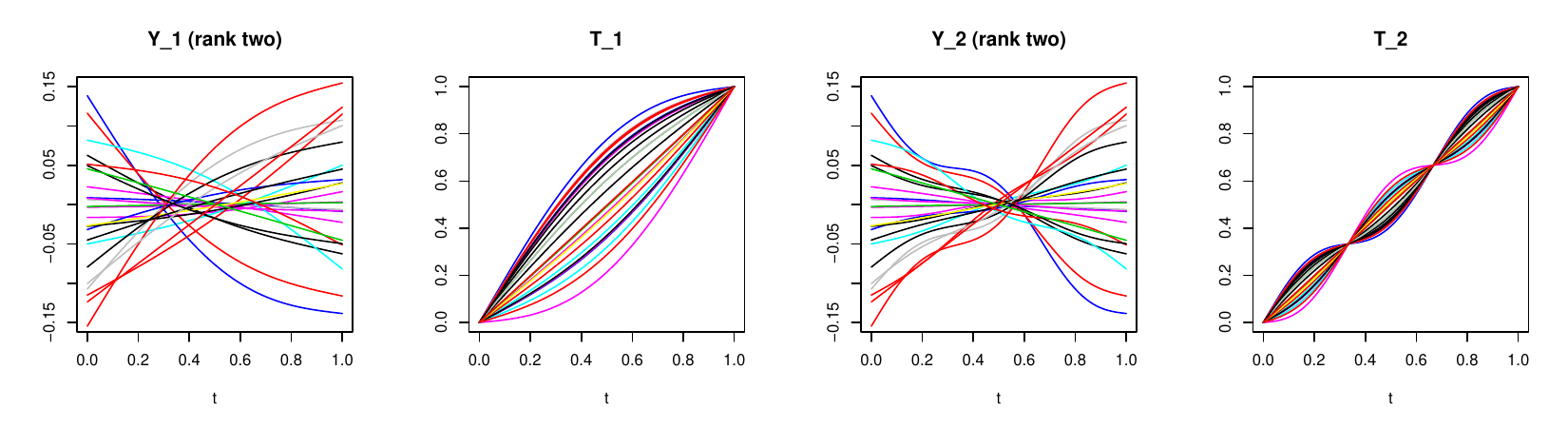}
}
\end{center}
\vspace{-0.3in}
\caption{Plots of some sample paths of the rank two latent processes $Y_{1}$ and $Y_{2}$ in part (A) of Counterexample \ref{counter_3} along with the warp maps $T_{1}$ and $T_{2}$ mentioned there, which warp them into the same rank one process.}
\label{Fig-counter}
\end{figure}
\begin{counterexample}[{Penalising the Warp Maps}]
\label{counter_4}\normalfont
We will show that even if one penalises the warp maps, e.g., by one or both of $\int_{0}^{1} E([T(t) - t]^{2})dt$ and $\int_{0}^{1} E[(T''(t))^{2}]dt$, still one can get \textit{infinitely many} possible solutions for the registration problem. Under the setup of (A) in Counterexample \ref{counter_3}, $\int_{0}^{1} E([T(t) - t]^{2})dt = [\sqrt{6}M{\pi}(2k-1)]^{-2}$ and $\int_{0}^{1} E[(T''(t))^{2}]dt = (2k-1)^{2}{\pi}^{2}/(6M^{2})$. For (B) in the previous counterexample, it can be shown using the orthogonality of the Legendre polynomials that $\int_{0}^{1} E([T(t) - t]^{2})dt = \{\sum_{l=1}^{R} a_{l}^{2}/(2k_{l}+1)\}/(3M^{2})$ and $\int_{0}^{1} E[(T''(t))^{2}]dt = ||\sum_{l=1}^{R} a_{l}P_{2k_{l}+1}''||^{2}/(3M^{2})$, where $||\cdot||$ denotes the usual norm on $L_2[0,1]$. Thus, in both cases, for any $\epsilon > 0$, the sum of the two penalty terms can be made arbitrarily small by choosing large enough $M$ (depending on the choices of the other parameters -- $k$, $R$, $k_{l}$'s and $a_{l}$'s).

The above facts imply that if one wants to carry out the registration using the penalization procedure $\min_{h \in \mathscr{T}} \int_{0}^{1} E\{[W_{h}(t) - X(h(t))]^{2} + \lambda_{1}[T(t) - t]^{2} + \lambda_{2}(T''(t))^{2}\}dt$, where  $\mathscr{T}$ is a class of $C^{\infty}$ warp maps, and $W_{h}$ takes values in an appropriate synchronized space $\mathscr{S}$ of linear combinations of $C^{\infty}$ functions, then we have \textit{infinitely many} registrations valid registrations as follows: \\
(i) under setup (A) -- if we allow $\mathscr{T}$ to include monotone homoemorphisms on $[0,1]$ whose deviation from the identity is a trigonometric function, and even if $\mathscr{S}$ is restricted to linear combinations of linear and trigonometric functions (both $X$ and $Y_{k}$ belong to this class). \\
(ii) under setup (B) -- even if we allow $\mathscr{T}$ and $\mathscr{S}$ to only include polynomials. \\
Note that for both (i) and (ii), the ``fit'' term $E\{[W_{h}(t) - X(h(t))]^{2}$ becomes zero.
\end{counterexample}

\begin{counterexample}[{Spline models for latent process and warp map}]
\label{counter_5}\normalfont
Our next counterexample shows that structural restrictions on the latent synchronised process, such as spline models, will also fail if the rank is higher than 1. We will consider cubic splines but one can similarly construct more elaborate counterexamples involving higher order splines and more knots. Let $\phi$ be a cubic spline with a single knot at $a_0 \in (0,1)$, i.e., $\phi(t) = \sum_{i=0}^{3} \theta_{i}t^{i} + \delta(t - a_0)_{+}^{3}$, and define $s(t) = c(a_1-a_0)^{-1}(t-a_0)I\{a_0 \leq t \leq a_1\} + c(1-a_1)^{-1}(1-t)I\{a_1 < t \leq 1\}, ~t \in [0,1]$, where $c \in \mathbb{R}$ and $a_1 \in (a_0,1)$ are fixed. Let $X(t) = \xi\phi(t)$ and $T(t) = t - (2U-1)s(t)$ with $U$ and $\xi$ as before, and choose $M > |c|/\min\{(a_1-a_0),(1-a_1)\}$. This ensures that $T$ satisfies (a) and (b). Define 
$$Y(t) = \xi\phi(t) + V_{1}s(t)\{\theta_{1} + \theta_{2}t - 3\theta_{3}t^{2} - 3\delta(t - a_0)_{+}^{2}\} + V_{2}s^{2}(t)\{\theta_{2} + 3\theta_{3}t + 3\delta(t - a_0)_{+}\} + V_{3}s^{3}(t),$$
where $V_{1} = \xi(1 - 2U), V_{2} = \xi(2U-1)^{2}$ and $V_{3} = \xi(1-2U)^{3}(\theta_{3} + \delta)$. Note that $s$ is a linear spline with knots at $a_0$ and $a_1$. Also, $p_{1}(t) := \theta_{1} + \theta_{2}t - 3\theta_{3}t^{2} - 3\delta(t - 1/2)_{+}^{2}$ and $p_{2}(t) := \theta_{2} + 3\theta_{3}t + 3\delta(t - 1/2)_{+}$ are splines (quadratic and linear, respectively) with knots at $1/2$. Hence, these can be considered as elements of the cubic spline space $\mathscr{S}_{0}$ with knots at $a_0$ and $a_1$. So, by repeated application of Theorem 3.1 in \cite{Mork91}, the functions $\phi$, $sp_{1}$, $s^{2}p_{2}$ and $s_{3}$ are elements of the space $\mathscr{S}_{1}$ of cubic splines with a finite set of knots (including $a_0$ and $a_1$). So, both $X$ and $Y$ lie in $\mathscr{S}_{1} \supset \mathscr{S}_{0}$. If we assume that $\phi(1) \neq 0$, then it follows that $\phi$ is linearly independent of $sp_{1}$, $s^{2}p_{2}$ and $s^{3}$ (since these three functions equal zero at $t = 1$). Thus, $Y$ is of rank at least two. Now, it can be checked that $\widetilde{Y}(t) := Y(T^{-1}(t)) = X(t)$. Thus, two distinct processes $X$ and $Y$ can be warped (by the maps $Id$ and $T$, respectively) to produce the same process.

If we choose $a_0 = 0$, i.e., take $\phi$ to be a cubic polynomial (which also lies in $\mathscr{S}_{0}$ trivially), then we can choose $s$ to be a spline on $[0,1]$ of degree $\geq 2$ with a fixed set of knots. So, in this case, we can have differentiable (instead of a.e. differentiable) warp maps. In this case, we choose $M > ||s'||_{\infty}$. Then, for the same $Y$, the conclusion of the above counterexample holds.
\end{counterexample}

\begin{counterexample}[{Latent process with geometric properties}]
\label{counter_6}\normalfont
Our last counterexample illustrates that even a priori knowledge of landmarks does not help rectify identifiability if the rank 1 condition is violated. Let $X(t) = {\xi}t(1-t), ~t \in [0,1]$ so that the latent process has a unique maximum at $t = 1/2$. A priori knowledge of existence of a unique maximum in synchronized space can be utilized to carry out a landmark/peak alignment of the warped curves. Let us denote the vector space of functions with unique maximum at $t = 1/2$ by $\mathscr{U}$, and the vector space of functions proportional to the bell-shaped curve $f(t) = t(1-t)$ by $\mathscr{S}_{f}$. Obviously, $X \in \mathscr{S}_{f} \subset \mathscr{U}$. Let $T$ be any warp map independent of $\xi$ and satisfying (a) and (b). Define a new warp map $S$ as follows: $S(t) = 2tT(1/2)I\{0 \leq t \leq 1/2\} + T(1/2) + (2t-1)[1-T(1/2)]I\{1/2 \leq t \leq 1\}$. Note that $S$ satisfies (a) and (b). Define $Y(t) = {\xi}T^{-1}(S(t))[1 - T^{-1}(S(t))], ~t \in [0,1]$. It can be checked that the process $Y$ has a unique maximum at $t_0$, where $t_0$ satisfies $T^{-1}(S(t_0)) = 1/2$, equivalently, $t_0 = S^{-1}(T(1/2))$. However, from the construction of $S$, it is easy to check that $S^{-1}(T(1/2)) = 1/2$. So, $Y \in \mathscr{U}$. Defining $\widetilde{X} = X \circ T^{-1}$ and  $\widetilde{Y} = Y \circ S^{-1}$, it follows that $\widetilde{X} = \widetilde{Y}$ although $X$ and $Y$ are different processes. Further, although $X \in \mathscr{S}_{f}$, it holds that $Y \notin \mathscr{S}_{f}$ provided $S \neq T$, and $Y$ has rank at least two. This counterexample (without explicit constructions of the latent processes or of the warp maps) is mentioned in \cite{KR08}. {Note that in both cases, one achieves ``consistent registration'' in the sense of that paper.} 
\end{counterexample}

What we learn from these counterexamples is that identifiability crucially rests upon constructing a synchronised space of processes $\mathscr{S}$ (contained within continuous processes on $[0,1]$) and a warp map space of processes $\mathscr{T}$ (contained within strictly monotone homeomorphisms onto $[0,1]$ with identity expectation) such that: 
\begin{itemize}
\item[(I)] Warping causes the latent process to exit the synchronised space, i.e. $X\in\mathscr{S}$ but $\widetilde{X} \notin \mathscr{S}$.
\item[(II)] There exists a unique process  $X \in \mathscr{S}$ such that $\widetilde{X} = X \circ T^{-1}$ for some random $T \in \mathscr{T}$. 
\end{itemize}
Theorem \ref{thm1} informs us that such a construction is possible by taking $\mathscr{S}$ to essentially be $C^1$ rank 1 non-constant processes, and otherwise not restricting $\mathscr{T}$ except for a $C^1$ assumption. The counterexamples demonstrate that allowing higher ranks can have severe effect on identifiability, even if $\mathscr{S}$ is modeled more concretely, or indeed if $\mathscr{T}$ is restricted to be smoother. In light of this, we will introduce the terminology of ``identifiable regime" to mean the pair $(\mathscr{S},\mathscr{T})$ implied by the context of Theorem \ref{thm1}:

\begin{definition}[Identifiable Regime]\label{identifiable_regime}
We define the identifiable regime to involve latent synchronised processes $X\in\mathscr{S}$, warp maps $T\in\mathscr{T}$, and warped processes $\widetilde{X}(t)=X(T^{-1}(t))$, where:
\begin{itemize}
\item[(I1)] The synchronised process space is $\mathscr{S}=\{X\in C^1[0,1]: X(t)=\xi \varphi(t)\}$, for $\xi$ a real-valued random variable of finite variance and $\varphi\in C^{1}[0,1]$ is a deterministic function of unit $L^2$-norm, whose derivative vanishes at most on a countable subset of $[0,1]$. 

\item[(I2)] The warp map space is $\mathscr{T}=\{T\in C^1[0,1]: \mathbb{E}[T]=Id$ and $T$ is a strictly increasing homeomorphism$\}$.


\end{itemize}
\end{definition}

\begin{remark}[Unidentifiable vs Identifiable Regimes] \label{identifiable-rank-two-model}
Deviations from the identifiable regime will be generally termed as a ``potentially unidentifiable regime". We say ``potentially" because the conditions (I1) and (I2), though sharp in a fully nonparametric setting, are not necessary. One can presumably produce identifiable models escaping the framework in Definition \ref{identifiable_regime} by introducing explicit parametric assumptions (and/or perhaps considering weaker forms of identifiability). For example one could modify Definition \ref{identifiable_regime} into a rank-two synchronised model $X(t)=\xi\phi(t)+\zeta \psi(t)$ under the parametric assumption that $\psi(t)=1$. This model is an arguably trivial extension of the rank one model with an inclusion of an additive random vertical shift, which is obviously invariant to phase variation. If $\int_0^1\phi(u)du=0$ and $\mathrm{cov}\{\xi,\zeta\}=0$, this model can be shown to be identifiable in the sense of Theorem \ref{thm1}. We refer to the Supplementary Material for a formal statement (Theorem \ref{identifiable-rank-two-model-thm}) and further details. 
 Note that any method that can estimate the warp maps in the rank 1 model, can also accommodate the presence of a vertical shift as above (see Remark \ref{remark-regis-rank-two}), which is why we call the inclusion of a vertical shift a trivial modification.
\end{remark}

With identifiability clarified, we now turn to nonparametric methods of estimation. Our goal will be to construct methods that perform provably well in the identifiable regime, remain stable under small departures (e.g. effectively rank 1 rather than precisely rank 1 models), and do not rely on tuning (which adds a layer of arbitrariness and in any case was seen to be unavailing). For these, we will require the notion of \emph{local variation measure}, introduced in the next section.

{In closing this section we remark that \cite{WK19} also independently arrive at the conclusion that, under the standard assumptions on the warp maps, identifiability cannot be guaranteed beyond rank one models (though our results are antecedent, see \cite{arxiv}).}

\section{Tuning-Free Methodology}  \label{sec2}

Recall that the total variation of a continuous function $h(x):[0,1]\rightarrow\mathbb{R}$ measures the total distance sweeped by the ordinate $y=h(x)$ of its graph, as the abscissa $x$ moves from $0$ to $1$. By distorting functions ``in the $x$-domain" through an increasing homeomorphism, phase variation will not affect the total amount of variation accrued over the interval $[0,1]$. However, it will \emph{redistribute} this total variation over the subintervals of $[0,1]$. This redistribution can be measured by focussing on \emph{local variation}:

\begin{definition}[Local Variation Distribution]\label{local_variation}
Given any real function $h\in C[0,1]$, we define 
\begin{equation}\label{general_definition}
J_h(t)=\sup_{K \in {\cal K}_t}\sum_{k=0}^{|K|} |h(\tau_{k+1}) - h(\tau_{k})| 
\end{equation}
where $K_t = \{\tau_{0},\tau_{1},\ldots,\tau_{|K|}\}$ is a partition of $[0,t]$ and ${\cal K}_t$ is the collection of all finite partitions of $[0,t]$. Noting that $J_h(1)$ is the total variation of $h$, define the local variation distribution as $F_h(t)={J_h(t)}\big/{J_h(1)}$.
\end{definition}

\begin{remark}
Recall that when $h\in C^1([0,1])$, it holds that $J_h(t)=\int_0^t|h'(u)|du$. The general definition comes handy under discrete observation, this one under continuous observation.
\end{remark}

We now show that, in the identifiable regime, warping affects the local variation of the underlying process in a rather predictable manner -- one that can be used to motivate estimators. We will write $\widetilde{F}=F_{\widetilde{X}}$ and $F=F_{X}$ for simplicity.

\begin{lemma}[Local Variations and Warp Maps] \label{expectation_lemma} When $\widetilde{X}=X\circ T^{-1}$ fall under the Identifiable Regime (\ref{identifiable_regime}), $F$ and $\widetilde{F}$ are strictly monotone almost surely, 
$\mathbb{E}\left\{\widetilde{F}^{-1}\right\} = F^{-1}=F_{\phi}^{-1}$, and $T=\widetilde{F}^{-1}\circ F=\widetilde{F}^{-1}\circ F_{\phi}.$
\end{lemma}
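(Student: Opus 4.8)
The plan is to reduce all three assertions to a single pathwise identity, $\widetilde{F}=F\circ T^{-1}$, from which everything else follows by elementary manipulation. First I would compute $F=F_X$ directly. Since $X=\xi\phi$, the total variation scales multiplicatively: working from the partition definition \eqref{general_definition}, every increment satisfies $|X(\tau_{k+1})-X(\tau_{k})|=|\xi|\,|\phi(\tau_{k+1})-\phi(\tau_{k})|$, whence $J_X(t)=|\xi|\,J_\phi(t)$ on the full-probability event $\{\xi\neq 0\}$ (the null event $\{\xi=0\}$, where the ratio degenerates, being exactly what the qualifier ``almost surely'' absorbs). The factor $|\xi|$ cancels in $F_X=J_X/J_X(1)$, giving $F=F_\phi$ \emph{deterministically}; in particular this already shows that no independence between $\xi$ and $T$ is needed, since $\xi$ plays no role. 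Strict monotonicity of $F=F_\phi$ then follows because $\phi'$ vanishes only on a countable (hence null) set, so the integrand in $J_\phi(t)=\int_0^t|\phi'(u)|\,du$ is positive almost everywhere and $J_\phi$ is strictly increasing.

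The crux is the identity $\widetilde{F}=F\circ T^{-1}$, which I would establish through invariance of total variation under increasing reparametrisation rather than through a change of variables in $\int|\widetilde{X}'|$, precisely to sidestep the fact that $T^{-1}$ need not be $C^1$ at points where $T'$ vanishes. Concretely, since $T^{-1}$ is a strictly increasing homeomorphism carrying $[0,t]$ onto $[0,T^{-1}(t)]$, the map $\tau_k\mapsto s_k:=T^{-1}(\tau_k)$ puts the finite partitions of the two intervals in bijection while leaving each sum $\sum_k|\phi(s_{k+1})-\phi(s_k)|$ unchanged; taking suprema gives $J_{\phi\circ T^{-1}}(t)=J_\phi(T^{-1}(t))$, hence $J_{\widetilde{X}}(t)=|\xi|\,J_\phi(T^{-1}(t))$. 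Using $T^{-1}(1)=1$ (as $T$ fixes the endpoints), the normaliser is $J_{\widetilde{X}}(1)=|\xi|\,J_\phi(1)$, so $\widetilde{F}=F_\phi\circ T^{-1}=F\circ T^{-1}$. As $T^{-1}$ is strictly increasing almost surely and $F_\phi$ is strictly increasing, the composition $\widetilde{F}$ is strictly increasing almost surely.

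The remaining conclusions now drop out. Inverting the pathwise identity $\widetilde{F}=F\circ T^{-1}$ (all maps being strictly increasing homeomorphisms of $[0,1]$ onto itself fixing the endpoints) yields $\widetilde{F}^{-1}=T\circ F^{-1}$. Composing on the right with $F=F_\phi$ gives $\widetilde{F}^{-1}\circ F=T\circ F^{-1}\circ F=T$, which is the last claim $T=\widetilde{F}^{-1}\circ F=\widetilde{F}^{-1}\circ F_\phi$. For the expectation I fix $t$: then $F^{-1}(t)=F_\phi^{-1}(t)$ is a \emph{deterministic} point of $[0,1]$, so $\mathbb{E}\{\widetilde{F}^{-1}(t)\}=\mathbb{E}\{T(F^{-1}(t))\}=(\mathbb{E}T)(F^{-1}(t))=F^{-1}(t)$ by the hypothesis $\mathbb{E}(T)=Id$; since $t$ is arbitrary, $\mathbb{E}\{\widetilde{F}^{-1}\}=F^{-1}=F_\phi^{-1}$.

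I expect the only genuine subtlety to be the reparametrisation invariance $J_{\phi\circ T^{-1}}(t)=J_\phi(T^{-1}(t))$: phrased through the $C^1$ change of variables it would demand care exactly where $T'(t)=0$ and $T^{-1}$ loses differentiability, whereas the partition formulation \eqref{general_definition} renders it immediate and uniform over the randomness of $T$. Everything else—the cancellation of $\xi$, the strict monotonicity, and the linearity-of-expectation step—is routine.
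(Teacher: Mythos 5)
Your proof is correct, and its overall architecture coincides with the paper's: cancel $\xi$ to get $F=F_\phi$, establish $\widetilde{F}=F_\phi\circ T^{-1}$, invert to read off $T=\widetilde{F}^{-1}\circ F_\phi$, and then apply $\mathbb{E}(T)=Id$ pointwise at the deterministic point $F_\phi^{-1}(t)$. The one place where you genuinely diverge is the central identity $\widetilde{F}=F_\phi\circ T^{-1}$: the paper differentiates, writing $\widetilde{X}'(t)=\xi\phi'(T^{-1}(t))/T'(T^{-1}(t))$ and then performing a change of variables in $\int_0^t|\widetilde{X}'(u)|\,du$, whereas you work directly from the supremum-over-partitions definition in Equation (\ref{general_definition}) and use the fact that an increasing homeomorphism puts partitions of $[0,t]$ and $[0,T^{-1}(t)]$ in a sum-preserving bijection. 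Your route is marginally more robust: Definition \ref{identifiable_regime} only requires $T\in C^1$ and strictly increasing, so $T'$ may vanish at isolated points, where $T^{-1}$ loses differentiability and the paper's displayed derivative formula needs a little care (the change of variables still goes through in the Lebesgue--Stieltjes sense, but this is not spelled out). The partition argument makes the reparametrisation invariance immediate and uniform over the randomness of $T$, at the cost of no extra machinery, so this is a legitimate small improvement rather than a mere restyling. One shared implicit assumption worth noting: both your argument and the paper's divide by $J_X(1)=|\xi|J_\phi(1)$, so the statement really holds on $\{\xi\neq 0\}$; your "almost surely" gloss is the right way to handle this, though strictly speaking the identifiable regime does not forbid $P(\xi=0)>0$.
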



\begin{remark}\label{wasserstein_remark}
In the language of transportation of measure, Lemma \ref{expectation_lemma} says that the warp map pushes forward the original local variation distribution to the warped local variation distribution, in fact \emph{optimally} so in terms of quadratic transportation cost; and that the synchronised local variation measure is the \emph{Fr\'echet mean} of the (random) warped local variation measure in the 2-Wasserstein distance.
\end{remark}


Now suppose we have an i.i.d. sample $\{\widetilde{X}_{i}: i=1,2,\ldots,n\}$ of randomly warped functional data that we wish to register, i.e. we wish to construct nonparametric estimators of the $\{X_i\}_{i=1}^{n}$ and the $\{T_i\}_{i=1}^{n}$ on the basis of $\{\widetilde{X_i}\}_{i=1}^{n}$. If we expect the data to (at least approximately) conform to the identifiable regime (\ref{identifiable_regime}), we can rely on Lemma (\ref{expectation_lemma}) as inspiration for tuning-free methodology. We would like to emphasize that this methodology will be applicable whatever the ``true model", of course, but the point is for it to be accurate under the identifiable regime, and stable when mildly departing from identifiability. We construct such methodology under all three different observation regimes on $\{\widetilde{X_i}\}_{i=1}^{n}$: complete observations (Section \ref{subsec2-1}), discrete noiseless observations (Section \ref{subsec2-2}), and discrete observations with measurement error (Section \ref{subsec2-3}). We then study the performance under identifiability/unidentifiability theoretically in Section \ref{asymptotics} and numerically in Section \ref{sec3}, where we indeed observe a certain stability to mild departures from identifiability.

\subsection{Fully Observed Functions}\label{subsec2-1}

\noindent Assuming the functions $\{\widetilde{X_i}\}$ are fully observed, we may proceed as follows:

\begin{itemize}
\item[Step 1:] 
Set
$$\widehat{F} = \left(n^{-1} \sum_{i=1}^{n} \widetilde{F}_i^{-1}\right)^{-1},$$
noting that the $\{\widetilde{F}_i\}$ are immediately available by complete observation of the $\{\widetilde{X}_i\}$.
\end{itemize}
Note that under the identifiable regime (\ref{identifiable_regime}), $\widehat{F}$ estimates $F_{\phi}$.
\begin{itemize}
\smallskip
\item[Step 2:] Estimate the warp map $T_{i}$ by $\widehat{T}_{i} =\widetilde{F}_i^{-1} \circ \widehat{F}$, and the registration map $T_i^{-1}$ by $\widehat{T}_i^{-1}$. 

\smallskip
\item[Step 3:] Register the observed warped functional data, by means of $\widehat{X}_{i} = \widetilde{X}_{i} \circ \widehat{T}_{i}$. 
\end{itemize}
If we suspect to be in the identifiable regime (\ref{identifiable_regime}), we may also want to estimate the pairs $\{\phi,\xi_i\}$. In this case, the obvious additional steps will be:
\begin{itemize}
\item[Step 4:] Compute the empirical covariance operator, say, $\widehat{\mathscr{K}}_{r}$ of the registered data $\{\widehat{X}_{i}\}$ and estimate $\phi$ by the leading eigenfunction $\widehat{\phi}$ of $\widehat{\mathscr{K}}_{r}$ (as a convention, assume that this estimator is aligned with the true $\phi$, i.e., $\langle\widehat{\phi},\phi\rangle \geq 0$). 

\smallskip
\item[Step 5:] Estimate $\xi_{i}$ by $\widehat{\xi}_{i} = \langle\widehat{X}_{i},\widehat{\phi}\rangle$. \\
\end{itemize}
In Step 4 above as well as in all other instances in the paper, $\langle\cdot,\cdot\rangle$ denotes the usual inner product on $L_2[0,1]$.


\subsection{Discretely Observed Functions} \label{subsec2-2}

In the discretely observed setting, the $\widetilde{X}_{i}$'s are not fully observed. Instead, we observe point evaluations
$$\widetilde{X}_{i,d} = (\widetilde{X}_{i}(t_{1}),\widetilde{X}_{i}(t_{2}),\ldots,\widetilde{X}_{i}(t_{r}))',\qquad i=1,...,n.$$ 
Here, $0 \leq t_{1} < t_{2} < \ldots < t_{r} \leq 1$ is a grid over $[0,1]$, assumed asymptotically homogeneous in that  $\max_{1 \leq j \leq r-1} (t_{j+1} - t_{j}) = O(r^{-1})$ as $r \rightarrow \infty$. 
The latent discrete process is denoted by $X_{i,d} = (X_{i}(t_{1}),X_{i}(t_{2}),\ldots,X_{i}(t_{r}))'$.

Our strategy will be to mimic Steps 1--5 from the fully observed setup. Since the $X_{i}$'s are no longer fully observed, though, in order to have versions of the $F_{i}$ and $\widetilde{F}_i$, we will draw inspiration from the general definition of the local variation distribution (Equation \ref{general_definition} in Definition \ref{local_variation}). First, define 
\begin{eqnarray*}
F_{i,d}(t) = \sum_{j \in \mathscr{I}_{t}} |X_{i}(t_{j+1}) - X_{i}(t_{j})| \bigg/ \sum_{j=1}^{r-1} |X_{i}(t_{j+1}) - X_{i}(t_{j})|
\end{eqnarray*}
for $t \in [0,1]$ and each $i=1,2,\ldots,n$, where $\mathscr{I}_{t}$ is the set of all $j$'s satisfying $t_{j+1} \leq t$. Note that because we only observe each curve over the grid $0 \leq t_{1} < t_{2} < \ldots < t_{r} \leq 1$, we have replaced the supremum over all grids in Equation \ref{general_definition} of Definition \ref{local_variation} by just this one (the finest grid we get to observe). 
Clearly, $F_{d}$ has jump discontinuities at the grid points $t_{j}$'s, is c\`adl\`ag, and satisfies $F_{d}(t) = 0$ for all $t \in [0,t_{2})$ and $F_{d}(t) = 1$ for all $t \in [t_{r},1]$. \\ 
\indent For the (discretely) observable warped process, we define
\begin{eqnarray}
\widetilde{F}_{i,d}(t) &=& \sum_{j \in \mathscr{I}_{t}} |\widetilde{X}_{i}(t_{j+1}) - \widetilde{X}_{i}(t_{j})| \bigg/ \sum_{j=1}^{r-1} |\widetilde{X}_{i}(t_{j+1}) - \widetilde{X}_{i}(t_{j})|,  \label{eqn0} 
\end{eqnarray}
The $\widetilde{F}_{i,d}$'s also have jump discontinuities at the grid points, and are c\`adl\`ag. \\
\indent Under the identifiable regime, in particular, we would have $F_{i,d}(t) = F_{d}(t)$ for all $i=1,2,\ldots,n$, where 
\begin{eqnarray*}
F_{d}(t) = \sum_{j \in \mathscr{I}_{t}} |\phi(t_{j+1}) - \phi(t_{j})| \bigg/ \sum_{j=1}^{r-1} |\phi(t_{j+1}) - \phi(t_{j})|.
\end{eqnarray*}
Its jumps are at most of size $a_{r} = \max_{1 \leq j \leq r-1} |\phi(t_{j+1}) - \phi(t_{j})| / \sum_{j=1}^{r-1} |\phi(t_{j+1}) - \phi(t_{j})|$. Moreover, in the identifiable regime,
\begin{eqnarray*}
\widetilde{F}_{i,d}(t) &=& \sum_{j \in \mathscr{I}_{t}} |\phi(s_{i,j+1}) - \phi(s_{i,j})| \bigg/ \sum_{j=1}^{r-1} |\phi(s_{i,j+1}) - \phi(s_{i,j})|,
\end{eqnarray*}
where $s_{i,j} = T_{i}^{-1}(t_{j})$ for each $i$ and $j$ are unobserved random variables. The maximum jump size of $\widetilde{F}_{i,d}$ is $A_{i,r} = \max_{1 \leq j \leq r-1} |\phi(s_{i,j+1}) - \phi(s_{i,j})| / \sum_{j=1}^{r-1} |\phi(s_{i,j+1}) - \phi(s_{i,j})|$.

With the general definitions of $F_{i,d}$ and $\widetilde{F}_{i,d}$ in place, we can now adapt Steps 1--5 to the discrete case. In what follows, the generalized inverse of a function $G$ is denoted by $G^{-}$, i.e., $G^{-}(t) = \inf\{u : G(u) \geq t\}$.
The first two steps will remain invariant, except for the fact that they will now employ the discrete local variation measures. This means that we will not require any tuning parameters or smoothness assumptions to estimate the warp and registration maps. The registration itself (the last three steps) will require some smoothing, of course, if it is to make sense:

\begin{itemize}
\item[Step $1^*$:] 
Set $\widehat{F}_{d} = \{n^{-1} \sum_{i=1}^{n} \widetilde{F}_{i,d}^{-}\}^{-}$ and $\widehat{F}_{d}^{*} = n^{-1} \sum_{i=1}^{n} \widetilde{F}_{i,d}^{-}$. 
\end{itemize}
Note that under the identifiable regime (\ref{identifiable_regime}), $\widehat{F}_{d}$ mimics $F_{d}$.
\begin{itemize}
\smallskip
\item[Step $2^*$:] Predict the random warp map $T_{i}$ by $\widehat{T}_{i,d} = \widetilde{F}_{i,d}^{-} \circ \widehat{F}_{d}$ and the registration map $T_{i}^{-1}$ by $\widehat{T}_{i,d}^{*} = \widehat{F}_{d}^{*} \circ \widetilde{F}_{i,d} = \{n^{-1} \sum_{i=1}^{n} \widetilde{F}_{i,d}^{-}\} \circ \widetilde{F}_{i,d}$. 

\smallskip
\item[Step $3^*$:] Since the $\widetilde{X}_{i}$'s are observed discretely, we do not have information about their values between grid points. Thus, we first smooth each of the $\widetilde{X}_{i,d}$ using the Nadaraya-Watson kernel regression estimator for an appropriately chosen kernel $k$ and bandwidth $h$, denoting resulting smoothed functions by $X^{\dagger}_{i}$, 
$$X^{\dagger}_{i}(t) = \sum_{j=1}^{r} k\left(\frac{t - t_{j}}{h}\right)\widetilde{X}_{i}(t_{j}) \bigg/ \sum_{j=1}^{r} k\left(\frac{t - t_{j}}{h}\right).$$
Define 
$$\widehat{X}^{*}_{i}(t) = X^{\dagger}_{i}(\widehat{T}_{i,d}(t)),\quad i=1,2,\ldots,n$$ 
to be the registered functional observations and write $\overline{X}_{r*} = n^{-1}\sum_{i=1}^{n} \widehat{X}^{*}_{i}$ for their mean. 
\end{itemize}
As in the fully observed situation, if we suspect to be in the identifiable regime (\ref{identifiable_regime}), we estimate the pairs $\{\phi,\xi_i\}$ as follows:
\begin{itemize}
\smallskip
\item[Step $4^*$:] Compute the empirical covariance operator $\widehat{\mathscr{K}}_{r*}$ of the registered curves $\widehat{X}^{*}_{i}$, and use its leading eigenfunction $\widehat{\phi}_{*}$ as the estimator of $\phi$ (again, assume the convention that the sign is correctly identified, i.e., $\langle\widehat{\phi}_{*},\phi\rangle \geq 0$). 

\smallskip
\item[Step $5^*$:] Finally, estimate $\xi_{i}$ by $\widehat{\xi}_{i*} = \langle\widehat{X}^{*}_{i},\widehat{\phi}_{*}\rangle$ for each $i \geq 1$. 
\end{itemize}

We should point out here that our method is also straightforwardly applicable in the situation where the grid over which the $\widetilde{X}_{i}$'s are observed, say, $0 \leq t_{i,1} < t_{i,2} < \ldots < t_{i,r_{i}} \leq 1$, differs with $i$. The reason for this compatibility is the fact that our approach considers only one curve at a time. We formulate it in the notationally simpler case of a common grid, in order to alleviate the notation in the statement of our asymptotic results in Section \ref{asymptotics}. Additional remarks on the practical implementation are included in the Supplementary material (Section \ref{sec:implementation}).

\subsection{Discrete Observation With Measurement Error} \label{subsec2-3}

\indent It can often happen that the discretely observed functional data be additionally contaminated by measurement error. In this case, one has to suitably adapt the registration procedure. In the presence of measurement error, we observe $Y_{i,d} = \widetilde{X}_{i,d} + e_{i}$, where $\widetilde{X}_{i,d}$ was defined in Section \ref{subsec2-2}, and $e_{i} = (\epsilon_{i,1},\epsilon_{i,2},\ldots,\epsilon_{i,r})'$ with the $\{\epsilon_{i,j} : j=1,2,\ldots,r, \ i=1,2,\ldots,n\}$ being a collection of i.i.d. error variables with zero mean and variance $\sigma^{2}$, independent of the processes and warp maps. \\
\indent We will modify the registration procedure as follows. First, construct a non-parametric function estimator of $\widetilde{X}_{i}'$, which is the derivative of the warped process $\widetilde{X}_{i}$, using the observation $Y_{i,d}$ for each $i$, and call this estimator $\widehat{X}_{i,w}^{(1)}(\cdot)$. Define analogues of the  $\widetilde{F}_{i}$'s as $$\widetilde{F}_{i,w}(t) = \int_{0}^{t} |\widehat{X}_{i,w}^{(1)}(u)|du \bigg/ \int_{0}^{1}|\widehat{X}_{i,w}^{(1)}(u)|du, \ \ \ t \in [0,1].$$ Note that unlike the discrete observation case described in the previous section, we now have fully functional versions of $\widetilde{X}_{i}'$ for each $i$, which allows us to mimic the algorithm in the fully observed scenario in Section \ref{subsec2-1}.

\begin{itemize}
\item[Step $1^{**}$:] Set $\widehat{F}_{e} = \left(n^{-1} \sum_{i=1}^{n} \widetilde{F}_{i,w}^{-1}\right)^{-1}$.
\end{itemize}
Under the identifiable regime (\ref{identifiable_regime}), in particular, we have $\widehat{F}_{e}$ estimates $F_{\phi}$.
\begin{itemize}
\smallskip
\item[Step $2^{**}$:] Predict the warp map $T_{i}$ by $\widehat{T}_{i,e} = \widetilde{F}_{i,w}^{-1} \circ \widehat{F}_{e}$, and the registration map by $\widehat{T}_{i,e}^{-1}$.

\smallskip
\item[Step $3^{**}$:] Construct non-parametric function estimators of the $\widetilde{X}_{i}$'s using the $Y_{i,d}$'s, and call them $\widehat{X}_{i,w}(\cdot)$'s. Define $\widehat{X}_{i,e}^{*}(t) = \widehat{X}_{i,w}(\widehat{T}_{i,e}(t)), i=1,2,\ldots,n$ to be the registered functional observations.
\end{itemize}
If we suspect to be in the identifiable regime (\ref{identifiable_regime}), we estimate the pairs $\{\phi,\xi_i\}$ as follows:
\begin{itemize}
\smallskip
\item[Step $4^{**}$:] Write $\overline{X}_{e*} = n^{-1} \sum_{i=1}^{n} \widehat{X}_{i,e}$ for the mean of the registered observations and let $\widehat{\mathscr{K}}_{e*}$ denote their empirical covariance operator. Take its leading eigenfunction, denoted by $\widehat{\phi}_{e*}$, as the estimator of $\phi$ (assuming the same sign convention as earlier).

\smallskip
\item[Step $5^{**}$:] Finally, estimate $\xi_{i}$ by $\widehat{\xi}_{i*,e} = \langle\widehat{X}_{i,e},\widehat{\phi}_{e*}\rangle$ for each $i \geq 1$.
\end{itemize}

There are two smoothing steps involved in the above algorithm. Given the large literature on non-parametric smoothing techniques, one can choose any smoother. However, the asymptotic results will depend on the efficiency of the chosen smoothing techniques. From now on in this paper, we will use a local quadratic regression approach with kernel $k_{1}(\cdot)$ and bandwidth $h_{1}(\cdot)$ for finding $\widehat{X}_{i,w}^{(1)}$. We will then use a local linear estimator with kernel $k_{2}(\cdot)$ and bandwidth $h_{2}(\cdot)$ for estimating $\widehat{X}_{i,w}$. These choices are motivated by the advantages of local polynomial estimators in dealing with boundary effects (see, e.g., \cite{FG96} and \cite{WJ95} for further details on various smoothing techniques). More details on the choices of smoothing parameters are given in Remark 4 after Theorem 5.

\begin{remark}[Presence of an additional vertical shift, as in Remark \ref{identifiable-rank-two-model}] \label{remark-regis-rank-two}
In Remark \ref{identifiable-rank-two-model}, we discussed a trivial rank two identifiable model given by $X(t) = \zeta + \xi\phi(t)$, which is simply the inclusion of a random vertical shift in the rank 1 model. The methods presented still apply in this case. For simplicity, consider fully observed warped data $\widetilde{X}_i := X_i \circ T_i^{-1}$, $1 \leq i \leq n$, from the model with vertical shifts. Since the registration procedure based on local variation measure depends on the process only through its derivative, the estimators $(\hat\xi_i,\hat\phi,\hat T_i)$ of $(\xi_i,\phi,T_i)$ obtained from $\widetilde{X}_i(t) = \zeta_i + \xi_i\phi(T^{-1}_i(t))$ will be exactly the same as those obtained from $\widetilde{X}_i(t) =  \xi_i\phi(T^{-1}_i(t))$ considered earlier. The vertical shifts $\zeta_i$ can then be easily estimated by $\widehat{\zeta}_i=\int_0^1 \{\widetilde{X}_i(t) - \widehat{\xi}_i\widehat{\phi}(\widehat{T}_i^{-1}(t))\}dt$. 
Indeed, $|\widehat{\zeta}_i - \zeta_i|$ will converge to zero almost surely as $n \rightarrow \infty$ (as a corollary to Theorem \ref{thm2} in Section \ref{asymptotics}).
\end{remark}

\section{Asymptotic Theory}\label{asymptotics}

We next study the asymptotic properties of the estimators obtained above. 
We develop separate results for each of the three observation regimes considered (full observation, discrete observation, discrete observation with measurement errors). 
In what follows, the space $C^{1}[0,1]$ is equipped with the norm $|||f|||_{1} = ||f||_{\infty} + ||f'||_{\infty}$, where $||\cdot||_{\infty}$ is the usual sup-norm. The 2-Wasserstein distance between distributions $G_1$ and $G_2$ will be denoted by $d_W(G_1,G_2)=\sqrt{\int_{0}^1 \big(G_1^{-}(u)-G^{-}_2(u)\big)^2du}$. All the proofs are developed in Section \ref{proofs} of the Appendix (Supplementary Material). 

\subsection{Identifiable Regime}\label{specified_asymptotics}

We first focus on the identifiable regime as given in Definition \ref{identifiable_regime}. Our first two results concern the fully observed case, as described in Section \ref{subsec2-1}. Write $\mu = E(X_{1}) = E(\xi_{1})\phi$ for the mean function and $\mathscr{K} = E(X_{1} \otimes X_{1}) - \mu \otimes \mu$ for the covariance operator, where $(f\otimes g)h=\langle g,h\rangle f $ for any triple $f,g,h\in L_2[0,1]$. Let $|||\cdot|||$ denote the trace norm for operators on $L_{2}[0,1]$. The covariance kernel of $X$ is denoted by $K(\cdot,\cdot)$ and the empirical covariance kernel of the $\widehat{X}_{i}$'s is denoted by  $\widehat{K}_{r}(\cdot,\cdot)$.
\begin{theorem}[Strong Consistency -- Fully Observed Case] \label{thm2}
Further to the assumptions in Definition \ref{identifiable_regime}, assume also that $\phi'$ is H\"older continuous with exponent $\alpha \in (0,1]$. Then, the estimators in Section \ref{subsec2-2} satisfy the following asymptotic results, where convergence is always with probability one:
\begin{itemize}
\item[(a)] $d^{2}_{W}(\widehat{F},F_{\phi}) \rightarrow 0$ as $n \rightarrow \infty$. 
\item[(b)] $||\widehat{T}_{i}^{-1} - T_{i}^{-1}||_{\infty} \rightarrow 0$ and $||\widehat{T}_{i} - T_{i}||_{\infty} \rightarrow 0$ as $n \rightarrow \infty$ for each $i \geq 1$. 
\item[(c)] $||\widehat{X}_{i} - X_{i}||_{\infty} \rightarrow 0$ as $n \rightarrow \infty$ for each $i \geq 1$.  
\item[(d)] $d^{2}_{W}(\widehat{F}_{i},F_{\phi}) \rightarrow 0$ as $n \rightarrow \infty$ for each $i \geq 1$, where $\widehat{F}_{i}$ is the local variation measure associated with $\widehat{X}_{i}$.
\item[(e)] $||\overline{X}_{r} - \mu||_{\infty} \rightarrow 0$ as $n \rightarrow \infty$, where $\overline{X}_{r} = n^{-1} \sum_{i=1}^{n} \widehat{X}_{i}$. 
\item[(f)] $|||\widehat{\mathscr{K}}_{r} - \mathscr{K}||| \rightarrow 0$ and $||\widehat{K}_{r} - K||_{\infty} = \sup_{s,t \in [0,1]} |\widehat{K}_{r}(s,t) - K(s,t)| \rightarrow 0$ as $n \rightarrow \infty$. Moreover, $||\widehat{\phi} - \phi||_{\infty} \rightarrow 0$ and $|\widehat{\xi}_{i} - \xi_{i}| \rightarrow 0$ as $n \rightarrow \infty$ for each $i \geq 1$.
\end{itemize}
Furthermore, if we additionally assume that $E(||T_{1}'||_{\infty}) < \infty$ and $\inf_{t \in [0,1]} T'(t) \geq \delta > 0$ almost surely for a deterministic constant $\delta$ (call this ``Condition 1"), then the following stronger results hold with probability one, in lieu of (b), (c), and (e):
\begin{itemize}
\item[(b')] $|||\widehat{T}_{i}^{-1} - T_{i}^{-1}|||_{1} \rightarrow 0$ and $|||\widehat{T}_{i} - T_{i}|||_{1} \rightarrow 0$ as $n \rightarrow \infty$ for each $i \geq 1$.

\item[(c')] $|||\widehat{X}_{i} - X_{i}|||_{1} \rightarrow 0$ as $n \rightarrow \infty$ for each $i \geq 1$.

\item[(e')]  $|||\overline{X}_{r} - \mu|||_{1} \rightarrow 0$ as $n \rightarrow \infty$, where $\overline{X}_{r} = n^{-1} \sum_{i=1}^{n} \widehat{X}_{i}$. 
\end{itemize}
\end{theorem}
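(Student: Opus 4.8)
The plan is to reduce all six statements (and their primed strengthenings) to a single strong law of large numbers for the empirical average of the warp maps, exploiting the closed-form identities furnished by Lemma \ref{expectation_lemma}. That lemma gives $T_i=\widetilde{F}_i^{-1}\circ F_\phi$, equivalently $\widetilde{F}_i^{-1}=T_i\circ F_\phi^{-1}$. Summing, I would write
$$\widehat{F}^{-1}=n^{-1}\sum_{i=1}^n\widetilde{F}_i^{-1}=H_n\circ F_\phi^{-1},\qquad H_n:=n^{-1}\sum_{i=1}^n T_i,$$
whence $\widehat{F}=F_\phi\circ H_n^{-1}$, and by direct composition
$$\widehat{T}_i=\widetilde{F}_i^{-1}\circ\widehat{F}=T_i\circ H_n^{-1},\quad \widehat{T}_i^{-1}=H_n\circ T_i^{-1},\quad \widehat{X}_i=\widetilde{X}_i\circ\widehat{T}_i=X_i\circ H_n^{-1}=\xi_i\,(\phi\circ H_n^{-1}).$$
The crucial consequence is that every registered curve is a scalar multiple of the single (random but common) function $v_n:=\phi\circ H_n^{-1}$. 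Since $\mathbb{E}[T_1]=Id$, the whole theorem then follows once I show $H_n\to Id$ almost surely: in $C^0$ for parts (a)--(f), and in $C^1$ for (b$'$)--(e$'$).

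For parts (a)--(f) I would argue as follows. Each $T_i$ is a monotone random element of $C[0,1]$ with $\|T_i\|_\infty\le 1$, so either Mourier's SLLN in the separable Banach space $C[0,1]$, or a P\'olya/Glivenko--Cantelli argument (pointwise SLLN on a countable dense set, upgraded by monotonicity and continuity of the limit $Id$), yields $\|H_n-Id\|_\infty\to 0$ a.s.; inverting, $\|H_n^{-1}-Id\|_\infty\to 0$. Part (a) is then immediate, since $\|\widehat{F}^{-1}-F_\phi^{-1}\|_{L^2}\le\|H_n\circ F_\phi^{-1}-F_\phi^{-1}\|_\infty\le\|H_n-Id\|_\infty$. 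Parts (b), (c), (e) follow from uniform continuity of $T_i$, $\widetilde{X}_i$ and $\phi$ together with the scalar SLLN $n^{-1}\sum_i\xi_i\to\mathbb{E}\xi_1$ applied to the product $\overline{X}_r=(n^{-1}\sum_i\xi_i)\,v_n$. Part (d) is essentially free: the local-variation identity $F_{Z\circ S}=F_Z\circ S$ for monotone homeomorphisms $S$ gives $\widehat{F}_i=\widetilde{F}_i\circ\widehat{T}_i=\widehat{F}$, so (d) coincides with (a). For (f) I would use that $\widehat{\mathscr{K}}_r=s_n^2\,v_n\otimes v_n$ is \emph{exactly rank one}, where $s_n^2=n^{-1}\sum_i\xi_i^2-(n^{-1}\sum_i\xi_i)^2$; trace-norm and kernel sup-norm convergence to $\mathrm{var}(\xi_1)\,\phi\otimes\phi=\mathscr{K}$ follow from $v_n\to\phi$ and the $\xi$-SLLN, the leading eigenfunction is literally $v_n/\|v_n\|_2\to\phi$ (the sign convention removing the ambiguity), and $\widehat{\xi}_i=\xi_i\langle v_n,\widehat{\phi}\rangle\to\xi_i$.

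For the primed statements I would upgrade the law of large numbers to $C^1$. Condition 1 gives $\mathbb{E}|||T_1|||_1=\mathbb{E}[\|T_1\|_\infty+\|T_1'\|_\infty]<\infty$, so the $T_i$ are Bochner-integrable in $C^1[0,1]$; applying the $C^0$-SLLN to the i.i.d.\ derivatives $T_i'$ (with $\mathbb{E}[T_1']=1$) gives $\|H_n'-1\|_\infty\to 0$, hence $|||H_n-Id|||_1\to 0$. The uniform lower bound $\inf_t T'(t)\ge\delta$ yields $\inf_t H_n'\ge\delta$, which lets me control the inverse through $(H_n^{-1})'(t)=1/H_n'(H_n^{-1}(t))$, giving $\|(H_n^{-1})'-1\|_\infty\le\delta^{-1}\|H_n'-1\|_\infty\to 0$ and thus $|||H_n^{-1}-Id|||_1\to 0$. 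Differentiating the closed forms and using uniform continuity of $T_i'$, $(T_i^{-1})'$ and $\phi'$ then delivers (b$'$), (c$'$), (e$'$) by the same product-convergence bookkeeping as above (the H\"older hypothesis on $\phi'$ being convenient here, though plain uniform continuity already suffices for consistency).

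The hard part is precisely this $C^1$ step. Unlike the $C^0$ case, monotonicity is unavailable for the derivatives, so one genuinely needs a Banach-space SLLN in $C^1[0,1]$ (with the integrability supplied by Condition 1) and the uniform ellipticity $\inf_t T'\ge\delta$ to keep the inverse maps $C^1$-bounded. Verifying these integrability/measurability hypotheses and the inverse-function estimates is where the real work lies; everything else is an algebraic consequence of the master identity $\widehat{X}_i=\xi_i\,\phi\circ H_n^{-1}$ and the convergence $H_n\to Id$.
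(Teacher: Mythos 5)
Your proposal is correct and follows essentially the same route as the paper: everything is reduced to the Banach-space strong law $\overline{T}\to Id$ (in $C[0,1]$, upgraded to $C^{1}[0,1]$ under Condition 1) via the closed-form identities $\widehat{F}^{-1}=\overline{T}\circ F_{\phi}^{-1}$, $\widehat{T}_{i}=T_{i}\circ\overline{T}^{-1}$, $\widehat{X}_{i}=\xi_{i}\,\phi\circ\overline{T}^{-1}$, with the same $\delta^{-1}$ inverse-function estimates for the primed parts. The only cosmetic difference is in (f), where you read $\widehat{\phi}$ off the exact rank-one form of $\widehat{\mathscr{K}}_{r}$ rather than invoking trace-norm perturbation bounds as the paper's proof does; the paper records your formula in its Fisher-consistency remark, so this is the same argument in substance.
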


\begin{remark} \label{rem1}
\normalfont A few remarks concerning the last theorem are as follows:
\begin{description}
\item[\normalfont{\emph{Independence}}]
The strong consistency results in Theorem \ref{thm2} \textit{do not} require that $\xi_{i}$ and $T_{i}$ are independent. 

\item[\normalfont{\emph{Uniformity}}] It is observed from the proof of the uniform convergence of $\widehat{T}_{i}^{-1}$ in part (b) of the above theorem that 
$$\max_{1 \leq i \leq n} ||\widehat{T}_{i}^{-1} - T_{i}^{-1}||_{\infty} \rightarrow 0,\qquad \mbox{as }\,n \rightarrow \infty$$ 
almost surely. Under Condition 1, the same conclusion is true now with the finer norm $|||\cdot|||_{1}$. The convergence in part (d) also holds uniformly for all $i=1,2,\ldots,n$.

\item[\normalfont{\emph{Fisher Consistency}}] It can be directly verified that $\widehat{F}^{-1} = \overline{T} \circ F_{\phi}^{-1}$ so that $\widehat{F} = F_{\phi} \circ \overline{T}^{-1}$, where  $\overline{T} = n^{-1}\sum_{i=1}^{n} T_i$. Also, $\widehat{T}_{i} = T_{i} \circ \overline{T}^{-1}$,  $\widehat{T}_{i}^{-1} = \overline{T} \circ T_{i}^{-1}$, and $\widehat{X}_{i} = \xi_{i}\phi \circ \overline{T}^{-1}$ for each $i$. Further, $\widehat{\mathscr{K}}_{r} = n^{-1}\sum_{i=1}^{n} (\widehat{X}_{i} - \overline{X}_{r}) \otimes (\widehat{X}_{i} - \overline{X}_{r}) = \{n^{-1}\sum_{i=1}^{n} \xi_{i}^{2} - \overline{\xi}^{2}\} (\phi \circ \overline{T}^{-1}) \otimes (\phi \circ \overline{T}^{-1})$, where $\overline{\xi} = n^{-1}\sum_{i=1}^{n} \xi_{i}$. Thus, $\widehat{\phi} = (\phi \circ \overline{T}^{-1})/||(\phi \circ \overline{T}^{-1})||$, and $\widehat{\xi}_{i} = \langle \widehat{X}_{i},\widehat{\phi}\rangle = \xi_{i}||\phi \circ \overline{T}^{-1}||$. Since all of the above estimators are measurable functions of the sample averages of the $T_{i}$'s, the $\xi_{i}$'s and the $\xi_{i}^{2}$'s, it follows that all of the above estimators are Fisher consistent for their population counterpart.

\item[\normalfont{\emph{A Concrete Example}}] The condition $\inf_{t \in [0,1]} T'(t) \geq \delta > 0$ almost surely for a deterministic constant $\delta$ can be relaxed to $\inf_{t \in [0,1]} T'(t) \geq \delta_{i}$ almost surely for i.i.d. positive random variables $\delta_{i}$ provided we assume that $E(\delta_{1}^{-1}) < \infty$. An example of random warp functions that satisfy $\inf_{t \in [0,1]} T'(t) \geq \delta > 0$ can be found Section 8 of \cite{PZ16}. Define $\zeta_{0}(t) = t$ and for $k \neq 0$, define $\zeta_{k}(t) = t - sin({\pi}kt)/(|k|{\pi}\beta)$ for some $\beta > 0$. If $K$ is an integer-valued, symmetric random variable, then $E(\zeta_{K}) = Id$. For a fixed $J \geq 2$, let $\{K_{j}\}_{j=1}^{J}$ be i.i.d. integer-valued, symmetric random variables, and $\{U_{j}\}_{j=1}^{J-1}$ be i.i.d. $Unif[0,1]$ random variables independent of the $K_{j}$'s. Define $T(t) = U_{(1)}\zeta_{K_{1}}(t) + \sum_{j=1}^{J-1} (U_{(j)} - U_{(j-1)})\zeta_{K_{j}}(t) + (1 - U_{(J-1)})\zeta_{K_{J}}(t)$. Then, $T$ is a strictly increasing homeomorphism on $[0,1]$, $T \in C^{1}[0,1]$ surely, $E(T) = Id$. Further, it can be easily shown that $\inf_{t \in [0,1]} T'(t) \geq 1 - \beta^{-1}$. Thus, the condition $\inf_{t \in [0,1]} T'(t) \geq \delta > 0$ holds if we choose $\beta = (1-\delta)^{-1}$. 
\end{description}
\end{remark}

\noindent Further to strong consistency, we also derive weak convergence of the estimators: 

\begin{theorem}[Weak Convergence -- Fully Observed Case] \label{thm3}
Further to assumptions in Definition \ref{identifiable_regime}, assume also that $\phi'$ is H\"older continuous with exponent $\alpha \in (0,1]$, that $\xi_{i}$ and $T_{i}$ are independent for each $i$, and that $E(||T_{1}'||_{\infty}^{2}) < \infty$. Then, the estimators in Section \ref{subsec2-1} satisfy the following asymptotic results,   
\begin{itemize}
\item[(a)] $nd^{2}_{W}(\widehat{F},F_{\phi})$ converges weakly as $n \rightarrow \infty$. 
\item[(b)] $\sqrt{n}(\widehat{T}_{i}^{-1} - T_{i}^{-1})$ and $\sqrt{n}(\widehat{T}_{i} - T_{i})$ converge weakly in the $C[0,1]$ topology as $n \rightarrow \infty$ for each $i \geq 1$. 
\item[(c)] $\sqrt{n}(\widehat{X}_{i} - X_{i})$ converges weakly in the $C[0,1]$ topology as $n \rightarrow \infty$ for each $i \geq 1$. 
\item[(d)] $nd^{2}_{W}(\widehat{F}_{i},F_{\phi})$ converges weakly as $n \rightarrow \infty$ for each $i \geq 1$. 
\item[(e)] $\sqrt{n}(\overline{X}_{r} - \mu)$ converges weakly to a zero mean Gaussian distribution in the $C[0,1]$ topology as $n \rightarrow \infty$. 
\item[(f)] $\sqrt{n}(\widehat{\mathscr{K}}_{r} - \mathscr{K})$ converges weakly in the topology of Hilbert-Schmidt operators, and $\sqrt{n}(\widehat{K}_{r} - K)$ converges weakly in the $C([0,1]^{2})$ topology as $n \rightarrow \infty$. In both cases, the limits are zero mean Gaussian distributions. Moreover, $\sqrt{n}(\widehat{\phi} - \phi)$ converges weakly to a zero mean Gaussian distribution in the $C[0,1]$ topology, and $\sqrt{n}(\widehat{\xi}_{i} - \xi_{i})$ converges weakly as $n \rightarrow \infty$ for each $i \geq 1$.
\end{itemize}
\end{theorem}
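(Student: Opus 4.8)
The plan is to exploit the explicit Fisher-consistency representations derived in Remark \ref{rem1}(3), which express every estimator as an elementary, smooth functional of the three sample averages $\overline{T}=n^{-1}\sum_i T_i$, $\overline{\xi}=n^{-1}\sum_i\xi_i$ and $\overline{\xi^2}=n^{-1}\sum_i\xi_i^2$. Indeed, $\widehat{F}^{-1}=\overline{T}\circ F_\phi^{-1}$, $\widehat{T}_i^{-1}=\overline{T}\circ T_i^{-1}$, $\widehat{T}_i=T_i\circ\overline{T}^{-1}$, $\widehat{X}_i=\xi_i\,\phi\circ\overline{T}^{-1}$, and $\widehat{\phi},\widehat{\mathscr{K}}_r,\widehat{\xi}_i$ are in turn built from $\phi\circ\overline{T}^{-1}$ together with $\overline{\xi}$ and $\overline{\xi^2}$. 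Thus the entire theorem reduces to a single functional central limit theorem for the ``sufficient statistic'' $(\overline{T},\overline{\xi},\overline{\xi^2})$, followed by applications of the continuous mapping theorem and the functional delta method.

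First I would establish that $\sqrt{n}\big(\overline{T}-Id\big)$ converges weakly in $C[0,1]$ to a zero-mean Gaussian process $\mathbb{G}$, jointly with the finite-dimensional Gaussian limits of $\sqrt{n}(\overline{\xi}-E\xi)$ and $\sqrt{n}(\overline{\xi^2}-E\xi^2)$. The summands $T_i-Id$ are i.i.d.\ and centred (since $E(T_i)=Id$) and Lipschitz, with random Lipschitz constant bounded by $\|T_i'\|_\infty+1$; the hypothesis $E(\|T_1'\|_\infty^2)<\infty$ then supplies the square-integrable modulus needed for the central limit theorem for Lipschitz processes in $C[0,1]$, yielding tightness via asymptotic equicontinuity. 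Because $\xi_i\perp T_i$, the $\mathbb{G}$-component of the joint limit is independent of the scalar Gaussian limits. Applying the Hadamard differentiability of the inversion map $g\mapsto g^{-1}$ at $g=Id$ (whose derivative in direction $h$ is $-h$) converts this into $\sqrt{n}\big(\overline{T}^{-1}-Id\big)\Rightarrow-\mathbb{G}$ in $C[0,1]$.

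With these limits in hand I would read off each assertion. For (a) and (d), the identity $\widehat{F}^{-1}-F_\phi^{-1}=(\overline{T}-Id)\circ F_\phi^{-1}$, together with the observation that $\widehat{F}_i=\widehat{F}$ (the scalar $\xi_i$ cancels in the local-variation ratio), gives $n\,d_W^2=\int_0^1\!\big[\sqrt{n}(\overline{T}-Id)\big]^2(F_\phi^{-1}(u))\,du$, with limit $\int_0^1\mathbb{G}^2(F_\phi^{-1}(u))\,du$ by continuous mapping. For (b), the exact identity $\sqrt{n}(\widehat{T}_i^{-1}-T_i^{-1})=\big[\sqrt{n}(\overline{T}-Id)\big]\circ T_i^{-1}$ yields limit $\mathbb{G}\circ T_i^{-1}$, while $\widehat{T}_i-T_i=T_i\circ\overline{T}^{-1}-T_i$ linearises (Taylor expansion using the uniformly continuous $T_i'\in C^1$) to $-T_i'\,\mathbb{G}$. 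For (c), $\widehat{X}_i-X_i=\xi_i(\phi\circ\overline{T}^{-1}-\phi)$ linearises to $-\xi_i\,\phi'\,\mathbb{G}$. Parts (e) and (f) follow by expanding $\overline{X}_r=\overline{\xi}\,(\phi\circ\overline{T}^{-1})$, $\widehat{\mathscr{K}}_r=(\overline{\xi^2}-\overline{\xi}^2)(\phi\circ\overline{T}^{-1})\otimes(\phi\circ\overline{T}^{-1})$, $\widehat{\phi}=(\phi\circ\overline{T}^{-1})/\|\phi\circ\overline{T}^{-1}\|_2$ and $\widehat{\xi}_i=\xi_i\|\phi\circ\overline{T}^{-1}\|_2$ via the product and quotient delta methods, each limit being a zero-mean Gaussian determined by $\mathbb{G}$ and the scalar limits; in (e) and (f) the $E\xi$- and $\mathrm{var}(\xi)$-weighted phase terms combine with the amplitude term into a single Gaussian in $C[0,1]$ and in the Hilbert--Schmidt and $C([0,1]^2)$ topologies respectively.

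The main obstacle is twofold, and both parts are analytic rather than structural. The first is the $C[0,1]$-valued central limit theorem for $\overline{T}$: one must verify tightness of $\sqrt{n}(\overline{T}-Id)$, for which the Lipschitz bound with square-integrable constant (the role of $E\|T_1'\|_\infty^2<\infty$) and a Jain--Marcus-type argument are essential. The second is the uniform control of the linearisation remainders when composing with $\overline{T}^{-1}$: here one needs $\|\overline{T}^{-1}-Id\|_\infty=O_p(n^{-1/2})$ together with the H\"older continuity of $\phi'$, so that the first-order Taylor remainder is $O_p(n^{-(1+\alpha)/2})=o_p(n^{-1/2})$ uniformly---precisely why the theorem assumes $\phi'$ H\"older. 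A minor but recurring point is the leave-one-out observation that, for fixed $i$, the $i$-th summand contributes only $O_p(n^{-1/2})$ to $\overline{T}$, so in the limit $\mathbb{G}$ (and hence $\mathbb{G}\circ T_i^{-1}$, $T_i'\mathbb{G}$, etc.) may be taken independent of the fixed $T_i$ and $\xi_i$.
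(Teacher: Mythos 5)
Your proposal is correct and follows essentially the same route as the paper: a $C[0,1]$-valued CLT for $\sqrt{n}(\overline{T}-Id)$ via the Lipschitz/square-integrable-modulus condition, combined with the exact Fisher-consistency representations $\widehat{F}^{-1}=\overline{T}\circ F_\phi^{-1}$, $\widehat{T}_i^{-1}=\overline{T}\circ T_i^{-1}$, $\widehat{X}_i=\xi_i\,\phi\circ\overline{T}^{-1}$, etc., followed by continuous mapping, first-order Taylor linearisation, and a conditioning/dominated-convergence step to handle the fixed index $i$. The only cosmetic difference is that you phrase the inversion and composition steps as Hadamard differentiability and the delta method, whereas the paper argues directly with the identity $\overline{T}^{-1}-Id=(Id-\overline{T})\circ\overline{T}^{-1}$ and pointwise Taylor expansions using the uniform continuity of $T_i'$ and $\phi'$.
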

\indent Since $C([0,1]^{k})$ is a stronger topology than $L_{2}([0,1]^{k})$ for any finite $k = 1,2,\ldots$, it follows that the weak convergence results in the above theorem which hold in the $C([0,1]^{k})$ topology also hold in the $L_{2}([0,1]^{k})$ topology by virtue of the continuous mapping theorem.

\indent We shall now study some the asymptotic properties of the estimators in the discrete observation setup (without measurement error). 
\begin{theorem}[Limit Theory -- Discretely Observed Case Without Measurement Error] \label{thm4}
Further to the conditions of Theorem \ref{thm3}, assume that $\phi \in C^{2}[0,1]$, $\int_{0}^{1} |\phi'(u)|^{-\epsilon} < \infty$ for some $\epsilon > 0$, and that $\inf_{t \in [0,1]} T'(u) \geq \delta > 0$ almost surely for a deterministic constant $\delta$. Define $\alpha = \epsilon/(1+\epsilon)$. Assume that $\xi_{i}$ and $T_{i}$ are independent for each $i$ (only for the weak convergence statements). The kernel $k(\cdot)$ is assumed to be supported on $[-1,1]$. If $h = h(n) = o(n^{-1/2})$ and $r = r(n)$ satisfies $r >> n^{1/2\alpha}$ as $n \rightarrow \infty$, then the estimators introduced in Section \ref{subsec2-2} satisfy
\begin{itemize}
\item[(a)] $d^{2}_{W}(\widehat{F}_{d}^{*},F_{\phi}) \rightarrow 0$ as $n \rightarrow \infty$ almost surely, and $d^{2}_{W}(\widehat{F}_{d}^{*},F_{\phi}) = O_{P}(n^{-1})$ as $n \rightarrow \infty$. 
\item[(b)] $||\widehat{T}_{i,d}^{*} - T_{i}^{-1}||_{\infty} \rightarrow 0$ and $||\widehat{T}_{i,d} - T_{i}||_{\infty} \rightarrow 0$ as $n \rightarrow \infty$ almost surely. Further, $\sqrt{n}(\widehat{T}_{i,d}^{*} - T_{i}^{-1})$ and $\sqrt{n}(\widehat{T}_{i,d} - T_{i})$ converge weakly in the $L_{2}[0,1]$ topology as $n \rightarrow \infty$ for each $i \geq 1$. 
\item[(c)] $||\widehat{X}^{*}_{i} - X_{i}||_{\infty} \rightarrow 0$ as $n \rightarrow \infty$ almost surely, and $\sqrt{n}(\widehat{X}^{*}_{i} - X_{i})$ converges weakly in the $L_{2}[0,1]$ topology as $n \rightarrow \infty$ for each $i \geq 1$. 
\item[(d)] $d^{2}_{W}(\widehat{F}_{i}^{*},F_{\phi}) \rightarrow 0$ as $n \rightarrow \infty$ almost surely, and $d^{2}_{W}(\widehat{F}_{i}^{*},F_{\phi}) = O_{P}(n^{-1})$ as $n \rightarrow \infty$ for each $i \geq 1$. 
\item[(e)] $||\overline{X}_{r*} - \mu||_{\infty} \rightarrow 0$ as $n \rightarrow \infty$ almost surely, and $\sqrt{n}(\overline{X}_{r*} - \mu)$ converges weakly in the $L_{2}[0,1]$ topology as $n \rightarrow \infty$. 
\item[(f)] $|||\widehat{\mathscr{K}}_{r*} - \mathscr{K}||| \rightarrow 0$ as $n \rightarrow \infty$ almost surely, and $\sqrt{n}(\widehat{\mathscr{K}}_{r*} - \mathscr{K})$ converges weakly in the topology of Hilbert-Schmidt operators. Further, $||\widehat{K}_{r*} - K||_{\infty} \rightarrow 0$ as $n \rightarrow \infty$, and $\sqrt{n}(\widehat{K}_{r*} - K)$ converges weakly in the $L_{2}([0,1]^{2})$ topology as $n \rightarrow \infty$. Moreover, $||\widehat{\phi}_{*} - \phi|| \rightarrow 0$ as $n \rightarrow \infty$ almost surely, and $\sqrt{n}(\widehat{\phi}_{*} - \phi)$ converges weakly in the $L_{2}[0,1]$ topology. Also, $|\widehat{\xi}_{i*} - \xi_{i}| \rightarrow 0$ as $n \rightarrow \infty$ almost surely, and $\sqrt{n}(\widehat{\xi}_{i*} - \xi_{i})$ converges weakly as $n \rightarrow \infty$ for each $i \geq 1$. \\
\noindent In all the weak convergence results stated above, the limits are identical to the corresponding limits obtained in the fully observed scenario in Theorem \ref{thm3}.
\end{itemize}
\end{theorem}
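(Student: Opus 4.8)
The plan is to reduce every assertion to the fully observed case (Theorems \ref{thm2} and \ref{thm3}) by showing that each discrete estimator differs from its fully observed counterpart of Section \ref{subsec2-1} by a remainder that is negligible at the relevant scale: $o(1)$ almost surely for the consistency statements, and $o_P(n^{-1/2})$ for the weak convergence statements. Granting such a reduction, the almost sure limits follow from Theorem \ref{thm2}, the weak limits follow from Theorem \ref{thm3} by the converging-together (Slutsky) lemma, and the limits automatically coincide with those of the fully observed case, exactly as asserted. Two sources of error must be controlled separately: the replacement of the continuous local variation measures $\widetilde{F}_i$ by their finite-difference versions $\widetilde{F}_{i,d}$ (governed by the grid density $r$), and the kernel smoothing of the discretely sampled curves in the registration step (governed by the bandwidth $h$).

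The first, and main, technical step is a uniform discretization estimate for the local variation measures. Writing $\widetilde{X}_i=\xi_i\,\phi\circ T_i^{-1}$, the finite-difference numerator $\sum_{j:\,t_{j+1}\le t}|\widetilde{X}_i(t_{j+1})-\widetilde{X}_i(t_j)|$ is a Riemann-type approximation of $J_{\widetilde{X}_i}(t)=\int_0^t|\widetilde{X}_i'|$. On a subinterval on which $\widetilde{X}_i'$ does not change sign, the mean value theorem together with the H\"older continuity of $\phi'$ (here $\phi\in C^2$ suffices) gives a per-interval error of order $r^{-(1+\alpha)}$, summing over the $O(r)$ intervals to $O(r^{-\alpha})$. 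The delicate contributions come from the subintervals on which $\widetilde{X}_i'$, equivalently $\phi'$, changes sign, where the finite difference can undercount the true variation; it is precisely here that the integrability condition $\int_0^1|\phi'|^{-\epsilon}<\infty$ enters, bounding the Lebesgue measure of the region where $|\phi'|$ is small and calibrating the exponent $\alpha=\epsilon/(1+\epsilon)$. The upshot I would establish is $\sup_i\|\widetilde{F}_{i,d}-\widetilde{F}_i\|_\infty=O(r^{-\alpha})$ and, after passing to generalized inverses (the continuous $\widetilde{F}_i$ are strictly increasing by Lemma \ref{expectation_lemma}, so $\widetilde{F}_i^{-}=\widetilde{F}_i^{-1}$), a corresponding bound on $\|\widetilde{F}_{i,d}^{-}-\widetilde{F}_i^{-1}\|$. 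The hypothesis $r\gg n^{1/2\alpha}$ then yields $r^{-\alpha}=o(n^{-1/2})$, rendering the discretization negligible at the $\sqrt{n}$ scale. This transfers to the averages, $\widehat{F}_d^{*}-\widehat{F}^{-1}=n^{-1}\sum_i(\widetilde{F}_{i,d}^{-}-\widetilde{F}_i^{-1})=o(n^{-1/2})$ uniformly, and, using the bi-Lipschitz control $\inf_t T_i'\ge\delta>0$ with $\mathbb{E}\|T_1'\|_\infty^2<\infty$ to handle the compositions defining $\widehat{T}_{i,d}=\widetilde{F}_{i,d}^{-}\circ\widehat{F}_d$ and $\widehat{T}_{i,d}^{*}=\widehat{F}_d^{*}\circ\widetilde{F}_{i,d}$, to the warp and registration maps. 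Parts (a), (b) and (d) then follow from Theorems \ref{thm2}--\ref{thm3}, the Fisher-consistency identities of Remark \ref{rem1} ($\widehat{F}^{-1}=\overline{T}\circ F_\phi^{-1}$ and $\widehat{T}_i=T_i\circ\overline{T}^{-1}$), and the SLLN/CLT for $\overline{T}=n^{-1}\sum_iT_i$.

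For the registration itself (parts (c), (e), (f)) I would additionally control the Nadaraya--Watson step. Since the data are noiseless and $\widetilde{X}_i\in C^1$ (indeed $\phi\in C^2$ and $T_i^{-1}$ is bi-Lipschitz), the kernel smoother $X_i^\dagger$ carries no stochastic variance term and only a deterministic-design bias; with $r^{-1}=o(h)$ (guaranteed by the rate conditions, so that each bandwidth window contains many grid points) and $h=o(n^{-1/2})$, one obtains $\|X_i^\dagger-\widetilde{X}_i\|_\infty=o(n^{-1/2})$. Composing with the warp-map estimate, $\widehat{X}_i^{*}=X_i^\dagger\circ\widehat{T}_{i,d}$ versus the fully observed $\widehat{X}_i=\widetilde{X}_i\circ\widehat{T}_i$, gives $\|\widehat{X}_i^{*}-\widehat{X}_i\|_\infty=o_P(n^{-1/2})$, whence (c). Averaging over $i$ yields $\overline{X}_{r*}$ in (e), and forming the empirical covariance yields $\widehat{\mathscr{K}}_{r*}$ and $\widehat{K}_{r*}$ in (f) as $o_P(n^{-1/2})$ perturbations of their fully observed analogues. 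The eigenstructure conclusions follow by perturbation: the limiting registered covariance $\mathscr{K}$ has rank one, so its leading eigenvalue is isolated from zero with a spectral gap bounded below, and a Davis--Kahan-type bound propagates the operator perturbation to the leading eigenfunction $\widehat{\phi}_{*}$ and then to the scores $\widehat{\xi}_{i*}=\langle\widehat{X}_i^{*},\widehat{\phi}_{*}\rangle$. In each case the first-order term coincides with that of Theorem \ref{thm3}, so the weak limits are identical, as claimed (the independence of $\xi_i$ and $T_i$ being invoked only for the weak statements).

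The hardest part, as flagged, is the sign-change and near-zero analysis of the finite-difference approximation to the local variation: establishing the $O(r^{-\alpha})$ rate with $\alpha=\epsilon/(1+\epsilon)$ uniformly over the sample and, crucially, turning it into a bound on the generalized inverses that survives the compositions defining the warp maps. Everything downstream is a reduction to the already-established fully observed limit theory.
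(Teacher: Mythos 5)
Your overall architecture is exactly the paper's: each discrete estimator is written as its fully observed counterpart plus a remainder, the remainder is shown to be $o(1)$ almost surely and $o_P(n^{-1/2})$, and Theorems \ref{thm2}--\ref{thm3} together with the converging-together lemma deliver all conclusions with identical limits. Your treatment of the smoothing step (pure bias of order $h$ with no variance term, composition with the estimated warp maps, averaging, the covariance decomposition, and perturbation bounds for the leading eigenfunction and scores) likewise matches the paper's proof of parts (c)--(f).

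The gap is in your central technical step, the discretization of the local variation measures. You locate the exponent $\alpha=\epsilon/(1+\epsilon)$ and the condition $\int_0^1|\phi'|^{-\epsilon}\,du<\infty$ in the \emph{forward} approximation, claiming $\sup_i\|\widetilde F_{i,d}-\widetilde F_i\|_\infty=O(r^{-\alpha})$ with the loss coming from sign changes of $\phi'$. In fact the sign-change intervals cause no extra loss: by the mean value theorem $|\phi(s_{i,j+1})-\phi(s_{i,j})|=(s_{i,j+1}-s_{i,j})\,|\phi'(\tilde s_{i,j})|$ exactly, so the numerator and denominator of $\widetilde F_{i,d}$ are genuine Riemann sums, and with $\phi\in C^2$ and $\inf_t T_i'\ge\delta$ the forward error is $O(r^{-1})$ uniformly in $i$, with no use of the integrability condition (this is Proposition \ref{prop1} and Corollary \ref{cor1}). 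That condition is needed elsewhere, and only there: to \emph{invert}. Since $\phi'$ may vanish, $F_\phi$ can have a flat derivative, so a sup-norm bound on $\widetilde F_{i,d}-\widetilde F_i$ does not automatically transfer to the generalized inverses. Lemma \ref{lem1} shows, via H\"older's inequality and Zarecki's theorem, that $\int_0^1|\phi'|^{-\epsilon}\,du<\infty$ makes $F_\phi^{-1}$ $\alpha$-H\"older, whence $\widetilde F_i^{-1}=T_i\circ F_\phi^{-1}$ is $\alpha$-H\"older with constant $\|T_i'\|_\infty C_\phi$; this is precisely what converts the $O(r^{-1})$ forward error into the bound $\widetilde F_{i,d}^{-}=\widetilde F_i^{-1}+\|T_i'\|_\infty O(r^{-\alpha})$ of Proposition \ref{prop2}. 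Taken literally, your version either leaves that inversion unjustified (you only assert ``a corresponding bound'' on the inverses), or, if you apply the H\"older inversion to your weaker $O(r^{-\alpha})$ forward bound, yields only $O(r^{-\alpha^2})$, which the hypothesis $r\gg n^{1/2\alpha}$ does \emph{not} make $o(n^{-1/2})$. A secondary point: $r^{-1}=o(h)$ is not implied by $h=o(n^{-1/2})$ and $r\gg n^{1/2\alpha}$; the Nadaraya--Watson bias bound $\|X_i^{\dagger}-\widetilde X_i\|_\infty\le c|\xi_i|h$ used in the paper needs only the compact support of the kernel, not many grid points per window.
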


\begin{remark} \label{rem2}
\normalfont 
\begin{description}
\item[\normalfont{\emph{Independence}}] As in the fully observed setting in Theorem \ref{thm2}, the strong consistency results in the discrete, noiseless observation setting in Theorem \ref{thm4} \textit{do not} require $\xi_{i}$ and $T_{i}$ to be independent. 

\item[\normalfont{\emph{Variable observation grids}}] The asymptotic results remain valid in the case where the grid over which the $\widetilde{X}_{i}$'s are observed, say, $0 \leq t_{i,1} < t_{i,2} < \ldots < t_{i,r_{i}} \leq 1$, differs with $i$. The proof, however, will be notationally quite cumbersome. In this case, the requirement on the grid will be as follows: $\max_{1 \leq j \leq r_{i}-1} (t_{j+1} - t_{j}) = O(r_{i}^{-1})$ as $r_{i} \rightarrow \infty$ for each $i$, and $\widetilde{r}_{n} := \min_{1 \leq i \leq n} r_{i}$ satisfies $\widetilde{r}_{n} >> n^{1/2\alpha}$ as $n \rightarrow \infty$. 

\item[\normalfont{\emph{Smoothing parameter choice}}] The choice of $h$ in Theorem \ref{thm4} is an under-smoothing choice. It is made on account of the absence of measurement errors in the observations, which enables us to under-smooth the data without damaging $\sqrt{n}$-consistency. This is unlike what happens in classical non-parametric regression due to the presence of errors in that scenario. Also, the boundary points inflate the bias of the Nadaraya-Watson estimator to an order of $h$ (the same order as that obtained in Theorem \ref{thm4} for all points). However, these issues are of no consequence in this scenario. It is also natural to under-smooth in this situation since appropriate under-smoothing retains the features of the curves better and allows estimation at a parametric rate even under non-parametric smoothing. If instead of the Nadaraya-Watson estimator, one uses a local linear estimator with bandwidth $h$, then the bias is of order $h^{2}$ (even at the boundaries). In this case, $h$ has to be $o(n^{-1/4})$ to achieve parametric rates of convergence, which is again an under-smoothing choice. Thus, the choice of smoothing method does not play a crucial role in this setup.

\item[\normalfont{\emph{Topology of convergence}}] Unlike Theorem \ref{thm3}, the weak convergence results are all in the $L_{2}$ topology. This is because unlike the fully observed case, the estimators involved are not continuous functions in $[0,1]$. We could not consider the weaker $D[0,1]$ topology since not all estimators will be c\`adl\`ag functions. However, we still retain the strong consistency results in parts (b), (c) and (e) in the sup norm similar to Theorem \ref{thm2}. This is due to the fact that those estimators are uniformly bounded almost surely, and thus have finite sup-norm. Further, in all cases, there is no issue with the measurability of the supremum. 

\item[\normalfont{\emph{Assumption on eigenfunction}}] The condition $\phi \in C^{2}[0,1]$ can be relaxed to requiring that $\phi'$ is Lipschitz continuous. Moreover, the requirement $\int_{0}^{1} |\phi'(u)|^{-\epsilon} < \infty$ for some $\epsilon > 0$ is not restrictive. Of course, it holds if $\phi'$ is bounded away from zero on $[0,1]$, in which case one can choose $\alpha = 1$. Consider the case when $\phi \in C^{2}[0,1]$ and let $t_{0} \in (0,1)$ be such that $\phi'(t_{0}) = 0$. If $\phi''(t_{0}) > 0$, then we can choose an interval $A_{\delta} = (t_{0} - \delta,t_{0} + \delta) \subset (0,1)$ such that $\inf_{u \in A_{\delta}} |\phi''(u)| \geq \beta > 0$. Then, a first order Taylor expansion yields $\int_{A_{\delta}} |\phi'(t)|^{-\epsilon}dt \leq \beta^{-\epsilon} \int_{A_{\delta}} |t - t_{0}|^{-\epsilon}dt < \infty$ for any $\epsilon < 1$. Here, we have used the fact that $\int_{0}^{\delta} t^{-\epsilon}dt < \infty$ for any $\delta > 0$ iff $\epsilon < 1$. Thus, if none of the zeros of $\phi'$ and $\phi''$ coincide, then the condition $\int_{0}^{1} |\phi'(u)|^{-\epsilon} < \infty$ holds for any $\epsilon < 1$. In general, if $\phi \in C^{m}[0,1]$ for some $m \geq 2$, and $m'$ be the least integer between $2$ and $m$ such that none of the zeros of $\phi'$ and $\phi^{(m')}$ coincide, then $\int_{0}^{1} |\phi'(u)|^{-\epsilon} < \infty$ holds for any $\epsilon < 1/(m'-1)$.  
\end{description}
\end{remark}

\indent We finally study the asymptotic properties of the estimators in the modified registration procedure employed when one has contamination by measurement error (described in Section \ref{subsec2-3}).
\begin{theorem}[Limit Theory -- Measurement Error Case] \label{thm-error}
In addition to the assumptions of Theorem \ref{thm3}, assume that $\phi \in C^{4}[0,1]$, $\int_{0}^{1} |\phi'(u)|^{-\epsilon}du < \infty$ for some $\epsilon > 0$. Define $\alpha = \epsilon/(1+\epsilon)$. Assume that $\xi_{i}$ and $T_{i}$ are independent for each $i$. Suppose that $T \in C^{4}[0,1]$ a.s. and $\inf_{t \in [0,1]} T'(u) \geq \delta > 0$ almost surely for a deterministic constant $\delta$. The kernels $k_{1}(\cdot)$ and $k_{2}(\cdot)$ are assumed to be supported on $[-1,1]$, symmetric and continuously differentiable. The errors $\{\epsilon_{ij}\}$ are assumed to be a.s. bounded. Also assume that $E\{|\xi_{1}|^{-2\alpha/(2-\alpha)}\} < \infty$ as well as each  of $E(||T_{1}''||_{\infty}^{2})$, $E(||T_{1}'''||_{\infty}^{2})$ and $E(||T_{1}^{(4)}||_{\infty}^{2})$ are finite. The bandwidths satisfy $h_{1}, h_{2} \rightarrow 0$, $rh_{1}^{3}, rh_{2} \rightarrow \infty$. Then, the estimators in Section \ref{subsec2-3} satisfy the following properties.
\begin{itemize}
\item[(a)] $d^{2}_{W}(\widehat{F}_{e},F_{\phi}) = O_{P}(h_{1}^{4\alpha} + (rh_{1}^{3})^{-\alpha} + n^{-1})$ as $n \rightarrow \infty$. 
\item[(b)] Both $||\widehat{T}_{i,e}^{-1} - T_{i}^{-1}||_{\infty}$ and $||\widehat{T}_{i,e} - T_{i}||_{\infty}$ are $O_{P}(h_{1}^{2\alpha} + (rh_{1}^{3})^{-\alpha/2} + n^{-1/2})$ as $n \rightarrow \infty$. 
\item[(c)] $||\widehat{X}_{i,e}^{*} - X_{i}||_{\infty} = O_{P}(h_{1}^{2\alpha} + (rh_{1}^{3})^{-\alpha/2} + h_{2}^{2} + (rh_{2})^{-1/2} + n^{-1/2})$ as $n \rightarrow \infty$. 
\item[(d)] $||\overline{X}_{e*} - \mu||_{\infty} = O_{P}(h_{1}^{2\alpha} + (rh_{1}^{3})^{-\alpha/2} + h_{2}^{2} + (rh_{2})^{-1/2} + n^{-1/2})$ as $n \rightarrow \infty$. 
\item[(e)] $|||\widehat{\mathscr{K}}_{e*} - \mathscr{K}||| = O_{P}(h_{1}^{2\alpha} + (rh_{1}^{3})^{-\alpha/2} + h_{2}^{2} + (rh_{2})^{-1/2} + n^{-1/2})$ as $n \rightarrow \infty$. Consequently, $||\widehat{\phi}_{e*} - \phi||$ and $|\widehat{\xi}_{i*,e} - \xi|$ have the same rates of convergence for each fixed $i$. 
\end{itemize}
\end{theorem}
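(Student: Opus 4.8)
The plan is to reduce the measurement-error problem to the fully observed ``oracle'' analysis of Theorem~\ref{thm3}, and then to quantify separately the additional error injected by the two smoothing steps. Write $g_i=\widetilde X_i'=\xi_i w_i$ with $w_i=(\phi'\circ T_i^{-1})\,(T_i^{-1})'$, and $\widehat g_i=\widehat X^{(1)}_{i,w}$. First I would establish, from standard uniform local-polynomial theory (bounded errors, $C^1$ kernel on $[-1,1]$, asymptotically homogeneous grid), the rates $\rho_i:=\|\widehat g_i-g_i\|_\infty=O_P(|\xi_i|h_1^2+(rh_1^3)^{-1/2})$ for the local-quadratic derivative estimate and $\|\widehat X_{i,w}-\widetilde X_i\|_\infty=O_P(h_2^2+(rh_2)^{-1/2})$ for the local-linear function estimate; here $C^4$-smoothness of $\phi$ and $T$ (so $T^{-1}\in C^4$) furnishes the bias orders, and local-polynomial boundary adaptation keeps these rates valid uniformly on $[0,1]$. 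Throughout I will use the exact algebraic backbone supplied by Lemma~\ref{expectation_lemma}: in the identifiable regime $\widetilde F_j^{-1}=T_j\circ F_\phi^{-1}$ and $F_\phi^{-1}=\mathbb E[\widetilde F_1^{-1}]$, so that the averaged estimated inverse decomposes as $\widehat F_e^{-1}=n^{-1}\sum_j\widetilde F_{j,w}^{-1}=\overline T\circ F_\phi^{-1}+\overline\Delta$, where $\Delta_j:=\widetilde F_{j,w}^{-1}-\widetilde F_j^{-1}$, $\overline\Delta=n^{-1}\sum_j\Delta_j$, and $\overline T=n^{-1}\sum_j T_j$.

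The crux is a transfer lemma bounding the smoothing perturbation $\|\Delta_i\|_\infty$ of the (inverse) local-variation measure by the derivative error $\rho_i$. Since $\big||\widehat g_i|-|g_i|\big|\le\rho_i$ pointwise, one gets $\|\widetilde F_{i,w}-\widetilde F_i\|_\infty\lesssim \rho_i/\!\int_0^1|g_i|=\rho_i/(|\xi_i|V_i)$, with $V_i=\int_0^1|w_i|$. To invert this sup-norm control into control of $\widetilde F_{i,w}^{-1}$ one must cope with the vanishing of $g_i$ at the (countably many) zeros of $\phi'$. Here I would run a H\"older interpolation: writing $1=|g_i|^{\alpha}|g_i|^{-\alpha}$ and using $\int_0^1|g_i|^{-\epsilon}<\infty$ (guaranteed by $\int_0^1|\phi'|^{-\epsilon}<\infty$ together with $\inf T'\ge\delta>0$, which bounds the change of variables), one obtains for $\widehat t=\widetilde F_{i,w}^{-1}(u)$, $t=\widetilde F_i^{-1}(u)$ the bound $|\widehat t-t|\le(\int_t^{\widehat t}|g_i|)^{\alpha}(\int_0^1|g_i|^{-\epsilon})^{1-\alpha}$, whence (with $\alpha=\epsilon/(1+\epsilon)$, so $\epsilon(1-\alpha)=\alpha$) $\|\Delta_i\|_\infty\lesssim \rho_i^{\alpha}\,|\xi_i|^{-\alpha}\,(I_i^w)^{1-\alpha}$, where $I_i^w=\int_0^1|w_i|^{-\epsilon}$. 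Feeding in $\rho_i$ and using $(a+b)^\alpha\le a^\alpha+b^\alpha$ splits this into a bias part $\propto h_1^{2\alpha}$ (no negative $\xi$-power, since $(|\xi_i|h_1^2)^\alpha|\xi_i|^{-\alpha}=h_1^{2\alpha}$) and a variance part $\propto |\xi_i|^{-\alpha}(rh_1^3)^{-\alpha/2}$. The amplification $|\xi_i|^{-\alpha}$ reflects that a small amplitude $\xi_i$ makes the normalised signal weak relative to the fixed-variance noise; it is precisely this term whose aggregation across $i$ requires the negative-moment assumption $\mathbb E|\xi_1|^{-2\alpha/(2-\alpha)}<\infty$ (together with $\mathbb E\|T_1^{(l)}\|_\infty^2<\infty$, $l=2,3,4$, controlling the random bias/interpolation constants), the exponent being dictated by balancing this variance aggregate against the quadratic ($L^2$) aggregation needed for the covariance operator.

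With the transfer lemma in hand, I would assemble (a)--(e) while scrupulously avoiding spurious extra $\alpha$-losses, the second main technical point. For (a), $d_W(\widehat F_e,F_\phi)=\|\widehat F_e^{-1}-F_\phi^{-1}\|_{L^2}\le\|(\overline T-Id)\circ F_\phi^{-1}\|_{L^2}+\|\overline\Delta\|_{L^2}$; the first summand is $O_P(n^{-1/2})$ by the central-limit behaviour of $\overline T$ as in Theorem~\ref{thm3}, the second is $O_P(h_1^{2\alpha}+(rh_1^3)^{-\alpha/2})$ by averaging the transfer bound, giving the claimed $O_P(h_1^{4\alpha}+(rh_1^3)^{-\alpha}+n^{-1})$ after squaring. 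For (b) I would resist composing the \emph{non-Lipschitz} map $\widetilde F_i^{-1}$ (or $F_\phi^{-1}$) against a perturbed argument; instead, writing $\widehat T_{i,e}=\widetilde F_{i,w}^{-1}\circ\widehat F_e=T_i\circ(F_\phi^{-1}\circ\widehat F_e)+\Delta_i\circ\widehat F_e$, I compute the inverse $(F_\phi^{-1}\circ\widehat F_e)^{-1}=\widehat F_e^{-1}\circ F_\phi=\overline T+\overline\Delta\circ F_\phi$ explicitly, so that $\|F_\phi^{-1}\circ\widehat F_e-Id\|_\infty=\|(F_\phi^{-1}\circ\widehat F_e)^{-1}-Id\|_\infty\le\|\overline T-Id\|_\infty+\|\overline\Delta\|_\infty$ by the invariance of the sup-norm under inversion of monotone surjections of $[0,1]$. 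Composition now lands only on the \emph{Lipschitz} true warp $T_i$, yielding $\|\widehat T_{i,e}-T_i\|_\infty=O_P(h_1^{2\alpha}+(rh_1^3)^{-\alpha/2}+n^{-1/2})$ with no extra power, and likewise for $\widehat T_{i,e}^{-1}$. Parts (c),(d) then follow by composing the Lipschitz warped curve $\widetilde X_i$ with $\widehat T_{i,e}$ and adding the local-linear function error $h_2^2+(rh_2)^{-1/2}$, then averaging. For (e), the empirical covariance inherits the curve rate in trace norm, and a Davis--Kahan perturbation argument using the rank-one spectral gap ($\operatorname{var}(\xi)>0$ separating the single nonzero eigenvalue of $\mathscr K$ from $0$) transfers the same rate to $\widehat\phi_{e*}$ and thence to $\widehat\xi_{i*,e}$.

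The hard part will be the transfer lemma of the second paragraph: it is the only place where the vanishing of $\phi'$, the optimal-transport structure of the local-variation map, the randomness of the normalising total variation $|\xi_i|V_i$, and the smoothing error genuinely interact, and it is what forces both the exponent $\alpha=\epsilon/(1+\epsilon)$ and the negative-moment condition on $\xi_1$. The compositional bookkeeping of the third paragraph is delicate but essentially mechanical once one adopts the discipline of \emph{inverting before composing} and only ever composing against Lipschitz maps.
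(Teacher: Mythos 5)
Your proposal is correct and follows essentially the same route as the paper's proof: a local-polynomial error bound of order $h_1^2+(rh_1^3)^{-1/2}$ for the derivative estimate, a transfer to $\|\widetilde F_{i,w}^{-1}-\widetilde F_i^{-1}\|_\infty\lesssim \|T_i'\|_\infty\,|\xi_i|^{-\alpha}(\mbox{smoothing error})^{\alpha}$ via the $\alpha$-H\"older continuity induced by $\int_0^1|\phi'|^{-\epsilon}<\infty$, aggregation over $i$ by H\"older's inequality with conjugate exponents $2/\alpha$ and $2/(2-\alpha)$ (which is exactly where $E|\xi_1|^{-2\alpha/(2-\alpha)}<\infty$ enters), and assembly on top of Theorem \ref{thm3}. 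Your direct H\"older-interpolation derivation of the transfer bound is a repackaging of Lemma \ref{lem1} and Proposition \ref{prop2}(a), and your ``invert before composing'' bookkeeping in (b) achieves the same single-$\alpha$ loss as the paper's composition tracking, so the differences are organizational rather than substantive.
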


\begin{remark} \label{rem-err}
\normalfont 
\begin{description}
\item[\normalfont{\emph{Smoothing techniques}}] Analogous rates of convergence can also be obtained if one uses different non-parametric smoothing techniques than the ones in the theorem. One may, e.g., use a Nadaraya-Watson estimator in Step 3** with boundary kernels to alleviate the boundary bias problem that is well-known for this estimator (see, e.g., \cite{WJ95}). Also, to estimate $\widetilde{X}_{i}'$, one may use higher order local polynomials with even orders. However, these will be computationally more intensive as well as need additional smoothness assumptions on the latent process and the warp maps.

\item[\normalfont{\emph{Rates of convergence}}] It is observed in the above theorem that the rates of convergence are slower than the parametric rates achieved in the earlier settings due to the non-parametric smoothing steps involved -- especially the estimation of derivatives, which is known to have quite slow rates of convergence. Further, the contributions of the two smoothing steps in the convergence rates are clear. It is well known in local linear regression that the optimal rate for $h_{1}$ is $r^{-1/7}$ and that for $h_{2}$ is $r^{-1/5}$. With these rates, we have $d^{2}_{W}(\widehat{F}_{e},F_{\phi}) = O_{P}(r^{-4\alpha/7} + n^{-1})$, and the remaining quantities are $O_{P}(r^{-2\alpha/7} + n^{-1/2})$. Thus, parametric rates of convergence is achieved if $r > n^{7/4\alpha}$.

\item[\normalfont{\emph{Assumption on principal component}}] Let $\beta = 2\alpha/(2-\alpha)$ and observe that $\beta < 2$ since $\alpha < 1$. The condition $E\{|\xi_{1}|^{-\beta}\} < \infty$ in Theorem \ref{thm-error} is obviously satisfied if $|\xi_{1}|$ is bounded away from zero. Suppose that $\xi_{1}$ has a continuous density $f_{\xi}$, say, either on $[0,\infty)$ or on $(-\infty,\infty)$ in which case it is assumed to be symmetric about zero. If $\sup_{y \in [0,a)} f_{\xi}(y) < \infty$ for some $a > 0$, then it is easy to show that $E\{|\xi_{1}|^{-\beta}\} < \infty$ if $\beta < 1 \Leftrightarrow \epsilon < 2$, which is quite general in view of point (4) in Remark \ref{rem2}. If $\beta \in [1,2)$, then this expectation is finite if $\sup_{y \in [0,a)} y^{-1}f_{\xi}(y) < \infty$. 
\end{description}
\end{remark}

\subsection{{Potentially} Unidentifiable Regime}\label{misspecified_asymptotics}

\indent {As emphasized before (Section \ref{subsec2-1}), our procedure can be used whether or not the latent process falls in the identifiable regime of Definition \ref{identifiable_regime}. In this section, we carry out a theoretical analysis of the stability of our registration procedure when the distribution of the latent process deviates from the identifiable regime.} Since identifiability is lost, it is clear that consistency is no longer achievable. However, we can quantify how much the estimators deviate from their population counterparts, at least asymptotically. {Since the model may fail to be identifiable}, strictly speaking there is no unique setting corresponding to the law of the data. For this reason, as a convention, we will assume that a ``true" underlying distribution is known and fixed. For simplicity of exposition, we focus on the rank two case. This will be seen to carry the essence of the underlying effects, as we discuss in the third point of Remark \ref{rem3}. To obtain more transparent results, we focus on the case where the underlying functions are completely observable as continuous objects.

Let $X_{i} = \xi_{i1}\phi_{1} + \xi_{i2}\phi_{2}$ for $i=1,2,\ldots,n$, where $\xi_{i1}$ and $\xi_{i2}$ are uncorrelated, and $\phi_1$ and $\phi_2$ are two orthonormal elements of $L_2[0,1]$.  Let $\mu = E(X_{1}) = E(\xi_{11})\phi_{1} + E(\xi_{12})\phi_{2}$. Denote $\gamma_{l}^{2} = Var(\xi_{1l})$ and $Y_{il} = [\xi_{il} - E(\xi_{il})]/\gamma_{il}$ for $l=1,2$. Then, 
\begin{equation}\label{misspecified_model}
X_{i} = \mu + \gamma_{1}Y_{i1}\phi_{1} + \gamma_{2}Y_{i2}\phi_{2}
\end{equation}
gives the Karhunen-Lo\`eve expansion of $X_{i}$. The (random) local variation distribution induced by $X_{i}$ is $F_{i}(t) = \int_{0}^{t} |X_{i}'(u)|du/\int_{0}^{1} |X_{i}'(u)|du$ for $t \in [0,1]$. Note that contrary to the rank one case, where $\mu$ did not play a role in $F_{i}$ (due to cancellation of the term $\xi_{1}$ from the numerator and the denominator), here it cannot be neglected. We will later see that it will play a role in the performance of the estimators. Defining $\eta = \gamma_{2}/\gamma_{1}$, which is the square root of the inverse of the condition number, it follows that
\begin{eqnarray*}
F_{i}(t) = \frac{\int_{0}^{t} |\gamma_{1}^{-1}\mu'(u) + Y_{i1}\phi_{1}'(u) + {\eta}Y_{i2}\phi_{2}'(u)|du}{\int_{0}^{1} |\gamma_{1}^{-1}\mu'(u) + Y_{i1}\phi_{1}'(u) + {\eta}Y_{i2}\phi_{2}'(u)|du}.
\end{eqnarray*}
The local variation distribution induced by the observed warped data $\widetilde{X}_{i} = X_{i} \circ T_{i}^{-1}$ is given by
\begin{eqnarray*}
\widetilde{F}_{i}(t) = \frac{\int_{0}^{t} |\widetilde{X}_{i}'(u)|du}{\int_{0}^{1} |\widetilde{X}_{i}'(u)|du} = \frac{\int_{0}^{T_{i}^{-1}(t)} |\gamma_{1}^{-1}\mu'(u) + Y_{i1}\phi_{1}'(u) + {\eta}Y_{i2}\phi_{2}'(u)|du}{\int_{0}^{1} |\gamma_{1}^{-1}\mu'(u) + Y_{i1}\phi_{1}'(u) + {\eta}Y_{i2}\phi_{2}'(u)|du} = F_{i}(T_{i}^{-1}(t)).
\end{eqnarray*}

The idea is that if under suitable conditions the $F_{i}$'s manifest small variability, then the registration procedure will work quite well. We will illustrate two different situations where this is the case. The estimators of the population parameters will be the same as those considered earlier. The next theorem gives bounds on the estimation errors.
\begin{theorem} \label{thm5}
In the setting of the model provided in (\ref{misspecified_model}), define 
$$Z_i=\begin{cases}
      2\int_{0}^{1} |X_{i}'(u) - \mu'(u)|du/\int_{0}^{1} |X_{i}'(u)|du& \text{ if } \mu' \neq 0, \\
      2\eta\int_{0}^{1} |Y_{i2}\phi_{2}'(u)|du/\int_{0}^{1} |Y_{i1}\phi_{1}'(u) + {\eta}Y_{i2}\phi_{2}'(u)|du & \text{ if }\mu' = 0
\end{cases}$$
for $i=1,2,\ldots,n$. If $\mu' \neq 0$, assume that $\int_{0}^{1} |\mu'(u)|^{-\epsilon}du < \infty$ for some $\epsilon > 0$, and if $\mu' = 0$, assume that $\int_{0}^{1} |\phi_{1}'(u)|^{-\epsilon}du < \infty$ for some $\epsilon > 0$. Set $\alpha = \epsilon/(1+\epsilon)$. Suppose that assumption (I2) from Definition (\ref{identifiable_regime}) holds and that for each $i=1,2$, $\phi_{i}$ lie in $C^{1}[0,1]$ with the derivative being $\alpha_{i}$-H\"older continuous for some $\alpha_{i} \in [0,1]$. Assume that $X_{i}$ and $T_{i}$ are independent for each $i$. Also assume that $E(Z_{1}^{\alpha}) < \infty$. Then:
\begin{itemize}
\item[(a)] $\limsup_{n \rightarrow \infty} ||\widehat{T}_{i}^{-1} - T_{i}^{-1}||_{\infty} \leq const.\{E(Z_{1}^{\alpha}) + Z_{i}\}$, and $\limsup_{n \rightarrow \infty} ||\widehat{T}_{i} - T_{i}||_{\infty} \leq const. ||T_{i}'||_{\infty}\{Z_{i}^{\alpha} + E^{\alpha}(Z_{1}^{\alpha})\}$ almost surely, where the constant term is uniform in $i$. 
\item[(b)] $\limsup_{n \rightarrow \infty} ||\widehat{X}_{i} - X_{i}||_{\infty} \leq O_{P}(1)\{E(Z_{1}^{\alpha}) + Z_{i}\}$  almost surely. 
\end{itemize}
\end{theorem}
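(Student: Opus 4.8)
The plan is to exploit the representation $\widetilde{F}_i = F_i\circ T_i^{-1}$, which by Remark \ref{general-case} remains valid off the identifiable regime, together with a \emph{deterministic rank-one proxy} for the now-random $F_i$, and then to control every estimation error by the discrepancy between $F_i$ and this proxy -- which is precisely what $Z_i$ measures. First I would pin down the population limit of the only sample-dependent ingredient. Since $\widehat{F}^{-1}=n^{-1}\sum_i \widetilde{F}_i^{-1}$ is an average of i.i.d.\ random quantile functions, a uniform strong law in the relevant functional norm gives $\widehat{F}^{-1}\to E[\widetilde{F}_1^{-1}]$ almost surely, and the conditional identity $E[\widetilde{F}^{-1}\mid X]=F^{-1}$ of Remark \ref{general-case} identifies this limit as $\psi:=E[F_1^{-1}]$. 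I then introduce the deterministic proxy $G$: take $G=F_\mu$, the local variation distribution of $\mu$, when $\mu'\neq 0$, and $G=F_{\phi_1}$ when $\mu'=0$ (the sign of the dominant score cancels in the normalisation, so $G$ is indeed deterministic and $i$-free in both cases).

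The first key step is the deterministic inequality $\|F_i-G\|_\infty\le Z_i$, and this is exactly where the precise form of $Z_i$, including the factor $2$, is used. Writing the unnormalised numerator $\int_0^t|X_i'|$ and comparing it with $\int_0^t|\mu'|$ (respectively the $\phi_1'$ term), a reverse-triangle estimate bounds the deviation by $\tfrac12 Z_i\int_0^1|X_i'|$; after accounting for the two normalisations at $t=1$, the factor $2$ converts this into $\|F_i-G\|_\infty\le Z_i$. The second key step is the quantile-inversion (Hölder) estimate underlying Theorems \ref{thm2}--\ref{thm4}: if a distribution $H$ on $[0,1]$ satisfies $\int_0^1(H')^{-\epsilon}<\infty$ then $H^{-1}$ is $\alpha$-Hölder with $\alpha=\epsilon/(1+\epsilon)$, the exponent arising from a single application of Hölder's inequality to $\int_a^b 1=\int_a^b (H')^{\alpha}(H')^{-\alpha}$. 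Applying this to $G$ (whose integrability hypothesis is exactly the stated assumption on $\mu'$ or $\phi_1'$) upgrades $\|F_i-G\|_\infty\le Z_i$ to $\|F_i^{-1}-G^{-1}\|_\infty\lesssim Z_i^{\alpha}$; taking expectations gives $\|\psi-G^{-1}\|_\infty\le E\|F_1^{-1}-G^{-1}\|_\infty\lesssim E(Z_1^\alpha)$, and since $G$ is Lipschitz ($\mu,\phi_1\in C^1$) one passes back to $\|\widehat{F}-G\|_\infty\lesssim E(Z_1^\alpha)$, whence $\limsup_n\|\widehat{F}-F_i\|_\infty\lesssim E(Z_1^\alpha)+Z_i$.

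With these two ingredients the three bounds assemble essentially by algebra. For the warp map in (a), $\widehat{T}_i=\widetilde{F}_i^{-1}\circ\widehat{F}$ and $T_i=\widetilde{F}_i^{-1}\circ F_i$ with $\widetilde{F}_i^{-1}=T_i\circ F_i^{-1}$ the composition of a $\|T_i'\|_\infty$-Lipschitz map and an $\alpha$-Hölder quantile function; composing with the \emph{forward} difference $\|\widehat{F}-F_i\|_\infty\lesssim E(Z_1^\alpha)+Z_i$ and using subadditivity of $x\mapsto x^\alpha$ produces the stated $\|T_i'\|_\infty\{Z_i^\alpha+E^\alpha(Z_1^\alpha)\}$. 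The registration map reduces to $\|\widehat{T}_i^{-1}-T_i^{-1}\|_\infty=\|\widehat{F}^{-1}-F_i^{-1}\|_\infty$ (composition with the onto map $\widetilde{F}_i$ is norm-preserving), which is controlled through $\psi$ and $G^{-1}$ by $E(Z_1^\alpha)$ plus a term governed by the $i$-th deviation $\|F_i-G\|_\infty\le Z_i$. Part (b) is the one place a genuinely sharper idea is needed, since a naive composition would only give a $Z_i^\alpha$ rate whereas the theorem asserts the linear rate $Z_i$. The resolution is the \emph{constant-speed} reparametrisation $g_i:=X_i\circ F_i^{-1}=\widetilde{X}_i\circ\widetilde{F}_i^{-1}$: a direct chain-rule computation shows $|g_i'|\equiv\int_0^1|X_i'|=:V_i$ almost everywhere, because the factor $X_i'$ produced by differentiation exactly cancels the blow-up of $(F_i^{-1})'$ at the zeros of $X_i'$. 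Hence $g_i$ is globally $V_i$-Lipschitz; since $\widehat{X}_i=g_i\circ\widehat{F}$ and $X_i=g_i\circ F_i$, we obtain $\|\widehat{X}_i-X_i\|_\infty\le V_i\|\widehat{F}-F_i\|_\infty$ with \emph{no} Hölder loss, giving $O_P(1)\{E(Z_1^\alpha)+Z_i\}$ because $V_i=O_P(1)$.

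The main obstacle, I expect, lies in the two ``key steps'' rather than in the final assembly. First, one must establish the uniform (in $i$, and almost sure) strong law for the random quantile functions $\widetilde{F}_i^{-1}$ in the correct functional topology and identify its limit through the conditional identity of Remark \ref{general-case}; this is where the unidentifiable regime genuinely departs from the rank-one analysis, as $F_i$ is no longer deterministic. Second, the inversion estimates must be carried out with a density that is allowed to vanish (controlled only through $\int_0^1|\mu'|^{-\epsilon}$ or $\int_0^1|\phi_1'|^{-\epsilon}$), and here care is needed because the Hölder constant of $F_i^{-1}$ a priori depends on $\int_0^1|X_i'|^{-\epsilon}$ rather than on the proxy. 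Routing all inversions through the deterministic $G^{-1}$, and recognising the constant-speed structure of $g_i$ for part (b), is precisely what keeps these constants uniform in $i$ and delivers the asserted dependence on $Z_i$ and $E(Z_1^\alpha)$.
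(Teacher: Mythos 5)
Your core argument coincides with the paper's own proof: both introduce the deterministic proxy $G=F_{\mu}$ (resp.\ $F_{\phi_{1}}$ when $\mu'=0$), establish $||F_{i}-G||_{\infty}\leq Z_{i}$ by exactly the reverse-triangle/normalisation computation you describe, invert through the $\alpha$-H\"older continuity of $G^{-1}$ so that every H\"older constant is deterministic and uniform in $i$ (never touching the modulus of continuity of $F_{i}^{-1}$ itself, which is the trap you correctly identify), and control the average of the $\widetilde{F}_{i}^{-1}=T_{i}\circ F_{i}^{-1}$ by a strong law. The paper routes this last step through the empirical object $\widehat{G}^{-1}=\overline{T}\circ G^{-1}$ and a three-term decomposition of $\widehat{T}_{i}^{-1}-T_{i}^{-1}$ and of $\widehat{T}_{i}-T_{i}$, rather than passing to the population limit $\psi=E[F_{1}^{-1}]$ as you do; that difference is cosmetic. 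Where you genuinely depart is part (b): the paper deduces it from part (a) via the identity $\sup_{t}|T_{i}^{-1}(\widehat{T}_{i}(t))-t|=||\widehat{T}_{i}^{-1}-T_{i}^{-1}||_{\infty}$ and the $O_{P}(1)$ Lipschitz constant of $X_{i}$, whereas you use the constant-speed reparametrisation $g_{i}=X_{i}\circ F_{i}^{-1}=\widetilde{X}_{i}\circ\widetilde{F}_{i}^{-1}$, Lipschitz with constant $V_{i}=\int_{0}^{1}|X_{i}'(u)|du$, together with $\widehat{X}_{i}=g_{i}\circ\widehat{F}$ and $X_{i}=g_{i}\circ F_{i}$. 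Your route is self-contained and makes the linear dependence on $Z_{i}$ in (b) transparent, since $||\widehat{F}-F_{i}||_{\infty}$ is controlled with no H\"older loss; it is arguably cleaner than the paper's. Two small points to make explicit when writing it up: the passage from $||\widehat{F}^{-1}-G^{-1}||_{\infty}$ back to $||\widehat{F}-G||_{\infty}$ needs $G$ strictly increasing with bounded density, which follows from $\mu\in C^{1}$ (resp.\ $\phi_{1}\in C^{1}$) and $\int_{0}^{1}|\mu'(u)|^{-\epsilon}du<\infty$, exactly as in the paper; and for the $i$-specific contribution to $||\widehat{T}_{i}^{-1}-T_{i}^{-1}||_{\infty}$ both your sketch and the paper's displayed chain actually pass through the H\"older inversion of $G^{-1}$ and therefore naturally produce $Z_{i}^{\alpha}$ rather than the $Z_{i}$ announced in the statement (the paper silently drops the exponent $\alpha$ at that step), so you should either state that term as $Z_{i}^{\alpha}$ or accept the same harmless looseness as the paper.
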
 

\begin{remark} \label{rem3}
\normalfont
\begin{description}
\item[\normalfont{\emph{Model misspecification}}] Theorem \ref{thm5} reveals that if the $Z_{i}$ are small, the effect of misspecification is also small. Here are two such cases: \\
\indent (a) \indent When $\mu' \neq 0$, $Z_{i} = \int_{0}^{t} |Y_{i1}\phi_{1}'(u) + {\eta}Y_{i2}\phi_{2}'(u)|du/\int_{0}^{1} |\gamma_{1}^{-1}\mu'(u) + Y_{i1}\phi_{1}'(u) + {\eta}Y_{i2}\phi_{2}'(u)|du$. So, in this case, if $|\gamma_{1}^{-1}\mu'|$ has a large enough contribution compared to $|Y_{i1}\phi_{1}' + {\eta}Y_{i2}\phi_{2}'|$ for all $i$, then the $Z_{i}$'s are small. 

\indent (b) \indent On the other hand, if $\mu' = 0$, then if $\eta$ is small, i.e., the condition number of the process is large (which essentially implies that the process is ``close" to a rank one process provided $E(\xi_{12}) = 0$), then the $Z_{i}$'s are small. This can be compared to the minimum eigenvalue registration principle of \cite{RS05}, where one tries to find the warp function that minimises the second eigenvalue of the cross-product matrix between the target function and the registered function. Assume that $E(\xi_{i1}) = E(\xi_{i2}) = 0$ and without loss of generality that $\gamma_{1} = 1$. If in reality the true unobserved curves are rank one, i.e., the $\xi_{i1}\phi_{1}$ component, and we observe warped versions of the rank two curves $X_{i}$'s, then (in the population case) correct registration is achieved by $T_{i}$ if the minimum eigenvalue, namely $\gamma_{2}^{2} = \eta^{2}$, of the expected cross-product matrix equals zero. Thus, in the empirical case, if $\eta$ is close to zero, we may expect $\widehat{T}_{i}$ to be close to $T_{i}$ and consequently expect the registration procedure to have good performance. 

\item[\normalfont{\emph{Convergence of other model parameters}}] Bounds similar to those in (a) and (b) of Theorem \ref{thm5} can also be obtained for the mean, the covariance, the $\gamma_{l}$'s and the $\phi_{l}$'s as well as the principal components $Y_{il}$'s. We do not include them in the statement of the theorem because they need more complicated conditions involving the parameters. 

\item[\normalfont{\emph{General (possibly infinite) rank situation}}] Let $X_{i} = \mu + \sum_{j=1}^{M} \gamma_{j}Y_{ij}\phi_{j}$ for some $1 \leq M \leq \infty$, where the $\{Y_{ij} : j =1,2,\ldots,M\}$ are uncorrelated with zero mean and unit variance. Without loss of generality, we assume that $\gamma_{1} > \gamma_{2} > \ldots \geq 0$. The errors in estimation when $\mu' \neq 0$ remain the same as in Theorem \ref{thm5}. When $\mu' = 0$, then we define $Z_{i} = 2\eta\int_{0}^{1} |Y_{i2}\phi_{2}'(u) + \sum_{k \geq 3} \delta_{k}Y_{ik}\phi_{k}'(u)|du/\int_{0}^{1} |Y_{i1}\phi_{1}'(u) + {\eta}[Y_{i2}\phi_{2}'(u) + \sum_{k \geq 3} \delta_{k}Y_{ik}\phi_{k}'(u)]|du$ for $i=1,2,\ldots,n$, where $\delta_{k} = \gamma_{k}/\gamma_{2}$ for $k \geq 3$. In this case, under the conditions of Theorem \ref{thm5}, the bounds as in that theorem still hold true. Note that $\delta_{k} \leq 1$ for all $k \geq 3$. So, in the general case, the performance of the registration procedure studied in the paper will only depend on how small $\eta$ is and does not in general depend on the values of the $\delta_{k}$'s (or the $\gamma_{j}$'s for $j \geq 3$). In other words, only the behaviour of the second frequency component relative to the first one matters (which elucidates the role of $\delta$ in the standard model, i.e. Equation  \ref{standard_model}, whose role is precisely to tune this behaviour). Of course, the magnitude of the error in estimation for the same value of $\eta$ will now differ from the rank $2$ case because of the presence of the additional terms. We have investigated these issues in a simulation study in Section \ref{misspecified_simulations} (see, in particular, Figure \ref{Fig6}).

\item[\normalfont{\emph{Comparison with pairwise warping}}] In the setup of the infinite rank latent model considered in item (3) above, we now compare the bounds obtained in Theorem \ref{thm5} to those obtained by \cite{TM08}. Denoting $\sum_{j=1}^{M} \gamma_{j}Y_{ij}\phi_{j} = {\kappa}W_{i}$, it follows that the latent model is exactly the same as considered in that paper (see p. 877 with $\delta$ there replaced by $\kappa$). So, if $\mu' \neq 0$, it follows that $Z_{i} = 2\kappa\int_{0}^{1} |W_{i}'(u)|du/\int_{0}^{1} |\mu'(u) + {\kappa}W_{i}'(u)|du = O_{P}(\kappa)$, which is similar to the bound obtained in \cite{TM08}. Our analysis nevertheless refines the results of \cite{TM08} in the sense that it reveals the impact of $\mu$ on the asymptotic bias -- larger magnitudes of $\mu'$ yield smaller asymptotic bias.  Further refinements can be offered by differentiating between the cases  $\mu' \neq 0$ and $\mu' = 0$. Specifically, when $\mu' = 0$, it can be shown that $Z_{i} = 2\int_{0}^{1} |W_{i}'(u) - Y_{i1}\phi_{1}'(u)|du/\int_{0}^{1} |W_{i}'(u)|du$. Thus, in this case, the error bounds on the warp maps in Theorem \ref{thm5} do not depend on $\kappa$. This is to be expected for the following reason. Note that $\mu' = 0$ means that the latent process in this case is $X(t) = c + {\kappa}W(t)$ for a constant $c$, and hence, the warped process is $\widetilde{X}(t) = c + {\kappa}W(T^{-1}(t))$. Thus, the warped version of the process $X$ differs from the warped version of the process $W$ only by a constant shift and a scale factor. Ideally, any proper registration procedure should be invariant with respect these transformations since they do not affect the time scale. This is clearly true for our procedure. We should thus get the same estimates of the warp maps if we work with the warped process $W(T^{-1}(t))$ (which does not involve $\kappa$) instead of $\widetilde{X}$. 
\end{description}
\end{remark}

\section{Numerical Experiments} \label{sec3}

We now carry out simulation experiments to probe the finite-sample performance of our registration procedure. First we treat the case of a well-specified identifiable regime without error, and then separately the case when there are measurement errors in the observations. Finally, we consider the setup when the rank of the latent process is more than one (departure from identifiability). In all cases, we have compared the performance of the proposed registration method to the continuous monotone registration (CMR) method by \cite{RL98}, the pairwise registration (PW) technique of \cite{TM08} and registration using the Fisher-Rao metric (FMR) studied in \cite{SWKKM11}. The CMR procedure is implemented using the ``register.fd'' function in the \texttt{R} package \texttt{fda}. The PW procedure is implemented using the \texttt{Matlab} codes in the \texttt{PACE} package. The FMR method is implemented using the ``time\_warping'' function in the \texttt{R} package \texttt{fdasrvf}. The tuning parameters in the PW method are always chosen to be the default ones since the other choices were found to be computationally extremely intensive. For the CMR procedure, we compared its performance by using different numbers of B-spline basis functions in the structure of the warp maps (see \cite{RL98}). This varies their complexity. However, we found that the best performance was obtained when the warp maps are simple. As will be seen in the  simulations, the registration procedures involving structural assumptions on warp maps and consequently more tuning parameters (CMR and PW) encounter difficulties in several of the models considered, which is probably due to the mis-specification of the true warping mechanism.

\subsection{Identifiable Regime Without Measurement Error} \label{well-specified}

Let $X(t) = \xi\phi(t), t \in [0,1]$, and consider two models: 
 \begin{itemize}
\item[Model 1:]  $\xi \sim N(1.5,1)$, $\phi(t) = \exp\{cos(2{\pi}t-\pi)\}$; 

\item[Model 2:] $\xi \sim 1 + Beta(2,2)$, $\phi(t) = \{1 - (t-0.25)^{2}\}\cos(3{\pi}t)$. 
\end{itemize}
In either case, the sample size is $n = 50$ and the curves are observed at $r = 101$ equally spaced points in $[0,1]$. The warp maps are chosen according to the last point of Remark \ref{rem1} with the parameters $J = 2$, $K = V_{1}V_{2}$, where $V_{1} \sim Poisson(3)$, $P(V_{2} = \pm 1) = 1/2$ with $V_{2}$ independent of $V_{1}$, and $\beta = 1.01$. 

The kernel for the Nadaraya-Watson estimator as well as the one used to smooth the $\widehat{T}_{i,d}$'s is the Epanechnikov kernel on $[-1,1]$. For both the models, the bandwidths used in the registration procedure were chosen to under-smooth the data so that the features (maxima, minima, etc.) are not smeared out. In order to provide smooth registered curves, we have smoothed the $\widehat{T}_{i,d}$'s using cubic splines with $11$ equi-spaced knots on $[0,1]$, prior to synchronising the data. \\
\indent Figure \ref{Fig2} shows the plots of the true, warped and registered data curves; the true, warped and registered means; and the true, warped and registered leading eigenfunctions under Model 1 and Model 2. Figure \ref{Fig2} suggests that the procedure studied in this paper has been able to adequately register the discretely observed and warped sample curves. Moreover, it is clear that the cross-sectional mean and the leading eigenfunction of the warped curves differ from the true mean and leading eigenfunction in either amplitude or phase (under either model), while the registration procedure corrects the problem, and the resulting estimates (whether smoothed or raw) are very close to the true functions. \\
\begin{figure}[ht!]
\vspace{-0.2in}
\begin{center}
{
\includegraphics[scale=0.46]{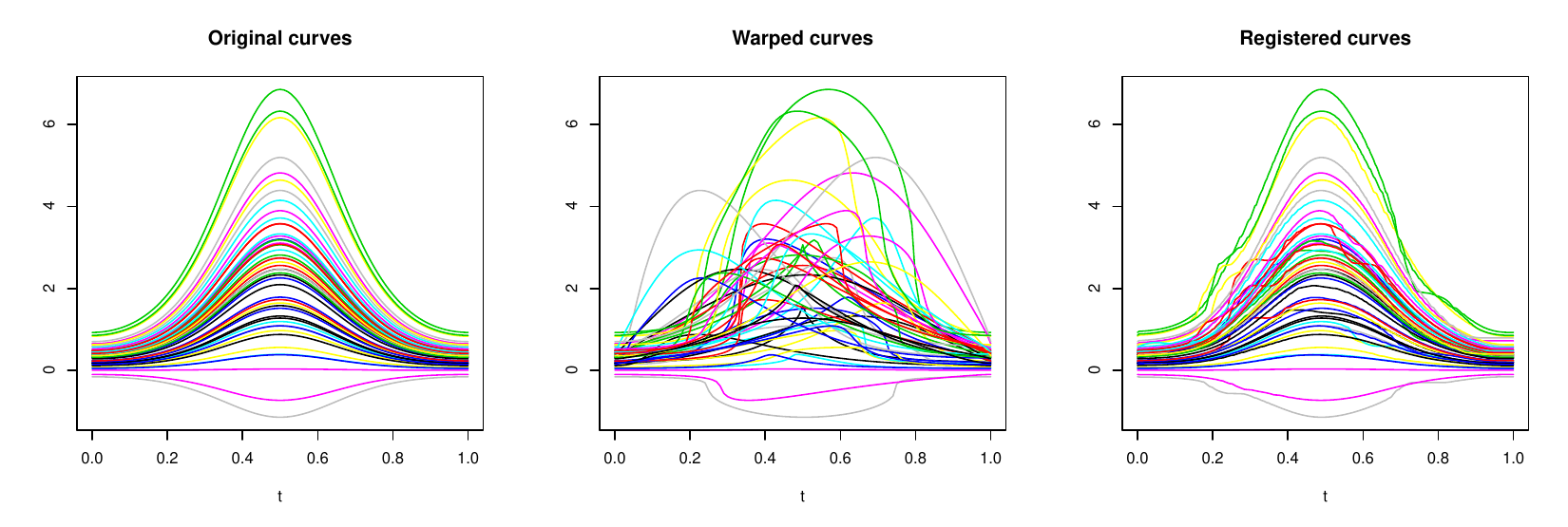}
\includegraphics[scale=0.3]{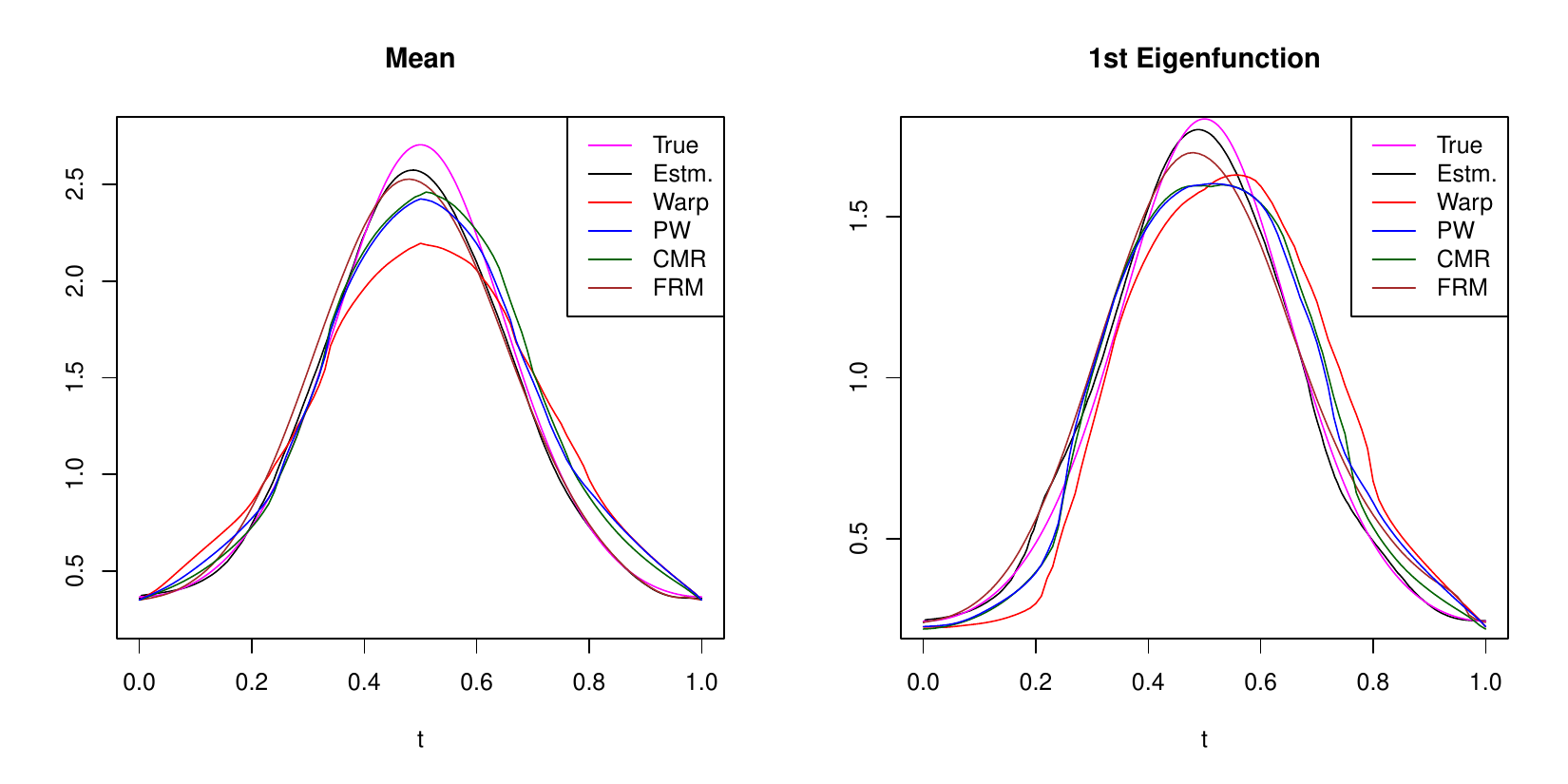}
\includegraphics[scale=0.46]{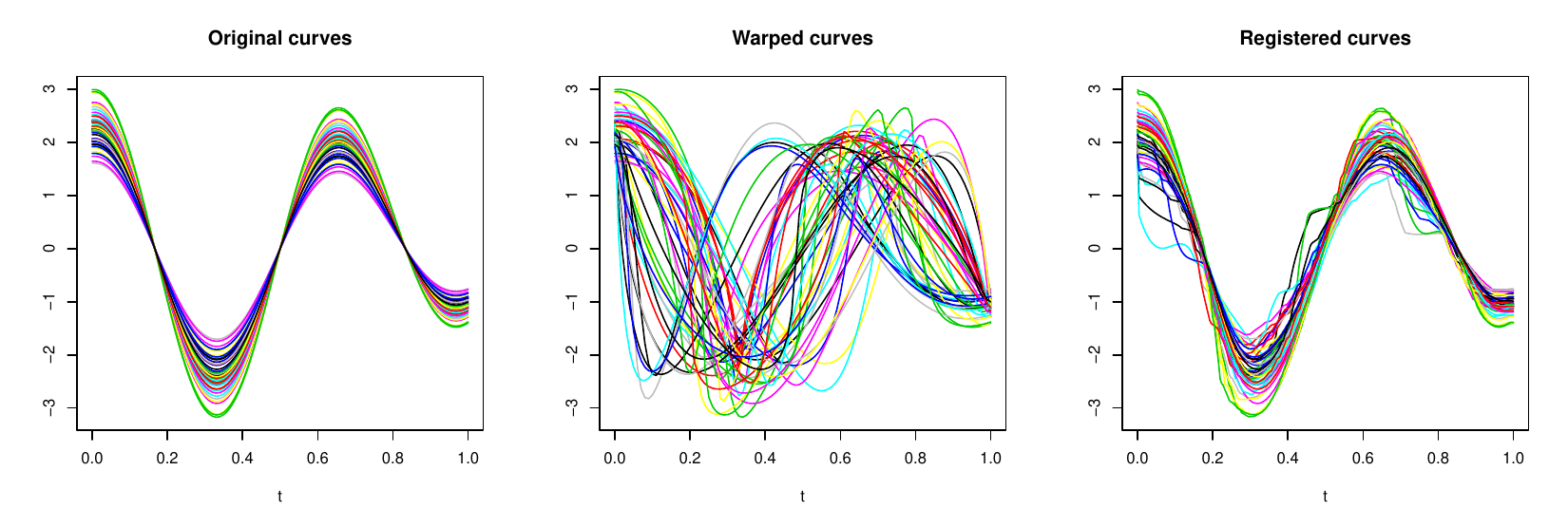}
\includegraphics[scale=0.3]{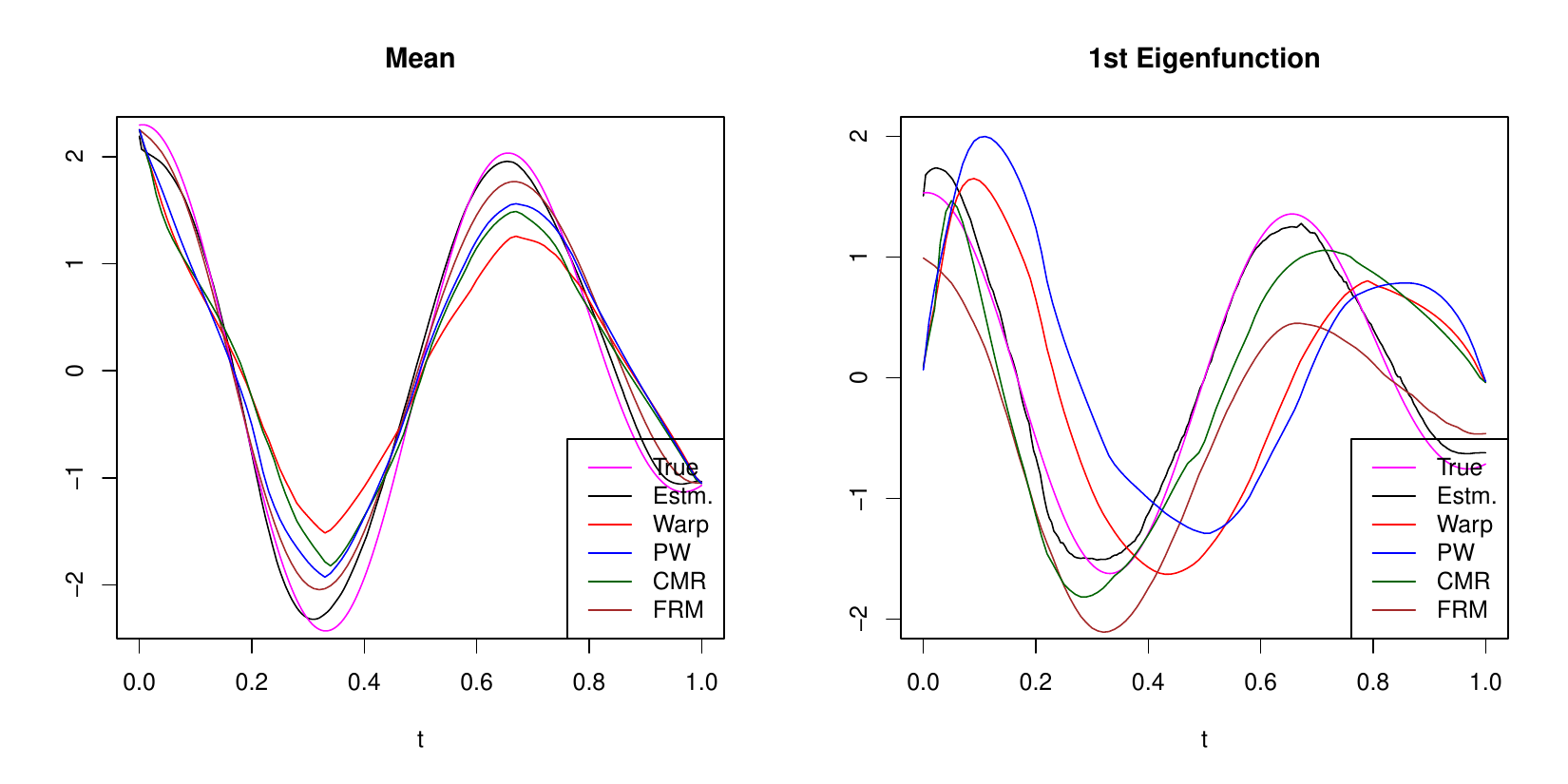}
}
\end{center}
\vspace{-0.3in}
\caption{Plots of the true, warped and registered data curves (using our procedure) along with the estimated mean leading eigenfunction under Model 1 (top two rows) and Model 2 (bottom two rows) without measurement error obtained using our procedure as well as some other methods.}
\label{Fig2}
\end{figure}
\indent Under both the models, it is seen that the estimates of the mean and the leading eigenfunction obtained using the proposed registration procedure is closest to the true functions compared to all the other methods considered. This is more prominent under Model 2 (see the bottom two rows in Fig. \ref{Fig2}), where the estimates of the leading eigenfunction obtained by all of other competing procedures considered are far from the true eigenfunction. Also, the registered functions obtained using the CMR and the PW methods do not resemble the true functions (see Figures \ref{Figsupp-model1} and \ref{Figsupp-model2} in Appendix A (Supplementary Material)). The above facts show that for small sample sizes, even under no measurement error, some of the well-known registration procedures may yield unsatisfactory results, while the proposed procedure works well in these cases. 

\subsection{Identifiable Regime With Measurement Error} \label{errors}

\indent We now consider the situation when the warped observations under an identifable rank one model have been observed with measurement errors. As observed in our theoretical study in Section \ref{specified_asymptotics}, the rate of convergence will be much slower than the case when there is no measurement error. For our simulations, we thus keep the same two models as in Section \ref{well-specified} but increase the sample size to $n = 250$. The measurement errors under Model $1$ are i.i.d. $\mathrm{Unif}(-0.2,0.2)$ while those under Model $2$ are i.i.d. $\mathrm{Unif}(-0.4,0.4)$. The bandwidths for the smoothing steps involved in the registration procedure are chosen using built-in cross-validation bandwidth choice function ``regCVBwSelC'' in the \texttt{locpol} package in the \texttt{R} software. Figures \ref{Fig7} and \ref{Fig8} show the plots of the unobserved true rank one curves, the warped curves that are observed with error and the registered curves. They also contain the plots of the mean function and the leading eigenfunction of the true, warped and registered data under the two models. \\
\begin{figure}[t]
\vspace{-0.2in}
\begin{center}
{
\includegraphics[scale=0.4]{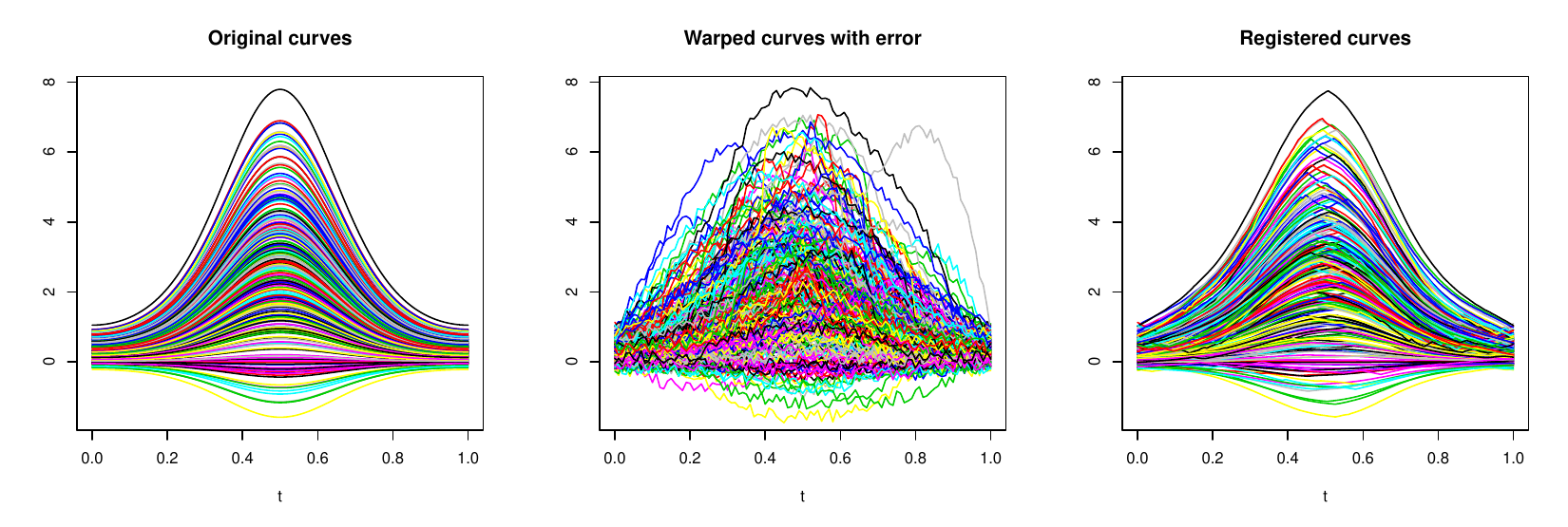}
\includegraphics[scale=0.3]{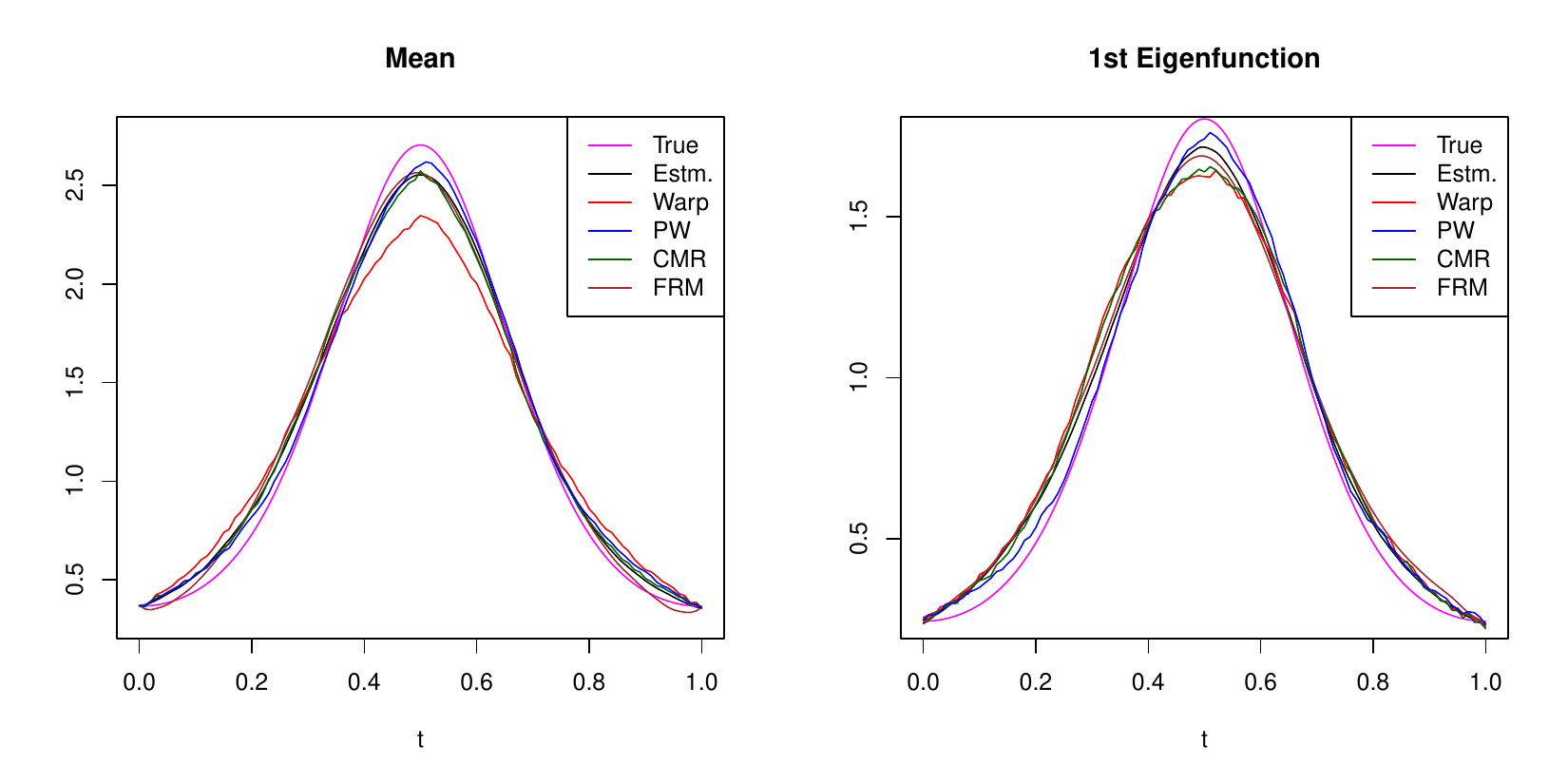}
}
\end{center}
\vspace{-0.3in}
\caption{Plots of the true, warped and registered data curves (using our procedure) along with the estimated mean leading eigenfunction under Model 1 with measurement error obtained using our procedure as well as some other methods.}
\label{Fig7}
\end{figure}
It is observed that even subject to measurement error contamination, the proposed registration procedure is able to adequately register the curves. In particular, under Model 2, the means as well as the leading eigenfunction of the true and the registered curves are quite close. We also performed the registration procedure with a Nadaraya-Watson estimator (without boundary kernels) for obtaining an estimate of the $\widetilde{X}_{i}$'s (see Step 3**). The performance was not that different from the one using a local linear estimator.

Only the FRM procedure fares similarly as the proposed one when estimating the leading eigenfunction under both models. However, the PW method yields quite similar estimates of the mean as the proposed method and the FRM method under each of the two models. Both the CMR and the PW methods fail to produce adequately registered curves as is seen from Figures \ref{Figsupp-model1err} and \ref{Figsupp-model2err} in Appendix A (Supplementary Material). The improvement in the performance of the FRM technique under Model 2 with error compared to the case without error considered in the previous subsection (see the plots in the last row in Fig. \ref{Fig2}) is perhaps due to the increased sample size, which compensates for the measurement error. 
\begin{figure}[t]
\vspace{-0.2in}
\begin{center}
{
\includegraphics[scale=0.4]{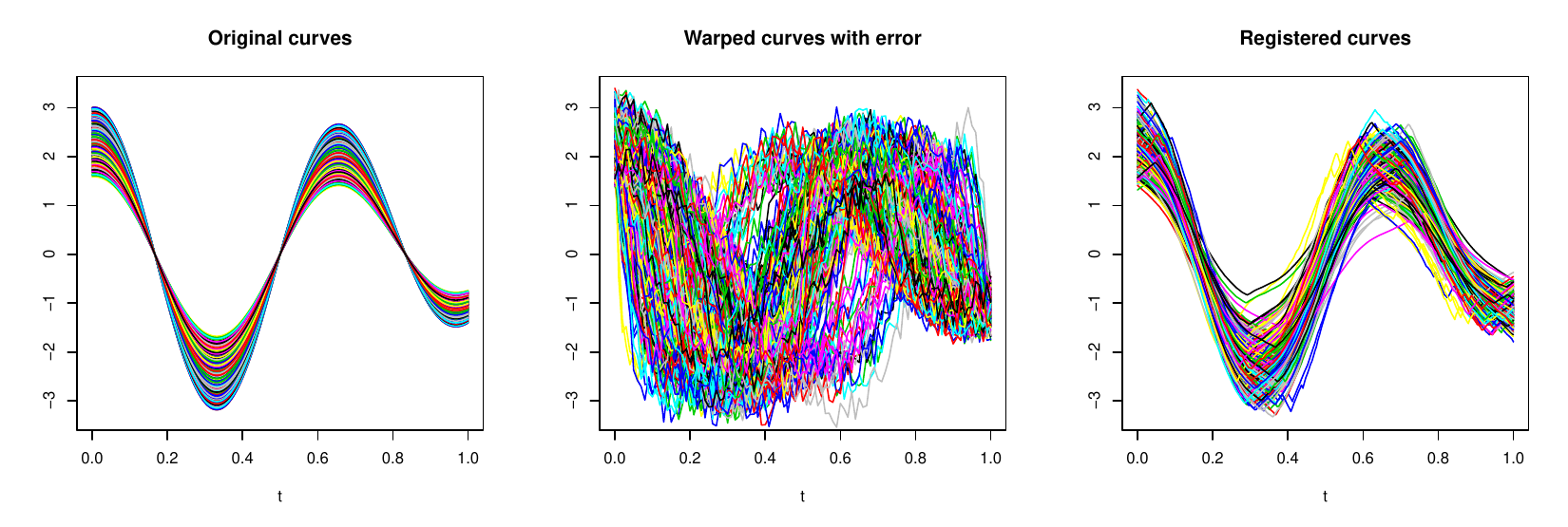}
\includegraphics[scale=0.3]{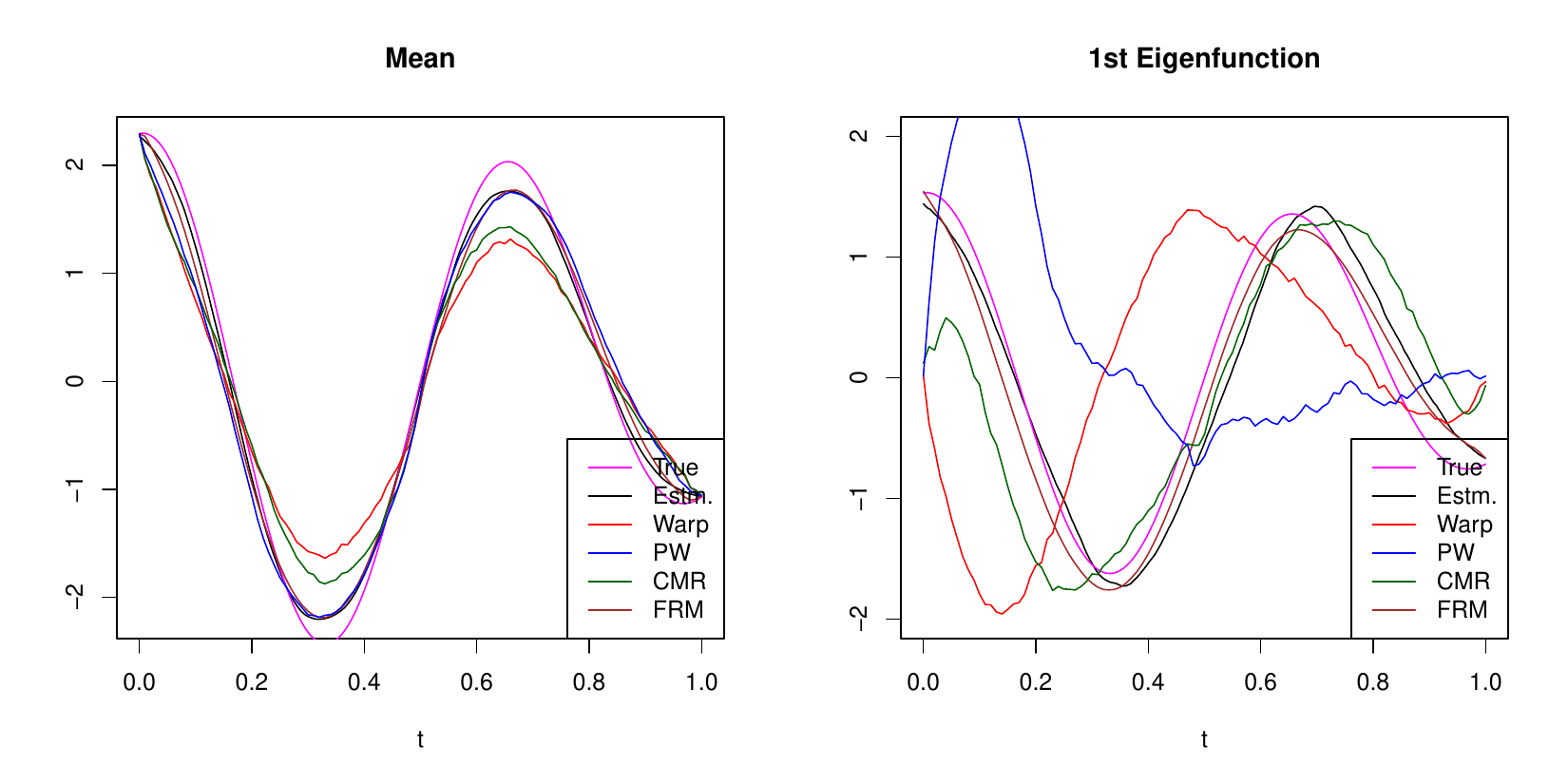}
}
\end{center}
\vspace{-0.3in}
\caption{Plots of the true, warped and registered data curves (using our procedure) along with the estimated mean leading eigenfunction under Model 2 with measurement error obtained using our procedure as well as some other methods.}
\label{Fig8}
\end{figure}

\subsection{{Potentially} Unidentifiable Regime}\label{misspecified_simulations}

\begin{figure}
\vspace{-0.2in}
\begin{center}
{
\includegraphics[scale=0.42]{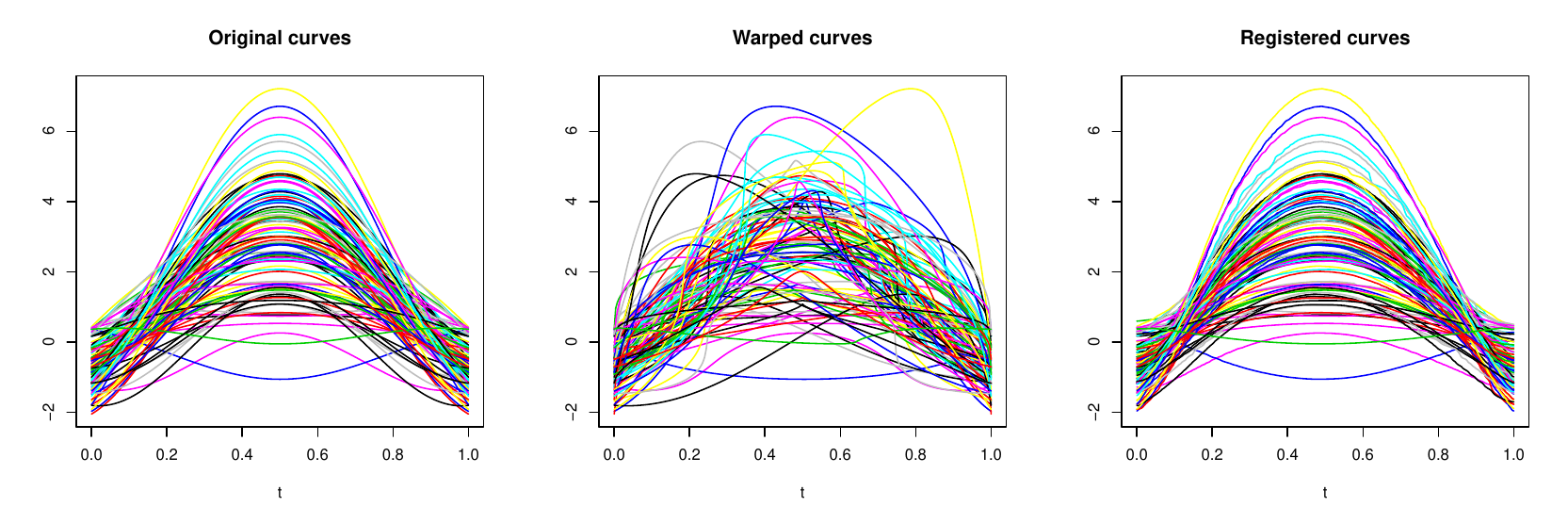}
\includegraphics[scale=0.42]{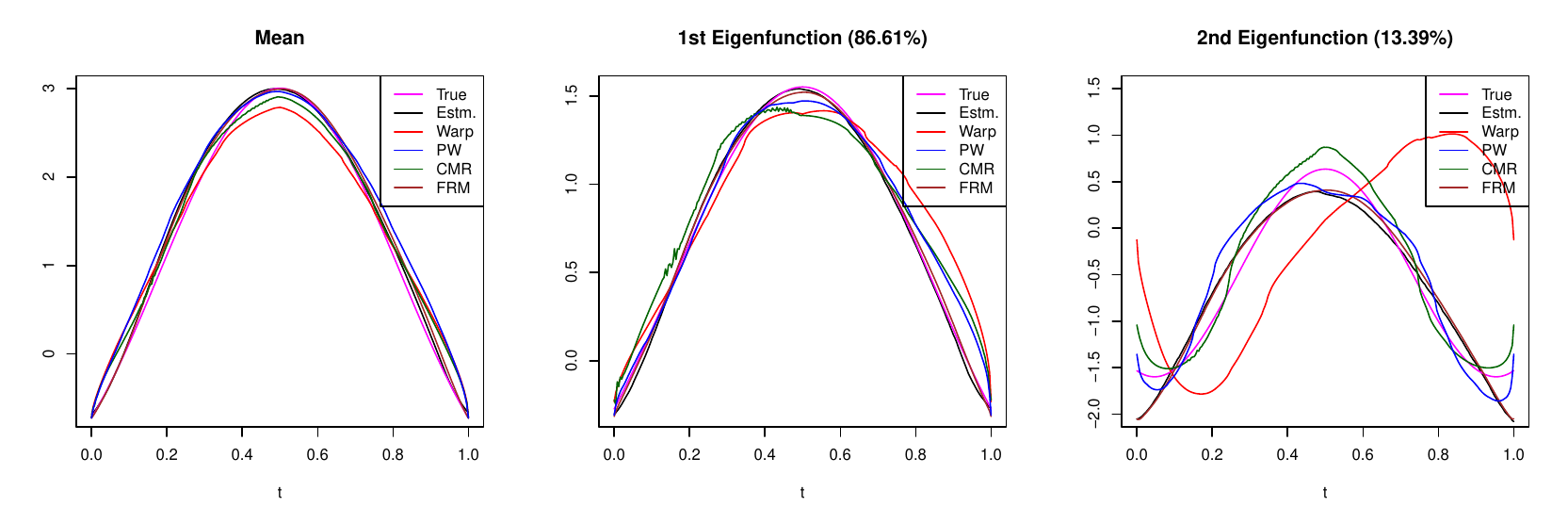}
\includegraphics[scale=0.42]{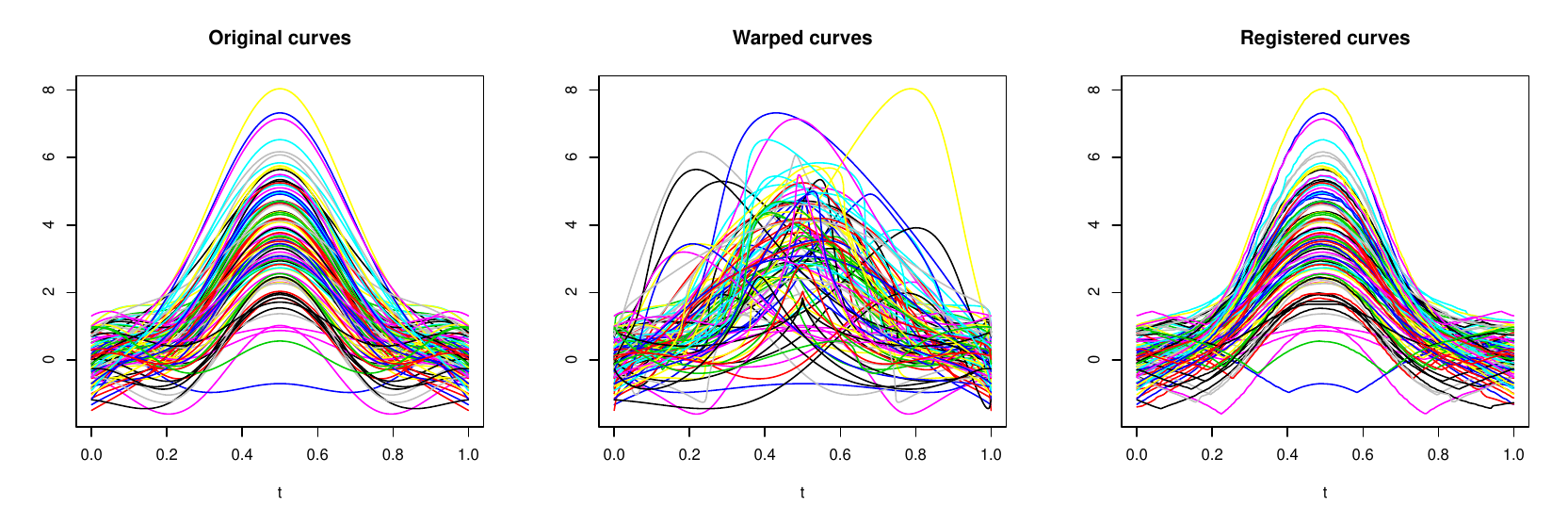}
\includegraphics[scale=0.3]{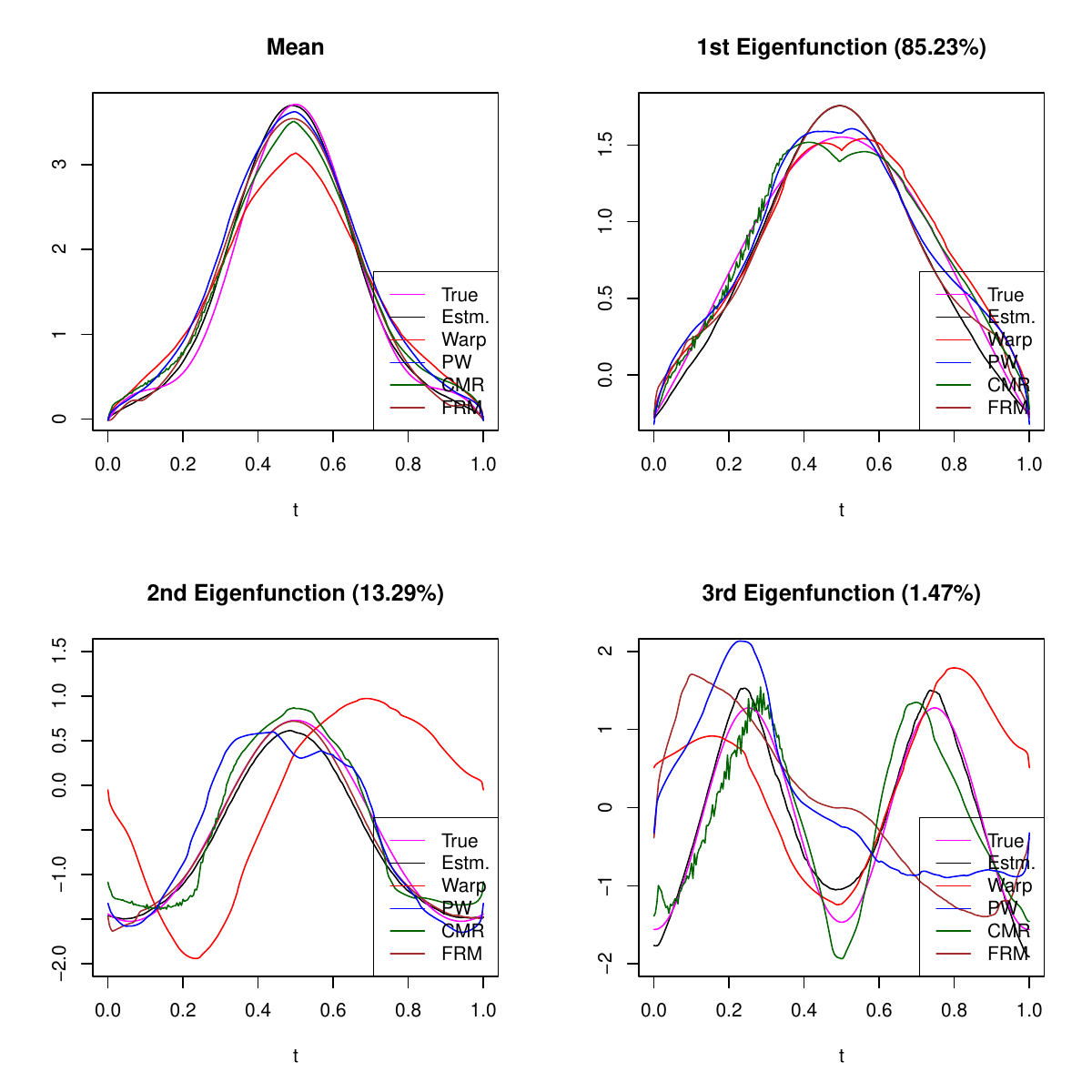}
}
\end{center}
\vspace{-0.2in}
\caption{Plots of the true, warped and registered data curves along with the means and eigenfunctions of the true, warped and the registered data using our method and some other procedures under the rank $2$ (top two rows) and the rank $3$ models (bottom three rows).}
\label{Fig5}
\end{figure}
\indent We next carry out experiments to probe the performance of the registration procedure in a rank $2$ and a rank $3$ setting -- these correspond to a {potentially} unidentifiable regime. The model considered in the rank $2$ case are $X = \xi_{1}\phi_{1} + \xi_{2}\phi_{2}$ with $\xi_{1} \sim N(1.5,1)$, $\xi_{2} \sim N(-0.5,0.15)$, $\phi_{1}(t) = \sqrt{2}\sin({\pi}t)$ and $\phi_{2}(t) = \sqrt{2}\cos(2{\pi}t), t \in [0,1]$. In the rank $3$ case, we consider $X = \xi_{1}\phi_{1} + \xi_{2}\phi_{2} + \xi_{3}\phi_{3}$ with the same choices of $\xi_{j}$ and $\phi_{j}$ as above for $j=1,2$ along with $\xi_{3} \sim N(0.5,(0.15)^2)$ and $\phi_{3}(t) = \sqrt{2}\cos(4{\pi}t)$. The warp maps are the same as those considered in the simulation study in Section \ref{sec3}. The plots of the true curves, the warped curves and the registered curves are provided in Figure \ref{Fig5} for the rank $2$ and the rank $3$ models. 
 The {potentially} unidentifiable setting has to be interpreted as follows: in light of  Theorem \ref{thm1} and the ensuing counter-examples, there may be other models that could have generated the (statistically) same data. Consequently, strictly speaking, we cannot really talk about good or bad performance, as we there may be several equally valid ``ground truths" to compare to. But the way we have constructed the {potentially} unidentifiable simulation setting is by means of a mild departure from an identifiable model. Therefore, we can arbitrarily consider that the latter identifiable model is the truth and investigate whether the registration procedure is stable to the said mild departure. A more detailed investigation of stability is pursued later in this subsection.

It is observed that the registration procedure performs quite well and aligns the peak (present in the true curves) adequately under both models (see Figure \ref{Fig5}). Further, the two smaller troughs near the end-points present in the rank $3$ model are also reasonably aligned (see the plots in the third row in Figure \ref{Fig5}). However, except the FRM procedure, the other two competing methods completely fail in registering the data curves (see Figures \ref{Figsupp-rank2} and \ref{Figsupp-rank3} in Appendix A (Supplementary Material)). Also, unlike our procedure, the registered curves using the FRM procedure seems to lack the two troughs present in the original curves near the boundary points for the rank $3$ model. For each of the two models, the mean seems to be estimated very well based on the registered curves using our procedure. The other procedures follow suit. A similar statement is also true for the first eigenfunction under these two models. However, there is more bias in the estimate of the second eigenfunction under the rank $2$ model for all of the registration procedures. Under the rank $3$ model, the CMR and the PW methods are not fully able to capture the shape of the second eigenfunction, while our procedure and the FRM method does. The third eigenfunction under this model is somewhat reasonably estimated only by our procedure.  \\
\indent In order to probe the breakdown point of the proposed registration procedure in the rank $> 1$ setting, we also considered classes of rank $2$ and rank $3$ models, recorded the relative $L_{2}$-error in estimation of the data curves, i.e, the median of $||\widehat{X}_{i} - X_{i}||/||X_{i}||, i=1,2,\ldots,n$, and consider a threshold of $10\%$ error as a criterion for good performance. The models are generated similar to the earlier simulation. For the rank $2$ case, let $X = \xi_{1,c}\phi_{1} + \xi_{2,c,r}\phi_{2}$, where $\xi_{1} \sim N(3c,1)$, $\xi_{2} \sim N(-c,r)$, where $c \in [0.1,2]$ and $r \in [0.01,0.3]$. The choices of $c$ and $r$ ensure that we include both approximately rank $1$ models ($c$ and $r$ close to zero) as well as proper rank $2$ models (large values of $r$). Similarly, for the rank $3$ case, let $X = \xi_{1,c}\phi_{1} + \xi_{2,c,r}\phi_{2} + \xi_{3,c,r}\phi_{3}$, where $\xi_{3} \sim N(c,r^{2})$. Figure \ref{Fig6} shows a plot of the relative $L_{2}$-errors under these classes of models, for various combinations of the parameters $c$ and $r$. It is seen that when $c$ is large, the performance of the registration procedure is good, which conforms with our theoretical arguments in Theorem \ref{thm5}. In fact, for this class of rank $2$ models, the maximum $L_{2}$ error does not exceed $12.9\%$. On the other hand, when $c$ is small, the allowable range of $r$ values for good performance is much greater in the rank $2$ setup compared to the rank $3$ setup (cf. (c) in Remark \ref{rem3}). In fact, in the rank $3$ setup, the error is more than $10\%$ for all $r$ in the range considered when $c \leq 0.2$. Further, the maximum $L_{2}$ error is now $29.8\%$. 
\begin{figure}[t]
\vspace{-0.1in}
\begin{center}
{
\includegraphics[scale=0.19]{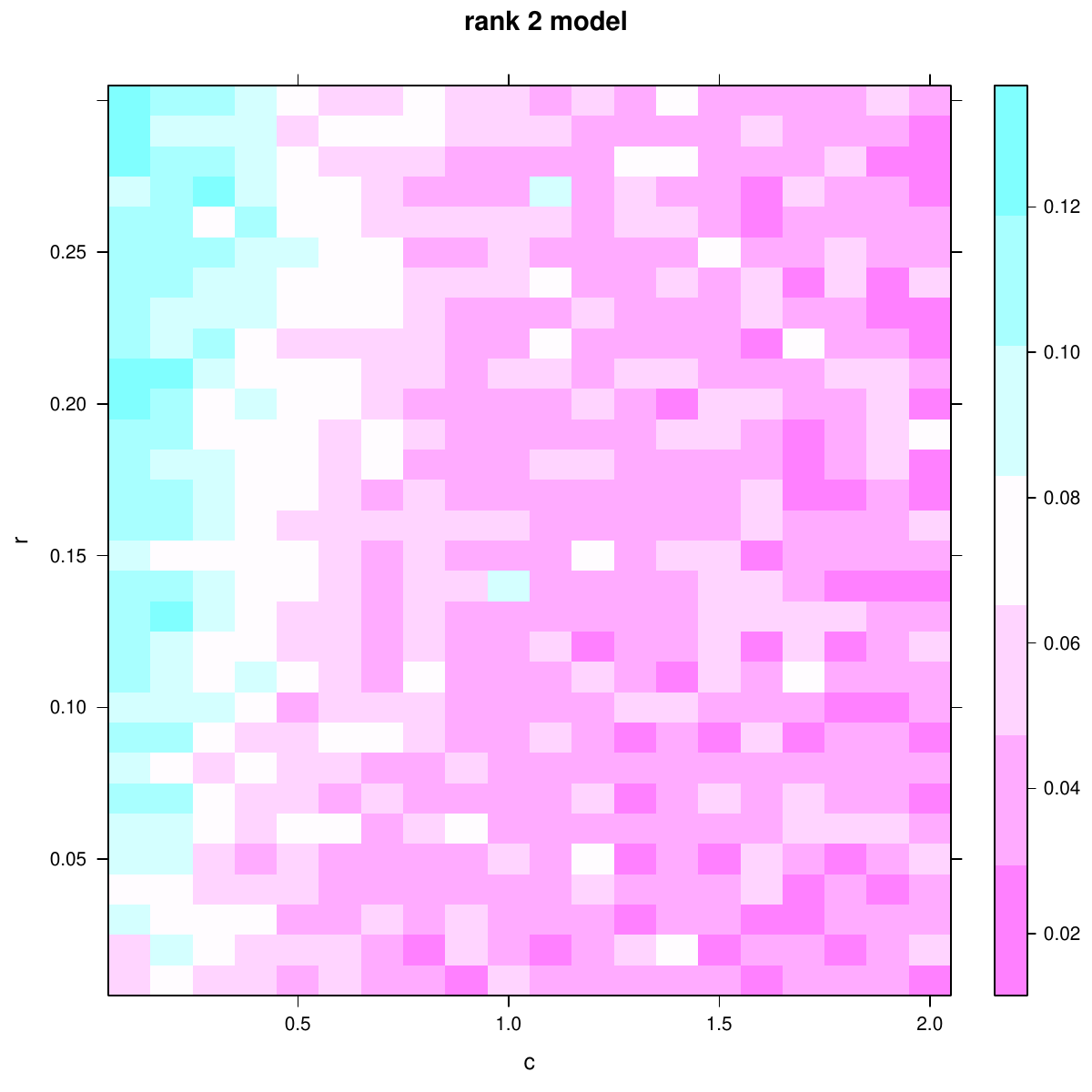}
\qquad
\includegraphics[scale=0.19]{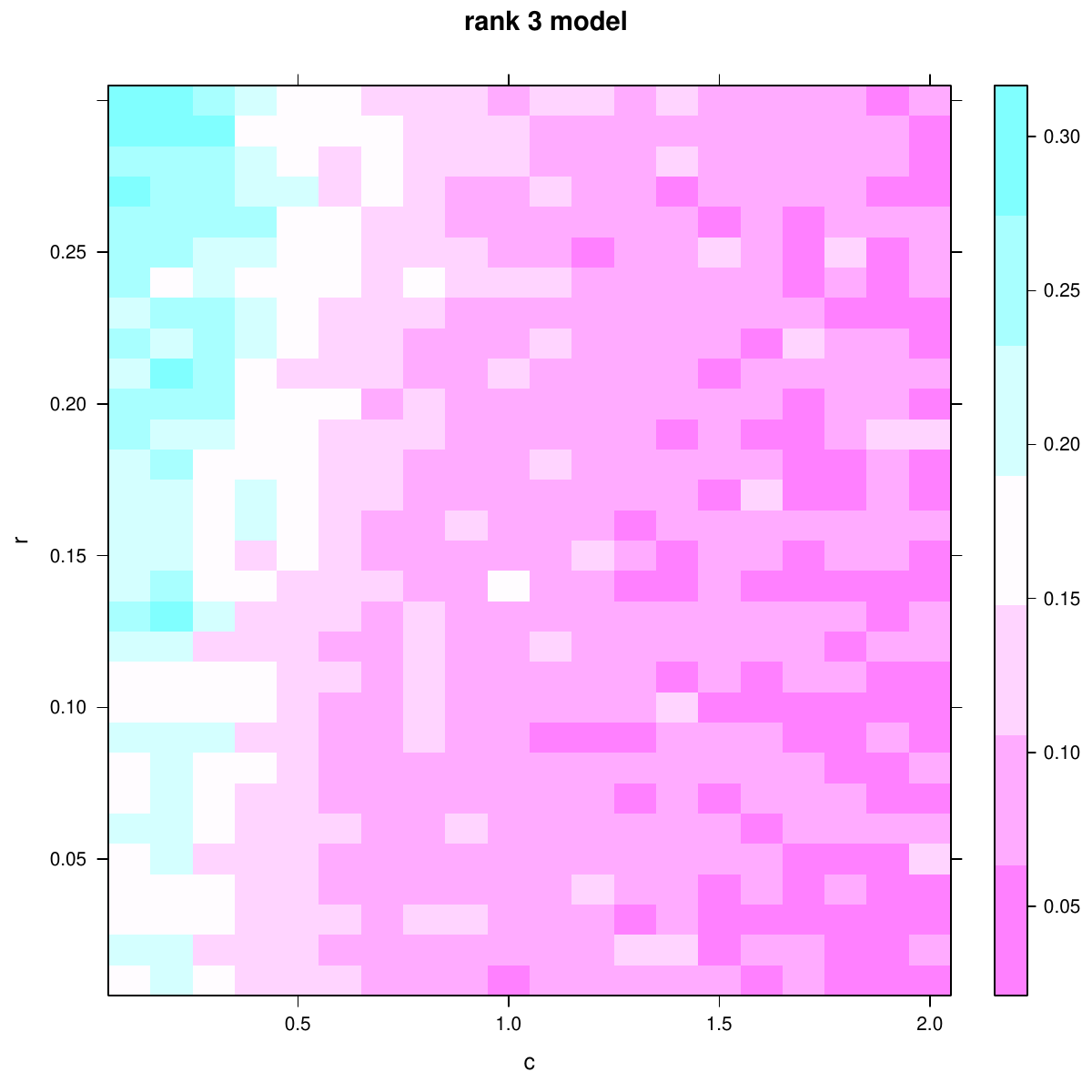}
}
\end{center}
\vspace{-0.2in}
\caption{Level-plots of the relative $L_{2}$ errors under the rank $2$ and the rank $3$ classes of models.}
\label{Fig6}
\end{figure}

\section{Data Analysis} \label{sec4}
\indent In this section, we illustrate the performance of our registration procedure on a data set of growth curves of \textit{Tribolium} beetle larvae, collected and analysed by \citet{irwin2013constraints}. Each curve represents the mass measurement (in milligrams) as a function of the age of the larvae since hatching (in days). Their analysis of \textit{Tribolium} growth suggests that these beetles' growth patterns differ from those of other animals with determinate growth (that is, growth that is contained in certain life stages). Usually, the longer the growth period, the larger the maximal mass attained (see \cite{irwin2014artificial}, and references therein). In \textit{Tribolium}, however, it seems that beetles that tend to grow faster, and thus have a shorter growth period, also tend to attain larger size (e.g. Figure \ref{Fig3}, top left). See \citet{irwin2013constraints} for more details and background. This observation suggests that the \textit{Tribolium} data could be well-suited for a phase-amplitude analysis under a latent rank 1 model that has been warped: one expects that correcting for different ``growth clocks" (phase variation) should yield curves that are roughly of unimodal amplitude variation, due to final mass. Conversely, it suggests a potential latent model that produces rank 1 vertical variation related only to final mass, and horizontal variation due to growth timing (i.e. how this total final mass is accumulated in time). 

For our analysis, we have only considered the part of the dataset where there were at least $10$ discrete measurements per individual curve, which results in a sample size of $159$. Also, not all larvae were recorded on the same day so that the number of observations differed across individuals. Since there are relatively few measurements (maximum $12$) per individual larvae, we smoothed each observation vector as a pre-processing step. This was done using the built-in function \texttt{splinefun} in the \texttt{R} software with the method \texttt{monoH.FC} that uses monotone Hermite spline interpolation proposed by \cite{FC80} (since the curves are expected to be approximately increasing). 

As is typically the case with growth curves, one expects that, if unaccounted for, the lurking phase variation would give the impression of several modes of amplitude variation. The aim our analysis is thus to register the curves, estimate the warp maps, estimate the mean of the registered curves, and carry out an eigenanalysis of the registered data. \\
\begin{figure}[t]
\vspace{-0.2in}
\begin{center}
{
\includegraphics[scale=0.46]{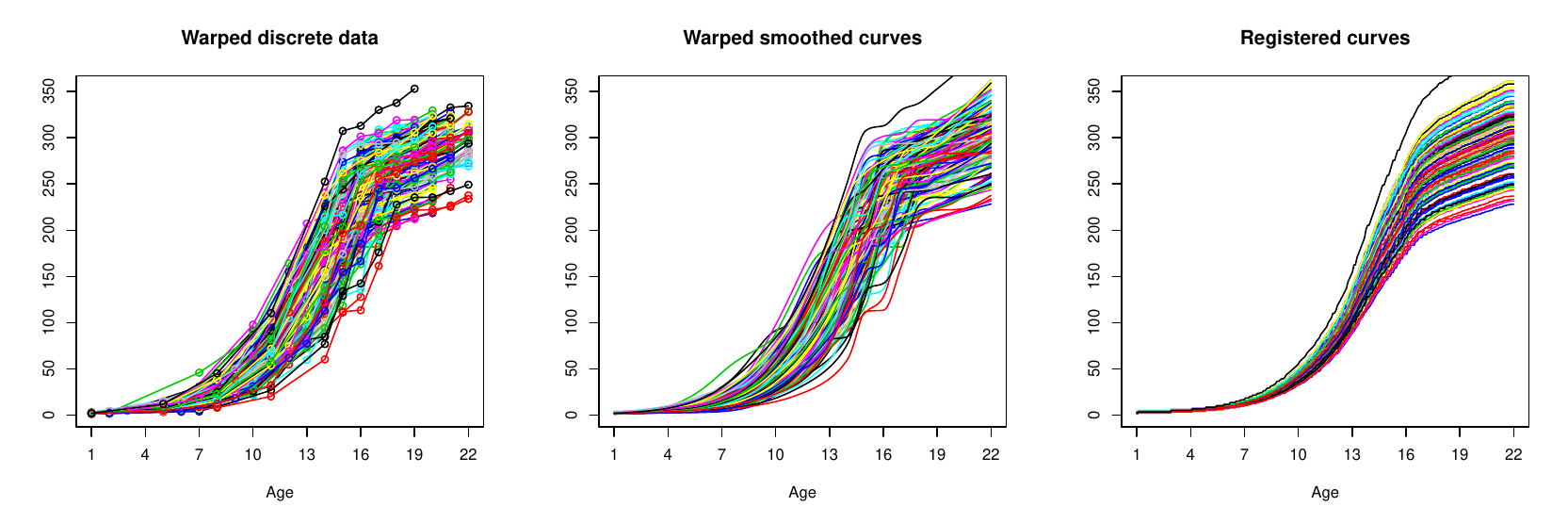}
\includegraphics[scale=0.46]{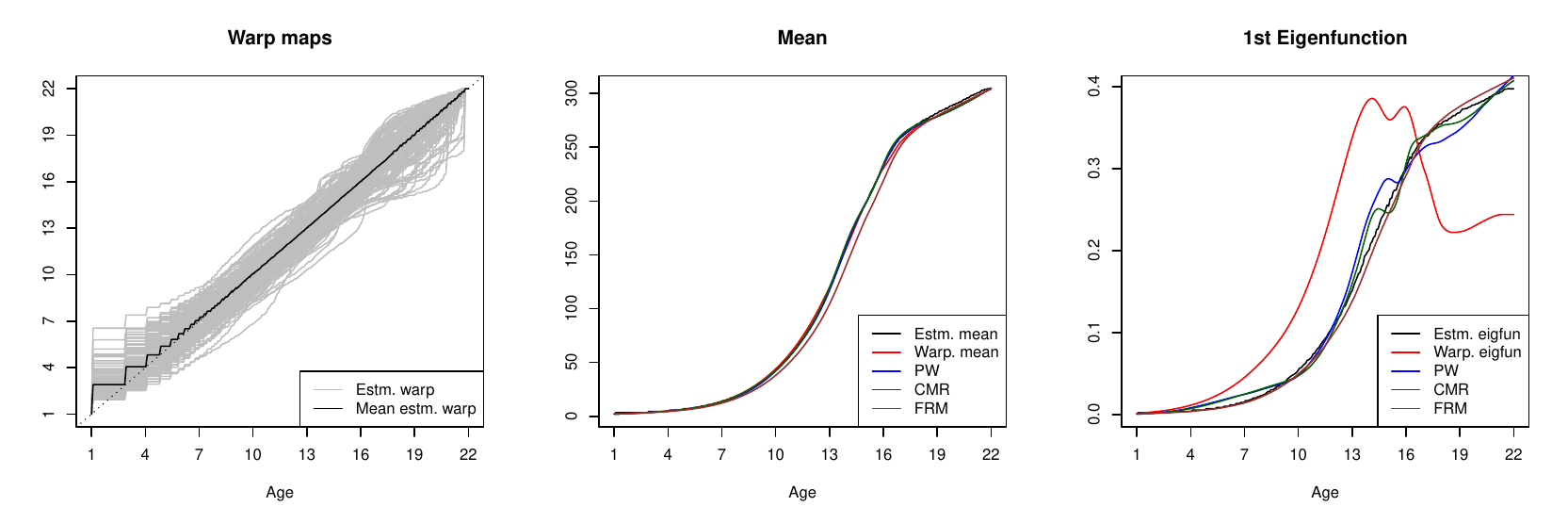}
}
\end{center}
\vspace{-0.2in}
\caption{Plots in the first row are those of the \textit{Tribolium} data, the smoothed curves and the registered curves using our procedure. The first plot in the second row shows the estimated warp maps, where the dotted line is the identity map. The other two plots in the second row show the means and the leading eigenfunctions of the warped and the registered data using our procedure and some other registration methods.}
\label{Fig3}
\end{figure}
\indent It is indeed observed that prior to any registration, the data present at least two substantial modes of amplitude variation, with the first three principal components explaining $78.4\%$, $12\%$ and $3.85\%$ of the total variation, respectively. However, after registration using our method, the empirical covariance operator is almost precisely of rank 1, with the leading principal component explaining $99.72\%$ of the total variation. Interestingly, the mean of the registered data has the same shape as the leading eigenfunction and is in fact roughly equal to $776$ times the leading eigenfunction. This can be seen as a model diagnostic, corroborating the model: if the rank 1 model were correct, then after registration one would expect to have a single mode of amplitude variation and a mean in the span of the corresponding eigenfunction (see the discussion after Counterexample \ref{counter_1}).

Figure \ref{Fig3} show the plots of the actual data, the monotone spline smoothed data and the registered data, as well as the plot of the estimated warp maps and the average warp map, which is very close to the identity. It also shows the plots of the mean and the leading eigenfunction of the warped and the registered data. Although the means of the warped and the registered data are very close, there are substantial qualitative differences between the corresponding eigenfunctions. The eigenfunction of the registered data shows that the variation in growth pattern essentially starts at about the $8$ days after hatching. Between ages $10-16$ days post hatching, there is a notable increase in the growth variation, and it somewhat recedes after that age. These periods are in fact compatible with biologically interpretable phases of growth: the larvae enter an ``instar" (a distinct growth period between exoskeleton moults) characterised by exponential growth at around day 7-8; then, around day 17, they enter the ``wandering phase" and begin losing weight in preparation for pupation.

The performance of the FRM technique is very similar to the proposed procedure and results in an almost rank one registration. However, the CMR and the PW procedures do not yield a rank one registration although the estimated means are very similar to that obtained by our procedure, which is observed by comparing Figure \ref{Fig3} with Figure \ref{Figsupp-beetle} in Appendix A (Supplementary Material). However, the difference lies in the registered curves and the estimate of the leading eigenfunction. The latter shows some artifacts which do not conform to the biological explanation provided earlier, e.g., the presence of flat regions in the estimated  eigenfunction during the ``instar'' phase of exponential growth as well as the growth spurt towards the end where the larvae would actually enter the ``wandering phase''. 

\section*{Acknowledgements}
We are grateful to Dr. Kristen Irwin (EPFL) for kindly sharing and discussing her \emph{Triboleum} data set. We also thank the reviewers for their constructive feedback.

\bibliographystyle{imsart-nameyear}

\bibliography{biblio1}

\begin{thebibliography}{36}

\bibitem[\protect\citeauthoryear{Billingsley}{1968}]{Bill68}
\begin{bbook}[author]
\bauthor{\bsnm{Billingsley},~\bfnm{Patrick}\binits{P.}}
(\byear{1968}).
\btitle{Convergence of probability measures}.
\bpublisher{John Wiley \& Sons, Inc., New York-London-Sydney}.
\bmrnumber{0233396}
\end{bbook}
\endbibitem

\bibitem[\protect\citeauthoryear{Bosq}{2000}]{Bosq00}
\begin{bbook}[author]
\bauthor{\bsnm{Bosq},~\bfnm{D.}\binits{D.}}
(\byear{2000}).
\btitle{Linear processes in function spaces}.
\bseries{Lecture Notes in Statistics}
\bvolume{149}.
\bpublisher{Springer-Verlag, New York}
\bnote{Theory and applications}.
\bdoi{10.1007/978-1-4612-1154-9}
\bmrnumber{1783138}
\end{bbook}
\endbibitem

\bibitem[\protect\citeauthoryear{Chakraborty and Panaretos}{2017}]{arxiv}
\begin{bunpublished}[author]
\bauthor{\bsnm{Chakraborty},~\bfnm{Anirvan}\binits{A.}} \AND
  \bauthor{\bsnm{Panaretos},~\bfnm{Victor~M.}\binits{V.~M.}}
(\byear{2017}).
\btitle{{F}unctional {R}egistration and {L}ocal {V}ariations:
  {I}dentifiability, {R}ank, and {T}uning}.
\bnote{Tech. Report arXiv:1702.03556}.
\end{bunpublished}
\endbibitem

\bibitem[\protect\citeauthoryear{Claeskens, Silverman and Slaets}{2010}]{CSS10}
\begin{barticle}[author]
\bauthor{\bsnm{Claeskens},~\bfnm{Gerda}\binits{G.}},
  \bauthor{\bsnm{Silverman},~\bfnm{Bernard~W.}\binits{B.~W.}} \AND
  \bauthor{\bsnm{Slaets},~\bfnm{Leen}\binits{L.}}
(\byear{2010}).
\btitle{A multiresolution approach to time warping achieved by a {B}ayesian
  prior-posterior transfer fitting strategy}.
\bjournal{J. R. Stat. Soc. Ser. B Stat. Methodol.}
\bvolume{72}
\bpages{673--694}.
\bdoi{10.1111/j.1467-9868.2010.00752.x}
\bmrnumber{2758241}
\end{barticle}
\endbibitem

\bibitem[\protect\citeauthoryear{Fan and Gijbels}{1996}]{FG96}
\begin{bbook}[author]
\bauthor{\bsnm{Fan},~\bfnm{J.}\binits{J.}} \AND
  \bauthor{\bsnm{Gijbels},~\bfnm{I.}\binits{I.}}
(\byear{1996}).
\btitle{Local polynomial modelling and its applications}.
\bseries{Monographs on Statistics and Applied Probability}
\bvolume{66}.
\bpublisher{Chapman \& Hall, London}.
\bmrnumber{1383587}
\end{bbook}
\endbibitem

\bibitem[\protect\citeauthoryear{Fritsch and Carlson}{1980}]{FC80}
\begin{barticle}[author]
\bauthor{\bsnm{Fritsch},~\bfnm{F.~N.}\binits{F.~N.}} \AND
  \bauthor{\bsnm{Carlson},~\bfnm{R.~E.}\binits{R.~E.}}
(\byear{1980}).
\btitle{Monotone piecewise cubic interpolation}.
\bjournal{SIAM J. Numer. Anal.}
\bvolume{17}
\bpages{238--246}.
\bdoi{10.1137/0717021}
\bmrnumber{567271}
\end{barticle}
\endbibitem

\bibitem[\protect\citeauthoryear{Gervini and Gasser}{2004}]{GG04}
\begin{barticle}[author]
\bauthor{\bsnm{Gervini},~\bfnm{Daniel}\binits{D.}} \AND
  \bauthor{\bsnm{Gasser},~\bfnm{Theo}\binits{T.}}
(\byear{2004}).
\btitle{Self-modelling warping functions}.
\bjournal{J. R. Stat. Soc. Ser. B Stat. Methodol.}
\bvolume{66}
\bpages{959--971}.
\bdoi{10.1111/j.1467-9868.2004.B5582.x}
\bmrnumber{2102475}
\end{barticle}
\endbibitem

\bibitem[\protect\citeauthoryear{Gervini and Gasser}{2005}]{GG05}
\begin{barticle}[author]
\bauthor{\bsnm{Gervini},~\bfnm{Daniel}\binits{D.}} \AND
  \bauthor{\bsnm{Gasser},~\bfnm{Theo}\binits{T.}}
(\byear{2005}).
\btitle{Nonparametric maximum likelihood estimation of the structural mean of a
  sample of curves}.
\bjournal{Biometrika}
\bvolume{92}
\bpages{801--820}.
\bdoi{10.1093/biomet/92.4.801}
\bmrnumber{2234187}
\end{barticle}
\endbibitem

\bibitem[\protect\citeauthoryear{Hadjipantelis et~al.}{2015}]{HAME15}
\begin{barticle}[author]
\bauthor{\bsnm{Hadjipantelis},~\bfnm{P.~Z.}\binits{P.~Z.}},
  \bauthor{\bsnm{Aston},~\bfnm{J.~A.~D.}\binits{J.~A.~D.}},
  \bauthor{\bsnm{Müller},~\bfnm{H.~G.}\binits{H.~G.}} \AND
  \bauthor{\bsnm{Evans},~\bfnm{J.~P.}\binits{J.~P.}}
(\byear{2015}).
\btitle{Unifying Amplitude and Phase Analysis: A Compositional Data Approach to
  Functional Multivariate Mixed-Effects Modeling of Mandarin Chinese}.
\bjournal{J. Amer. Statist. Assoc.}
\bvolume{110}
\bpages{545-559}.
\bdoi{10.1080/01621459.2015.1006729}
\end{barticle}
\endbibitem

\bibitem[\protect\citeauthoryear{H\"ardle and Marron}{1990}]{HM90}
\begin{barticle}[author]
\bauthor{\bsnm{H\"ardle},~\bfnm{W.}\binits{W.}} \AND
  \bauthor{\bsnm{Marron},~\bfnm{J.~S.}\binits{J.~S.}}
(\byear{1990}).
\btitle{Semiparametric comparison of regression curves}.
\bjournal{Ann. Statist.}
\bvolume{18}
\bpages{63--89}.
\bdoi{10.1214/aos/1176347493}
\bmrnumber{1041386}
\end{barticle}
\endbibitem

\bibitem[\protect\citeauthoryear{Hsing and Eubank}{2015}]{HE15}
\begin{bbook}[author]
\bauthor{\bsnm{Hsing},~\bfnm{Tailen}\binits{T.}} \AND
  \bauthor{\bsnm{Eubank},~\bfnm{Randall}\binits{R.}}
(\byear{2015}).
\btitle{Theoretical foundations of functional data analysis, with an
  introduction to linear operators}.
\bseries{Wiley Series in Probability and Statistics}.
\bpublisher{John Wiley \& Sons, Ltd., Chichester}.
\bdoi{10.1002/9781118762547}
\bmrnumber{3379106}
\end{bbook}
\endbibitem

\bibitem[\protect\citeauthoryear{Irwin and Carter}{2013}]{irwin2013constraints}
\begin{barticle}[author]
\bauthor{\bsnm{Irwin},~\bfnm{KK}\binits{K.}} \AND
  \bauthor{\bsnm{Carter},~\bfnm{PA}\binits{P.}}
(\byear{2013}).
\btitle{Constraints on the evolution of function-valued traits: a study of
  growth in Tribolium castaneum}.
\bjournal{Journal of evolutionary biology}
\bvolume{26}
\bpages{2633--2643}.
\end{barticle}
\endbibitem

\bibitem[\protect\citeauthoryear{Irwin and Carter}{2014}]{irwin2014artificial}
\begin{barticle}[author]
\bauthor{\bsnm{Irwin},~\bfnm{KK}\binits{K.}} \AND
  \bauthor{\bsnm{Carter},~\bfnm{PA}\binits{P.}}
(\byear{2014}).
\btitle{Artificial selection on larval growth curves in Tribolium: correlated
  responses and constraints}.
\bjournal{Journal of evolutionary biology}
\bvolume{27}
\bpages{2069--2079}.
\end{barticle}
\endbibitem

\bibitem[\protect\citeauthoryear{James}{2007}]{Jame07}
\begin{barticle}[author]
\bauthor{\bsnm{James},~\bfnm{Gareth~M.}\binits{G.~M.}}
(\byear{2007}).
\btitle{Curve alignment by moments}.
\bjournal{Ann. Appl. Stat.}
\bvolume{1}
\bpages{480--501}.
\bdoi{10.1214/07-AOAS127}
\bmrnumber{2415744}
\end{barticle}
\endbibitem

\bibitem[\protect\citeauthoryear{Kneip and Gasser}{1992}]{KG92}
\begin{barticle}[author]
\bauthor{\bsnm{Kneip},~\bfnm{Alois}\binits{A.}} \AND
  \bauthor{\bsnm{Gasser},~\bfnm{Theo}\binits{T.}}
(\byear{1992}).
\btitle{Statistical tools to analyze data representing a sample of curves}.
\bjournal{Ann. Statist.}
\bvolume{20}
\bpages{1266--1305}.
\bdoi{10.1214/aos/1176348769}
\bmrnumber{1186250}
\end{barticle}
\endbibitem

\bibitem[\protect\citeauthoryear{Kneip and Ramsay}{2008}]{KR08}
\begin{barticle}[author]
\bauthor{\bsnm{Kneip},~\bfnm{Alois}\binits{A.}} \AND
  \bauthor{\bsnm{Ramsay},~\bfnm{James~O.}\binits{J.~O.}}
(\byear{2008}).
\btitle{Combining registration and fitting for functional models}.
\bjournal{J. Amer. Statist. Assoc.}
\bvolume{103}
\bpages{1155--1165}.
\bdoi{10.1198/016214508000000517}
\bmrnumber{2528838}
\end{barticle}
\endbibitem

\bibitem[\protect\citeauthoryear{Kneip et~al.}{2000}]{KLMR00}
\begin{barticle}[author]
\bauthor{\bsnm{Kneip},~\bfnm{A.}\binits{A.}},
  \bauthor{\bsnm{Li},~\bfnm{X.}\binits{X.}},
  \bauthor{\bsnm{MacGibbon},~\bfnm{K.~B.}\binits{K.~B.}} \AND
  \bauthor{\bsnm{Ramsay},~\bfnm{J.~O.}\binits{J.~O.}}
(\byear{2000}).
\btitle{Curve registration by local regression}.
\bjournal{Canad. J. Statist.}
\bvolume{28}
\bpages{19--29}.
\bdoi{10.2307/3315879}
\bmrnumber{1789833}
\end{barticle}
\endbibitem

\bibitem[\protect\citeauthoryear{Lila and Aston}{2017}]{LA17}
\begin{bunpublished}[author]
\bauthor{\bsnm{Lila},~\bfnm{E.}\binits{E.}} \AND
  \bauthor{\bsnm{Aston},~\bfnm{J.~A.~D.}\binits{J.~A.~D.}}
(\byear{2017}).
\btitle{Functional and Geometric Statistical Analysis of Textured Surfaces with
  an application to Medical Imaging}.
\bnote{Tech. Report arXiv:1707.00453v1}.
\end{bunpublished}
\endbibitem

\bibitem[\protect\citeauthoryear{Liu and M{\"u}ller}{2004}]{LM04}
\begin{barticle}[author]
\bauthor{\bsnm{Liu},~\bfnm{Xueli}\binits{X.}} \AND
  \bauthor{\bsnm{M{\"u}ller},~\bfnm{Hans-Georg}\binits{H.-G.}}
(\byear{2004}).
\btitle{Functional convex averaging and synchronization for time-warped random
  curves}.
\bjournal{J. Amer. Statist. Assoc.}
\bvolume{99}
\bpages{687--699}.
\bdoi{10.1198/016214504000000999}
\bmrnumber{2090903}
\end{barticle}
\endbibitem

\bibitem[\protect\citeauthoryear{Marron et~al.}{2015}]{MRSS15}
\begin{barticle}[author]
\bauthor{\bsnm{Marron},~\bfnm{J.~S.}\binits{J.~S.}},
  \bauthor{\bsnm{Ramsay},~\bfnm{James~O.}\binits{J.~O.}},
  \bauthor{\bsnm{Sangalli},~\bfnm{Laura~M.}\binits{L.~M.}} \AND
  \bauthor{\bsnm{Srivastava},~\bfnm{Anuj}\binits{A.}}
(\byear{2015}).
\btitle{Functional data analysis of amplitude and phase variation}.
\bjournal{Statist. Sci.}
\bvolume{30}
\bpages{468--484}.
\bdoi{10.1214/15-STS524}
\bmrnumber{3432837}
\end{barticle}
\endbibitem

\bibitem[\protect\citeauthoryear{M{\o}rken}{1991}]{Mork91}
\begin{barticle}[author]
\bauthor{\bsnm{M{\o}rken},~\bfnm{K.}\binits{K.}}
(\byear{1991}).
\btitle{Some identities for products and degree raising of splines}.
\bjournal{Constr. Approx.}
\bvolume{7}
\bpages{195--208}.
\bdoi{10.1007/BF01888153}
\bmrnumber{1101062}
\end{barticle}
\endbibitem

\bibitem[\protect\citeauthoryear{Natanson}{1955}]{Nata55}
\begin{bbook}[author]
\bauthor{\bsnm{Natanson},~\bfnm{I.~P.}\binits{I.~P.}}
(\byear{1955}).
\btitle{Theory of functions of a real variable}.
\bpublisher{Frederick Ungar Publishing Co., New York}
\bnote{Translated by Leo F. Boron with the collaboration of Edwin Hewitt}.
\bmrnumber{0067952}
\end{bbook}
\endbibitem

\bibitem[\protect\citeauthoryear{Panaretos and Zemel}{2016}]{PZ16}
\begin{barticle}[author]
\bauthor{\bsnm{Panaretos},~\bfnm{Victor~M.}\binits{V.~M.}} \AND
  \bauthor{\bsnm{Zemel},~\bfnm{Yoav}\binits{Y.}}
(\byear{2016}).
\btitle{Amplitude and phase variation of point processes}.
\bjournal{Ann. Statist.}
\bvolume{44}
\bpages{771--812}.
\bdoi{10.1214/15-AOS1387}
\bmrnumber{3476617}
\end{barticle}
\endbibitem

\bibitem[\protect\citeauthoryear{Pigoli et~al.}{2017}]{PHCA17}
\begin{bunpublished}[author]
\bauthor{\bsnm{Pigoli},~\bfnm{D.}\binits{D.}},
  \bauthor{\bsnm{Hadjipantelis},~\bfnm{P.~Z.}\binits{P.~Z.}},
  \bauthor{\bsnm{Coleman},~\bfnm{J.~S.}\binits{J.~S.}} \AND
  \bauthor{\bsnm{Aston},~\bfnm{J.~A.~D.}\binits{J.~A.~D.}}
(\byear{2017}).
\btitle{The statistical analysis of acoustic phonetic data: exploring
  differences between spoken Romance languages}.
\bnote{Tech. Report arXiv:1507.07587v2}.
\end{bunpublished}
\endbibitem

\bibitem[\protect\citeauthoryear{Ramsay and Li}{1998}]{RL98}
\begin{barticle}[author]
\bauthor{\bsnm{Ramsay},~\bfnm{J.~O.}\binits{J.~O.}} \AND
  \bauthor{\bsnm{Li},~\bfnm{Xiaochun}\binits{X.}}
(\byear{1998}).
\btitle{Curve registration}.
\bjournal{J. R. Stat. Soc. Ser. B Stat. Methodol.}
\bvolume{60}
\bpages{351--363}.
\bdoi{10.1111/1467-9868.00129}
\bmrnumber{1616045}
\end{barticle}
\endbibitem

\bibitem[\protect\citeauthoryear{Ramsay and Silverman}{2005}]{RS05}
\begin{bbook}[author]
\bauthor{\bsnm{Ramsay},~\bfnm{Jim~O.}\binits{J.~O.}} \AND
  \bauthor{\bsnm{Silverman},~\bfnm{Bernard~W.}\binits{B.~W.}}
(\byear{2005}).
\btitle{Functional data analysis},
\bedition{second} ed.
\bseries{Springer Series in Statistics}.
\bpublisher{Springer}, \baddress{New York}.
\bmrnumber{2168993}
\end{bbook}
\endbibitem

\bibitem[\protect\citeauthoryear{R{\o}nn}{2001}]{Ronn01}
\begin{barticle}[author]
\bauthor{\bsnm{R{\o}nn},~\bfnm{Birgitte~B.}\binits{B.~B.}}
(\byear{2001}).
\btitle{Nonparametric maximum likelihood estimation for shifted curves}.
\bjournal{J. R. Stat. Soc. Ser. B Stat. Methodol.}
\bvolume{63}
\bpages{243--259}.
\bdoi{10.1111/1467-9868.00283}
\bmrnumber{1841413}
\end{barticle}
\endbibitem

\bibitem[\protect\citeauthoryear{Schimek}{2000}]{Smooth00}
\begin{bbook}[author]
\beditor{\bsnm{Schimek},~\bfnm{Michael~G.}\binits{M.~G.}}, ed.
(\byear{2000}).
\btitle{Smoothing and regression}.
\bseries{Wiley Series in Probability and Statistics: Applied Probability and
  Statistics}.
\bpublisher{John Wiley \& Sons, Inc., New York}
\bnote{Approaches, computation, and application, A Wiley-Interscience
  Publication}.
\bdoi{10.1002/9781118150658}
\bmrnumber{1795148}
\end{bbook}
\endbibitem

\bibitem[\protect\citeauthoryear{Schumaker}{2007}]{Schu07}
\begin{bbook}[author]
\bauthor{\bsnm{Schumaker},~\bfnm{Larry~L.}\binits{L.~L.}}
(\byear{2007}).
\btitle{Spline functions: basic theory},
\bedition{third} ed.
\bseries{Cambridge Mathematical Library}.
\bpublisher{Cambridge University Press, Cambridge}.
\bdoi{10.1017/CBO9780511618994}
\bmrnumber{2348176}
\end{bbook}
\endbibitem

\bibitem[\protect\citeauthoryear{Srivastava et~al.}{2011}]{SWKKM11}
\begin{bunpublished}[author]
\bauthor{\bsnm{Srivastava},~\bfnm{A.}\binits{A.}},
  \bauthor{\bsnm{Wu},~\bfnm{W.}\binits{W.}},
  \bauthor{\bsnm{Kurtek},~\bfnm{S.}\binits{S.}},
  \bauthor{\bsnm{Klassen},~\bfnm{E.}\binits{E.}} \AND
  \bauthor{\bsnm{Marron},~\bfnm{J.~S.}\binits{J.~S.}}
(\byear{2011}).
\btitle{Registration of functionl data using {F}isher-{R}ao metric}.
\bnote{Tech. Report arXiv:1103.3817v2}.
\end{bunpublished}
\endbibitem

\bibitem[\protect\citeauthoryear{Tang and M{\"u}ller}{2008}]{TM08}
\begin{barticle}[author]
\bauthor{\bsnm{Tang},~\bfnm{Rong}\binits{R.}} \AND
  \bauthor{\bsnm{M{\"u}ller},~\bfnm{Hans-Georg}\binits{H.-G.}}
(\byear{2008}).
\btitle{Pairwise curve synchronization for functional data}.
\bjournal{Biometrika}
\bvolume{95}
\bpages{875--889}.
\bdoi{10.1093/biomet/asn047}
\bmrnumber{2461217}
\end{barticle}
\endbibitem

\bibitem[\protect\citeauthoryear{Villani}{2003}]{Vill03}
\begin{bbook}[author]
\bauthor{\bsnm{Villani},~\bfnm{C{\'e}dric}\binits{C.}}
(\byear{2003}).
\btitle{Topics in optimal transportation}.
\bseries{Graduate Studies in Mathematics}
\bvolume{58}.
\bpublisher{American Mathematical Society, Providence, RI}.
\bdoi{10.1007/b12016}
\bmrnumber{1964483}
\end{bbook}
\endbibitem

\bibitem[\protect\citeauthoryear{Wagner and Kneip}{2019}]{WK19}
\begin{barticle}[author]
\bauthor{\bsnm{Wagner},~\bfnm{Heiko}\binits{H.}} \AND
  \bauthor{\bsnm{Kneip},~\bfnm{Alois}\binits{A.}}
(\byear{2019}).
\btitle{{Nonparametric registration to low-dimensional function spaces}}.
\bjournal{Computational Statistics \& Data Analysis}
\bvolume{138}
\bpages{49-63}.
\bdoi{10.1016/j.csda.2019.03.00}
\end{barticle}
\endbibitem

\bibitem[\protect\citeauthoryear{Wand and Jones}{1995}]{WJ95}
\begin{bbook}[author]
\bauthor{\bsnm{Wand},~\bfnm{M.~P.}\binits{M.~P.}} \AND
  \bauthor{\bsnm{Jones},~\bfnm{M.~C.}\binits{M.~C.}}
(\byear{1995}).
\btitle{Kernel smoothing}.
\bseries{Monographs on Statistics and Applied Probability}
\bvolume{60}.
\bpublisher{Chapman and Hall, Ltd., London}.
\bdoi{10.1007/978-1-4899-4493-1}
\bmrnumber{1319818}
\end{bbook}
\endbibitem

\bibitem[\protect\citeauthoryear{Wang and Gasser}{1997}]{WG97}
\begin{barticle}[author]
\bauthor{\bsnm{Wang},~\bfnm{Kongming}\binits{K.}} \AND
  \bauthor{\bsnm{Gasser},~\bfnm{Theo}\binits{T.}}
(\byear{1997}).
\btitle{Alignment of curves by dynamic time warping}.
\bjournal{Ann. Statist.}
\bvolume{25}
\bpages{1251--1276}.
\bdoi{10.1214/aos/1069362747}
\bmrnumber{1447750}
\end{barticle}
\endbibitem

\bibitem[\protect\citeauthoryear{Wang and Gasser}{1999}]{WG99}
\begin{barticle}[author]
\bauthor{\bsnm{Wang},~\bfnm{Kongming}\binits{K.}} \AND
  \bauthor{\bsnm{Gasser},~\bfnm{Theo}\binits{T.}}
(\byear{1999}).
\btitle{Synchronizing sample curves nonparametrically}.
\bjournal{Ann. Statist.}
\bvolume{27}
\bpages{439--460}.
\bdoi{10.1214/aos/1018031210}
\bmrnumber{1714722}
\end{barticle}
\endbibitem

\end{thebibliography}


\newpage

\appendix

\section{Supplementary Material}\label{sec6}

\subsection{Some Practical Issues}\label{sec:implementation}

As mentioned earlier, $\widetilde{F}_{i,d}$ is a step function with jump discontinuities at the grid points. In particular, $\widetilde{F}_{i,d}(t) = 0$ for $t \in [0,t_{2})$ and $\widetilde{F}_{i,d}(t) = 1$ for $t \in [t_{r},1]$. 
Thus, $\widetilde{F}_{i,d}^{-}(0) = 0$ and $\widetilde{F}_{i,d}^{-}(1) = t_{r}$, which is less than $1$ if $t_{r} < 1$, i.e., the grid does not include the right end-point. In this case, $\widehat{F}_{d}(t)$ and thus $\widehat{T}_{i,d}(t)$ is properly defined only for $t \in [0,t_{r}]$. Also, $\widetilde{F}_{i,d}^{-}(u) \leq t_{r}$ and equality holds iff $u \in (\widetilde{F}_{i,d}(t_{r-1}),1]$. Thus,  
$$\widehat{F}_{d}(t_{r}) = \inf\left\{u : n^{-1}\sum_{i=1}^{n} \widetilde{F}_{i,d}^{-}(u) \geq t_{r}\right\} = \inf\{u : \widetilde{F}_{i,d}^{-}(u) = t_{r} ~ \forall ~ i=1,2,\ldots,n\}$$
$$ = \inf\{u : u \in \cap_{i=1}^{n} (\widetilde{F}_{i,d}(t_{r-1}),1]\} = \max_{1 \leq i \leq n} \widetilde{F}_{i,d}(t_{r-1}).$$ 
Then, $\widehat{T}_{i,d}(t_{r}) = \widehat{F}_{i,d}^{-}(\widehat{F}_{d}(t_{r})) = \widehat{F}_{i,d}^{-}(\max_{1 \leq j \leq n} \widetilde{F}_{j,d}(t_{r-1})) = t_{r}$. One can then extend $\widehat{T}_{i,d}(t)$ to the whole of $[0,1]$ by, e.g., linearly interpolating between $(t_{r},\widehat{T}_{i,d}(t_{r})) = (t_{r},t_{r})$ and $(1,1)$.  This practical modification, in case $t_{r} < 1$, enjoys the same asymptotic properties as the originally defined estimator (Section \ref{asymptotics}), since the effect of the modification is asymptotically negligible due to the homogeneity assumptions on the grid. 

Similarly, $\widehat{F}_{d}^{*}(u) = n^{-1}\sum_{i=1}^{n} \widetilde{F}_{i,d}^{-}(u) = t_{r}$ iff $u \in \cap_{i=1}^{n} (\widetilde{F}_{i,d}(t_{r-1}),1] = (\max_{1 \leq i \leq n} \widetilde{F}_{i,d}(t_{r-1}),1]$. So, in case $t_{r} < 1$, we have $\widehat{T}_{i,d}^{*}(1) = \widehat{F}_{d}^{*}(\widetilde{F}_{i,d}(1)) = \widehat{F}_{d}^{*}(1) = t_{r} < 1$. This is not a problem since this estimator is not used in the registration procedure and the problem disappears asymptotically anyway, just as described above. 

We conclude this section by noting that, since the estimates $\widehat{T}_{i,d}$ of the warp maps do not involve any smoothing and are obtained from compositions of step functions, the resulting registered curves will not be very smooth. This will be particularly noticeable if the number of grid points is small. Note that even in that case, the estimated mean function will be smoother if the sample size is moderately large. If one is interested in obtaining a smooth registration of the sample curves, the following procedure may be adopted. First, we produce smooth versions of the $\widehat{T}_{i,d}$ by some non-parametric smoothing procedure, e.g., polynomial splines of a fixed degree $m$, and call these new estimates as $\widehat{T}_{i,s}$, say. Then, we plug-in these smoothed estimates of the warp functions and define the new registered observations as $\widehat{X}_{i}^{*}(t) := X_{i}^{\dagger}(\widehat{T}_{i,s}(t))$. It is well-known that a spline smoothed estimate of a smooth function converges to that function in the $L_{2}[0,1]$ sense provided the oscillations of the function go to zero as the number of knots grows to infinity (see Theorem 6.27 in \cite{Schu07}). The latter holds for the $\widehat{T}_{i,d}$'s since they lie in $L_{2}[0,1]$ (see equation (2.121) in Theorem 2.59 in \cite{Schu07}). Thus, this modified estimator will also provide consistent registration.

\subsection{Proofs of Formal Statements}\label{proofs}

\begin{proof}[Proof of Lemma \ref{expectation_lemma}]
Since $X(t) = \xi\phi(t), t \in [0,1]$, we have 
$$F(t) = \int_{0}^{t} |X'(u)|du/\int_{0}^{1} |X'(u)|du = \int_{0}^{t} |\phi'(u)|du/\int_{0}^{1}|\phi'(u)|du = F_{\phi}(t)$$ 
by Definition \ref{local_variation}. Next, $\widetilde{X}(t) = \xi\phi(T^{-1}(t))$ so that $\widetilde{X}'(t) = \xi\phi'(T^{-1}(t))/T'(T^{-1}(t))$. Thus, using the strict monotonicity of $T$, we have 
\begin{eqnarray*}
\widetilde{F}(t) &=& \int_{0}^{t} |\widetilde{X}'(u)|du/\int_{0}^{1} |\widetilde{X}'(u)|du \\
&=& \left\{\int_{0}^{t} |\phi'(T^{-1}(u))|/T'(T^{-1}(u))du\right\}/\left\{\int_{0}^{t} |\phi'(T^{-1}(u))|/T'(T^{-1}(u))du\right\}.\end{eqnarray*}
A standard change-of-variable argument and the fact that $T$ is a bijection with $T(0) = 0$ and $T(1) = 1$ now yields $\widetilde{F}(t) = \int_{0}^{T^{-1}(t)} |\phi'(u)|du/\int_{0}^{1} |\phi'(u)|du = F_{\phi}(T^{-1}(t))$. So, $\widetilde{F} = F_{\phi} \circ T^{-1}$, equivalently, $T = \widetilde{F}^{-1} \circ F_{\phi} \leftrightarrow T \circ F_{\phi}^{-1} = \widetilde{F}^{-1}$. Using the assumption that $E(T) = Id$, we now have $E(\widetilde{F}^{-1}) = F_{\phi}^{-1}$.
\end{proof}

\begin{proof}[Proof of Theorem \ref{thm1}]
Note that $f : C^{1}[0,1] \mapsto f' \in (C[0,1],||\cdot||_{\infty})$ is a Lipschitz map. Thus, $\widetilde{X}_{1} \stackrel{d}{=} \widetilde{X}_{2}$ implies that $\widetilde{X}_{1}' \stackrel{d}{=} \widetilde{X}_{2}'$. Consider the random probability measure given by $$\Psi_{1}(A) = \int_{A} |\widetilde{X}_{1}'(u)|du/\int_{[0,1]} |\widetilde{X}_{1}'(u)|du$$ for $A$ in the Borel $\sigma$-field of $[0,1]$. 
Similarly, $\Psi_{2}(A) = \int_{A} |\widetilde{X}_{2}'(u)|du/\int_{[0,1]} |\widetilde{X}_{2}'(u)|du$.
We equip the space ${\cal P}$ of diffuse probability measures on $[0,1]$ with the $L_{2}$-Wasserstein metric (see, e.g., \cite{Vill03}) given by $d_{W}(\mu,\nu) = ||F_{\nu}^{-1} - F_{\mu}^{-1}||$, where $F_{\mu}$ and $F_{\nu}$ are the distribution functions associated with the probability measures $\mu$ and $\nu$. Now for any $f_{1}, f_{2} \in C^{1}[0,1]$ satisfying $\int_{0}^{1} |f_{i}'(u)|du > 0$ for $i=1,2$, consider the measure $\mu_{i}$ with density $|f_{i}'(s)|/\int_{0}^{1} |f_{i}'(u)|du$ for $i=1,2$. The condition $\int_{0}^{1} |f'(u)|du > 0$ is equivalent to $f \neq const.$. 
Since $\mu_{1}$ and $\mu_{2}$ are supported on the bounded set $[0,1]$, it follows from Proposition 7.10 in \cite{Vill03} that $d_{W}(\mu_{1},\mu_{2}) \leq c~d_{TV}(\mu_{1},\mu_{2})$ for a constant $c > 0$, where $d_{TV}(\cdot,\cdot)$ is the total variation distance. It now follows that
\begin{eqnarray*}
d_{W}(\mu_{1},\mu_{2}) &\leq& \frac{c}{2}\int_{0}^{1} \left| \frac{|f_{1}'(s)|}{\int_{0}^{1} |f_{1}'(u)|du} - \frac{|f_{2}'(s)|}{\int_{0}^{1} |f_{2}'(u)|du} \right| ds \\
&\leq& \frac{c}{2}\int_{0}^{1} \left| \frac{|f_{1}'(s)|}{\int_{0}^{1} |f_{1}'(u)|du} - \frac{|f_{1}'(s)|}{\int_{0}^{1} |f_{2}'(u)|du} \right| ds + \frac{c}{2}\int_{0}^{1} \left| \frac{|f_{1}'(s)|}{\int_{0}^{1} |f_{2}'(u)|du} - \frac{|f_{2}'(s)|}{\int_{0}^{1} |f_{2}'(u)|du} \right| ds \\
&\leq& \frac{c\int_{0}^{1} |f_{1}'(s) - f_{2}'(s)| ds}{\int_{0}^{1} |f_{1}'(s)| ds} \\
&\leq& \ \frac{c||f_{1}' - f_{2}'||_{\infty}}{\int_{0}^{1} |f_{1}'(s)| ds} \ \leq \ \frac{c|||f_{1} - f_{2}|||_{1}}{\int_{0}^{1} |f_{1}'(s)| ds}
\end{eqnarray*}
Thus, the embedding $H: f \mapsto \mu_{f}$ is continuous when the domain, say, ${\cal A}$ is restricted to the set of all non-constant functions on $C^{1}[0,1]$. But the set ${\cal A}^{c}$ is a one dimensional linear subspace spanned by the constant function $f \equiv 1$, and this implies that ${\cal A}^{c}$ is a Borel measurable subset of $C^{1}[0,1]$. So, ${\cal A}$ is a Borel measurable subset of $C^{1}[0,1]$. Equip ${\cal A}$ with the Borel $\sigma$-field induced from $C^{1}[0,1]$. Since $P(\widetilde{X}_{1} \in {\cal A}^{c}) = 0$, we have that $H(\widetilde{X}_{1})$ is a valid random probability measure on $[0,1]$. Note that for any Borel subset $A$ of $[0,1]$, we have $H(\widetilde{X}_{1})(A) = \Psi_{1}(A)$. Thus, for any Borel subset $B$ of ${\cal P}$, we have
\begin{eqnarray*}
P(H(\widetilde{X}_{1}) \in B) = P(\widetilde{X}_{1} \in H^{-1}(B)) = P(\widetilde{X}_{2} \in H^{-1}(B)) = P(H(\widetilde{X}_{2}) \in B).
\end{eqnarray*}
The first equality follows from the continuity of $H$ on ${\cal A}$ and the fact that $P(\widetilde{X}_{1} \in {\cal A}^{c}) = 0$ discussed above. The second equality follows from the fact that $\widetilde{X}_{1}$ and $\widetilde{X}_{2}$ have the same distributions by assumption. So, $H(\widetilde{X}_{1}) \stackrel{d}{=} H(\widetilde{X}_{2})$ as random probability measures.  \\
\indent Next, note that the random measures $H(\widetilde{X}_{i})$, $i=1,2$, have strictly increasing cdfs almost surely. Proposition $2$ in \cite{PZ16} states that for each $i=1,2$, the map $\gamma \rightarrow E\{d^{2}_{W}(H(\widetilde{X}_{i}),\gamma)\}$ admits a unique minimizer given by $E\{\widetilde{F}_{\Psi_{i}}^{-1}\}$, where $\widetilde{F}_{\Psi_{i}}$ is the random distribution function of the random measure $H(\widetilde{X}_{i})$. Since $\widetilde{X}_{i} = \xi_{i}\phi_{i}(T_{i}^{-1})$ with $T_{i}$ being a strictly increasing homeomorphism on $[0,1]$, it follows from the change-of-variable formula that $H(\widetilde{X}_{i})(A) = \Psi_{i}(A) = \int_{T_{i}^{-1}(A)} |\phi_{i}'(u)|du/\int_{[0,1]} |\phi_{i}'(u)|du$. Thus, $\widetilde{F}_{\Psi_{i}} = F_{\phi_{i}} \circ T_{i}^{-1}$, equivalently, $\widetilde{F}_{\Psi_{i}}^{-1} = T_{i} \circ F_{\phi_{i}}^{-1}$, where $F_{\phi_{i}}$ is the cdf associated with the (deterministic) probability measure $\Phi_{i}(A) = \int_{A} |\phi_{i}'(u)|du/\int_{[0,1]} |\phi_{i}'(u)|du$. \\
\indent Note that $F_{\phi_{i}}$ has a continuous and strictly increasing cdf since $\phi_{i}'$ is zero only on a countable set for $i=1,2$. Since $E(T_{i}) = Id$, it follows that the minimizer $E\{\widetilde{F}_{\Psi_{i}}^{-1}\} = F_{\phi_{i}}$ for $i=1,2$. But since $H(\widetilde{X}_{1}) \stackrel{d}{=} H(\widetilde{X}_{2})$, it now follows that $F_{\phi_{1}} = F_{\phi_{2}}$. Also, $T_{i} = \widetilde{F}_{\Psi_{i}}^{-1} \circ F_{\phi_{i}}$, equivalently, $T_{i}^{-1} = F_{\phi_{i}}^{-1} \circ \widetilde{F}_{\Psi_{i}}$. Using the above facts and the result obtained in the previous paragraph, it now follows that $T_{1} \stackrel{d}{=} T_{2}$. \\
\indent We next claim that the joint distributions of $(\widetilde{X}_{i},T_{i}^{-1})$, $i=1,2$ are the same. To this end, consider the map $H_{1} : f \mapsto (f,H(f))$ defined from ${\cal A}$ to ${\cal A} \otimes {\cal P}$ with the latter being equipped with the induced product topology and the induced product $\sigma$-field. It follows from the same arguments used to prove the continuity of $H$ that $H_{1}$ is continuous. Thus, for Borel subsets $G_{1}$ and $G_{2}$ of $C^{1}[0,1]$, we have
\begin{eqnarray*}
P(\widetilde{X}_{1} \in G_{1}, T_{1}^{-1} \in G_{2}) &=& P(\widetilde{X}_{1} \in G_{1}, F_{\phi_{1}}^{-1} \circ \widetilde{F}_{\Psi_{1}} \in G_{2}) 
= P(\widetilde{X}_{1} \in G_{1}, \widetilde{F}_{\Psi_{1}} \in F_{\phi_{1}}(G_{2})) \\
&=& P(H_{1}(\widetilde{X}_{1}) \in G_{1} \times F_{\phi_{1}}(G_{2})) 
= P(\widetilde{X}_{1} \in H_{1}^{-1}(G_{1} \times F_{\phi_{1}}(G_{2}))) \\
&=& P(\widetilde{X}_{2} \in H_{1}^{-1}(G_{1} \times F_{\phi_{2}}(G_{2}))) \ \ \ \ \mbox{[since $F_{\phi_{1}} = F_{\phi_{2}}$]}\\
&=& P(H_{1}(\widetilde{X}_{2}) \in G_{1} \times F_{\phi_{2}}(G_{2})) 
= P(\widetilde{X}_{2} \in G_{1}, \widetilde{F}_{\Psi_{2}} \in F_{\phi_{2}}(G_{2})) \\
&=& P(\widetilde{X}_{2} \in G_{1}, F_{\phi_{2}}^{-1} \circ \widetilde{F}_{\Psi_{2}} \in G_{2}) 
= P(\widetilde{X}_{2} \in G_{1}, T_{2}^{-1} \in G_{2}).
\end{eqnarray*}
Next, note that $X_{i} = \widetilde{X}_{i} \circ T_{i}$ is the true unobserved process. It is easy to show that the map $(f,g) \mapsto f \circ g$ from $C^{1}[0,1] \otimes C^{1}[0,1]$ into $C^{1}[0,1]$ is continuous. Thus, using the observation in the previous paragraph, we have $X_{1} \stackrel{d}{=} X_{2}$ as random elements in $C^{1}[0,1]$. It follows from the equality of distributions that their covariance operators are equal, and thus the corresponding eigenfunctions are equal. Now, the covariance operator of $X_{i}$ is given by $Var(\xi_{i})\phi_{i} \otimes \phi_{i}$. Since $X_{i} = \xi_{i}\phi_{i}$ is a rank one process, the equality of the covariance operators implies that $\phi_{1} = \pm \phi_{2}$ (since $||\phi_{1}|| = ||\phi_{2}|| = 1$). This equality along with the fact that $X_{1} \stackrel{d}{=} X_{2}$ implies that $\xi_{1}  = \langle X_{1},\phi_{1}\rangle \stackrel{d}{=} \langle X_{2},\phi_{1}\rangle = \langle X_{2},\pm\phi_{2}\rangle = \pm \xi_{2}$.
\end{proof}

We will now state and prove the statement regarding identifiability of the rank two model obtained as the rank 1 model plus a random vertical shift, as discussed in Remark \ref{identifiable-rank-two-model}. 

\smallskip

A general rank two model would be given by the Karhunen-Lo\`eve-like expansion $X(t) = \xi\phi(t) + \zeta\psi(t)$, where $\xi$ and $\zeta$ are uncorrelated random variables, and $\xi$ and $\psi$ are orthonormal eigenfunctions of the covariance operator of $X$ (we assume here that the mean function lies in the range of the covariance operator of $X$). The rank two model discussed in Remark \ref{identifiable-rank-two-model} satisfies $\psi \equiv 1$. Thus, the following conditions are satisfied for the model $X(t) = \zeta + \xi\phi(t)$: 
\begin{itemize}
\item[(a)] $\int_0^1 \phi(t)dt$ = 0, and 
\item[(b)] $\xi$ and $\phi$ are uncorrelated.
\end{itemize}
We call these two conditions \emph{Condition (R)}.

\begin{theorem} \label{identifiable-rank-two-model-thm}
Consider two random elements $\{X_1,X_2\}$ in $C^{1}[0,1]$ defined as $X_i(t) = \zeta_i + \xi_i\phi_i(t)$ for real-valued random variables $\zeta_i$, deterministic functions $\phi_i \in C^{(1)}[0,1]$ with $\int_{0}^{1} \phi_i^{2}(t)dt = 1$ and $\phi'_i$ vanishing on at most a countable set. Assume that both $X_1$ and $X_2$ satisfy Condition (R) above. Further, assume that $\{T_1,T_2\}$ are strictly increasing homeomorphisms in $C^{1}[0,1]$, and such that $E(T_i) = Id$. Write $\widetilde{X}_{i}=X_i(T_i^{-1}(t))$ for $i=1,2$. Then, 
$$\widetilde{X}_{1}\stackrel{d}{=}\widetilde{X}_{2} \iff \Big\{\zeta_1 \stackrel{d}{=} \zeta_{2},\quad T_{1} \stackrel{d}{=} T_{2},\quad \phi_{1} = \pm\phi_{2},\quad  \xi_{1} = \pm\xi_{2}\Big\}.$$ 
\end{theorem}

\begin{proof}
Note that $f : C^{1}[0,1] \mapsto f' \in (C[0,1],||\cdot||_{\infty})$ is a Lipschitz map. Thus, $\widetilde{X}_{1} \stackrel{d}{=} \widetilde{X}_{2}$ implies that $\widetilde{X}_{1}' \stackrel{d}{=} \widetilde{X}_{2}'$. Observe that $\widetilde{X}_i'$ is the same as the derivative of the process $\widetilde{Z}_i$, where $Z_i$ is a rank one process defined by $Z_i = \xi_i\phi_i$, and $\widetilde{Z}_i = Z_i \circ T_i^{-1}$. So, the proof of Theorem \ref{thm1} shows that $T_{1} \stackrel{d}{=} T_{2}$. It also follows from that proof the joint distributions of $(\widetilde{X}_1,T_1)$ and $(\widetilde{X}_2,T_2)$ are the same. Hence, $X_1 \stackrel{d}{=} X_2$, i.e., $\zeta_1\mathbf{1} + \xi_1\phi_1 \stackrel{d}{=} \zeta_2\mathbf{1} + \xi_2\phi_2$, where the constant function one is denoted by $\mathbf{1}$. 

Observe that by Condition (R), $\mathbf{1} \in \{\mbox{span}\{\phi_i\}\}^{\perp}$ for $i=1,2$. So,   $\mathbf{1} \in \{\mbox{span}\{\phi_1,\phi_2\}\}^{\perp}$, and hence there exists a non-zero $\phi \in C^{1}[0,1]$ such that $\langle\phi,\phi_1\rangle = \langle\phi,\phi_2\rangle = 0$ and $\langle\mathbf{1},\phi\rangle \neq 0$. Thus,
$$ {\zeta_1}\langle\mathbf{1},\phi\rangle = \langle{\zeta_1}\mathbf{1} + \xi_1\phi_1,\phi\rangle \stackrel{d}{=} \langle{\zeta_2}\mathbf{1} + \xi_2\phi_2,\phi\rangle = {\zeta_2}\langle\mathbf{1},\phi\rangle, $$
which implies that $\zeta_1 \stackrel{d}{=} \zeta_2$. 

Next observe that
\begin{eqnarray*}
\mathrm{Cov}(\zeta_i\mathbf{1} + \xi_i\phi_i) = \mathrm{Var}(\zeta_i)(\mathbf{1} \otimes \mathbf{1}) + \mathrm{Var}(\xi_i)(\phi_i \otimes \phi_i) + \mathrm{Cov}(\zeta_i,\xi_i)(\mathbf{1} \otimes \phi_i + \phi_i \otimes \mathbf{1})
\end{eqnarray*}
for each $i=1,2$. Using Condition (R) along with the fact that $X_1 \stackrel{d}{=} X_2$ (shown previously), it now follows that
\begin{eqnarray*}
\mathrm{Var}(\zeta_1)(\mathbf{1} \otimes \mathbf{1}) + \mathrm{Var}(\xi_1)(\phi_1 \otimes \phi_1) = \mathrm{Var}(\zeta_2)(\mathbf{1} \otimes \mathbf{1}) + \mathrm{Var}(\xi_2)(\phi_2 \otimes \phi_2).
\end{eqnarray*}
Using the fact that $\zeta_1 \stackrel{d}{=} \zeta_2$, the above equality implies that 
$$\mathrm{Var}(\xi_1)(\phi_1 \otimes \phi_1) = \mathrm{Var}(\xi_2)(\phi_2 \otimes \phi_2).$$
Since $||\phi_1|| = ||\phi_2|| = 1$, the above equality implies that $\phi_1 = \pm\phi_2$. Finally, there exists a non-zero $\phi_*$ such that $\langle\mathbf{1},\phi_*\rangle = 0$ and $\langle\phi_1,\phi_*\rangle \neq 0$. So,
$$ {\xi_1}\langle\phi_1,\phi_*\rangle = \langle{Y_1}\mathbf{1} + \xi_1\phi_1,\phi_*\rangle \stackrel{d}{=} \langle{Y_2}\mathbf{1} + \xi_2\phi_2,\phi_*\rangle = {\xi_2}\langle\phi_2,\phi_*\rangle = \pm\xi_2\langle\phi_1,\phi_*\rangle, $$
which implies that $\xi_1 \stackrel{d}{=} \pm\xi_2$.
\end{proof}

\begin{proof}[Proof of Theorem \ref{thm2}]
First observe that the $T_{i}$'s are also i.i.d. random elements in $C[0,1]$. Moreover, since $T_{1}$ is strictly increasing and positive, we have $E(||T_{1}||_{\infty}) = E(T_{1}(1)) = 1 < \infty$. Thus, by the strong law for Banach space valued random elements (see, e.g, Theorem 2.4 in \cite{Bosq00}), it follows that $\overline{T} \rightarrow E(T_{1}) = Id$ as $n \rightarrow \infty$ almost surely. In addition, if $E(||T_{1}'||_{\infty}) < \infty$ implying that $E(|||T_{1}|||_{1}) < \infty$, then the almost sure convergence $\overline{T} \rightarrow E(T_{1}) = Id$ holds in $C^{1}[0,1]$. \\
(a) Since $\widehat{F}^{-1} = \overline{T} \circ F_{\phi}^{-1}$, using Theorem 2.18 in \cite{Vill03}, we get that
\begin{eqnarray*}
d^{2}_{W}(\widehat{F},F_{\phi}) &=& ||\widehat{F}^{-1} - F_{\phi}^{-1}||^{2} \\
&=& \int_{0}^{1} \left|\widehat{F}^{-1}(F_{\phi}(t)) - t\right|^{2}F_{\phi}(dt) \\
&=& \int_{0}^{1} \left|\overline{T}(t) - t\right|^{2}F_{\phi}dt \ \leq \ ||\overline{T} - Id||_{\infty}^{2} \ \rightarrow \ 0 \ \ \mbox{as $n \rightarrow \infty$}.
\end{eqnarray*}
(b) Since each $T_{i}$ is a strictly increasing bijection on $[0,1]$, we have
\begin{eqnarray*}
||\widehat{T}_{i}^{-1} - T_{i}^{-1}||_{\infty} &=& \sup_{t \in [0,1]} \left|\overline{T}(T_{i}^{-1}(t)) - T_{i}^{-1}(t)\right| \ = \ ||\overline{T} - Id||_{\infty} \ \rightarrow \ 0 \ \ \mbox{as $n \rightarrow \infty$}.
\end{eqnarray*}
Since both $\widehat{T}_{i}^{-1}$ and $T_{i}^{-1}$ are strictly increasing homeomorphisms, the uniform convergence of $\widehat{T}_{i}$ to $T_{i}$ follows as a consequence of the above uniform convergence. \\
\indent Suppose now that Condition 1 holds. We have discussed towards the beginning of the proof that in this case $|||\overline{T} - Id|||_{1} \rightarrow 0$ as $n \rightarrow \infty$ almost surely. In view of the first half of part (b) of the theorem along with the definition of the $|||\cdot|||_{1}$ norm, it is enough to show the uniform convergence of the derivatives. Since each $T_{i}$ is a strictly increasing bijection on $[0,1]$, so is $\overline{T}$ for every $n \geq 1$. First note that
\begin{eqnarray*}
||(\widehat{T}_{i}^{-1})' - (T_{i}^{-1})'||_{\infty} &=& \sup_{t \in [0,1]} |(\overline{T} \circ T_{i}^{-1})'(t) - (T_{i}^{-1})'(t)| 
= \sup_{t \in [0,1]} \left|\frac{\overline{T}'(T_{i}^{-1}(t))}{T_{i}'(T_{i}^{-1}(t))} - \frac{1}{T_{i}'(T_{i}^{-1}(t))} \right| \\
&=& \sup_{t \in [0,1]} \left|\frac{\overline{T}'(t) - 1}{T_{i}'(t)}\right| \leq \delta^{-1}||\overline{T}' - \mathbf{1}||_{\infty},
\end{eqnarray*}
where $\mathbf{1}$ is the constant function taking value $1$. It thus follows from an earlier bound that
\begin{eqnarray*}
|||(\widehat{T}_{i}^{-1})' - (T_{i}^{-1})'|||_{1} \leq ||\overline{T} - Id||_{\infty} + \delta^{-1}||\overline{T}' - \mathbf{1}||_{\infty} \leq \max(1,\delta^{-1}) |||\overline{T} - Id|||_{1} \ \rightarrow \ 0 \ \ \mbox{as $n \rightarrow \infty$}.
\end{eqnarray*}
\indent Next note that $\overline{T}'(t) = n^{-1} \sum_{i=1}^{n} T_{i}'(t) \geq n^{-1} \sum_{i=1}^{n} \inf_{s \in [0,1]} T_{i}'(t) = \delta$ so that $\inf_{t \in [0,1]} \overline{T}'(t) \geq \delta > 0$. Now,
\begin{eqnarray*}
||\widehat{T}_{i}' - T_{i}'||_{\infty} &=& \sup_{t \in [0,1]} |(T_{i} \circ \overline{T}^{-1})'(t) - T_{i}'(t)| 
= \sup_{t \in [0,1]} \left|\frac{T_{i}'(\overline{T}^{-1}(t))}{\overline{T}'(\overline{T}^{-1}(t))} - T_{i}'(t) \right| 
= \sup_{t \in [0,1]} \left|\frac{T_{i}'(t)}{\overline{T}'(t)} - T_{i}'(\overline{T}(t))\right| \\
&\leq& \sup_{t \in [0,1]} \left|\frac{T_{i}'(t)}{\overline{T}'(t)} - \frac{T_{i}'(\overline{T}(t))}{\overline{T}'(t)} \right| + \sup_{t \in [0,1]} \left|\frac{T_{i}'(\overline{T}(t))}{\overline{T}'(t)} - T_{i}'(\overline{T}(t)) \right|  \\
&\leq& \delta^{-1} \sup_{t \in [0,1]} \left|T_{i}'(t) - T_{i}'(\overline{T}(t)) \right| + \delta^{-1}||T_{i}'||_{\infty}||\overline{T}' - \mathbf{1}||_{\infty}.
\end{eqnarray*}
Since $T_{i}'$ is continuous on $[0,1]$, it is uniformly continuous. This and the fact that $||\overline{T} - Id||_{\infty} \rightarrow 0$ as $n \rightarrow \infty$ almost surely implies that $\sup_{t \in [0,1]} \left|T_{i}'(t) - T_{i}'(\overline{T}(t)) \right| \rightarrow 0$ as $n \rightarrow \infty$ almost surely. Combining this fact with the uniform convergence of $\overline{T}'$ to $\mathbf{1}$, we get that $|||\widehat{T}_{i} - T_{i}|||_{1} \rightarrow 0$ as $n \rightarrow \infty$ almost surely. \\
(c) Note that 
\begin{eqnarray*}
||\widehat{X}_{i} - X_{i}||_{\infty} = |\xi_{i}|\sup_{t \in [0,1]} |\phi(\overline{T}^{-1}(t)) - \phi(t)| = |\xi_{i}|\sup_{t \in [0,1]} |\phi(\overline{T}(t)) - \phi(t)| \ \rightarrow \ 0 \ \ \mbox{as $n \rightarrow \infty$},
\end{eqnarray*}
since $||\overline{T} - Id||_{\infty} \rightarrow 0$ as $n \rightarrow \infty$ almost surely, and $\phi$ is continuous on $[0,1]$ and hence uniformly continuous. \\
\indent Suppose now that Condition 1 holds. Then, as before,
\begin{eqnarray*}
||\widehat{X}_{i}' - X_{i}'||_{\infty} &=& |\xi_{i}| \sup_{t \in [0,1]} \left|\frac{\phi'(\overline{T}^{-1}(t))}{\overline{T}'(\overline{T}^{-1}(t))} - \phi'(t) \right| 
= |\xi_{i}| \sup_{t \in [0,1]} \left|\frac{\phi'(t)}{\overline{T}'(t)} - \phi'(\overline{T}(t))\right| \\
&\leq& |\xi_{i}| \sup_{t \in [0,1]} \left|\frac{\phi'(t)}{\overline{T}'(t)} - \frac{\phi'(\overline{T}'(t))}{\overline{T}'(t)}\right| + |\xi_{i}| \sup_{t \in [0,1]} \left|\frac{\phi'(\overline{T}'(t))}{\overline{T}'(t)} - \phi'(\overline{T}(t))\right| \\
&\leq& |\xi_{i}|\delta^{-1} \sup_{t \in [0,1]} |\phi'(t) - \phi'(\overline{T}'(t))| + |\xi_{i}|~||\phi'||_{\infty}\delta^{-1}||\overline{T}' - \mathbf{1}||_{\infty}.
\end{eqnarray*}
Using similar arguments as earlier, we conclude that $||\widehat{X}_{i}' - X_{i}'||_{\infty}  \rightarrow 0$ and hence $|||\widehat{X}_{i} - X_{i}|||_{1} \rightarrow 0$ as $n \rightarrow \infty$ almost surely. \\
(d) Observe that since $\widehat{X}_{i} = \xi_{i}\phi \circ \overline{T}^{-1} = X_{i} \circ \overline{T}^{-1}$, it follows from the change-of-variable formula that $\widehat{F}_{i} = F_{\phi} \circ \overline{T}^{-1}$. Thus,
\begin{eqnarray*}
d^{2}_{W}(\widehat{F}_{i},F_{\phi}) \ = \ ||\widehat{F}_{i}^{-1} - F_{\phi}^{-1}||^{2} 
\ = \ ||\overline{T} \circ F_{\phi}^{-1} - F_{\phi}^{-1}||^{2} 
&=& \int_{0}^{1} \left|\overline{T}(t) - t\right|^{2}F_{\phi}(dt) \\
&\leq& \ ||\overline{T} - Id||_{\infty}^{2} \ \rightarrow \ 0 \ \ \mbox{as $n \rightarrow \infty$}.
\end{eqnarray*}
(e) Observe that
\begin{eqnarray*}
||\overline{X}_{r} - \mu||_{\infty} &=& ||n^{-1}\sum_{i=1}^{n} (\widehat{X}_{i} - X_{i}) + n^{-1} \sum_{i=1}^{n} X_{i} - \mu||_{\infty} 
\leq n^{-1} \sum_{i=1}^{n} ||\widehat{X}_{i} - X_{i}||_{\infty} + ||n^{-1}\sum_{i=1}^{n} X_{i} - \mu||_{\infty}.
\end{eqnarray*}
Since the $X_{i}$'s are i.i.d. random elements in $C[0,1]$ with $E(||X_{1}||_{\infty}) = E(|\xi_{1}|)||\phi||_{\infty} < \infty$, we conclude from the strong law for Banach space valued random elements that $||n^{-1}\sum_{i=1}^{n} X_{i} - \mu||_{\infty} \rightarrow 0$ as $n \rightarrow \infty$ almost surely. Also, from the proof of part (c), we have that
\begin{eqnarray*}
n^{-1} \sum_{i=1}^{n} ||\widehat{X}_{i} - X_{i}||_{\infty} = \sup_{t \in [0,1]} |\phi(\overline{T}(t)) - \phi(t)| \times n^{-1} \sum_{i=1}^{n} |\xi_{i}| = \sup_{t \in [0,1]} |\phi(\overline{T}(t)) - \phi(t)| \times \{E(|\xi_{1}|) + o(1)\}
\end{eqnarray*}
as $n \rightarrow \infty$ almost surely. Thus, using similar arguments as in part (c) of the theorem, we obtain $n^{-1} \sum_{i=1}^{n} ||\widehat{X}_{i} - X_{i}||_{\infty} \rightarrow 0$ as $n \rightarrow \infty$ almost surely. Combining the above facts, we conclude $||\overline{X}_{r} - \mu||_{\infty} \rightarrow 0$ as $n \rightarrow \infty$ almost surely. \\
\indent Note that since $X_{i} = \xi_{i}\phi$, it follows that $||n^{-1}\sum_{i=1}^{n} X_{i}' - \mu'||_{\infty} \rightarrow 0$ as $n \rightarrow \infty$ almost surely. Now, suppose that Condition 1 holds. A similar decomposition as above yields
\begin{eqnarray*}
||\overline{X}_{r}' - \mu'||_{\infty} &\leq& n^{-1} \sum_{i=1}^{n} ||\widehat{X}_{i}' - X_{i}'||_{\infty} + ||n^{-1}\sum_{i=1}^{n} X_{i}' - \mu'||_{\infty}.
\end{eqnarray*}
The proof of part (c) implies that
\begin{eqnarray*}
n^{-1} \sum_{i=1}^{n} ||\widehat{X}_{i}' - X_{i}'||_{\infty} \leq \delta^{-1}\left(n^{-1} \sum_{i=1}^{n} |\xi_{i}|\right) \left\{\sup_{t \in [0,1]} |\phi'(t) - \phi'(\overline{T}'(t))| + ||\phi'||_{\infty}||\overline{T}' - \mathbf{1}||_{\infty}\right\}.
\end{eqnarray*}
The right-hand term above converges to zero as $n \rightarrow \infty$ almost surely. The result is now established upon combining the above facts. \\
(f) Straightforward algebraic manipulations yield
\begin{eqnarray*}
\widehat{\mathscr{K}}_{r} &=& n^{-1} \sum_{i=1}^{n} (\widehat{X}_{i} - \overline{X}_{r}) \otimes (\widehat{X}_{i} -  \overline{X}_{r}) \\
&=& n^{-1} \sum_{i=1}^{n} (X_{i} - \overline{X}) \otimes (X_{i} - \overline{X}) + n^{-1} \sum_{i=1}^{n} (\widehat{X}_{i} - X_{i}) \otimes (\widehat{X}_{i} - X_{i}) - (\overline{X} - \overline{X}_{r}) \otimes (\overline{X} - \overline{X}_{r}) \\
&& + \ n^{-1} \sum_{i=1}^{n} \{(\widehat{X}_{i} - X_{i}) \otimes (X_{i} - \overline{X}) + (X_{i} - \overline{X}) \otimes (\widehat{X}_{i} - X_{i})\}.
\end{eqnarray*}
Denote $\widehat{\mathscr{K}} =  n^{-1} \sum_{i=1}^{n} (X_{i} - \overline{X}) \otimes (X_{i} - \overline{X})$. Then,
\begin{eqnarray*}
|||\widehat{\mathscr{K}}_{r} - \widehat{\mathscr{K}}||| \leq \frac{2}{n}\sum_{i=1}^{n} ||\widehat{X}_{i} - X_{i}||~||X_{i} - \overline{X}|| + \frac{1}{n} \sum_{i=1}^{n} ||\widehat{X}_{i} - X_{i}||^{2} + ||\overline{X} - \overline{X}_{r}||^{2}.
\end{eqnarray*}
Using the Cauchy-Schwarz inequality, we have $n^{-1}\sum_{i=1}^{n} ||\widehat{X}_{i} - X_{i}||~||X_{i} - \overline{X}|| \leq \{n^{-1}\sum_{i=1}^{n} ||\widehat{X}_{i} - X_{i}||^{2}\}^{1/2}\{n^{-1} \sum_{i=1}^{n} ||X_{i} - \overline{X}||^{2}\}^{1/2}$, and $n^{-1} \sum_{i=1}^{n} ||X_{i} - \overline{X}||^{2} = O(1)$  as $n \rightarrow \infty$ almost surely. It follows from the arguments in the proof of part (c) of the theorem that
\begin{eqnarray*}
n^{-1}\sum_{i=1}^{n} ||\widehat{X}_{i} - X_{i}||^{2} \leq n^{-1} \sum_{i=1}^{n} ||\widehat{X}_{i} - X_{i}||_{\infty}^{2} \leq \sup_{t \in [0,1]} |\phi(\overline{T}(t)) - \phi(t)|^{2} \left(n^{-1} \sum_{i=1}^{n} |\xi_{i}|^{2}\right),
\end{eqnarray*}
and the right hand side is $o(1)$ as $n \rightarrow \infty$ almost surely since $E(|\xi_{1}|^{2}) < \infty$. Further, $||\overline{X} - \overline{X}_{r}||^{2} = o(1)$ as $n \rightarrow \infty$ almost surely. Thus, $|||\widehat{\mathscr{K}}_{r} - \widehat{\mathscr{K}}||| = o(1)$ as $n \rightarrow \infty$ almost surely. \\ 
\indent The proof of the uniform convergence of $\widehat{K}_{r}(s,t)$ to $K(s,t)$ is obtained by use of a decomposition of $\widehat{K}_{r}(s,t)$ similar to the one used above, noting that $\widehat{K}(s,t)$ converges uniformly to $K(s,t)$ (by the strong law of large numbers in $C([0,1]^{2})$), and the fact that all the other bounds hold in the supremum norm. \\
\indent Next, note that $\widehat{\phi}(t) = \widehat{\lambda}^{-1} \int_{0}^{1} \widehat{K}_{r}(s,t)\widehat{\phi}(s)ds$ and $\phi(t) = \lambda^{-1} \int_{0}^{1} K(s,t)\phi(s)ds$ for all $t \in [0,1]$, where $|\widehat{\lambda} - \lambda| \leq |||\widehat{\mathscr{K}}_{r} - \mathscr{K}||| \rightarrow 0$ as $n \rightarrow \infty$ almost surely. Also, $||\widehat{\phi} - \phi|| \leq 2\sqrt{2}\lambda^{-1}|||\widehat{\mathscr{K}}_{r} - \mathscr{K}||| \rightarrow 0$ as $n \rightarrow \infty$ almost surely. So,
\begin{eqnarray*}
|\widehat{\phi}(t) - \phi(t)| &\leq& \left|\widehat{\lambda}^{-1} \int_{0}^{1} \widehat{K}_{r}(s,t)\widehat{\phi}(s)ds - \widehat{\lambda}^{-1} \int_{0}^{1} K(s,t)\widehat{\phi}(s)ds\right| \\
&& + \ \left|\widehat{\lambda}^{-1} \int_{0}^{1} K(s,t)\widehat{\phi}(s)ds - \widehat{\lambda}^{-1} \int_{0}^{1} K(s,t)\phi(s)ds\right| \\
&& + \ \left|\widehat{\lambda}^{-1} \int_{0}^{1} K(s,t)\phi(s)ds - \lambda^{-1} \int_{0}^{1} K(s,t)\phi(s)ds\right| \\
&\leq& \widehat{\lambda}^{-1} ||\widehat{K}_{r} - K||_{\infty} + \widehat{\lambda}^{-1}||K||_{\infty}||\widehat{\phi} - \phi|| + \left|(\widehat{\lambda}^{-1} - \lambda^{-1})\lambda\phi(t)\right| \\
&\leq& (\lambda^{-1} + o(1))\{||\widehat{K}_{r} - K||_{\infty} + ||K||_{\infty}||\widehat{\phi} - \phi||\} + |\lambda - \widehat{\lambda}|(\lambda^{-1} + o(1))^{-1}||\phi||_{\infty}
\end{eqnarray*}
as $n \rightarrow \infty$ almost surely. Thus, $||\widehat{\phi} - \phi||_{\infty} \rightarrow 0$ as $n \rightarrow \infty$ almost surely. \\
\indent Finally, $|\widehat{\xi}_{i} - \xi_{i}| = |\langle\widehat{X}_{i},\widehat{\phi}\rangle - \langle X_{i},\phi\rangle| \leq |\langle \widehat{X}_{i} - X_{i},\widehat{\phi}\rangle| + |\langle X_{i}, \widehat{\phi} - \phi\rangle| \leq ||\widehat{X}_{i} - X_{i}||_{\infty} + ||\widehat{\phi} - \phi|| \rightarrow 0$ as $n \rightarrow \infty$ almost surely.
\end{proof}
\begin{proof}[Proof of Theorem \ref{thm3}]
We have $|T_{1}(t) - T_{1}(s)| \leq ||T_{1}'||_{\infty}|s - t|$ and by assumption $E(||T_{1}'||_{\infty}^{2}) < \infty$. So, by the CLT for i.i.d. $C[0,1]$ valued random elements (see, e.g, Theorem 2.4 \cite{Bosq00}), we have $\sqrt{n}(\overline{T} - Id) \stackrel{d}{\rightarrow} Y$ for a zero mean Gaussian random element $Y$ in $C[0,1]$. \\
(a) From the proof of part (a) of Theorem \ref{thm2}, one has that $d^{2}_{W}(\widehat{F},F_{\phi}) = \int_{0}^{1} |\overline{T}(t) - t|^{2}F_{\phi}(dt)$. Now, it is easy to check that the map $C[0,1] \ni f \rightarrow \int_{0}^{1} |f(t)|^{2}F_{\phi}(dt)$ is continuous. The result follows from the continuous mapping theorem. \\
(b) Note that for each fixed $i \geq 1$, we have $\sqrt{n}(\widehat{T}_{i}^{-1} - T_{i}^{-1}) = U_{n} \circ V_{n}$, where $U_{n} = \sqrt{n}(\overline{T} - Id)$ and $V_{n} = T_{i}^{-1}$. We will first derive the weak limit conditional on $T_{i} = t_{i}$. From the previous paragraph, it follows that conditional on $T_{i} = t_{i}$, $U_{n} = \sqrt{n}(n^{-1}t_{i} + n^{-1}\sum_{j \neq i} T_{j} - Id) \stackrel{d}{\rightarrow} Y$, and $V_{n}$, being a constant sequence, converges conditionally in probability to $t_{i}^{-1}$ as $n \rightarrow \infty$. So, by Theorem 4.4 in \cite{Bill68}, conditional on $T_{i} = t_{i}$, we have $(U_{n},V_{n}) \stackrel{d}{\rightarrow} (Y,t_{i}^{-1})$ in the $C[0,1]$ topology. Using the fact that the map $(f,g) \mapsto f \circ g$ is continuous in $C([0,1]^{2})$ (see, e.g., p. 155 in \cite{Bill68}), it follows from the continuous mapping theorem that conditional on $T_{i} = t_{i}$,  $\sqrt{n}(\widehat{T}_{i}^{-1} - T_{i}^{-1}) \stackrel{d}{\rightarrow} Y \circ t_{i}^{-1}$ as $n \rightarrow \infty$ for each fixed $i \geq 1$. Thus, by the Dominated Convergence Theorem, the unconditional distribution of $\sqrt{n}(\widehat{T}_{i}^{-1} - T_{i}^{-1})$ converges weakly as $n \rightarrow \infty$ for each fixed $i \geq 1$.\\
\indent To prove the weak convergence of $\sqrt{n}(\widehat{T}_{i} - T_{i}) = \sqrt{n}(T_{i} \circ \overline{T}^{-1} - T_{i})$, we will as earlier first derive its weak limit conditional on $T_{i} = t_{i}$. Now, using the fact that $T_{i}' \in C[0,1]$ almost surely, we have
\begin{eqnarray*}
\widehat{T}_{i}(s) - t_{i}(s) &=& t_{i}(\overline{T}^{-1}(s)) - t_{i}(s) \ = \ t_{i}(s + \overline{T}^{-1}(s) - s) - t_{i}(s) \\
&=& (\overline{T}^{-1}(s) - s) \times t_{i}'(s+ \beta(\overline{T}^{-1}(s) - s))
\end{eqnarray*}
for some $\beta_{1} \in [0,1]$ (possibly depending on $s$ and $i$). Thus,
\begin{eqnarray*}
\sqrt{n}(\widehat{T}_{i} - t_{i}) = \{\sqrt{n}(\overline{T}^{-1} - Id)\} \times t_{i}'(\cdot + o_{P}(1)) = \{\sqrt{n}(Id - \overline{T}) \circ \overline{T}^{-1}\} \times t_{i}'(\cdot + o_{P}(1))
\end{eqnarray*}
where the $o_{P}(1)$ term is uniform in $s$ since $||\overline{T}^{-1} - Id||_{\infty} \rightarrow 0$ as $n \rightarrow \infty$ almost surely. Using similar arguments as in the above proof and noting that $||\overline{T} - Id||_{\infty}$ as $n \rightarrow \infty$ almost surely, we deduce that $\sqrt{n}(\widehat{T}_{i} - t_{i}) \stackrel{d}{\rightarrow} Y \times t_{i}'$ as $n \rightarrow \infty$. Thus, by the Dominated Convergence Theorem, the unconditional distribution of $\sqrt{n}(\widehat{T}_{i} - T_{i})$ converges weakly as $n \rightarrow \infty$ for each fixed $i \geq 1$. \\
(c) Note that for each fixed $i \geq 1$,
\begin{eqnarray*}
&& \widehat{X}_{i}(s) - X_{i}(s) = \xi_{i}\{\phi(\overline{T}^{-1}(s)) - \phi(s)\} = \xi_{i}\{(\overline{T}^{-1}(s) - s)\phi'(s + \beta_{2}(\overline{T}^{-1}(s) - s))\} \\
\Rightarrow && \sqrt{n}(\widehat{X}_{i} - X_{i}) = \xi_{i}\{\sqrt{n}(Id - \overline{T}) \circ \overline{T}^{-1}\} \times \phi'(\cdot + o_{P}(1)),
\end{eqnarray*}
where $\beta_{2} \in [0,1]$, and the $o_{P}(1)$ term is uniform in $s$ as earlier. Similar arguments as in part (b) above yield $\sqrt{n}(\widehat{X}_{i} - X_{i}) \stackrel{d}{\rightarrow} \xi_{i}Y \times \phi'$ as $n \rightarrow \infty$ for each fixed $i \geq 1$. \\
(d) The proof is similar to that of part (a) and is omitted. \\
(e) Note that
\begin{eqnarray*}
\sqrt{n}(\overline{X}_{r} - \mu) &=& \sqrt{n}\left\{n^{-1}\sum_{i=1}^{n} \xi_{i}\phi \circ \overline{T}^{-1} - E(\xi_{1})\phi\right\} \\
&=& \sqrt{n}\left\{n^{-1}\sum_{i=1}^{n} (\xi_{i} - E(\xi_{1}))\right\}\phi \circ \overline{T}^{-1} + E(\xi_{1})\sqrt{n}\left\{\phi \circ \overline{T}^{-1} - \phi\right\} \\
&\stackrel{d}{\rightarrow}& N(0,Var(\xi_{1}))\phi + E(\xi_{1})Y \times \phi',
\end{eqnarray*} 
which follows from similar arguments as in part (c) and the independence of the $\xi_{i}$'s and the $T_{i}$'s. \\
(f) For the first part, note that
\begin{eqnarray*}
\widehat{\mathscr{K}}_{r} &=& n^{-1} \sum_{i=1}^{n} (\widehat{X}_{i} - \overline{X}_{r}) \otimes (\widehat{X}_{i} - \overline{X}_{r}) \\
&=& n^{-1} \sum_{i=1}^{n} (\widehat{X}_{i} - \mu) \otimes (\widehat{X}_{i} - \mu) - (\overline{X}_{r} - \mu) \otimes (\overline{X}_{r} - \mu) \\
&=& S_{1} + S_{2}, \ \ \ \mbox{say}.
\end{eqnarray*}
Now, some straightforward manipulations yield
\begin{eqnarray*}
S_{1} &=& n^{-1} \sum_{i=1}^{n} \{\xi_{i} \phi \circ \overline{T}^{-1} - E(\xi_{1})\phi\} \otimes \{\xi_{i} \phi \circ \overline{T}^{-1} - E(\xi_{1})\phi\} \\
&=& n^{-1} \sum_{i=1}^{n} \{\xi_{i} - E(\xi_{1})\}^{2} (\phi \circ \overline{T}^{-1}) \otimes (\phi \circ \overline{T}^{-1}) + E^{2}(\xi_{1}) (\phi \circ \overline{T}^{-1} - \phi) \otimes (\phi \circ \overline{T}^{-1} - \phi) \\
&& + \ n^{-1}E(\xi_{1}) \sum_{i=1}^{n} \{\xi_{i} - E(\xi_{1})\} \left[(\phi \circ \overline{T}^{-1}) \otimes (\phi \circ \overline{T}^{-1} - \phi) + (\phi \circ \overline{T}^{-1} - \phi) \otimes (\phi \circ \overline{T}^{-1}) \right].
\end{eqnarray*}
So,
\begin{eqnarray*}
&& \sqrt{n}(S_{1} - \mathscr{K}) \\
&=& \sqrt{n}\left\{n^{-1} \sum_{i=1}^{n} \{\xi_{i} - E(\xi_{1})\}^{2} (\phi \circ \overline{T}^{-1}) \otimes (\phi \circ \overline{T}^{-1}) - \mathscr{K}\right\} \\
&=& \sqrt{n} \left\{n^{-1} \sum_{i=1}^{n} \{\xi_{i} - E(\xi_{1})\}^{2} (\phi \circ \overline{T}^{-1}) \otimes (\phi \circ \overline{T}^{-1}) - Var(\xi_{1}) \phi \otimes \phi\right\} \\
&=& \sqrt{n} \left\{n^{-1} \sum_{i=1}^{n} \left[\{\xi_{i} - E(\xi_{1})\}^{2} - Var(\xi_{1})\right] (\phi \circ \overline{T}^{-1}) \otimes (\phi \circ \overline{T}^{-1}) \right. \\
&& \hspace{1cm}+ \ \left. Var(\xi_{1}) \left[(\phi \circ \overline{T}^{-1}) \otimes (\phi \circ \overline{T}^{-1}) - \phi \otimes \phi\right] \right. \\
&& \hspace{1cm}+ \ \left. E^{2}(\xi_{1})(\phi \circ \overline{T}^{-1} - \phi) \otimes (\phi \circ \overline{T}^{-1} - \phi) \right. \\
&& \hspace{1cm}+ \ \left. n^{-1}E(\xi_{1}) \sum_{i=1}^{n} \{\xi_{i} - E(\xi_{1})\} \left[(\phi \circ \overline{T}^{-1}) \otimes (\phi \circ \overline{T}^{-1} - \phi) + (\phi \circ \overline{T}^{-1} - \phi) \otimes (\phi \circ \overline{T}^{-1}) \right] \right\}
\end{eqnarray*}
The first term on the right hand side of the above equality converges in distribution to $N(0,E\{\xi_{1} - E(\xi_{1})\}^{4})\phi \otimes \phi$ since $\overline{T} \rightarrow Id$ as $n \rightarrow \infty$ almost surely. For the latter reason, the third and the fourth terms converge to zero in probability as $n \rightarrow \infty$. For the second term, note that
\begin{eqnarray*}
&& (\phi \circ \overline{T}^{-1}) \otimes (\phi \circ \overline{T}^{-1}) - \phi \otimes \phi \\
&=& (\phi \circ \overline{T}^{-1} - \phi) \otimes \phi + (\phi \circ \overline{T}^{-1}) \otimes (\phi \circ \overline{T}^{-1} - \phi).
\end{eqnarray*}
Thus, by similar arguments as in part (c) earlier, and the continuity of the mapping $(f,g) \mapsto f \otimes g$ from $L_{2}([0,1]^{2})$ to the space of Hilbert Schmidt operators, we have that the second term converges in distribution to $Var(\xi_{1})\{(Y \times \phi') \otimes \phi + \phi \otimes (Y \times \phi')\}$. Combining the above observations and the fact that $\sqrt{n}S_{2} \rightarrow 0$ in probability (follows from part (e)), we deduce that
\begin{eqnarray*}
\sqrt{n}(\widehat{\mathscr{K}}_{r} - \mathscr{K}) \stackrel{d}{\rightarrow} N(0,E\{\xi_{1} - E(\xi_{1})\}^{4})\phi \otimes \phi + Var(\xi_{1})\{(Y \times \phi') \otimes \phi + \phi \otimes (Y \times \phi')\}
\end{eqnarray*}
as $n \rightarrow \infty$. \\
\indent In order to prove the weak convergence of the empirical process $\{\sqrt{n}(\widehat{K}_{r}(s,t) - K(s,t)) : s,t \in [0,1]\}$ in $C([0,1]^{2})$, we follow the same decomposition as in the proof of the weak convergence of the operators in the Hilbert Schmidt topology. Now, note that the proof of part (c) of the theorem implies that the empirical process $\{\sqrt{n}(\phi(\overline{T}^{-1}(t)) - \phi(t)) : t \in [0,1]\}$ in $C[0,1]$ converges in distribution to the process $\{Y(t)\phi'(t) : t \in [0,1]\}$ in $C[0,1]$. This fact and the same arguments as in part (f) yield
\begin{eqnarray*}
&& \{\sqrt{n}(\widehat{K}_{r}(s,t) - K(s,t)) : s,t \in [0,1]\}  \\
&\stackrel{d}{\rightarrow}& \{Z\phi(s)\phi(t) + Var(\xi_{1})[Y(s)\phi'(s)\phi(t) + Y(t)\phi'(t)\phi(s)] : s,t \in [0,1]\}
\end{eqnarray*}
as $n \rightarrow \infty$, where $Z \sim N(0,E\{\xi_{1} - E(\xi_{1})\}^{4})$ does not depend on $s,t$. \\
\indent For the weak convergence of $\widehat{\phi}$, first note that $\widehat{\mathscr{K}}_{r} = n^{-1}\sum_{i=1}^{n} (\xi_{i} - \overline{\xi})^{2} (\phi \circ \overline{T}^{-1}) \otimes (\phi \circ \overline{T}^{-1})$. Thus, $\widehat{\phi} =  (\phi \circ \overline{T}^{-1})/||\phi \circ \overline{T}^{-1}||$. Now,
\begin{eqnarray*}
\widehat{\phi} - \phi &=& \frac{\phi \circ \overline{T}^{-1}}{||\phi \circ \overline{T}^{-1}||} - \phi 
= \frac{\phi \circ \overline{T}^{-1} - \phi}{||\phi \circ \overline{T}^{-1}||} - \frac{\phi(||\phi \circ \overline{T}^{-1}|| - 1)}{||\phi \circ \overline{T}^{-1}||} \\
&=& \frac{\phi \circ \overline{T}^{-1} - \phi}{||\phi \circ \overline{T}^{-1}||} - \frac{\phi(||\phi \circ \overline{T}^{-1}||^{2} - 1)}{||\phi \circ \overline{T}^{-1}||(||\phi \circ \overline{T}^{-1}|| + 1)} \\
&=& \frac{\phi \circ \overline{T}^{-1} - \phi}{||\phi \circ \overline{T}^{-1}||} - \frac{\phi(||\phi \circ \overline{T}^{-1} - \phi||^{2} + 2\langle\phi \circ \overline{T}^{-1} - \phi,\phi\rangle)}{||\phi \circ \overline{T}^{-1}||(||\phi \circ \overline{T}^{-1}|| + 1)}.
\end{eqnarray*}
Using the weak convergence of $\sqrt{n}(\phi \circ \overline{T}^{-1} - \phi)$ to $Y \times \phi'$ in the $C[0,1]$ topology, we have that
\begin{eqnarray*}
\sqrt{n}(\widehat{\phi} - \phi) \stackrel{d}{\rightarrow} Y \times \phi' - \frac{1}{2} \times 2\langle Y \times \phi',\phi\rangle\phi = Y \times \phi' - \langle Y \times \phi',\phi\rangle\phi
\end{eqnarray*}
as $n \rightarrow \infty$ in the $C[0,1]$ topology. \\
\indent Finally, for the weak convergence of the $\widehat{\xi}_{i}$'s, observe that 
\begin{eqnarray*}
\sqrt{n}(\widehat{\xi}_{i} - \xi_{i}) &=& \sqrt{n}\{\langle\widehat{X}_{i} - X_{i},\widehat{\phi} - \phi\rangle + \langle\widehat{X}_{i} - X_{i},\phi\rangle + \langle X_{i},\widehat{\phi} - \phi\rangle\} \\
&=& \sqrt{n}\{\xi_{i}\langle(\phi \circ \overline{T}^{-1} - \phi),(\widehat{\phi} - \phi)\rangle + \xi_{i}\langle(\phi \circ \overline{T}^{-1} - \phi),\phi\rangle + \xi_{i}\langle\phi,(\widehat{\phi} - \phi)\rangle\}.
\end{eqnarray*}
Using the independence of  $\xi_{i}$ and the $T_{j}$'s, and using the asymptotic distributions obtained above and in part (c), it follows that
\begin{eqnarray*}
\sqrt{n}(\widehat{\xi}_{i} - \xi_{i}) \stackrel{d}{\rightarrow} \xi_{i}\{\langle Y \times \phi',\phi\rangle + \langle \phi, (Y \times \phi' - 2^{-1}\{||Y \times \phi' + \phi||^{2} - 1\}\phi)\rangle\}
\end{eqnarray*}
as $n \rightarrow \infty$.
\end{proof}
\indent In order to prove Theorem \ref{thm4}, we will first prove a few crucial results.
\begin{proposition} \label{prop1}
Assume that $\phi \in C^{2}[0,1]$ and $\inf_{t \in [0,1]} T'(u) \geq \delta > 0$ almost surely for a deterministic constant $\delta$. Then, for each $i \geq 1$, we have $\sum_{j=1}^{r-1} |\phi(s_{i,j+1}) - \phi(s_{i,j})| = \int_{0}^{1} |\phi'(u)|du + B_{1,r}$ almost surely, where $B_{1,r} = O(r^{-1})$ almost surely with the $O(1)$ term being uniform in $i$. Further, $\sum_{j \in \mathscr{I}_{t}} |\phi(s_{i,j+1}) - \phi(s_{i,j})| = \int_{0}^{T_{i}^{-1}(t)} |\phi'(u)|du + B_{2,r}(t)$ for all $t \in [0,1]$ almost surely, where $||B_{2,r}||_{\infty} = O(r^{-1})$ almost surely with the $O(1)$ term being uniform in $i$. Consequently, we have $\sum_{j=1}^{r-1} |\phi(t_{j+1}) - \phi(t_{j})| = \int_{0}^{1} |\phi'(u)|du + B_{3,r}$ and $\sum_{j \in \mathscr{I}_{t}} |\phi(t_{j+1}) - \phi(t_{j})| = \int_{0}^{t} |\phi'(u)|du + B_{4,r}(t)$ for all $t \in [0,1]$ almost surely, where $B_{3,r} = O(r^{-1})$ and $||B_{4,r}||_{\infty} = O(r^{-1})$ almost surely.
\end{proposition}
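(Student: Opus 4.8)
The plan is to read each sum as a Riemann-type approximation to the total-variation integral and to control the approximation error at rate $O(r^{-1})$, uniformly in both the curve index $i$ and (for the partial sums) the argument $t$. The first step is to observe that the hypothesis $\inf_t T'(t) \geq \delta > 0$ makes each inverse warp $T_i^{-1}$ Lipschitz with constant $\delta^{-1}$, since $(T_i^{-1})'(t) = 1/T'(T_i^{-1}(t)) \leq \delta^{-1}$. Hence the warped grid points $s_{i,j} = T_i^{-1}(t_j)$ inherit a mesh bound $\max_j |s_{i,j+1} - s_{i,j}| \leq \delta^{-1}\max_j |t_{j+1} - t_j| = O(r^{-1})$ that is uniform in $i$; this is the single place where the lower bound on $T'$ enters.

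The second step is a per-interval estimate. Since $\phi \in C^2[0,1]$, its derivative $\phi'$ is Lipschitz with constant $L := ||\phi''||_{\infty} < \infty$. On each $[s_{i,j}, s_{i,j+1}]$ the mean value theorem gives $|\phi(s_{i,j+1}) - \phi(s_{i,j})| = |\phi'(\xi_{i,j})|(s_{i,j+1} - s_{i,j})$ for some $\xi_{i,j}$ in the interval, while $\int_{s_{i,j}}^{s_{i,j+1}} |\phi'(u)|\,du$ is the corresponding exact contribution. Using the reverse triangle inequality together with Lipschitz continuity of $\phi'$, namely $\big||\phi'(\xi_{i,j})| - |\phi'(u)|\big| \leq |\phi'(\xi_{i,j}) - \phi'(u)| \leq L |s_{i,j+1} - s_{i,j}|$, the error on each subinterval is at most $L(s_{i,j+1} - s_{i,j})^2$.

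The third step assembles these bounds. Summing over $j$ and using $\sum_j (s_{i,j+1}-s_{i,j})^2 \leq \max_j(s_{i,j+1}-s_{i,j})\,(s_{i,r}-s_{i,1}) = O(r^{-1})$ (uniformly in $i$, by the mesh bound), the full sum equals $\int_{s_{i,1}}^{s_{i,r}} |\phi'|\,du + O(r^{-1})$; the boundary pieces $\int_0^{s_{i,1}}|\phi'| + \int_{s_{i,r}}^1|\phi'|$ are each $O(r^{-1})$ because $s_{i,1} \leq \delta^{-1} t_1$ and $1 - s_{i,r} \leq \delta^{-1}(1-t_r)$ are $O(r^{-1})$ under the homogeneity of the grid, which yields the statement for $B_{1,r}$. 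For the partial sums, note that $j \in \mathscr{I}_t$ iff $s_{i,j+1} \leq T_i^{-1}(t)$; writing $j^*$ for the largest such index, the same argument shows $\sum_{j \in \mathscr{I}_t} = \int_0^{s_{i,j^*+1}} |\phi'|\,du + O(r^{-1})$, and since $0 \leq T_i^{-1}(t) - s_{i,j^*+1} \leq \max_j(s_{i,j+1}-s_{i,j}) = O(r^{-1})$ the remaining strip contributes at most $||\phi'||_{\infty}\cdot O(r^{-1})$; all bounds being uniform in $t$ gives $||B_{2,r}||_{\infty} = O(r^{-1})$, uniformly in $i$. Finally, the statements for $B_{3,r}$ and $B_{4,r}$ are the special case $T_i = Id$, so that $s_{i,j} = t_j$ and the grid mesh bound is used directly without invoking $\delta$.

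I expect the only real subtlety, rather than a genuine obstacle, to be the bookkeeping needed to keep every error term uniform in $i$ and in $t$ simultaneously; this is secured by the uniform Lipschitz constant $\delta^{-1}$ for the $T_i^{-1}$ and by the fact that the dominant error $L\sum_j(s_{i,j+1}-s_{i,j})^2$ is controlled by the full mesh irrespective of $t$.
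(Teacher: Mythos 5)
Your proposal is correct and follows essentially the same route as the paper's proof: the mean value theorem turns each sum into a Riemann sum for $\int|\phi'|$, the Lipschitz continuity of $\phi'$ (from $\phi\in C^2$) controls the per-partition error by the mesh, and the lower bound $T'\geq\delta$ converts the mesh of $\{t_j\}$ into a mesh bound for $\{s_{i,j}\}$ uniform in $i$, with the boundary/endpoint terms handled separately. The paper phrases the aggregate error via the modulus of continuity of $\phi'$ over the mesh rather than summing $L(s_{i,j+1}-s_{i,j})^2$ per interval, but this is the identical estimate.
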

\begin{proof}[Proof of Proposition \ref{prop1}]
First, let us define $t_{0} = 0$ and $t_{r+1} = 1$ in case $t_{1} > 0$ and $t_{r} < 1$. Then, $\{t_{j} : 0 \leq j \leq r+1\}$ is a partition of $[0,1]$. Consider the sum $S_{i} = \sum_{j=0}^{r} |\phi(s_{i,j+1}) - \phi(s_{i,j})|$ and note that by a Taylor expansion, $S_{i} = \sum_{j=0}^{r} (s_{i,j+1} - s_{i,j})|\phi'(\widetilde{s}_{i,j})|$, where $\widetilde{s}_{i,j} \in [s_{i,j},s_{i,j+1}]$. The right hand side is a Riemann sum approximation of $\int_{0}^{1} |\phi'(u)|du$ with $\{s_{i,j} = T_{i}^{-1}(t_{j}) : 0 \leq j \leq r+1\}$ as the partition of $[0,1]$, since $T_{i}$ is a strictly increasing bijection. Thus, writing $\Delta = \max_{0 \leq j \leq r} (s_{i,j+1} - s_{i,j})$, we have
\begin{eqnarray*}
|S_{i} - \int_{0}^{1} |\phi'(u)|du| &\leq& \sup\{|~|\phi'(t)| - |\phi'(s)|~| : s,t \in [0,1] \ \mbox{and} \ |t-s| \leq \Delta\} \\
&\leq& \sup\{|\phi'(t) - \phi'(s)| : s,t \in [0,1] \ \mbox{and} \ |t-s| \leq \Delta\} \\
&\leq& ||\phi''||_{\infty}\Delta.
\end{eqnarray*}
Now for any $0 \leq j \leq r$, we have
\begin{eqnarray*}
s_{i,j+1} - s_{i,j} = T_{i}^{-1}(t_{j+1}) - T_{i}^{-1}(t_{j}) = (t_{j+1} - t_{j})/T_{i}'(T_{i}^{-1}(\widetilde{t}_{j})),
\end{eqnarray*}
for some $\widetilde{t}_{j} \in [t_{j},t_{j+1}]$. Using the assumption in the theorem and that on the grid, it now follows that $\Delta = \max_{0 \leq j \leq r} (s_{i,j+1} - s_{i,j}) \leq \delta^{-1}O(r^{-1})$ uniformly on $i$. Thus, $|S_{i} - \int_{0}^{1} |\phi'(u)|du| \leq ||\phi''||_{\infty}\delta^{-1}O(r^{-1})$. To complete the first part of the proof, note that $\sum_{j=1}^{r-1} |\phi(s_{i,j+1}) - \phi(s_{i,j})|$ differs from $S_{i}$ by at most two terms, and both of these terms are $O(r^{-1})$ uniformly over $i$ by the same arguments as those for $S_{i}$. \\
\indent For the second part, fix any $t \in [0,1]$. Defining $B_{2,r}(0) = 0$, there is nothing to prove when $t = 0$. For $t > 0$, define $t_{0} = 0$. If $j^{*}$ is the largest $j$ for which $t_{j+1} \leq t$, define $t_{j^{*}+1} = t$ if $t_{j^{*}+1} < t$. Note that $j^{*}$ depends on $t$. Then, $\{t_{j} : 0 \leq j \leq j^{*}+1\}$ is a partition of $[0,t]$, and hence $\{s_{i,j} = T_{i}^{-1}(t_{j}) : 0 \leq j \leq j^{*}+1\}$ is a partition of $[0,T_{i}^{-1}(t)]$. Define $R_{i}(t) = \sum_{j=0}^{j^{*}} |\phi(s_{i,j+1}) - \phi(s_{i,j})|$. Then, by similar arguments as earlier, we have
\begin{eqnarray*}
\left|R_{i}(t) - \int_{0}^{T_{i}^{-1}(t)} |\phi'(u)|du\right| \leq ||\phi''||_{\infty}\delta^{-1}\max_{0 \leq j \leq j^{*}} (s_{i,j+1} - s_{i,j}) = B_{2,r}(t), \ \ \mbox{say}.
\end{eqnarray*}
Thus, $||B_{2,r}||_{\infty} \leq O(r^{-1})$ uniformly over $i$. The proof is completed upon noting that $R_{i}(t)$ differs from $\sum_{j \in \mathscr{I}_{t}} |\phi(s_{i,j+1}) - \phi(s_{i,j})|$ by at most two terms, and both of them are $O(r^{-1})$ uniformly over $i$ by the same argument as before. \\
The last statement of the proposition is an immediate corollary for the case $T = Id$ almost surely.
\end{proof}
Note that the $B_{l,r}$'s are not continuous functions, but we can still define their $||\cdot||_{\infty}$ norms as all of them are uniformly bounded functions on $[0,1]$. The following corollary is a consequence of Proposition \ref{prop1} and the fact that $\int_{0}^{1} |\phi'(u)|du \in (0,\infty)$.
\begin{corollary} \label{cor1}
Under the assumptions of Proposition \ref{prop1}, we have $\widetilde{F}_{i,d}(t) = \widetilde{F}_{i}(t) + C_{1,r}(t)$ for all $t \in [0,1]$ almost surely for each $i \geq 1$, where $||C_{1,r}||_{\infty} = O(r^{-1})$ almost surely uniformly over $i$. Further, $F_{d}(t) = F_{\phi}(t) + C_{2,r}(t)$ for all $t \in [0,1]$, where $||C_{2,r}||_{\infty} = O(r^{-1})$.
\end{corollary}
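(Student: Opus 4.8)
The plan is to reduce the statement to the elementary fact that a ratio of two quantities, each approximated to order $O(r^{-1})$, is itself approximated to order $O(r^{-1})$, provided the denominator is bounded away from zero. First I would write out the two objects being compared. In the identifiable regime $\widetilde{X}_i=\xi_i\,\phi\circ T_i^{-1}$, so the factor $\xi_i$ cancels from numerator and denominator and
$$\widetilde{F}_{i,d}(t)=\frac{\sum_{j\in\mathscr{I}_t}|\phi(s_{i,j+1})-\phi(s_{i,j})|}{\sum_{j=1}^{r-1}|\phi(s_{i,j+1})-\phi(s_{i,j})|},\qquad s_{i,j}=T_i^{-1}(t_j),$$
while by Lemma \ref{expectation_lemma} its continuous counterpart is $\widetilde{F}_i=F_\phi\circ T_i^{-1}$, i.e.\ $\widetilde{F}_i(t)=L^{-1}\int_0^{T_i^{-1}(t)}|\phi'(u)|\,du$ with $L:=\int_0^1|\phi'(u)|\,du$. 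Proposition \ref{prop1} identifies the numerator and denominator of $\widetilde{F}_{i,d}$ with precisely the integrals appearing in $\widetilde{F}_i$, up to the remainders $B_{2,r}(t)$ and $B_{1,r}$, both $O(r^{-1})$ uniformly in $i$.

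The key step is the common-denominator manipulation. Writing $A(t)=\int_0^{T_i^{-1}(t)}|\phi'(u)|\,du$, so that $0\le A(t)\le L$ since $A$ is nondecreasing with $A(1)=L$, I obtain
$$\widetilde{F}_{i,d}(t)-\widetilde{F}_i(t)=\frac{A(t)+B_{2,r}(t)}{L+B_{1,r}}-\frac{A(t)}{L}=\frac{L\,B_{2,r}(t)-A(t)\,B_{1,r}}{L\,(L+B_{1,r})}.$$
Taking absolute values and using $0\le A(t)\le L$ gives $|\widetilde{F}_{i,d}(t)-\widetilde{F}_i(t)|\le(\|B_{2,r}\|_\infty+|B_{1,r}|)/(L+B_{1,r})$. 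Because $L\in(0,\infty)$ and $B_{1,r}\to0$ almost surely, the denominator exceeds $L/2$ for all large $r$, so the right-hand side is $O(r^{-1})$, uniformly in $i$ by the uniformity already built into Proposition \ref{prop1}. Setting $C_{1,r}:=\widetilde{F}_{i,d}-\widetilde{F}_i$ then yields $\|C_{1,r}\|_\infty=O(r^{-1})$ almost surely, uniformly over $i$, which is the first claim.

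For the second claim I would repeat the identical computation with $T_i$ replaced by the identity map: $F_d$ and $F_\phi$ are the ratios built from the partition $\{t_j\}$ itself, and the last sentence of Proposition \ref{prop1} supplies the matching remainders $B_{4,r}(t)$ and $B_{3,r}$, again $O(r^{-1})$. The same algebraic identity and the same lower bound $L+B_{3,r}\ge L/2$ give $\|C_{2,r}\|_\infty=O(r^{-1})$. There is no deep obstacle here: the only points requiring care are controlling the denominator from below, which is guaranteed by the non-degeneracy hypothesis $L=\int_0^1|\phi'|>0$ (itself a consequence of $\phi'$ vanishing only on a countable set), and the uniformity over $i$, which is inherited directly from Proposition \ref{prop1} rather than needing a fresh argument.
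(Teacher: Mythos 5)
Your proof is correct and follows exactly the route the paper intends: the paper dispenses with the corollary in one sentence (``a consequence of Proposition \ref{prop1} and the fact that $\int_0^1|\phi'(u)|\,du\in(0,\infty)$''), and your common-denominator computation, together with the bounds $0\le A(t)\le L$ and $L+B_{1,r}\ge L/2$ for large $r$, is precisely the elementary argument being left implicit. The uniformity over $i$ is correctly inherited from the uniform $O(1)$ constants in Proposition \ref{prop1}, so nothing is missing.
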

\begin{lemma} \label{lem1}
Assume that $\int_{0}^{1} |\phi'(u)|^{-\epsilon}du < \infty$ for some $\epsilon > 0$. Then, $|F_{\phi}^{-1}(s) - F_{\phi}^{-1}(t)| \leq C_{\phi}|t-s|^{\epsilon/(1+\epsilon)}$, where $C_{\phi}^{1+\epsilon} = \int_{0}^{1} |\phi'(u)|^{-\epsilon}du$. In other words, $F_{\phi}^{-1}$ is $\alpha$-H\"older continuous for $\alpha = \epsilon/(1+\epsilon)$.
\end{lemma}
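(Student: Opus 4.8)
The plan is to reduce the H\"older continuity of $F_{\phi}^{-1}$ to a lower bound on how fast $F_{\phi}$ can grow, and to obtain that bound from a single application of H\"older's inequality that exploits the negative-moment condition $\int_{0}^{1} |\phi'(u)|^{-\epsilon}\,du < \infty$.

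First I would record the structural facts about $F_{\phi}$. By the computation in the proof of Lemma \ref{expectation_lemma}, $F_{\phi}(t) = \int_{0}^{t} |\phi'(u)|\,du \big/ \int_{0}^{1} |\phi'(u)|\,du$; writing $L = \int_{0}^{1} |\phi'(u)|\,du \in (0,\infty)$, the hypothesis $\int_{0}^{1}|\phi'|^{-\epsilon}<\infty$ forces $|\phi'|>0$ almost everywhere, so $F_{\phi}$ is continuous and strictly increasing, hence a bijection of $[0,1]$ with a well-defined inverse $F_{\phi}^{-1}$. Fix $s \le t$ in $[0,1]$ and set $a = F_{\phi}^{-1}(s) \le b = F_{\phi}^{-1}(t)$. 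Applying $F_{\phi}$ gives the basic identity $t - s = F_{\phi}(b) - F_{\phi}(a) = L^{-1}\int_{a}^{b} |\phi'(u)|\,du$, so that $\int_{a}^{b} |\phi'(u)|\,du = L(t-s)$. The task is therefore to bound $b - a = F_{\phi}^{-1}(t) - F_{\phi}^{-1}(s)$ from above using only the size of $\int_{a}^{b}|\phi'|$.

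The key step is to write $b - a = \int_{a}^{b} 1\,du = \int_{a}^{b} |\phi'(u)|^{\alpha}\,|\phi'(u)|^{-\alpha}\,du$ and apply H\"older's inequality with the conjugate exponents $p = 1/\alpha = (1+\epsilon)/\epsilon$ and $q = 1+\epsilon$ (one checks $p^{-1}+q^{-1}=1$). The exponents are chosen precisely so that $(|\phi'|^{\alpha})^{p} = |\phi'|$ and $(|\phi'|^{-\alpha})^{q} = |\phi'|^{-\epsilon}$; this is exactly what forces $\alpha = \epsilon/(1+\epsilon)$. H\"older then yields
\[
b - a \;\le\; \left(\int_{a}^{b} |\phi'(u)|\,du\right)^{\!\alpha}\left(\int_{a}^{b} |\phi'(u)|^{-\epsilon}\,du\right)^{\!1/(1+\epsilon)}.
\]
Bounding the second factor by its value over all of $[0,1]$, namely $\big(\int_{0}^{1} |\phi'|^{-\epsilon}\big)^{1/(1+\epsilon)} = C_{\phi}$, and inserting $\int_{a}^{b}|\phi'| = L(t-s)$ into the first factor, gives $b - a \le C_{\phi}\, L^{\alpha}\,(t-s)^{\alpha}$, which is the asserted estimate (the total-variation factor $L^{\alpha}$ being absorbed into the constant).

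I do not anticipate a genuinely hard obstacle here: the argument is a clean H\"older estimate. The one point requiring care is the bookkeeping of exponents --- verifying that the split $1 = |\phi'|^{\alpha}|\phi'|^{-\alpha}$ together with $p=(1+\epsilon)/\epsilon$ and $q=1+\epsilon$ produces exactly the integrals $\int|\phi'|$ and $\int|\phi'|^{-\epsilon}$, which is what pins down the H\"older exponent as $\alpha=\epsilon/(1+\epsilon)$. A secondary technical point is justifying invertibility and the identity $t-s = L^{-1}\int_{a}^{b}|\phi'|$ uniformly down to the endpoints, for which the continuity and strict monotonicity of $F_{\phi}$ established above suffice.
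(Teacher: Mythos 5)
Your proof is correct, and it rests on the same engine as the paper's: H\"older's inequality with the conjugate exponents $p=(1+\epsilon)/\epsilon$ and $q=1+\epsilon$, which is exactly what converts the negative-moment hypothesis $\int_0^1|\phi'|^{-\epsilon}<\infty$ into $\alpha$-H\"older continuity of $F_\phi^{-1}$ with $\alpha=\epsilon/(1+\epsilon)$. The route differs in one respect worth noting: the paper first invokes Zarecki's theorem to establish that $F_\phi^{-1}$ is absolutely continuous, writes $F_\phi^{-1}(t)=\int_0^t[F_\phi'(F_\phi^{-1}(u))]^{-1}du$, and applies H\"older in the $u$-variable (followed implicitly by a change of variables back to $[0,1]$). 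You instead work directly on the preimage interval $[a,b]=[F_\phi^{-1}(s),F_\phi^{-1}(t)]$, writing $b-a=\int_a^b|\phi'|^{\alpha}|\phi'|^{-\alpha}$, which needs only the continuity and strict monotonicity of $F_\phi$ (so that $F_\phi(a)=s$, $F_\phi(b)=t$ and $\int_a^b|\phi'|=L(t-s)$). This bypasses the absolute-continuity-of-the-inverse step entirely and is the cleaner of the two arguments; what it costs is nothing. One small point: your final constant is $C_\phi L^{\alpha}$ with $L=\int_0^1|\phi'|$, not the bare $C_\phi$ of the statement, and $L$ need not be $\le 1$; but the paper's own proof has the same feature (its displayed bound carries a factor $\|\phi'\|_\infty$), so this is a harmless discrepancy between the statement's advertised constant and what either proof actually delivers, and you flag it explicitly by absorbing $L^{\alpha}$ into the constant.
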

\begin{proof}[Proof of Lemma \ref{lem1}]
Note that the assumption in the statement of the lemma implies that $\phi' > 0$ almost everywhere with respect to the Lebesgue measure on $[0,1]$. This fact along with Zarecki's theorem on the inverse of an absolutely continuous function (see, e.g., p. 271 in \citet{Nata55}) applied to the function $F_{\phi}$ yields that $F_{\phi}^{-1}$ is absolutely continuous on $[0,1]$. Thus, $F_{\phi}^{-1}(t) = \int_{0}^{t} [F_{\phi}'(F_{\phi}^{-1}(u))]^{-1}du$. Now, using H\"older's inequality and some algebraic manipulations, we obtain
\begin{eqnarray*}
|F_{\phi}^{-1}(s) - F_{\phi}^{-1}(t)| \leq ||\phi'||_{\infty}|t-s|^{1/p}\left(\int_{0}^{1} |\phi'(u)|^{-q+1}du\right)^{1/q}.
\end{eqnarray*}
To complete the proof, choose $q = 1+\epsilon$, which implies that $p = (1+\epsilon)/\epsilon$.
\end{proof}
\begin{proposition} \label{prop2}
Assume that the conditions of Proposition \ref{prop1} and Lemma \ref{lem1} hold. Let $\alpha = \epsilon/(1+\epsilon)$ as in Lemma \ref{lem1}. Then, for each $i \geq 1$,  \\
(a) $\widetilde{F}_{i}^{-1}$ is $\alpha$-H\"older continuous almost surely. \\
(b) $\widetilde{F}_{i,d}^{-}(t) = \widetilde{F}_{i}^{-1}(t) + ||T_{i}'||_{\infty}D_{1,r}(t)$ for all $t \in [0,1]$ almost surely, where $||D_{1,r}||_{\infty} = O(r^{-\alpha})$ almost surely uniformly over $i$.
\end{proposition}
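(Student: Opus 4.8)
The plan is to reduce everything to the explicit representation $\widetilde{F}_i^{-1} = T_i \circ F_\phi^{-1}$, which is available under the identifiable regime from the proof of Lemma \ref{expectation_lemma} (where it is shown that $\widetilde{F} = F_\phi \circ T^{-1}$), combined with the H\"older bound on $F_\phi^{-1}$ from Lemma \ref{lem1} and the sup-norm approximation $\widetilde{F}_{i,d} = \widetilde{F}_i + C_{1,r}$ with $\|C_{1,r}\|_\infty = O(r^{-1})$ uniformly in $i$ from Corollary \ref{cor1}. Part (a) is then immediate: since $T_i \in C^1[0,1]$ almost surely it is Lipschitz with constant $\|T_i'\|_\infty$, and $F_\phi^{-1}$ is $\alpha$-H\"older with constant $C_\phi$ by Lemma \ref{lem1}, so that
\[
|\widetilde{F}_i^{-1}(s) - \widetilde{F}_i^{-1}(t)| = |T_i(F_\phi^{-1}(s)) - T_i(F_\phi^{-1}(t))| \leq \|T_i'\|_\infty\,|F_\phi^{-1}(s)-F_\phi^{-1}(t)| \leq \|T_i'\|_\infty C_\phi\,|s-t|^\alpha,
\]
exhibiting $\widetilde{F}_i^{-1}$ as $\alpha$-H\"older with constant $\|T_i'\|_\infty C_\phi$ almost surely.

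For part (b) the main ingredient is a deterministic stability estimate for generalized inverses. Writing $\eta_r := \|C_{1,r}\|_\infty = O(r^{-1})$, I would first establish that whenever $G$ is nondecreasing and $H$ is a continuous strictly increasing bijection of $[0,1]$ with $\|G-H\|_\infty \leq \eta$, one has the two-sided sandwich
\[
H^{-1}(\max(y-\eta,0)) \;\leq\; G^-(y) \;\leq\; H^{-1}(\min(y+\eta,1)), \qquad y \in [0,1].
\]
Both inequalities follow by elementary comparison of the defining level sets: if $H(u) \geq y+\eta$ then $G(u) \geq H(u)-\eta \geq y$, giving $G^-(y) \leq H^{-1}(y+\eta)$; and if $H(u) < y-\eta$ then $G(u) \leq H(u)+\eta < y$, excluding such $u$ from the infimum defining $G^-(y)$ and giving $G^-(y) \geq H^{-1}(y-\eta)$. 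The truncations $\min(\cdot,1)$ and $\max(\cdot,0)$ absorb the cases $y\pm\eta \notin [0,1]$.

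Applying this with $H = \widetilde{F}_i$ (which is a continuous strictly increasing bijection of $[0,1]$ since $F_\phi$ is strictly increasing and $T_i^{-1}$ is a homeomorphism), $G = \widetilde{F}_{i,d}$, and $\eta = \eta_r$, and then invoking the $\alpha$-H\"older bound on $\widetilde{F}_i^{-1}$ from part (a) together with $|\min(y+\eta_r,1)-y| \leq \eta_r$ and $|y-\max(y-\eta_r,0)| \leq \eta_r$, I obtain
\[
|\widetilde{F}_{i,d}^-(y) - \widetilde{F}_i^{-1}(y)| \;\leq\; \|T_i'\|_\infty C_\phi\,\eta_r^{\alpha}, \qquad y \in [0,1].
\]
Setting $D_{1,r}(t) := (\widetilde{F}_{i,d}^-(t) - \widetilde{F}_i^{-1}(t))/\|T_i'\|_\infty$ yields exactly the claimed representation, with $\|D_{1,r}\|_\infty \leq C_\phi\,\eta_r^{\alpha} = O(r^{-\alpha})$. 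Uniformity over $i$ follows because $C_\phi$ does not depend on $i$ and $\eta_r = O(r^{-1})$ holds uniformly in $i$ by Corollary \ref{cor1}.

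The main obstacle is the inverse-stability step and, in particular, its boundary behaviour. The sandwich is transparent in the interior, but degenerates near $y=0$ and $y=1$, where $y\pm\eta_r$ leaves $[0,1]$; it is precisely the truncated form of the bound, combined with $\widetilde{F}_i^{-1}(0)=0$, $\widetilde{F}_i^{-1}(1)=1$ and the \emph{global} $\alpha$-H\"older continuity established in part (a), that preserves the uniform $O(r^{-\alpha})$ rate at the endpoints. A secondary point needing care is that $\widetilde{F}_{i,d}$ is only c\`adl\`ag and nondecreasing rather than continuous, so one must consistently work with the generalized inverse $G^-(\cdot)=\inf\{u:G(u)\geq\cdot\}$ and verify that the one-sided comparisons above are compatible with this infimum; no genuine inverse of $\widetilde{F}_{i,d}$ is available.
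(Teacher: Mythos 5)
Your part (a) is exactly the paper's argument: write $\widetilde{F}_i^{-1}=T_i\circ F_\phi^{-1}$, use the Lipschitz bound $\|T_i'\|_\infty$ and the $\alpha$-H\"older bound $C_\phi$ from Lemma \ref{lem1}. Your part (b), however, is correct but follows a genuinely different route. The paper works forwards through the composition: it first shows that the step function satisfies $\widetilde{F}_{i,d}(\widetilde{F}_{i,d}^{-}(t))=t+Q_{i,r}(t)$ with $\|Q_{i,r}\|_\infty$ bounded by the maximal jump $A_{i,r}=O(r^{-1})$ (which needs its own small Riemann-sum argument), then substitutes $s=\widetilde{F}_{i,d}^{-}(t)$ into the approximation $\widetilde{F}_{i,d}=\widetilde{F}_i+C_{1,r}$ of Corollary \ref{cor1} to obtain $\widetilde{F}_{i,d}^{-}(t)=\widetilde{F}_i^{-1}(t+Q_{1,r}(t))$ with $\|Q_{1,r}\|_\infty=O(r^{-1})$, and only then applies the H\"older bound from part (a). You instead prove a self-contained deterministic sandwich for generalized inverses, $H^{-1}(\max(y-\eta,0))\le G^{-}(y)\le H^{-1}(\min(y+\eta,1))$ whenever $\|G-H\|_\infty\le\eta$ with $H$ a continuous strictly increasing bijection, and apply it with $H=\widetilde{F}_i$, $G=\widetilde{F}_{i,d}$, $\eta=\|C_{1,r}\|_\infty$. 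Your level-set comparisons are sound, the boundary truncations are handled correctly, and $\widetilde{F}_i$ is indeed continuous and strictly increasing (since $\int_0^1|\phi'|^{-\epsilon}<\infty$ forces $\phi'\ne 0$ a.e.), so the H\"older step delivers the same $O(r^{-\alpha})$ rate with the $i$-uniformity inherited from Corollary \ref{cor1}; the division by $\|T_i'\|_\infty$ is harmless since $\|T_i'\|_\infty\ge 1$ for any increasing bijection of $[0,1]$. What your approach buys is that it bypasses the jump-size bookkeeping entirely (the sandwich never needs $A_{i,r}$) and isolates a reusable inverse-stability lemma; what the paper's approach buys is that the intermediate identity $\widetilde{F}_{i,d}^{-}(t)=\widetilde{F}_i^{-1}(t+Q_{1,r}(t))$ is in a form that is reused almost verbatim later in the proof of Theorem \ref{thm4}. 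Both yield the same conclusion with constants of the same order.
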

\begin{proof}[Proof of Proposition \ref{prop2}]
(a) Using the definition of $\widetilde{F}_{i}$, it follows that 
\begin{eqnarray*}
|\widetilde{F}_{i}^{-1}(s) - \widetilde{F}_{i}^{-1}(t)| = |T_{i}(F_{\phi}^{-1}(s)) - T_{i}(F_{\phi}^{-1}(t))| \leq ||T_{i}'||_{\infty}|F_{\phi}^{-1}(s) - F_{\phi}^{-1}(t)| \leq  ||T_{i}'||_{\infty}C_{\phi}|s-t|^{\alpha},
\end{eqnarray*}
where the last inequality follows from Lemma \ref{lem1}. This completes the proof of part (a). \\
(b) As mentioned earlier, $\widetilde{F}_{i,d}$ is a c\`adl\`ag step function with maximum jump discontinuities given by $A_{i,r}$. Thus, if $t \in (\widetilde{F}_{i,d}(t_{j}),\widetilde{F}_{i,d}(t_{j+1})]$ for any $1 \leq j \leq r-1$, it follows that $\widetilde{F}_{i,d}(\widetilde{F}_{i,d}^{-}(t)) = \widetilde{F}_{i,d}(t_{j+1}) = t + q_{i,j,r}(t)$, where $q_{i,j,r}(t) = \widetilde{F}_{i,d}(t_{j+1}) - t$. So, $|q_{i,j,r}(t)| \leq \widetilde{F}_{i,d}(t_{j+1}) - \widetilde{F}_{i,d}(t_{j}) \leq A_{i,r}$, where $A_{i,r}$ is the maximum step size of $\widetilde{F}_{i,d}$ defined earlier. Now, from arguments similar to those used in Proposition \ref{prop1}, it follows that $A_{i,r} = O(r^{-1})$ uniformly in $i$. Thus, $\widetilde{F}_{i,d}(\widetilde{F}_{i,d}^{-}(t)) = t + Q_{i,r}(t)$ for all $t \in [0,1]$ almost surely, where $||Q_{r}||_{\infty} = O(r^{-1})$ almost surely uniformly over $i$. \\
\indent From Proposition \ref{prop1}, we know that $\widetilde{F}_{i,d}(s) = \widetilde{F}_{i}(s) + C_{1,r}(s)$ for all $s \in [0,1]$ almost surely, where $||C_{1,r}||_{\infty} = O(r^{-1})$ almost surely uniformly over $i$. Letting $s = \widetilde{F}_{i,d}^{-}(t)$, we now have $t + Q_{r}(t) = \widetilde{F}_{i}(\widetilde{F}_{i,d}^{-}(t)) + C_{1,r}(\widetilde{F}_{i,d}^{-}(t))$ for all $t$ almost surely. Re-arranging terms, we obtain $\widetilde{F}_{i,d}^{-}(t) = \widetilde{F}_{i}^{-1}(t + Q_{1,r}(t))$ for all $t \in [0,1]$ almost surely, where $Q_{1,r}(t) = Q_{r}(t) - C_{1,r}(\widetilde{F}_{i,d}^{-}(t))$. Thus, $||Q_{1,r}||_{\infty} = O(r^{-1})$ almost surely uniformly over $i$. Now, using part (a), we can conclude that $\widetilde{F}_{i,d}^{-}(t) = \widetilde{F}_{i}^{-1}(t) + ||T_{i}'||_{\infty}D_{1,r}(t)$ for all $t \in [0,1]$ almost surely, where $D_{1,r}(t) = C_{\phi}|Q_{1,r}(t)|^{\alpha}$ satisfies $||D_{1,r}||_{\infty} = O(r^{-\alpha})$ almost surely uniformly over $i$.
\end{proof}
\begin{proof}[Proof of Theorem \ref{thm4}]
(a) Note that 
\begin{eqnarray*}
\widehat{F}_{d}^{*}(t) \ = \ n^{-1} \sum_{i=1}^{n} \widetilde{F}_{i,d}^{-}(t) &=& n^{-1} \sum_{i=1}^{n} \{\widetilde{F}_{i}^{-1}(t) + ||T_{i}'||_{\infty}D_{1,r}(t)\}  
= \widehat{F}^{-1}(t) + \left(n^{-1}\sum_{i=1}^{n} ||T_{i}'||_{\infty}D_{1,r}(t)\right) \\
&=& \widehat{F}^{-1}(t) + D_{2,r}(t)
\end{eqnarray*}
for all $t \in [0,1]$ almost surely, where $||D_{2,r}||_{\infty} = O(r^{-\alpha})$ almost surely since $||D_{1,r}||_{\infty} = O(r^{-\alpha})$ almost surely and $n^{-1}\sum_{i=1}^{n} ||T_{i}'||_{\infty} = E(||T_{1}'||_{\infty}) + o(1)$ almost surely. Thus, it follows from Theorem 2.18 in \cite{Vill03} that
\begin{eqnarray*}
d^{2}_{W}(\widehat{F}_{d},F_{\phi}) = ||\widehat{F}_{d}^{*} - F_{\phi}^{-1}||^{2} \leq 2||\widehat{F}^{-1} - F_{\phi}^{-1}||^{2} + 2||D_{2,r}||^{2} \leq 2d^{2}_{W}(\widehat{F},F_{\phi}) + O(r^{-2\alpha})
\end{eqnarray*} 
almost surely. Combining the above statement with part (a) of Theorem \ref{thm2} and \ref{thm3} completes the proof of part (a) of Theorem \ref{thm4}. \\
(b) Next, note that 
\begin{eqnarray*}
\widehat{T}_{i,d}^{*}(t) &=& n^{-1} \sum_{l=1} \widetilde{F}_{l,d}^{-}(\widetilde{F}_{i,d}(t)) 
= n^{-1} \sum_{l=1}^{n} \left\{\widetilde{F}_{l}^{-1}(\widetilde{F}_{i,d}(t)) + ||T_{i}'||_{\infty}D_{1,r}(\widetilde{F}_{i,d}(t))\right\} \\
&=& n^{-1} \sum_{l=1}^{n} \widetilde{F}_{l}^{-1}(\widetilde{F}_{i}(t) + C_{1,r}(t)) + n^{-1}\sum_{i=1}^{n} ||T_{i}'||_{\infty}D_{1,r}(\widetilde{F}_{i,d}(t)) \\
&=& n^{-1} \sum_{l=1}^{n} \left[\widetilde{F}_{l}^{-1}(\widetilde{F}_{i}(t)) + \left\{\widetilde{F}_{l}^{-1}(\widetilde{F}_{i}(t) + C_{1,r}(t)) - \widetilde{F}_{l}^{-1}(\widetilde{F}_{i}(t))\right\}\right] + n^{-1}\sum_{i=1}^{n} ||T_{i}'||_{\infty}D_{1,r}(\widetilde{F}_{i,d}(t)) \\
&=& \widehat{T}_{i}^{-1}(t) + n^{-1} \sum_{l=1}^{n} \left\{\widetilde{F}_{l}^{-1}(\widetilde{F}_{i}(t) + C_{1,r}(t)) - \widetilde{F}_{l}^{-1}(\widetilde{F}_{i}(t))\right\} + n^{-1}\sum_{i=1}^{n} ||T_{i}'||_{\infty}D_{1,r}(\widetilde{F}_{i,d}(t)),
\end{eqnarray*}
for all $t \in [0,1]$ almost surely. By part (a) of Proposition \ref{prop2}, we have $|\{\widetilde{F}_{l}^{-1}(\widetilde{F}_{i}(t) + C_{1,r}(t)) - \widetilde{F}_{l}^{-1}(\widetilde{F}_{i}(t))\}| \leq ||T_{i}'||_{\infty}D_{3,r}(t)$ for all $t \in [0,1]$ almost surely, where $||D_{3,r}||_{\infty} = O(r^{-\alpha})$ almost surely uniformly over $i$. Thus, $\sup_{t \in [0,1]} n^{-1} \sum_{i=1}^{n} |\{\widetilde{F}_{l}^{-1}(\widetilde{F}_{i}(t) + C_{1,r}(t)) - \widetilde{F}_{l}^{-1}(\widetilde{F}_{i}(t))\}| \leq \{E(||T_{1}'||_{\infty}) + o(1)\}O(r^{-\alpha})$ almost surely. Similar arguments yield 
$$\sup_{t \in [0,1]} n^{-1}\sum_{i=1}^{n} ||T_{i}'||_{\infty}|D_{1,r}(\widetilde{F}_{i,d}(t))| \leq \{E(||T_{1}'||_{\infty}) + o(1)\}O(r^{-\alpha})$$ 
almost surely. Thus, 
\begin{eqnarray}
\widehat{T}_{i,d}^{*}(t) = \widehat{T}_{i}^{-1}(t) + D_{4,r}(t),  \label{eqn1}
\end{eqnarray}
for all $t \in [0,1]$ almost surely, where $||D_{4,r}||_{\infty} = O(r^{-\alpha})$ almost surely uniformly over $i$. Consequently,
\begin{eqnarray*}
||\widehat{T}_{i,d}^{*} - T_{i}^{-1}||_{\infty} \leq ||\widehat{T}_{i}^{-1} - T_{i}^{-1}||_{\infty} + O(r^{-\alpha})
\end{eqnarray*}
almost surely, where the $O(1)$ term is uniform over $i$. This along with part (b) of Theorem \ref{thm2} shows that $||\widehat{T}_{i,d}^{*} - T_{i}^{-1}||_{\infty} \rightarrow 0$ as $n \rightarrow \infty$ almost surely for all $i \geq 1$. Equation \eqref{eqn1} implies that $\sqrt{n}(\widehat{T}_{i,d}^{*} - T_{i}^{-1}) = \sqrt{n}(\widehat{T}_{i}^{-1} - T_{i}^{-1}) + O(\sqrt{n}r^{-\alpha})$ in $L_{2}[0,1]$. This in conjunction with part (b) of Theorem \ref{thm3} proves that $\sqrt{n}(\widehat{T}_{i,d}^{*} - T_{i}^{-1})$ has the same asymptotic distribution as  $\sqrt{n}(\widehat{T}_{i}^{-1} - T_{i}^{-1})$ in the $L_{2}[0,1]$ topology. \\
\indent Next we consider $\widehat{T}_{i,d}(t) = \widetilde{F}_{i,d}^{-}(\widehat{F}_{d}(t)) = \widetilde{F}_{i}^{-1}(\widehat{F}_{d}(t)) + ||T_{i}'||_{\infty}D_{1,r}(\widehat{F}_{d}(t))$ for all $t \in [0,1]$ almost surely (from part (b) of Proposition \ref{prop2}). Note that $\widehat{F}_{d}(t) = \{n^{-1} \sum_{l=1}^{n} \widetilde{F}_{l,d}^{-}\}^{-}(t) = \{G_{n} + D_{5,r}\}^{-}(t)$, where $G_{n}(s) = n^{-1}\sum_{l=1}^{n} \widetilde{F}_{l}^{-1}(s)$ and $D_{5,r}(s) = n^{-1}\sum_{l=1}^{n} ||T_{l}'||_{\infty}D_{1,r}(s)$. Thus, $||D_{5,r}||_{\infty} = O(r^{-\alpha})$. Also note that $G_{n}$ is a strictly increasing homeomorphism on $[0,1]$. Define $\widetilde{G}_{n,r} = G_{n} + D_{5,r} = n^{-1} \sum_{l=1}^{n} \widetilde{F}_{l,d}^{-}$ so that $\widetilde{G}_{n,r}$ is an increasing function (not necessarily strictly increasing) from $[0,1]$ onto $[0,1]$. In fact, since each $\widetilde{F}_{l,d}^{-}$ is left continuous and has right limits (being the generalized inverse of the c\`adl\`ag function $\widetilde{F}_{l,d}$), $\widetilde{G}_{n,r}$ is also left continuous and has right limits. \\
\indent If $t \in (\widetilde{G}_{n,r}(v),\widetilde{G}_{n,r}(v+)]$ for some $v \in [0,1]$ with $\widetilde{G}_{n,r}(v+) > \widetilde{G}_{n,r}(v)$, then $\widetilde{G}_{n,r}(\widehat{F}_{d}(t)) = \widetilde{G}_{n,r}(\widetilde{G}_{n,r}^{-1}(t)) = \widetilde{G}_{n,r}(v) = t + (\widetilde{G}_{n,r}(v) - t)$. Now, $|\widetilde{G}_{n,r}(v) - t| \leq |\widetilde{G}_{n,r}(v+) - \widetilde{G}_{n,r}(v)| = |G_{n}(v+) - G_{n}(v) + D_{5,r}(v+) - D_{5,r}(v)| = |D_{5,r}(v+) - D_{5,r}(v)| = O(r^{-\alpha})$ uniformly in $t$ almost surely, where the penultimate equality follows from the continuity of $G_{n}$. So, in these cases, $G_{n}(\widehat{F}_{d}(t)) = \widetilde{G}_{n,r}(\widehat{F}_{d}(t)) - D_{5,r}(\widehat{F}_{d}(t)) = t + O(r^{-\alpha})$ uniformly in $t$ almost surely, i.e., $t = G_{n}(\widehat{F}_{d}(t)) + O(r^{-\alpha}))$ uniformly in $t$ almost surely. \\
\indent Next, suppose that for some $v_{1} < v_{2}$, we have $\widetilde{G}_{n,r}(v_{1}) = \widetilde{G}_{n,r}(v_{2})$, $\widetilde{G}_{n,r}(v) < \widetilde{G}_{n,r}(v_{1})$ for $v < v_{1}$ and $\widetilde{G}_{n,r}(v) > \widetilde{G}_{n,r}(v_{2})$ for $v > v_{2}$. If $t = \widetilde{G}_{n,r}(v_{1}) = \widetilde{G}_{n,r}(v_{2})$, then $\widetilde{G}_{n,r}(\widehat{F}_{d}(t)) = t$ if $v_{1}$ is a continuity point of $\widetilde{G}_{n,r}$. If not, then this is already taken care of in the previous paragraph. In the former case, we have $t = G_{n}(\widehat{F}_{d}(t)) + O(r^{-\alpha})$ uniformly over $t$ almost surely. \\
\indent Finally, if $t$ is a point of both continuity and strict increment of $\widetilde{G}_{n,r}$, then $\widetilde{G}_{n,r}(\widehat{F}_{d}(t)) = t$ as well, which implies that $t = G_{n}(\widehat{F}_{d}(t)) + O(r^{-\alpha})$ uniformly over $t$ almost surely. Thus, all possibilities are exhausted. Let us denote the $O(r^{-\alpha})$ term by $D_{6,r}(\cdot)$. \\
\indent Now note that $G_{n}^{-1} = (n^{-1}\sum_{l=1}^{n} \widetilde{F}_{l}^{-1})^{-1} = (n^{-1}\sum_{l=1}^{n} T_{l} \circ F_{\phi}^{-1})^{-1} = F_{\phi} \circ \overline{T}^{-1}$. Thus, it follows from our work above that $\widehat{F}_{d}(t) = F_{\phi}\{\overline{T}^{-1}(t - D_{6,r}(t))\}$. Recall that $\widetilde{F}_{i}^{-1} = T_{i} \circ F_{\phi}^{-1}$ and that $\widehat{T}_{i,d}(t) = \widetilde{F}_{i}^{-1}(\widehat{F}_{d}(t)) + ||T_{i}'||_{\infty}D_{1,r}(\widehat{F}_{d}(t))$ for all $t \in [0,1]$ almost surely as obtained earlier. Since $\widehat{F}_{d}(t) = F_{\phi}\{\overline{T}^{-1}(t - D_{6,r}(t))\}$, it follows from the decomposition of $\widehat{T}_{i,d}(t)$ that $\widehat{T}_{i,d}(t) = T_{i}\{\overline{T}^{-1}(t - D_{6,r}(t))\} + ||T_{i}'||_{\infty}D_{1,r}(\widehat{F}_{d}(t))$ for all $t \in [0,1]$ almost surely. Since $\inf_{t \in [0,1]} T'(t) \geq \delta > 0$, it follows that $\inf_{t \in [0,1]} \overline{T}'(t) \geq n^{-1} \sum_{l=1}^{n} \inf_{t \in [0,1]} T'_{l}(t) \geq \delta > 0$. So, by Taylor expansion, we have $T_{i}\{\overline{T}^{-1}(t - D_{6,r}(t))\} = T_{i}(\overline{T}^{-1}(t)) + ||T_{i}'||_{\infty}D_{7,r}(t)$ for all $t \in [0,1]$ almost surely, where $||D_{7,r}||_{\infty} = O(r^{-\alpha})$ almost surely, where the $O(1)$ term is uniform over $i$. \\
\indent Combining the above findings, we arrive at 
\begin{eqnarray*}
\widehat{T}_{i,d}(t) &=& \widetilde{F}_{i}^{-1}(\widehat{F}_{d}(t)) + ||T_{i}'||_{\infty}D_{1,r}(\widehat{F}_{d}(t)) 
= \widetilde{F}_{i}^{-1}(G_{n}^{-1}(t) + D_{7,r}(t)) + ||T_{i}'||_{\infty}D_{1,r}(\widehat{F}_{d}(t)) \\
&=& T_{i}(\overline{T}^{-1}(t)) + ||T_{i}'||_{\infty}D_{7,r}(t) + ||T_{i}'||_{\infty}D_{1,r}(\widehat{F}_{d}(t)),
\end{eqnarray*}
where the last equality follows from the discussion in the previous paragraph. Since $||D_{1,r}||_{\infty} = O(r^{-\alpha})$ almost surely uniformly over $i$, we obtain
\begin{eqnarray*}
\widehat{T}_{i,d}(t) = \widehat{T}_{i}(t) + ||T_{i}'||_{\infty}D_{8,r}(t)  \label{eqn2}
\end{eqnarray*}
for all $t \in [0,1]$ almost surely, where $||D_{r,8}||_{\infty} = O(r^{-\alpha})$ almost surely uniformly over $i$. Consequently,
\begin{eqnarray*}
||\widehat{T}_{i,d} - T_{i}||_{\infty} \leq ||\widehat{T}_{i} - T_{i}||_{\infty} + O(1)r^{-\alpha},
\end{eqnarray*}
almost surely. Combined with part (b) of Theorem \ref{thm2}, this shows that $||\widehat{T}_{i,d} - T_{i}||_{\infty} \rightarrow 0$ as $n \rightarrow \infty$ almost surely for all $i \geq 1$. Equation \eqref{eqn2} implies that $\sqrt{n}(\widehat{T}_{i,d} - T_{i}) = \sqrt{n}(\widehat{T}_{i} - T_{i}) + O(\sqrt{n}r^{-\alpha})$ in $L_{2}[0,1]$. This in conjunction with part (b) of Theorem \ref{thm3} proves that $\sqrt{n}(\widehat{T}_{i,d} - T_{i})$ has the same asymptotic distribution as  $\sqrt{n}(\widehat{T}_{i} - T_{i})$ in the $L_{2}[0,1]$ topology. This completes the proof of part (b) of Theorem \ref{thm4}. \\
(c) Next we register the warped functional observations. As mentioned earlier, since the warped observations are only recorded over a discrete grid, the registration algorithm in the fully observed case will not work. So, as a pre-processing step, we need to first smooth the warped discrete observations. We do this by using the Nadaraya-Watson kernel regression estimator as follows. Let $k(\cdot)$ be any kernel supported on $[-1,1]$ and choose a bandwidth parameter $h > 0$. Then, the smooth version of $\widehat{X}_{i,d}$ is given by 
$$X^{\dagger}_{i}(t) = \frac{\sum_{j=1}^{r} k\left(\frac{t - t_{j}}{h}\right)\widetilde{X}_{i}(t_{j})}{\sum_{j=1}^{r} k\left(\frac{t - t_{j}}{h}\right)} = \xi_{i}\frac{\sum_{j=1}^{r} k\left(\frac{t - t_{j}}{h}\right)\phi(T_{i}^{-1}(t_{j}))}{\sum_{j=1}^{r} k\left(\frac{t - t_{j}}{h}\right)}, \ \ t \in [0,1].$$
Now, note that 
\begin{eqnarray*}
|X^{\dagger}_{i}(t) - \widetilde{X}_{i}(t)| &=& \left|\xi_{i}\frac{\sum_{j=1}^{r} k\left(\frac{t - t_{j}}{h}\right)\{\phi(T_{i}^{-1}(t_{j})) - \phi(T_{i}^{-1}(t))\}}{\sum_{j=1}^{r} k\left(\frac{t - t_{j}}{h}\right)} \right| \\
&\leq& ||\phi'||_{\infty}\delta^{-1}|\xi_{i}|\frac{\sum_{j=1}^{r} k\left(\frac{t - t_{j}}{h}\right)|t_{j} - t|}{\sum_{j=1}^{r} k\left(\frac{t - t_{j}}{h}\right)} 
\leq c|\xi_{i}|h,
\end{eqnarray*}
for all $t \in [0,1]$ almost surely, where $c$ is a constant not depending on $i$ and $t$. The first inequality above follows from arguments similar to those used in the proof of Theorem \ref{thm1}. The second inequality follows form the fact that $k(\cdot)$ is supported on $[-1,1]$ so that only those $j$'s in the numerator for which $|t_{j} - t| \leq h$ will contribute to the sum. Thus, $||X^{\dagger}_{i} - \widetilde{X}_{i}||_{\infty} \leq c|\xi_{i}|h$ almost surely. \\
\indent We register the warped discrete observation $\widetilde{X}_{i,d}$ by defining $\widehat{X}^{*}_{i} = X^{\dagger}_{i} \circ \widehat{T}_{i,d}$ for each $1 \leq i \leq n$. Observe that
\begin{eqnarray}
|\widehat{X}^{*}_{i}(t) - \widehat{X}_{i}(t)| &\leq& |\widehat{X}^{*}_{i}(t) - \widetilde{X}_{i}(\widehat{T}_{i,d}(t))| + |\widetilde{X}_{i}(\widehat{T}_{i,d}(t)) - \widehat{X}_{i}(t)| \nonumber \\
&\leq& ||X^{\dagger}_{i} - \widetilde{X}_{i}||_{\infty} + |\xi_{i}|~|\phi(T_{i}^{-1}(\widehat{T}_{i,d}(t))) - \phi(T_{i}^{-1}(\widehat{T}_{i}(t)))| \nonumber \\
&\leq& c|\xi_{i}|h + |\xi_{i}|~|\phi(T_{i}^{-1}(\widehat{T}_{i}(t) + ||T_{i}'||_{\infty}D_{8,r}(t))) - \phi(T_{i}^{-1}(\widehat{T}_{i}(t)))| \nonumber \\
&\leq& c|\xi_{i}|h + |\xi_{i}|~||T_{i}'||_{\infty}|D_{8,r}(t)|~||\phi'||_{\infty}\delta^{-1} \ \leq \ O(1)|\xi_{i}|(h + ||T_{i}'||_{\infty}r^{-\alpha}) \label{eqn3}
\end{eqnarray}
for all $t \in [0,1]$ almost surely, where the $O(1)$ term is uniform in $i$ and $t$. The last two inequalities above follow from a first order Taylor expansion and the fact that $||D_{8,r}||_{\infty} = O(r^{-\alpha})$ almost surely uniformly over $i$. Hence, 
$$||\widehat{X}^{*}_{i} - \widehat{X}_{i}||_{\infty} = O(1)|\xi_{i}||(h + ||T_{i}'||_{\infty}r^{-\alpha})$$
almost surely. In conjunction with part (c) of Theorem \ref{thm2}, this shows that $||\widehat{X}^{*}_{i} - X_{i}||_{\infty} \rightarrow 0$ as $n \rightarrow \infty$ almost surely for all $i \geq 1$. Equation \eqref{eqn3} implies that $\sqrt{n}(\widehat{X}^{*}_{i} - X_{i}) = \sqrt{n}(\widehat{X}_{i} - X_{i}) + O(\sqrt{n}(h+r^{-\alpha}))$ in $L_{2}[0,1]$. Invoking part (c) of Theorem \ref{thm3} thus establishes that $\sqrt{n}(\widehat{X}^{*}_{i} - X_{i})$ has the same asymptotic distribution as  $\sqrt{n}(\widehat{X}_{i} - X_{i})$ in the $L_{2}[0,1]$ topology. This completes the proof of part (c) of Theorem \ref{thm4}. \\
(d) Next, define the random measure induced by $\widehat{X}^{*}_{i}$ as 
\begin{eqnarray*}
\widehat{F}^{*}_{i}(t) &=& \sum_{j \in \mathscr{I}_{t}} |\widehat{X}^{*}_{i}(t_{j+1}) - \widehat{X}^{*}_{i}(t_{j})| \bigg/ \sum_{j=1}^{r-1} |\widehat{X}^{*}_{i}(t_{j+1}) - \widehat{X}^{*}_{i}(t_{j})| \\
&=& \sum_{j \in \mathscr{I}_{t}} |\widehat{X}^{\dagger}_{i}(\widehat{T}_{i,d}(t_{j+1})) - \widehat{X}^{\dagger}_{i}(\widehat{T}_{i,d}(t_{j}))| \bigg/ \sum_{j=1}^{r-1} |\widehat{X}^{\dagger}_{i}(\widehat{T}_{i,d}(t_{j+1})) - \widehat{X}^{\dagger}_{i}(\widehat{T}_{i,d}(t_{j}))| \\
&=& \frac{\left\{\sum_{j \in \mathscr{I}_{t}} |\widetilde{X}_{i}(\widehat{T}_{i,d}(t_{j+1})) - \widetilde{X}_{i}(\widehat{T}_{i,d}(t_{j}))| + O(h)|\xi_{i}|\right\} }{ \left\{\sum_{j=1}^{r-1} |\widetilde{X}_{i}(\widehat{T}_{i,d}(t_{j+1})) - \widetilde{X}_{i}(\widehat{T}_{i,d}(t_{j}))| + O(h)|\xi_{i}| \right\} }
\end{eqnarray*}
for all $t \in [0,1]$ almost surely, where the $O(1)$ term is uniform in $i$ and $t$, and the last equality follows from the fact that $||X^{\dagger}_{i} - \widetilde{X}_{i}||_{\infty} \leq c|\xi_{i}|h$ almost surely. Also note that by definition of $\widetilde{X}_{i}$, the term $|\xi_{i}|$ cancels from the numerator and the denominator. \\
\indent Using the fact that $\widehat{T}_{i,d}(t) = \widehat{T}_{i}(t) + ||T_{i}'||_{\infty}D_{8,r}(t)$ with $||D_{8,r}||_{\infty} = O(r^{-\alpha})$ almost surely, and arguments similar to those used in the proof of Proposition \ref{prop1}, one obtains
\begin{eqnarray*}
\widehat{F}^{*}_{i}(t) = \widehat{F}(t) + O(1)(h + ||T_{i}'||_{\infty}r^{-\alpha})
\end{eqnarray*}  
for all $t \in [0,1]$ almost surely, where the $O(1)$ term is uniform in $i$ and $t$ almost surely. Now, using Lemma \ref{lem1} and arguments similar to those used in the proof of part (b) of Proposition \ref{prop2}, we have
\begin{eqnarray*}
(\widehat{F}^{*}_{i})^{-}(t) = \widehat{F}^{-1}(t) + O(1)r^{-\alpha}(h + ||T_{i}'||_{\infty}r^{-\alpha})
\end{eqnarray*}  
for all $t \in [0,1]$ almost surely, where the $O(1)$ term is uniform in $i$ and $t$ almost surely. Thus,
\begin{eqnarray*}
d^{2}_{W}(\widehat{F}^{*}_{i},F_{\phi}) \ = \ ||(\widehat{F}^{*}_{i})^{-} - F_{\phi}^{-1}||^{2} &\leq& 2||\widehat{F}^{-1} - F_{\phi}^{-1}||^{2} + O(1)r^{-2\alpha}(h^{2} + r^{-2\alpha}) \\
&=& 2d^{2}_{W}(\widehat{F},F_{\phi}) + O(1)r^{-2\alpha}(h^{2} + r^{-2\alpha})
\end{eqnarray*}
almost surely. Combining the above statement with part (d) of Theorems \ref{thm2} and \ref{thm3} completes the proof of part (d) of Theorem \ref{thm4}. \\
(e) Next, define $\overline{X}_{r*} = n^{-1}\sum_{i=1}^{n} \widehat{X}^{*}_{i}$. Since $||\widehat{X}^{*}_{i} - \widehat{X}_{i}||_{\infty} = O(1)|\xi_{i}||(h + ||T_{i}'||_{\infty}r^{-\alpha})$
almost surely, it follows that
\begin{eqnarray}
||(\overline{X}_{r*} - \mu) - (\overline{X}_{r} - \mu)||_{\infty} &\leq& n^{-1}\sum_{i=1}^{n} ||\widehat{X}^{*}_{i} - \widehat{X}_{i}||_{\infty} 
\leq O(1)\{h + r^{-\alpha}n^{-1}\sum_{i=1}^{n}||T_{i}'||_{\infty}\} \nonumber \\
&\leq& O(1)(h + r^{-\alpha})  \label{eqn4}
\end{eqnarray}
almost surely since $E(||T_{1}'||_{\infty}) < \infty$. Along with part (e) of Theorem \ref{thm2}, this shows that $||\overline{X}_{r*} - \mu||_{\infty} \rightarrow 0$ as $n \rightarrow \infty$ almost surely. Equation \eqref{eqn4} implies that $\sqrt{n}(\overline{X}_{r*} - \mu) = \sqrt{n}(\overline{X}_{r} - \mu) + O(\sqrt{n}(h+r^{-\alpha}))$ in $L_{2}[0,1]$. So by part (e) of Theorem \ref{thm3} we see that $\sqrt{n}(\overline{X}_{r*} - \mu)$ has the same asymptotic distribution as  $\sqrt{n}(\overline{X}_{r} - \mu)$ in the $L_{2}[0,1]$ topology, and the proof of part (e) of Theorem \ref{thm4} is complete. \\
(f) Next, we consider the empirical covariance operator of the $\widehat{X}^{*}_{i}$'s which we will denote by $\widehat{\mathscr{K}}_{r*} = n^{-1} \sum_{i=1}^{n} (\widehat{X}^{*}_{i} - \overline{X}_{r*}) \otimes (\widehat{X}^{*}_{i} - \overline{X}_{r*})$. Recall $S_{1} = n^{-1} \sum_{i=1}^{n} (\widehat{X}_{i} - \mu) \otimes (\widehat{X}_{i} - \mu)$ from the proof of part (f) of Theorem \ref{thm3}. Now, some straightforward manipulations yield
\begin{eqnarray*}
\widehat{\mathscr{K}}_{r*} &=& S_{1} + n^{-1} \sum_{i=1}^{n} (\widehat{X}^{*}_{i} - \widehat{X}_{i}) \otimes (\widehat{X}^{*}_{i} - \widehat{X}_{i}) - (\overline{X}_{r*} - \mu) \otimes (\overline{X}_{r*} - \mu) \\
&& + \ n^{-1}\sum_{i=1}^{n} \{(\widehat{X}^{*}_{i} - \widehat{X}_{i}) \otimes (\widehat{X}_{i} - \mu) + (\widehat{X}_{i} - \mu) \otimes (\widehat{X}^{*}_{i} - \widehat{X}_{i})\} \\
&=& S_{1} + W_{1} - W_{2} + W_{3}, \ \ \ \mbox{say}.
\end{eqnarray*}
Note that $|||W_{1}||| \leq n^{-1} \sum_{i=1}^{n} ||\widehat{X}^{*}_{i} - \widehat{X}_{i}||^{2} \leq O(1)\{h^{2}n^{-1} \sum_{i=1}^{n} |\xi_{i}|^{2} + r^{-2\alpha}n^{-1}\sum_{i=1}^{n}||T_{i}'||_{\infty}^{2}\} = O(1)(h^{2} + r^{-2\alpha})$ almost surely. Next, from the previous paragraph, it follows that $|||W_{2}||| \leq ||\overline{X}_{r*} - \mu||^{2} \leq O(1)(h^{2} + r^{-2\alpha}) + 2||\overline{X}_{r} - \mu||_{\infty}^{2}$.  Moreover, $|||W_{3}||| \leq 2n^{-1}\sum_{i=1}^{n} ||\widehat{X}^{*}_{i} - \widehat{X}_{i}||||\widehat{X}_{i} - \mu|| \leq O(1)n^{-1}\sum_{i=1}^{n} \{h|\xi_{i}| + ||T_{i}'||_{\infty}r^{-\alpha}\}||\widehat{X}_{i} - \mu||$ almost surely. Observe that 
\begin{eqnarray*}
n^{-1}\sum_{i=1}^{n} |\xi_{i}|~||\widehat{X}_{i} - \mu|| &=& n^{-1}\sum_{i=1}^{n} |\xi_{i}|~||\xi_{i} \phi \circ \overline{T}^{-1} - E(\xi_{1})\phi|| \\
&\leq& n^{-1}\sum_{i=1}^{n} |\xi_{i}|~|\xi_{i} - E(\xi_{1})|~||\phi \circ \overline{T}^{-1}|| + n^{-1}\sum_{i=1}^{n} |\xi_{i}|~|E(\xi_{1})|~||\phi \circ \overline{T}^{-1} - \phi||.
\end{eqnarray*}
Since $||\phi \circ \overline{T}^{-1} - \phi||_{\infty} \rightarrow 0$ almost surely, it follows that the first term above is $O(1)$ almost surely, and the second term is $o(1)$ almost surely. Similar arguments show that $n^{-1}\sum_{i=1}^{n} ||T_{i}'||_{\infty}~||\widehat{X}_{i} - \mu|| = O(1)$ almost surely. Thus, $|||W_{3}||| \leq O(1)(h + r^{-\alpha})$ almost surely. Also, $S_{2}$ in the proof of part (f) of Theorem \ref{thm3} satsifies $|||S_{2}||| = O_{P}(n^{-1})$. Combining the above facts and using the decomposition of $\widehat{\mathscr{K}}_{r}$ in the proof of part (f) of Theorem \ref{thm3}, it follows that
\begin{eqnarray}
\widehat{\mathscr{K}}_{r*} &=& S_{1} + O(1)(h + r^{-\alpha} + ||\overline{X}_{r} - \mu||_{\infty}^{2}) \ = \ \widehat{\mathscr{K}}_{r} + O(1)(h + r^{-\alpha} + ||\overline{X}_{r} - \mu||_{\infty}^{2}) \label{eqn5}
\end{eqnarray}
almost surely. This along with part (f) of Theorem \ref{thm2} shows that $|||\widehat{\mathscr{K}}_{r*} - \mathscr{K}||| \rightarrow 0$ as $n \rightarrow \infty$ almost surely. By part (e) of Theorem \ref{thm3}, it follows that $\sqrt{n}||\overline{X}_{r} - \mu||_{\infty} = O_{P}(1)$ as $n \rightarrow \infty$. So, equation \eqref{eqn5} implies that $\sqrt{n}(\widehat{\mathscr{K}}_{r*} - \mathscr{K}) = \sqrt{n}(\widehat{\mathscr{K}}_{r} - \mathscr{K}) + O(\sqrt{n}(h+r^{-\alpha}))$ in $L_{2}[0,1]$. This in conjunction with part (f) of Theorem \ref{thm3} proves that $\sqrt{n}(\widehat{\mathscr{K}}_{r*} - \mathscr{K})$ has the same asymptotic distribution as  $\sqrt{n}(\widehat{\mathscr{K}}_{r} - \mathscr{K})$ in the Hilbert-Schmidt topology. \\
\indent For the convergence of the empirical covariance kernel $\widehat{K}_{r*}(s,t) = n^{-1} \sum_{i=1}^{n} [\widehat{X}^{*}_{i}(s) - \overline{X}_{r*}(s)][\widehat{X}^{*}_{i}(t) - \overline{X}_{r*}(t)]$, we follow the same decomposition as above for the case of the operator. Noting the all the bounds used for that proof remain valid in the sup-norm and using the same arguments, we arrive that
\begin{eqnarray}
\widehat{K}_{r*}(s,t) = \widehat{K}_{r}(s,t) + O(1)(h + r^{-\alpha} + ||\overline{X}_{r} - \mu||_{\infty}^{2})  \label{eqn6}
\end{eqnarray}
for all $s,t \in [0,1]$ almost surely, where the $O(1)$ term is uniform in $s,t$ almost surely. This along with part (f) of Theorem \ref{thm2} shows that $||\widehat{K}_{r*} - K||_{\infty} \rightarrow 0$ as $n \rightarrow \infty$ almost surely. Equation \eqref{eqn6} implies that $\{\sqrt{n}(\widehat{K}_{r*}(s,t) - K(s,t)) : s,t \in [0,1]\} = \{\sqrt{n}(\widehat{K}_{r}(s,t) - K(s,t)) : s,t \in [0,1]\} + O(\sqrt{n}(h+r^{-\alpha}))$ in $L_{2}[0,1]$ with the $O(1)$ term being uniform in $s,t$. This in conjunction with part (f) of Theorem \ref{thm3} proves that $\{\sqrt{n}(\widehat{K}_{r*}(s,t) - K(s,t)) : s,t \in [0,1]\}$ has the same asymptotic distribution as $\{\sqrt{n}(\widehat{K}_{r}(s,t) - K(s,t)) : s,t \in [0,1]\}$ in the $L_{2}([0,1]^{2})$ topology. \\
\indent To prove the strong consistency and the weak convergence of the estimated eigenfunction, we will use perturbation bounds for compact operators (see, e.g., Ch. 5 of \cite{HE15}). The leading eigenfunction $\widehat{\phi}_{*}$ of $\widehat{\mathscr{K}}_{r*}$ satisfies the inequality $||\widehat{\phi}_{*} - \phi|| \leq 2\sqrt{2}\lambda^{-1}|||\widehat{\mathscr{K}}_{r*} - \mathscr{K}||| \rightarrow 0$ as $n \rightarrow \infty$ almost surely. Further, Theorem 5.1.8 of \cite{HE15}, specifically equation (5.27), implies that $\sqrt{n}(\widehat{\phi}_{*} - \phi)$ has the same asymptotic distribution (in $L_{2}[0,1]$) as that of $\mathscr{S}\sqrt{n}(\widehat{\mathscr{K}}_{r*} - \mathscr{K})\phi$, where, in our setup, $\mathscr{S} = -\lambda^{-1}(\mathscr{I} - \phi \otimes \phi)$ with $\lambda = Var(\xi_{1})$ being the leading eigenvalue of $\mathscr{K}$, and $\mathscr{I}$ being the identity operator. Thus, from the results already establishes, it follows that the asymptotic distribution of $\sqrt{n}(\widehat{\phi}_{*} - \phi)$ is that of $-\lambda^{-1}(\mathscr{I} - \phi \otimes \phi)\sqrt{n}(\widehat{\mathscr{K}}_{r} - \mathscr{K})\phi$. Using the expression of the asymptotic distribution of $\sqrt{n}(\widehat{\mathscr{K}}_{r} - \mathscr{K})$ obtained in part (f) of Theorem \ref{thm3} and some simple calculations, it follows that the asymptotic distribution of $\sqrt{n}(\widehat{\phi}_{*} - \phi)$ is that of $Y \times \phi' - \langle Y \times \phi',\phi\rangle\phi$, which is the same as in Theorem \ref{thm3}. \\
\indent The proof of the strong consistency and the weak convergence of $\widehat{\xi}_{i*}$ follows in direct analogy to that of $\widehat{\xi}_{i}$ upon using part (c) and the above facts. The proof of part (f) of Theorem \ref{thm4} is now complete.
\end{proof}

\begin{proof}[Proof of Theorem \ref{thm-error}]
First observe that 
\begin{eqnarray}
|\widetilde{F}_{i,w}(t) - \widetilde{F}_{i}(t)| &\leq& \left|\frac{\int_{0}^{t} |\widehat{X}_{i,w}^{(1)}(u)|du}{\int_{0}^{1} |\widehat{X}_{i,w}^{(1)}(u)|du} - \frac{\int_{0}^{t} |\widehat{X}_{i,w}^{(1)}(u)|du}{\int_{0}^{1} |\widetilde{X}_{i}'(u)|du} \right| + \left|\frac{\int_{0}^{t} |\widehat{X}_{i,w}^{(1)}(u)|du}{\int_{0}^{1} |\widetilde{X}_{i}'(u)|du} - \frac{\int_{0}^{t} |\widetilde{X}_{i}'(u)|du}{\int_{0}^{1} |\widetilde{X}_{i}'(u)|du} \right|  \nonumber \\
&\leq& \frac{2\int_{0}^{1} |\widehat{X}_{i,w}^{(1)}(u) - \widetilde{X}_{i}'(u)|du}{\int_{0}^{1}|\widetilde{X}_{i}'(u)|du} \nonumber 
\leq \frac{2||\widehat{X}_{i,w}^{(1)} - \widetilde{X}_{i}'||}{|\xi_{i}|\int_{0}^{1} |\phi'(u)|du} = d_{\phi}|\xi_{i}|^{-1}A_{i,r}, \ \ \ \mbox{say}. \nonumber \\
\Rightarrow \ \ ||\widetilde{F}_{i,w} - \widetilde{F}_{i}||_{\infty} &\leq& d_{\phi}|\xi_{i}|^{-1}A_{i,r}.  \label{err1}
\end{eqnarray}
\indent Since the term $A_{i,r}$ will be key for our proof, we will first bound $E\{A_{i,r}^{2}\}$. To achieve this, we will first provide bounds on $E\{A_{i,r}^{2} | \xi_{i}, T_{i}\}$ using standard tools from non-parametric regression. So, we will have to estimate the MSE for the regression problem $Y_{ij} = \xi_{i}\phi(T_{i}^{-1}(t_{j})) + \epsilon_{ij}$ and integrate this MSE over $u \in [0,1]$, when $\xi_{i}$ and $T_{i}$ are fixed. 
The expression for the MSE in the deterministic design case is the same as the conditional MSE (given design points) in the random design case with the design distribution being uniform on $[0,1]$. Next, observe that $Var(\widehat{X}_{i,w}(u) | \xi_{i}, T_{i})$ does not depend on $\xi_{i}$ and $T_{i}$ and is thus uniform over $i$ (since the $\epsilon_{ij}$'s are i.i.d.). For $u \in [h_{1},1-h_{1}]$, the expression of this variance is given in p. 137 in \cite{WJ95} and equals $O((rh_{1})^{-3})$, where the $O(1)$ term depends on $k_{1}$, is bounded and is uniform over $u \in [h_{1},1-h_{1}]$. Next, we have to take into account the boundary points. Let $u = {\alpha}h_{1}$ for some $\alpha \in [0,1)$. It follows from a similar analysis that even in this case, $Var(\widehat{X}_{i,w}(u) | \xi_{i}, T_{i}) = O((rh_{1})^{-3})$, where the $O(1)$ term is integrable over $\alpha \in [0,1)$ (see, e.g. pp. 244-247 in \cite{Smooth00}). Similar estimates also hold for $t \in [1-h_{1},1]$, say $t = 1 - {\alpha}h_{1}$. Hence, we get that $Var(\widehat{X}_{i,w}(u) | \xi_{i}, T_{i}) = O((rh_{1})^{-3})$ for all $u \in [0,1]$ with the $O(1)$ term being integrable over $u \in [0,1]$. \\
\indent Next we consider the bias. In our case the degree of the fitted polynomial is one more than the degree of derivative estimated. Thus, applying Taylor's formula and using the expressions in Thm. 9.1 and pp. 244-247 in \cite{Smooth00}, we have $|Bias(\widehat{X}_{i,w}(u) | \xi_{i}, T_{i})| = ||\widetilde{X}_{i}'''||_{\infty}O(h_{1}^{2}) + ||\widetilde{X}_{i}^{(4)}||_{\infty}o(h_{1}^{2})$ for all $u \in [0,1]$. Here, the $O(1)$ and $o(1)$ terms are non-random and are integrable in $u \in [0,1]$. So, using the moment assumptions on the sup-norm of the derivatives of $T$, the independence of the $\xi_{i}$'s and the $T_{i}$'s along with the assumption that $\inf_{t \in [0,1]} T'(t) \geq \delta > 0$, it follows that
\begin{eqnarray}
E\{A_{i,r}^{2}\} = O(h_{1}^{4}) + O((rh_{1}^{3})^{-1})  \label{err2}
\end{eqnarray}
where the $O(1)$ terms are bounded and do not depend on $i$ (the $\widetilde{X}_{i}$'s are i.i.d). This also implies (using Markov's inequality) that 
\begin{eqnarray}
n^{-1} \sum_{i=1}^{n} A_{i,r}^{2} = O_{P}(h_{1}^{4} + (rh_{1}^{3})^{-1})  \label{err3}
\end{eqnarray}
\indent We will now proceed with the rest of the proof. First, let $u_{i,t} = \widetilde{F}_{i,w}^{-1}(t)$. From \eqref{err1}, it follows that $\widetilde{F}_{i}(u_{i,t}) = t - \widetilde{A}_{i,r}(t)$, where $||\widetilde{A}_{i,r}||_{\infty} \leq d_{\phi}|\xi_{i}|^{-1}A_{i,r}$. Thus, using part (a) of Proposition \ref{prop2}, it follows that $|\widetilde{F}_{i,w}^{-1}(t) - \widetilde{F}_{i}^{-1}(t)| = |u_{i,t} - \widetilde{F}_{i}^{-1}(t)| = \widetilde{F}_{i}^{-1}(t - \widetilde{A}_{i,r}(t)) - \widetilde{F}_{i}^{-1}(t)| \leq ||T_{i}'||_{\infty}c'_{\phi}|\xi_{i}|^{-\alpha}A_{i,r}^{\alpha}$ for a constant $c'_{\phi}$. So, $||\widetilde{F}_{i,w}^{-1} - \widetilde{F}_{i}^{-1}||_{\infty} \leq ||T_{i}'||_{\infty}c'_{\phi}|\xi_{i}|^{-\alpha}A_{i,r}^{\alpha}$. Thus, $\widehat{F}_{e}^{-1} = n^{-1} \sum_{i=1}^{n} \widetilde{F}_{i,w}^{-1} = n^{-1} \sum_{i=1}^{n} \widetilde{F}_{i}^{-1} + \widetilde{B}_{r} = \widehat{F}^{-1} + \widetilde{B}_{r}$, where $||\widetilde{B}_{r}||_{\infty} \leq c'_{\phi}n^{-1}\sum_{i=1}^{n} ||T_{i}'||_{\infty}|\xi_{i}|^{-\alpha}A_{i,r}^{\alpha}$. Define $R_{r} = n^{-1}\sum_{i=1}^{n} ||T_{i}'||_{\infty}|\xi_{i}|^{-\alpha}A_{i,r}^{\alpha}$. By H\"older's inequality, the law of large numbers, independence of $T_{i}$'s and $\xi_{i}$'s, and \eqref{err3}, we get that 
\begin{eqnarray}
R_{r} &\leq& \left[n^{-1} \sum_{i=1}^{n} ||T_{i}'||_{\infty}^{2/(2-\alpha)}|\xi_{i}|^{-2\alpha/(2-\alpha)}\right]^{1 - \alpha/2}\left[n^{-1} \sum_{i=1}^{n} A_{i,r}^{2}\right]^{\alpha/2} \nonumber \\ 
\Rightarrow \ \ R_{r} &=& O_{P}(h_{1}^{2\alpha} + (rh_{1}^{3})^{-\alpha/2})  \label{err4}
\end{eqnarray}
(a) Since $d^{2}_{W}(\widehat{F}_{e},F_{\phi}) = ||\widehat{F}_{e}^{-1} - F_{\phi}^{-1}||^{2} \leq 2||\widehat{F}_{e}^{-1} - \widehat{F}^{-1}||^{2} + 2||\widehat{F}^{-1} - F_{\phi}^{-1}||^{2} \leq 2R_{r}^{2} + 2d^{2}_{W}(\widehat{F},F_{\phi})$, the proof follows using part (a) of Theorem \ref{thm3} and \eqref{err4}. \\
(b) Note that $\widehat{T}_{i,e}^{-1}(t) = \widehat{F}_{e}^{-1}(\widetilde{F}_{i,w}(t)) = \widehat{F}^{-1}(\widetilde{F}_{i,w}(t)) + \widetilde{B}_{r}(\widetilde{F}_{i,w}(t))$ using statements proved earlier. Now, arguments in the proof of part (b) of Theorem \ref{thm3} along with \eqref{err1} yield $\widehat{F}^{-1}(\widetilde{F}_{i,w}(t)) = \widehat{T}_{i}^{-1}(t) + \widetilde{C}_{r}(t)$, where $||C_{r}||_{\infty} \leq const.R_{r}$. Thus, $\widetilde{T}_{i,e}^{-1} = \widetilde{T}_{i}^{-1} + \widetilde{C}_{1,r}$, where $||\widetilde{C}_{1,r}||_{\infty} \leq const. R_{r}$. The proof of the first statement in part (b) of this theorem now follows using part (b) of Theorem \ref{thm3} and \eqref{err4}. \\
\indent Next consider $\widehat{T}_{i,e}(t) = \widetilde{F}_{i,w}^{-1}(\widehat{F}_{e}(t)) = \widetilde{F}_{i}^{-1}(\widehat{F}_{e}(t)) + \widetilde{C}_{2,r,i}(t)$, where $||\widetilde{C}_{2,r,i}||_{\infty} \leq ||T_{i}'||_{\infty}c'_{\phi}|\xi_{i}|^{-\alpha}A_{i,r}^{\alpha}$ from statements proved earlier. Note that if $\widehat{F}_{e}(t) = v$ then $t = \widehat{F}_{e}^{-1}(v) = \widehat{F}^{-1}(v) + \widetilde{C}_{3,r}(v)$, where $||\widetilde{C}_{3,r}||_{\infty} \leq R_{r}$. So, $\widehat{F}_{e}(t) = v = \widehat{F}(t - \widetilde{C}_{3,r}(v)) = F_{\phi}(\overline{T}^{-1}(t - \widetilde{C}_{3,r}(v)))$. Noting that $\widetilde{F}_{i}^{-1} = T_{i} \circ F_{\phi}^{-1}$, we get that $\widetilde{F}_{i}^{-1}(\widehat{F}_{e}(t)) = T_{i}(\overline{T}^{-1}(t - \widetilde{C}_{3,r}(v))) = T_{i}(\overline{T}^{-1}(t)) + ||T_{i}'||_{\infty}\widetilde{C}_{4,r}(v) = \widetilde{F}_{i}^{-1}(\widehat{F}(t)) + ||T_{i}'||_{\infty}\widetilde{C}_{4,r}(v) = \widehat{T}_{i}(t) + ||T_{i}'||_{\infty}\widetilde{C}_{4,r}(v)$, where $||\widetilde{C}_{4,r}||_{\infty} \leq R_{r}$. This follows from arguments similar to those used earlier using the smoothness of $T$ and the assumption that $\inf_{t \in [0,1]} T'(t) \geq \delta > 0$. Thus, we finally have 
\begin{eqnarray}
||\widehat{T}_{i,e} - \widehat{T}_{i}||_{\infty} \leq const. \{||T_{i}'||_{\infty}R_{r} + ||T_{i}'||_{\infty}|\xi_{i}|^{-\alpha}A_{i,r}^{\alpha}\}.  \label{err4-1}
\end{eqnarray}
The proof of the second statement of part (b) of this theorem is now completed via part (b) of Theorem \ref{thm3}, \eqref{err2} and \eqref{err4}. \\
For proving part (c) of the theorem we will first have to control $E\{||\widehat{X}_{i,w} - \widetilde{X}_{i}||_{\infty}^{2} | \xi_{i}, T_{i}\}$ for each $i$. Recall that
\begin{eqnarray*}
\widehat{X}_{i,w}(t) = \frac{1}{r} \sum_{j=1}^{r} \frac{\{\widehat{s}_{2}(t;h_{2}) - \widehat{s}_{1}(t;h_{2})(t_{j} - t)\}k_{2,h_{2}}(t_{j}-t)Y_{ij}}{\widehat{s}_{2}(t;h_{2})\widehat{s}_{0}(t;h_{2}) - \widehat{s}_{1}^{2}(t;h_{2})},
\end{eqnarray*}
where $k_{2,h_{2}}(u) = h_{2}^{-1}k_{2}(u/h_{2})$ and $\widehat{s}_{l}(t;h_{2}) = r^{-1}\sum_{j=1}^{r} (t_{j} - t)^{l}k_{2,h_{2}}(t_{j}-t)$ for $l=0,1,2$. Call the denominator $\widehat{f}(t)$, which is deterministic. We will first analyse the term $\widecheck{Y}_{i,w}(t)$ which is defined like $\widehat{X}_{i,w}(t)$ but with $\widetilde{X}_{i}(t_{j})$ in place of $Y_{ij}$. Define $\widecheck{Z}_{i,w}(t) = \widehat{X}_{i,w}(t) - \widecheck{Y}_{i,w}(t)$. \\
\indent Using Taylor's formula, we get that $\widetilde{X}_{i}(t_{j}) = \widetilde{X}_{i}(t) + (t_{j} - t)\widetilde{X}_{i}'(t) + 2^{-1}(t_{j} - t)^{2}\widetilde{X}_{i}''(t) + 6^{-1}(t_{j} - t)^{3}\widetilde{X}_{i}'''(\widetilde{t}_{i,j})$, where $\widetilde{t}_{i,j}$ lies between $t$ and $t_{j}$. Plugging-in this expansion in the definition of $\widecheck{Y}_{i,w}(t)$, we have
\begin{eqnarray*}
\widecheck{Y}_{i,w}(t) &=& \widetilde{X}_{i}(t) + \frac{\widetilde{X}_{i}''(t)}{2}\frac{\widehat{s}_{2}^{2}(t;h_{2}) - \widehat{s}_{1}(t;h_{2})\widehat{s}_{3}(t;h_{2})}{\widehat{s}_{2}(t;h_{2})\widehat{s}_{0}(t;h_{2}) - \widehat{s}_{1}^{2}(t;h_{2})} \\
&& \ + \ \frac{1}{6r}\sum_{j=1}^{r} \frac{\{\widehat{s}_{2}(t;h_{2}) - \widehat{s}_{1}(t;h_{2})(t_{j} - t)\}k_{2,h_{2}}(t_{j}-t)(t_{j} - t)^{3}\widetilde{X}_{i}'''(\widetilde{t}_{i,j})}{\widehat{s}_{2}(t;h_{2})\widehat{s}_{0}(t;h_{2}) - \widehat{s}_{1}^{2}(t;h_{2})} \\
&=& \widetilde{X}_{i}(t) + Q_{i,1}(t;h_{2}) + Q_{i,2}(t;h_{2}), \ \ \ \mbox{say}
\end{eqnarray*}
for all $t \in [0,1]$. Note that the term involving $\widetilde{X}_{i}'(t)$ vanishes, which plays a crucial role in putting the local linear estimator at an advantage over other standard non-parametric regression estimators near the boundary of the data set. Thus, $|\widehat{X}_{i,w}(t) - \widetilde{X}_{i}(t)| \leq |\widecheck{Y}_{i,w}(t) - \widetilde{X}_{i}(t)| + |\widecheck{Z}_{i,w}(t)| \leq |Q_{i,1}(t;h_{2})| + |Q_{i,2}(t;h_{2})| + |\widecheck{Z}_{i,w}(t)|$. \\
\indent By approximations of Riemann sums, we have $\widehat{s}_{l}(t;h_{2}) = h_{2}^{l}\int_{-1}^{1} u^{l}k_{2}(u)du + O((rh_{2})^{-1})$ uniformly for $t \in [h_{2},1-h_{2}]$. Also, for $t \in [0,h_{2})$, say, $t = {\alpha}h_{2}$ with $\alpha \in [0,1)$, we have $\widehat{s}_{l}(t;h_{2}) = h_{2}^{l}\int_{-\alpha}^{1} u^{l}k_{2}(u)du + O((rh_{2})^{-1})$ uniformly for $\alpha \in [0,1)$. The same estimate also holds for $t \in (1-h_{2},1]$, say, $t = 1 - {\alpha}h_{2}$. Define $\mu_{l,\alpha} = \int_{-\alpha}^{1} u^{l}k_{2}(u)du$ for $l=0,1,2$. These estimates imply that for $t \in [h_{2},1-h_{2}]$, we have $|Q_{i,2}(t;h_{2})| \leq 2^{-1}||\widetilde{X}_{i}''||_{\infty}\{h_{2}^{2}\int_{-1}^{1}u^{2}k_{2}(u)du + O((rh_{2})^{-1})\}$. Further, for boundary points, we have $|Q_{i,2}(t;h_{2})| \leq 2^{-1}||\widetilde{X}_{i}''||_{\infty}\{h_{2}^{2}|B_{\alpha}| + O((rh_{2})^{-1})\}$ for $\alpha \in [0,1)$, where $B_{\alpha} = [\mu_{2,\alpha}^{2} - \mu_{1,\alpha}\mu_{3,\alpha}]/[\mu_{2,\alpha}\mu_{0,\alpha} - \mu_{1,\alpha}^{2}]$. In both case, the $O(1)$ terms are non-random (hence does not depend on $i$) and uniform over choices of $t$. Note that the leading term in the squared bias term obtainable from the previous bias expression is an upper bound for the coefficient of the squared bias term in the general result obtained in Thm. 3.3 in \cite{FG96}. It can be shown using similar arguments that $|Q_{i,3}(t;h_{2})| \leq ||\widetilde{X}_{i}'''||_{\infty}o(h_{2}^{2})$, where the $o(1)$ term is non-random and uniform over $t \in [0,1]$. Note that for $\alpha = 1$, which correspond to $t \in [h_{2},1-h_{2}]$, we have $B_{\alpha} = \int_{-1}^{1} u^{2}k_{2}(u)du$ by the symmetry of the kernel. Further, it can be shown that the denominator (which is positive by the Cauchy-Schwarz inequality) in the definition of $B_{\alpha}$ is a strictly increasing function of $\alpha \in [0,1]$ and hence its infimum is achieved at $\alpha = 0$, where it takes the value $\int_{0}^{1} u^{2}k_{2}(u)du\int_{0}^{1}k_{2}(u)du - (\int_{0}^{1}uk_{2}(u)du)^{2} =: a_{0} > 0$ (again by the Cauchy-Schwarz inequality) for any non-degenerate $k_{2}$. Thus $\sup_{\alpha \in [0,1]} |B_{\alpha}| \leq \sup_{\alpha \in [0,1]} |\mu_{2,\alpha}^{2} - \mu_{1,\alpha}\mu_{3,\alpha}|/a_{0} < \infty$ as the numerator is uniformly bounded in $\alpha$. Hence, $||\widecheck{Y}_{i,w} - \widetilde{X}_{i}||_{\infty} \leq 2^{-1}||\widetilde{X}_{i}''||_{\infty}\{h_{2}^{2}\sup_{\alpha \in [0,1]} |B_{\alpha}| + O((rh_{2})^{-1})\} + ||\widetilde{X}_{i}'''||_{\infty}o(h_{2}^{2})  \leq ||\widetilde{X}_{i}''||_{\infty}\{O(h_{2}^{2}) + O((rh_{2})^{-1})\} + ||\widetilde{X}_{i}'''||_{\infty}o(h_{2}^{2})$, where the $O(1)$ and the $o(1)$ terms are non-random (and hence do not depend on $i$). \\
\indent We next control $E\{||\widecheck{Z}_{i,w}||_{\infty}^{2}\}$. Observe that this does not depend on $\widetilde{X}_{i}$ and hence does not depend on $i$ (the errors are i.i.d.). 
Now, 
\begin{eqnarray}
&& E\left\{\sup_{t \in [0,1]} \left|\frac{1}{r} \sum_{j=1}^{r} \frac{\{\widehat{s}_{2}(t;h_{2}) - \widehat{s}_{1}(t;h_{2})(t_{j} - t)\}k_{2,h_{2}}(t_{j}-t)\epsilon_{ij}}{\widehat{s}_{2}(t;h_{2})\widehat{s}_{0}(t;h_{2}) - \widehat{s}_{1}^{2}(t;h_{2})}\right|^{2}\right\} \nonumber \\
&\leq& E\left\{\sup_{t \in [0,1]} \frac{1}{r^{2}} \sum_{j=1}^{r} \frac{\{\widehat{s}_{2}(t;h_{2}) - \widehat{s}_{1}(t;h_{2})(t_{j} - t)\}^{2}k_{2,h_{2}}^{2}(t_{j}-t)\epsilon_{ij}^{2}}{[\widehat{s}_{2}(t;h_{2})\widehat{s}_{0}(t;h_{2}) - \widehat{s}_{1}^{2}(t;h_{2})]^{2}}\right\} + \nonumber \\
&& E\left\{\frac{1}{r^{2}} \sum_{j \neq j'} \epsilon_{ij}\epsilon_{ij'} \sup_{t \in [0,1]} \left[\frac{\{\widehat{s}_{2}(t;h_{2}) - \widehat{s}_{1}(t;h_{2})(t_{j} - t)\}\{\widehat{s}_{2}(t;h_{2}) - \widehat{s}_{1}(t;h_{2})(t_{j'} - t)\}}{[\widehat{s}_{2}(t;h_{2})\widehat{s}_{0}(t;h_{2}) - \widehat{s}_{1}^{2}(t;h_{2})]^{2}} \right. \right. \nonumber\\
&& \hspace{5cm}\left. \left. \times \ k_{2,h_{2}}(t_{j}-t)k_{2,h_{2}}(t_{j'}-t) \right] \right\} \nonumber  \\
&\leq& M^{2}r^{-1} \sup_{t \in [0,1]} \frac{1}{r} \sum_{j=1}^{r} \frac{\{\widehat{s}_{2}(t;h_{2}) - \widehat{s}_{1}(t;h_{2})(t_{j} - t)\}^{2}k_{2,h_{2}}^{2}(t_{j}-t)}{[\widehat{s}_{2}(t;h_{2})\widehat{s}_{0}(t;h_{2}) - \widehat{s}_{1}^{2}(t;h_{2})]^{2}} \nonumber \\
&=& M^{2}(rh_{2})^{-1} \sup_{t \in [0,1]} \frac{\widehat{s}_{2}^{2}(t;h_{2})\widetilde{s}_{0}(t;h_{2}) + \widehat{s}_{1}^{2}(t;h_{2})\widetilde{s}_{2}(t;h_{2}) - 2\widehat{s}_{1}(t;h_{2})\widehat{s}_{2}(t;h_{2})\widetilde{s}_{1}(t;h_{2})}{[\widehat{s}_{2}(t;h_{2})\widehat{s}_{0}(t;h_{2}) - \widehat{s}_{1}^{2}(t;h_{2})]^{2}}.  \label{err5}
\end{eqnarray}
The second term on the right hand side of the first inequality vanishes due to the uncorrelatedness of the errors and the fact that the $t_{j}$'s are non-random. The bound for the first term follows from the a.s. boundedness of the errors, say with bound $M$. Here, $\widetilde{s}_{l}(t;h_{2}) = r^{-1} \sum_{j=1}^{r} (t_{j} - t)^{l}h_{2}^{-1}k^{2}_{2}\{(t_{j} - t)/h_{2}\}$, which is a definition similar to $\widehat{s}_{l}(t;h_{2})$ but with a new ``kernel" $k_{2}^{2}$. As earlier, by Riemann sum approximations, we have $\widetilde{s}_{l}(t;h_{2}) = h_{2}^{l}\int_{-\alpha}^{1} u^{l}k_{2}^{2}(u)du + O((rh_{2})^{-1})$ for $\alpha \in [0,1]$ with the $O(1)$ term being uniform on $t \in [0,1]$. Define $\nu_{l,\alpha} = \int_{-\alpha}^{1} u^{l}k_{2}^{2}(u)du$. Then, 
\begin{eqnarray*}
&& \frac{\widehat{s}_{2}^{2}(t;h_{2})\widetilde{s}_{0}(t;h_{2}) + \widehat{s}_{1}^{2}(t;h_{2})\widetilde{s}_{2}(t;h_{2}) - 2\widehat{s}_{1}(t;h_{2})\widehat{s}_{2}(t;h_{2})\widetilde{s}_{1}(t;h_{2})}{[\widehat{s}_{2}(t;h_{2})\widehat{s}_{0}(t;h_{2}) - \widehat{s}_{1}^{2}(t;h_{2})]^{2}} \\
&=& \frac{\mu_{2,\alpha}\nu_{0,\alpha} + \mu_{1,\alpha}^{2}\nu_{2,\alpha} - 2\mu_{1,\alpha}\mu_{2,\alpha}\nu_{1,\alpha}}{[\mu_{2,\alpha}\mu_{0,\alpha} - \mu_{1,\alpha}^{2}]^{2}} + O((rh_{2})^{-1}) = C_{\alpha} + O((rh_{2})^{-1}), \ \ \ \mbox{say},
\end{eqnarray*}
for all $\alpha \in [0,1]$, where the $O(1)$ term is uniform over $t \in [0,1]$. Note that the expression of $C_{\alpha}$ is the same as the coefficient of the variance term in the general result obtained in Thm. 3.3 in \cite{FG96} (with necessary adaptations). Using \eqref{err5}, it now follows that $E\{||\widecheck{Z}_{i,w}||_{\infty}^{2}\} \leq M\{\sup_{\alpha \in [0,1]}C_{\alpha}\}(rh_{2})^{-1} + o((rh_{2})^{-1}) = O((rh_{2})^{-1})$.
Hence, using the assumptions in the theorem and the bounds on $||\widecheck{Y}_{i,w} - \widetilde{X}_{i}||_{\infty}$ obtained earlier as well as the previous bound, it follows that 
\begin{eqnarray}
E\{||\widehat{X}_{i,w} - \widetilde{X}_{i}||_{\infty}^{2}\} = O(h_{2}^{4}) + O((rh_{2})^{-1}),  \label{err8}
\end{eqnarray}
where the $O(1)$ terms are bounded and do not depend in $i$. Thus, using Markov's inequality, we have
\begin{eqnarray}
n^{-1} \sum_{i=1}^{n} ||\widehat{X}_{i,w} - \widetilde{X}_{i}||_{\infty} = O_{P}\{h_{2}^{2} + (rh_{2})^{-1/2}\}. \label{err9}
\end{eqnarray}
(c) Recall that $\widehat{X}_{i,e}^{*}(t) = \widehat{X}_{i,w}(\widehat{T}_{i,e}(t))$. Thus, using \eqref{err4-1} we have
\begin{eqnarray}
|\widehat{X}_{i,e}^{*}(t) - \widehat{X}_{i}(t)| &\leq& |\widehat{X}_{i,w}(\widehat{T}_{i,e}(t)) - \widetilde{X}_{i}(\widehat{T}_{i,e}(t))| + |\widetilde{X}_{i}(\widehat{T}_{i,e}(t)) - \widetilde{X}_{i}(\widehat{T}_{i}(t))| \nonumber \\
&\leq& ||\widehat{X}_{i,w} - \widetilde{X}_{i}||_{\infty} + ||\widetilde{X}_{i}'||_{\infty}||\widehat{T}_{i,e} - \widehat{T}_{i}||_{\infty} \nonumber \\
\Rightarrow \ \ ||\widehat{X}_{i,e}^{*} - \widehat{X}_{i}||_{\infty} &\leq& ||\widehat{X}_{i,w} - \widetilde{X}_{i}||_{\infty} + const.|\xi_{i}|~||T_{i}'||_{\infty}\{R_{r} + |\xi_{i}|^{-\alpha}A_{i,r}^{\alpha}\}.  \label{err10}
\end{eqnarray}
The proof of part (c) of this theorem now follows from \eqref{err2}, \eqref{err4}, \eqref{err8} and part (c) of Theorem \ref{thm3}. \\
(d) Observe that by \eqref{err10}, we have
\begin{eqnarray*}
&& ||\overline{X}_{e*} - n^{-1}\sum_{i=1}^{n} \widehat{X}_{i}||_{\infty} \\
&\leq& n^{-1}\sum_{i=1}^{n} ||\widehat{X}_{i,w} - \widetilde{X}_{i}||_{\infty} + const.\left\{R_{r}\left(n^{-1}\sum_{i=1}^{n} |\xi_{i}|~||T_{i}'||_{\infty}\right) + n^{-1}\sum_{i=1}^{n} |\xi_{i}|^{1-\alpha}||T_{i}'||_{\infty}A_{i,r}^{\alpha}\right\}.
\end{eqnarray*}
The third term on the right hand side can be bounded using H\"older's inequality and \eqref{err3} as earlier. The bounds on the first two terms are given by \eqref{err9} and \eqref{err4}, respectively. The proof of this part of the theorem is now completed upon using these bounds along with part (e) of Theorem \ref{thm3}. \\
(e) For the proof of this part of theorem, we will use a decomposition of $\widehat{\mathscr{K}}_{e*}$ similar to that of $\widehat{\mathscr{K}}_{r}$ in the proof of part (f) of Theorem \ref{thm3}. In the same notation, we obtain the following bounds on $W_{1}, W_{2}$ and $W_{3}$. First, note that $|||W_{1}||| \leq n^{-1} \sum_{i=1}^{n} ||\widehat{X}_{i,e}^{*} - \widehat{X}_{i}||^{2} \leq 2n^{-1} \sum_{i=1}^{n} ||\widehat{X}_{i,w} - \widetilde{X}_{i}||_{\infty}^{2} + const.n^{-1} \sum_{i=1}^{n} \xi_{i}^{2}||T_{i}'||_{\infty}^{2}\{R_{r} + |\xi_{i}|^{-\alpha}A_{i,r}^{\alpha}\}^{2}$. Applying H\"older's inequality and using \eqref{err3}, \eqref{err4} and \eqref{err8}, we get that $|||W_{1}||| = O_{P}\{h_{2}^{4} + (rh_{2})^{-1} + h_{1}^{4\alpha} + (rh_{1}^{3})^{-\alpha}\}$. Next, using part (d) of this theorem and part (e) of Theorem \ref{thm3}, it follows that $|||W_{2}||| \leq ||\overline{X}_{e*} - \mu||^{2} \leq 2||\overline{X}_{e*} - n^{-1}\sum_{i=1}^{n} \widehat{X}_{i}||^{2} + 2||n^{-1}\sum_{i=1}^{n} \widehat{X}_{i} - \mu||^{2} = O_{P}\{h_{1}^{4\alpha} + (rh_{1}^{3})^{-\alpha} + h_{2}^{4} + (rh_{2})^{-1} + n^{-1}\}$. In a similar manner, $|||W_{3}||| \leq 2n^{-1}\sum_{i=1}^{n} ||\widehat{X}_{i,e}^{*} - \widehat{X}_{i}||||\widehat{X}_{i} - \mu|| = O_{P}\{h_{2}^{2} + (rh_{2})^{-1/2} + h_{1}^{2\alpha} + (rh_{1}^{3})^{-\alpha/2}\}$ by the Cauchy-Schwarz inequality and the bounds obtained earlier. So, using part (f) of Theorem \ref{thm3}, we have $|||\widehat{\mathscr{K}}_{e*} - \mathscr{K}||| = O_{P}\{h_{2}^{2} + (rh_{2})^{-1/2} + h_{1}^{2\alpha} + (rh_{1}^{3})^{-\alpha/2} + n^{-1/2}\}$. The bounds for the leading eigenvalue and eigenfunction follow directly by standard bounds in the theory of perturbation of operators.
\end{proof}

\begin{proof}[Proof of Theorem \ref{thm5}]
\indent First assume that $\mu' \neq 0$. Then, define $G(t) = \int_{0}^{t} |\gamma_{1}^{-1}\mu'(u)|du/\int_{0}^{1} |\gamma_{1}^{-1}\mu'(u)|du = \int_{0}^{t} |\mu'(u)|du/\int_{0}^{1} |\mu'(u)|du$ and $\widetilde{G}_{i}(t) = G(T_{i}^{-1}(t))$ for $t \in [0,1]$ and $i=1,2,\ldots,n$. Some algebraic manipulations yield
\begin{eqnarray*}
&& |F_{i}(t) - G(t)| \\
&\leq& \frac{\int_{0}^{t} |Y_{i1}\phi_{1}'(u) + {\eta}Y_{i2}\phi_{2}'(u)|du}{\int_{0}^{1} |\gamma_{1}^{-1}\mu'(u) + Y_{i1}\phi_{1}'(u) + {\eta}Y_{i2}\phi_{2}'(u)|du} + \left|\frac{\int_{0}^{t} |\gamma_{1}^{-1}\mu'(u)|du}{\int_{0}^{1} |\gamma_{1}^{-1}\mu'(u) + Y_{i1}\phi_{1}'(u) + {\eta}Y_{i2}\phi_{2}'(u)|du} \right. \\
&& \hspace{8cm} \left. \ - \ \frac{\int_{0}^{t} |\gamma_{1}^{-1}\mu'(u)|du}{\int_{0}^{1} |\gamma_{1}^{-1}\mu'(u)|du}\right| \\
&\leq& \frac{2\int_{0}^{1} |Y_{i1}\phi_{1}'(u) + {\eta}Y_{i2}\phi_{2}'(u)|du}{\int_{0}^{1} |\gamma_{1}^{-1}\mu'(u) + Y_{i1}\phi_{1}'(u) + {\eta}Y_{i2}\phi_{2}'(u)|du} = Z_{i}.
\end{eqnarray*}
Thus, $||F_{i} - G||_{\infty} \leq Z_{i}$ almost surely for each $i$. So $||\widetilde{F}_{i} - \widetilde{G}_{i}||_{\infty} = \sup_{t \in [0,1]} |F_{i}(T_{i}^{-1}(t)) - G(T_{i}^{-1}(t))| = \sup_{t \in [0,1]} |F_{i}(t) - G(t)| \leq Z_{i}$, where the last equality holds because $T_{i}$ is a bijection on $[0,1]$. \\
\indent Next, let $c_{i} = F_{i}^{-1}(t)$ and $c = G^{-1}(t)$. So, $t = F_{i}(c_{i}) = G(c)$. Also, $G(c) - G(c_{i}) = G(c) - F_{i}(c_{i}) + F_{i}(c_{i}) - G(c_{i}) = F_{i}(c_{i}) - G(c_{i})$ so that $|G(c) - G(c_{i})| \leq ||F_{i} - G||_{\infty} \leq Z_{i}$. The conditions of the theorem and arguments as in Lemma \ref{lem1} earlier show that $G^{-1}$ is $\alpha$-H\"older continuous for $\alpha = \epsilon/(1+\epsilon)$. Thus, for a finite, positive constant $C_{\mu}$, we have
\begin{eqnarray*}
|F_{i}^{-1}(t) - G^{-1}(t)| = |c_{i} - c| = |G^{-1}(G(c_{i})) - G^{-1}(G(c))| \leq C_{\mu}|G(c_{i}) - G(c)|^{\alpha} \leq C_{\mu}Z_{i}^{\alpha}.
\end{eqnarray*}
Thus, $||F_{i}^{-1} - G^{-1}||_{\infty} \leq C_{\mu}Z_{i}^{\alpha}$ almost surely. Consequently, $||\widetilde{F}_{i}^{-1} - \widetilde{G}_{i}^{-1}||_{\infty} = \sup_{t \in [0,1]} |T_{i}(F_{i}^{-1}(t)) - T_{i}(G^{-1}(t))| \leq ||T_{i}'||_{\infty}||F_{i}^{-1} - G^{-1}||_{\infty} \leq C_{\mu}||T_{i}'||_{\infty}||Z_{i}^{\alpha}$ almost surely. Further, 
\begin{eqnarray*}
||\widehat{F}^{-1} - \widehat{G}^{-1}||_{\infty} \ \leq \ \frac{1}{n}\sum_{i=1}^{n} ||\widetilde{F}_{i}^{-1} - \widetilde{G}_{i}^{-1}||_{\infty} \ \leq \ \frac{C_{\mu}}{n}\sum_{i=1}^{n} ||T_{i}'||_{\infty}Z_{i}^{\alpha} \ \leq \ 2C_{\mu}E(||T_{1}'||_{\infty})E(Z_{1}^{\alpha}),
\end{eqnarray*}
as $n \rightarrow \infty$ almost surely. Here, the last inequality follows from the moment assumptions in the theorem, the Cauchy-Schwarz inequality, the strong law of large numbers and the fact that the $Y_{il}$'s (and hence the $X_{i}$'s) are independent of the $T_{i}$'s. Thus,
\begin{eqnarray*}
|\widehat{T}_{i}^{-1}(t) - T_{i}(t)| &=& |\widehat{F}^{-1}(F_{i}(T_{i}^{-1}(t))) - T_{i}^{-1}(t)| \\
&\leq& |\widehat{F}^{-1}(F_{i}(T_{i}^{-1}(t))) - \widehat{G}^{-1}(F_{i}(T_{i}^{-1}(t)))| + |\widehat{G}^{-1}(F_{i}(T_{i}^{-1}(t))) - \widehat{G}^{-1}(G(T_{i}^{-1}(t)))| \\
&& \hspace{6.5cm}+ \ |\widehat{G}^{-1}(G(T_{i}^{-1}(t))) - T_{i}^{-1}(t)| \\
&\leq& ||\widehat{F}^{-1} - \widehat{G}^{-1}||_{\infty} + |\overline{T}(G^{-1}(F_{i}(T_{i}^{-1}(t)))) - \overline{T}(G^{-1}(G(T_{i}^{-1}(t))))| \\
&& \hspace{6.5cm}+ \ |\overline{T}(G^{-1}(G(T_{i}^{-1}(t)))) - T_{i}^{-1}(t)| \\
&\leq& ||\widehat{F}^{-1} - \widehat{G}^{-1}||_{\infty} + ||\overline{T}'||_{\infty}C_{\mu}|F_{i}(T_{i}^{-1}(t)) - G(T_{i}^{-1}(t))|^{\alpha} \\
&&\hspace{6.5cm}+ \ |\overline{T}(T_{i}^{-1}(t)) - T_{i}^{-1}(t)| \\
&\leq& ||\widehat{F}^{-1} - \widehat{G}^{-1}||_{\infty} + C_{\mu}n^{-1}\left\{\sum_{j=1}^{n} ||T_{j}'||_{\infty}\right\} ||F_{i} - G||_{\infty} + ||\overline{T} - Id||_{\infty} \\
&\leq& const.\left\{E(Z_{1}^{\alpha}) + Z_{i} + ||\overline{T} - Id||_{\infty}\right\}, \\
\Rightarrow ||\widehat{T}_{i}^{-1} - T_{i}^{-1}||_{\infty} &\leq& const.\left\{E(Z_{1}^{\alpha}) + Z_{i} + ||\overline{T} - Id||_{\infty}\right\}
\end{eqnarray*}
as $n \rightarrow \infty$ almost surely, where the constant term is uniform in $i$. \\
\indent Next, let $t = \widehat{F}^{-1}(u)$. Then, $n^{-1}\sum_{i=1}^{n} T_{i}(F_{i}^{-1}(u)) = t$. Let $t_{*} = n^{-1}\sum_{i=1}^{n} T_{i}(G^{-1}(u)) = \overline{T}(G^{-1}(u)) = \widehat{G}^{-1}(u)$ so that $u = \widehat{G}(t_{*})$. Note that $\widehat{F}(t) - \widehat{G}(t) = \widehat{F}(t) - \widehat{G}(t_{*}) + \widehat{G}(t_{*}) - \widehat{G}(t) = \widehat{G}(t_{*}) - \widehat{G}(t) = G(\overline{T}^{-1}(t_{*})) - G(\overline{T}^{-1}(t))$. Thus, using the assumptions in the theorem and arguments similar to those used in the proof of part (b) of Theorem \ref{thm2}, we have
\begin{eqnarray*}
|\widehat{F}(t) - \widehat{G}(t)| &=& |G(\overline{T}^{-1}(t_{*})) - G(\overline{T}^{-1}(t))| \\
&\leq& ||G'||_{\infty}|\overline{T}^{-1}(t_{*}) - \overline{T}^{-1}(t)| \\
&\leq& ||G'||_{\infty}\delta^{-1}|t_{*} - t| \\
&\leq& ||G'||_{\infty}\delta^{-1}n^{-1}\sum_{i=1}^{n} \left|T_{i}(F_{i}^{-1}(u)) - T_{i}(G^{-1}(u))\right| \\
&\leq& ||G'||_{\infty}\delta^{-1}C_{\mu}n^{-1}\sum_{i=1}^{n} ||T_{i}'||_{\infty}Z_{i}^{\alpha} \ \leq \  const.E(||T_{1}'||_{\infty})E(Z_{1}^{\alpha}) \\
\Rightarrow \ \ ||\widehat{F} - \widehat{G}||_{\infty} &\leq& const.E(Z_{1}^{\alpha})
\end{eqnarray*}
as $n \rightarrow \infty$ almost surely. Therefore,  
\begin{eqnarray*}
|\widehat{T}_{i}(t) - T_{i}(t)| &=& |\widetilde{F}_{i}^{-1}(\widehat{F}(t)) - T_{i}(t)| \\
&\leq& |\widetilde{F}_{i}^{-1}(\widehat{F}(t)) - \widetilde{G}_{i}^{-1}(\widehat{F}(t))| + |\widetilde{G}_{i}^{-1}(\widehat{F}(t)) - \widetilde{G}_{i}^{-1}(\widehat{G}(t))| + |\widetilde{G}_{i}^{-1}(\widehat{G}(t)) - T_{i}(t)| \\
&\leq& ||\widetilde{F}_{i}^{-1} - \widetilde{G}_{i}^{-1}||_{\infty} + |T_{i}(G^{-1}(\widehat{F}(t)) - T_{i}(G^{-1}(\widehat{G}(t))| + |T_{i}(G^{-1}(\widehat{G}(t)) - T_{i}(t)| \\ 
&\leq& ||\widetilde{F}_{i}^{-1} - \widetilde{G}_{i}^{-1}||_{\infty} + ||T_{i}'||_{\infty}C_{\mu}|\widehat{F}(t) - \widehat{G}(t)|^{\alpha} + |T_{i}(\overline{T}^{-1}(t)) - T_{i}(t)| \\
&\leq& ||\widetilde{F}_{i}^{-1} - \widetilde{G}_{i}^{-1}||_{\infty} + ||T_{i}'||_{\infty}C_{\mu}||\widehat{F} - \widehat{G}||^{\alpha} + ||T_{i}'||_{\infty}||\overline{T}^{-1} - Id||_{\infty} \\
&=& ||\widetilde{F}_{i}^{-1} - \widetilde{G}_{i}^{-1}||_{\infty} + ||T_{i}'||_{\infty}C_{\mu}||\widehat{F} - \widehat{G}||^{\alpha} + ||T_{i}'||_{\infty}||\overline{T} - Id||_{\infty} \\
&\leq& const.||T_{i}'||_{\infty}\left\{Z_{i}^{\alpha} + E^{\alpha}(Z_{1}^{\alpha}) + ||\overline{T} - Id||_{\infty}\right\} \\
\Rightarrow \ \ ||\widehat{T}_{i} - T_{i}||_{\infty} &\leq& const.||T_{i}'||_{\infty}\left\{Z_{i}^{\alpha} + E^{\alpha}(Z_{1}^{\alpha}) + ||\overline{T} - Id||_{\infty}\right\}
\end{eqnarray*}
as $n \rightarrow \infty$ almost surely, where the constant term is uniform in $i$. \\
\indent Next, note that $\widehat{X}_{i} = \widetilde{X}_{i} \circ \widehat{T}_{i} = X_{i} \circ T_{i}^{-1} \circ \widehat{T}_{i} = \mu \circ T_{i}^{-1} \circ \widehat{T}_{i} + \gamma_{1}Y_{i1}\phi_{1} \circ T_{i}^{-1} \circ \widehat{T}_{i} + \gamma_{2}Y_{i2}\phi_{2} \circ T_{i}^{-1} \circ \widehat{T}_{i}$. So,
\begin{eqnarray*}
|\widehat{X}_{i}(t) - X_{i}(t)| &\leq& |\mu(T_{i}^{-1}(\widehat{T}_{i}(t))) - \mu(t)| + \gamma_{1}|Y_{i1}|~|\phi_{1}(T_{i}^{-1}(\widehat{T}_{i}(t))) - \phi_{1}(t)| \\
&& \hspace{3.5cm}+ \ \gamma_{2}|Y_{i2}|~|\phi_{2}(T_{i}^{-1}(\widehat{T}_{i}(t))) - \phi_{2}(t)| \\
&\leq& |T_{i}^{-1}(\widehat{T}_{i}(t)) - t|\left\{||\mu'||_{\infty} + \gamma_{1}|Y_{i1}|~||\phi_{1}'||_{\infty} + \gamma_{1}|Y_{i2}|~||\phi_{2}'||_{\infty}\right\} \\
\Rightarrow \ \ ||\widehat{X}_{i} - X_{i}||_{\infty} &\leq& ||\widehat{T}_{i}^{-1} - T_{i}^{-1}||_{\infty}\left\{||\mu'||_{\infty} + \gamma_{1}|Y_{i1}|~||\phi_{1}'||_{\infty} + \gamma_{1}|Y_{i2}|~||\phi_{2}'||_{\infty}\right\} \\
&\leq& O_{P}(1)\left\{E(Z_{1}^{\alpha}) + Z_{i} + ||\overline{T} - Id||_{\infty}\right\}
\end{eqnarray*}
as $n \rightarrow \infty$ almost surely, where the $O_{P}(1)$ term is independent on $n$. \\
\indent Next, consider the case when $\mu' = 0$. Then, define $G(t) = \int_{0}^{t} |\phi_{1}'(u)|du/\int_{0}^{1} |\phi_{1}'(u)|du$. Some algebraic manipulations yield
\begin{eqnarray*}
|F_{i}(t) - G(t)| &=& \left|\frac{\int_{0}^{t} |Y_{i1}\phi_{1}'(u) + {\eta}Y_{i2}\phi_{2}'(u)|du}{\int_{0}^{1} |Y_{i1}\phi_{1}'(u) + {\eta}Y_{i2}\phi_{2}'(u)|du} - \frac{\int_{0}^{t} |\phi_{1}'(u)|du}{\int_{0}^{1} |\phi_{1}'(u)|du}\right| \\
&\leq& \frac{2{\eta}\int_{0}^{1} |Y_{i2}\phi_{2}'(u)|du}{\int_{0}^{1} |Y_{i1}\phi_{1}'(u) + {\eta}Y_{i2}\phi_{2}'(u)|du} \ = \ Z_{i}.
\end{eqnarray*}
Similar arguments as in the case of $\mu' \neq 0$ now yield the error bounds on the estimators
\end{proof}

\newpage
\begin{center}
\large{Plots of the registered curves for the simulated and real data sets in the main paper using some other procedures}
\end{center}

\begin{figure}[H]
\begin{center}
{
\includegraphics[scale=0.46]{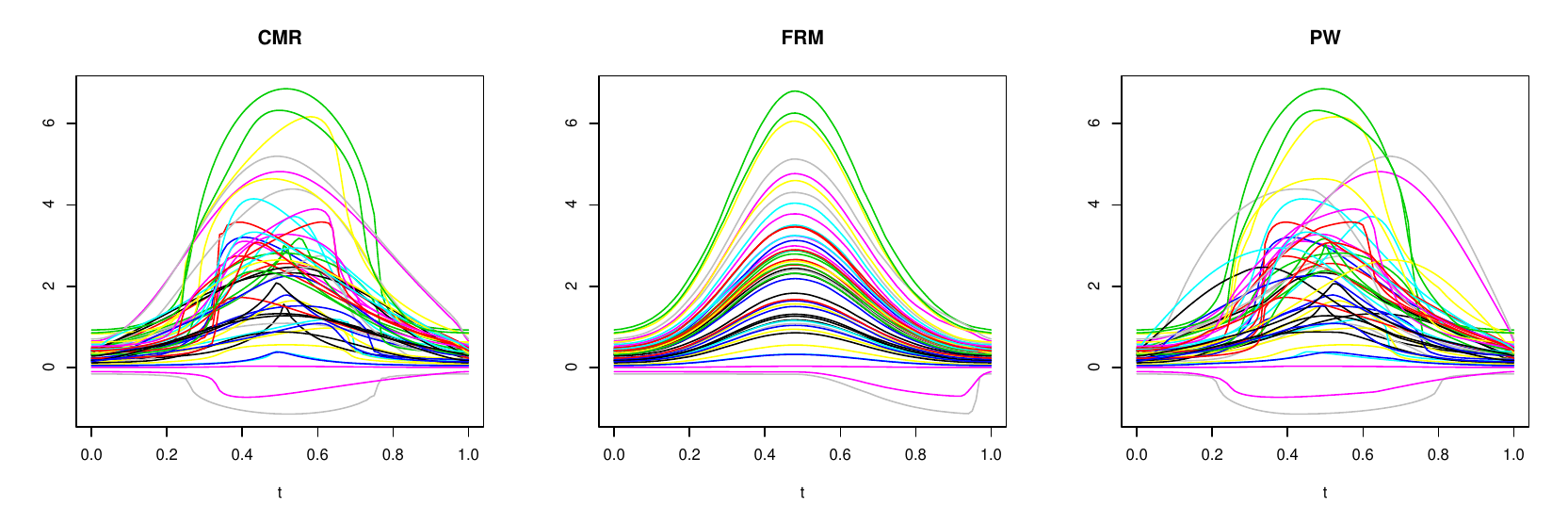}
}
\end{center}
\vspace{-0.3in}
\caption{Plots of the registered data curves using some other procedures under Model $1$ without measurement error.}
\label{Figsupp-model1}
\end{figure}
\begin{figure}[H]
\vspace{-0.2in}
\begin{center}
{
\includegraphics[scale=0.46]{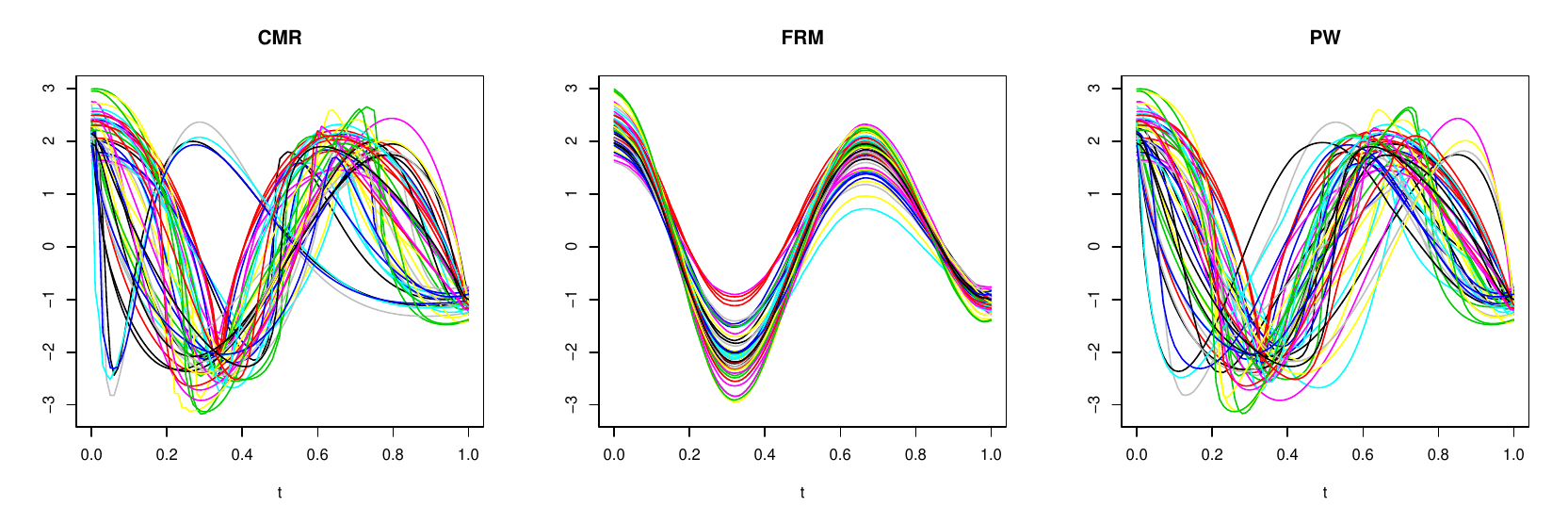}
}
\end{center}
\vspace{-0.3in}
\caption{Plots of the registered data curves using some other procedures under Model $2$ without measurement error.}
\label{Figsupp-model2}
\end{figure}
\begin{figure}[H]
\vspace{-0.2in}
\begin{center}
{
\includegraphics[scale=0.46]{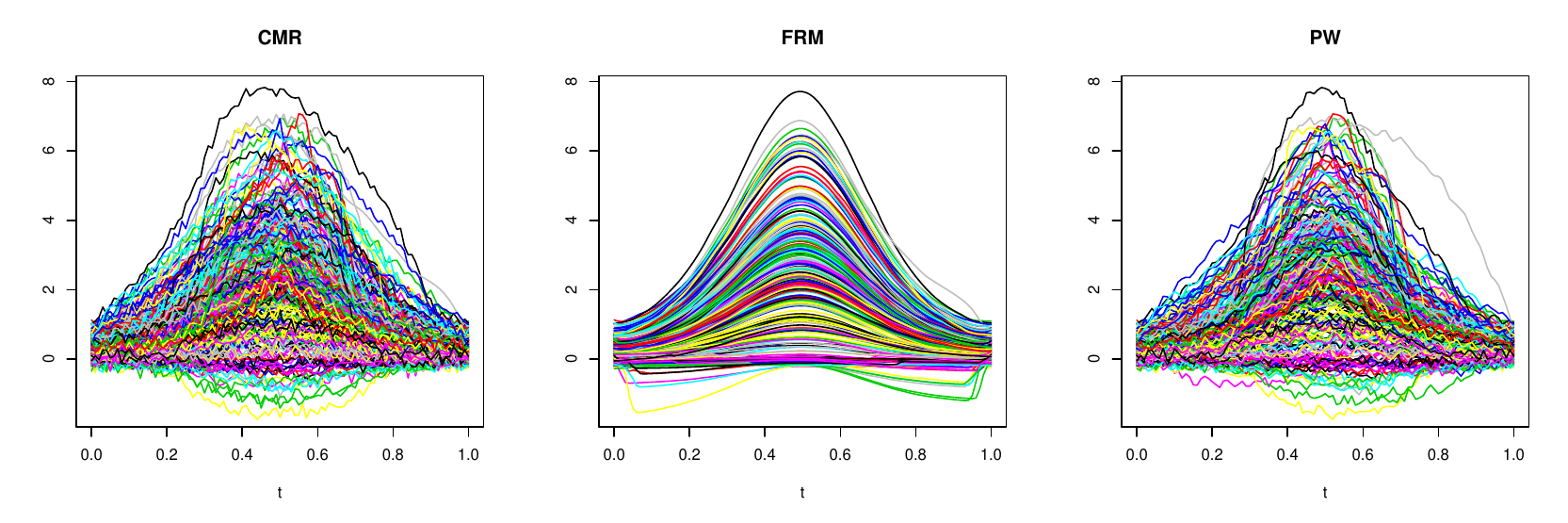}
}
\end{center}
\vspace{-0.3in}
\caption{Plots of the registered data curves using some other procedures under Model $1$ in the presence of measurement error.}
\label{Figsupp-model1err}
\end{figure}
\begin{figure}[H]
\vspace{-0.2in}
\begin{center}
{
\includegraphics[scale=0.46]{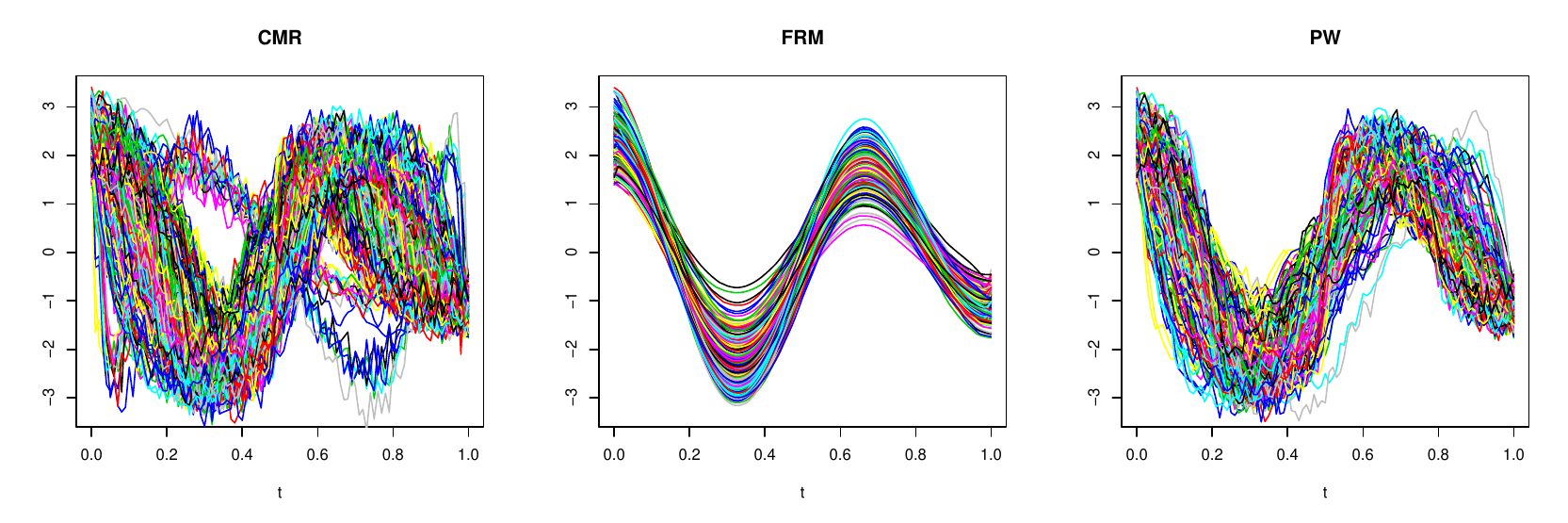}
}
\end{center}
\vspace{-0.3in}
\caption{Plots of the registered data curves using some other procedures under Model $2$ in the presence of measurement error.}
\label{Figsupp-model2err}
\end{figure}
\begin{figure}[H]
\vspace{-0.2in}
\begin{center}
{
\includegraphics[scale=0.46]{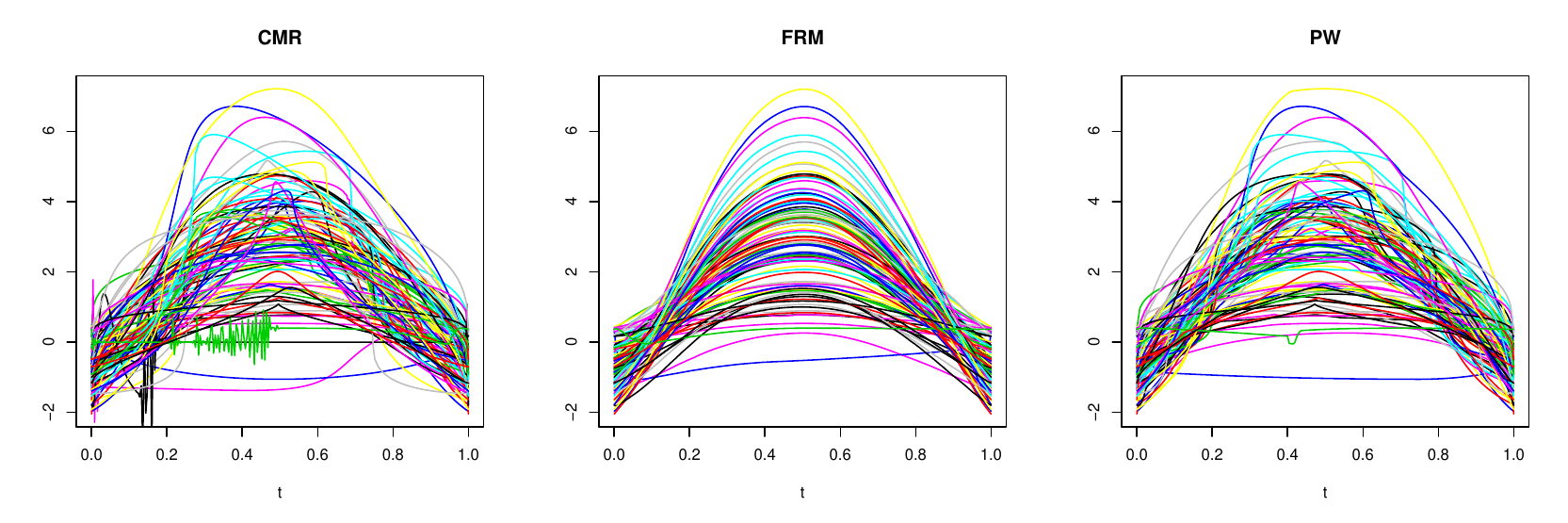}
}
\end{center}
\vspace{-0.3in}
\caption{Plots of the registered data curves using some other procedures under the rank $2$ model.}
\label{Figsupp-rank2}
\end{figure}
\begin{figure}[H]
\vspace{-0.2in}
\begin{center}
{
\includegraphics[scale=0.46]{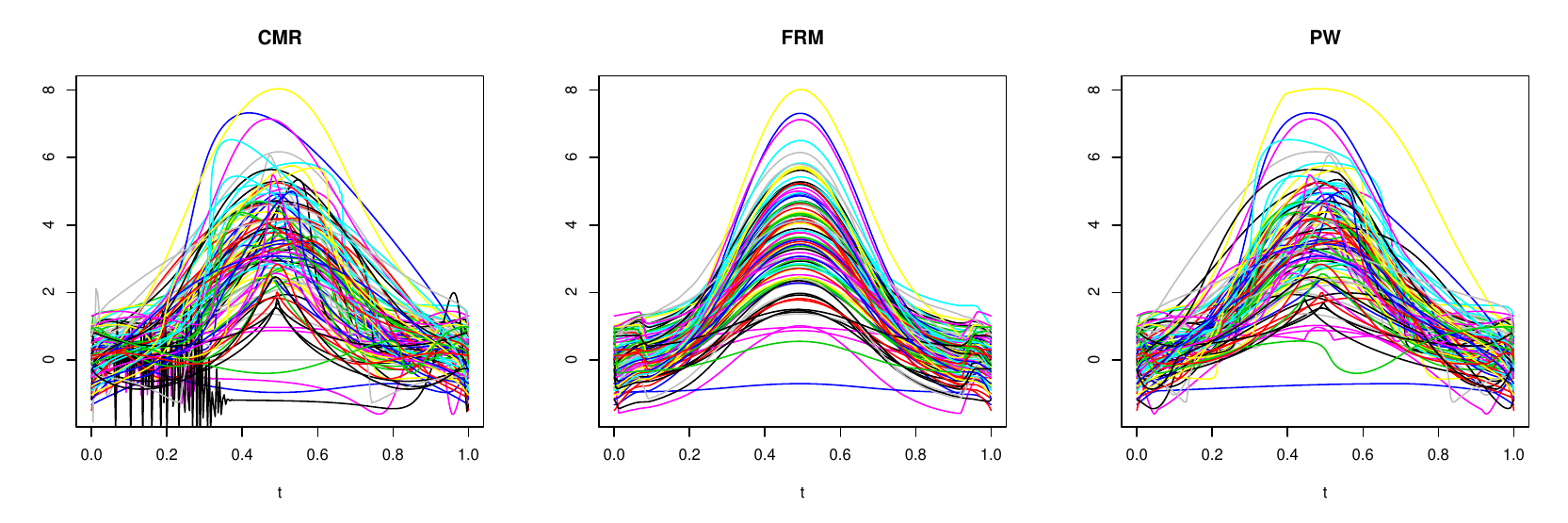}
}
\end{center}
\vspace{-0.3in}
\caption{Plots of the registered data curves using some other procedures under the rank $3$ model.}
\label{Figsupp-rank3}
\end{figure}
\begin{figure}[H]
\vspace{-0.2in}
\begin{center}
{
\includegraphics[scale=0.46]{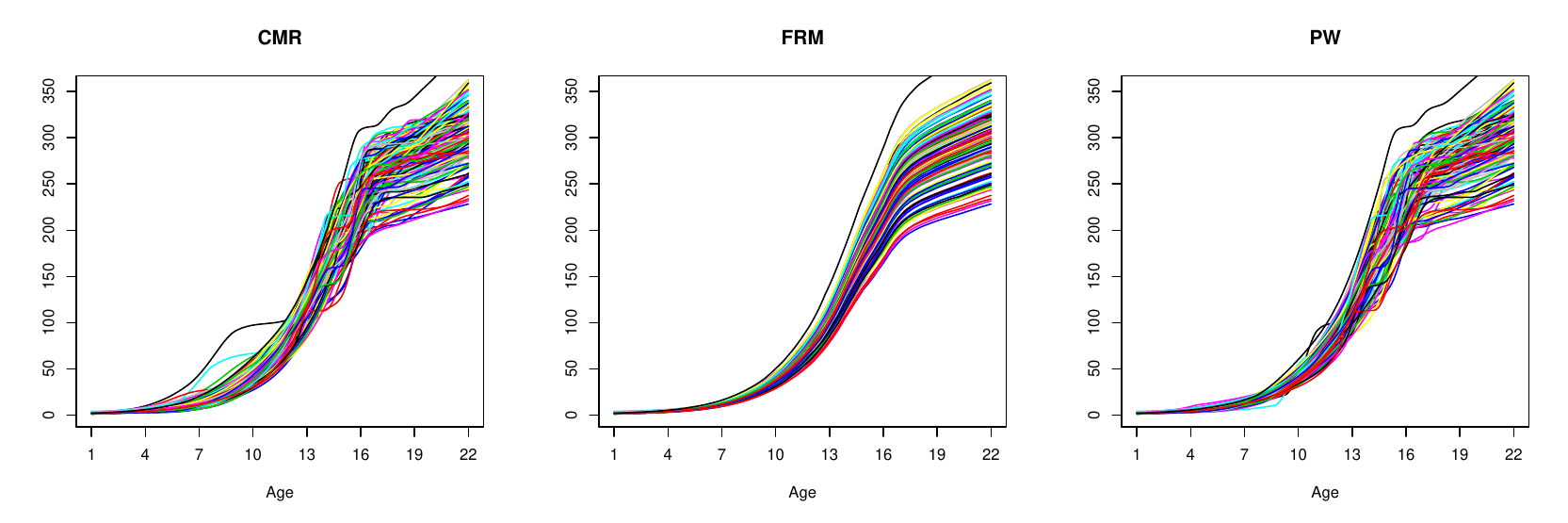}
}
\end{center}
\vspace{-0.3in}	
\caption{Plots of the registered data curves using some other procedures for the real data.}
\label{Figsupp-beetle}
\end{figure}

%
%
%
%
%

\end{document}